%% file: arxiv.tex
\documentclass[hf]{ceurart}

\usepackage{cmap}

\usepackage{version}

\includeversion{conf}
\excludeversion{report}

\usepackage[dvipsnames]{xcolor} 
\usepackage{graphicx}
\usepackage{microtype}
\usepackage{amsmath, amsfonts, amssymb}
\usepackage{array}
\usepackage{multirow}
\usepackage{enumerate}
\usepackage[inline]{enumitem}
\usepackage{mdframed}
\usepackage[ruled, vlined]{algorithm2e}
\usepackage{listings}
\usepackage{times}

\usepackage{thmtools} 
\usepackage{thm-restate} 

\usepackage{pgfplots}
\pgfplotsset{compat=1.16} 
\usepgfplotslibrary{groupplots} 

\usepackage[skins]{tcolorbox} 

\usepackage{hyperref}
\usepackage{cleveref}
\crefname{appendix}{Appendix}{Appendices}
\Crefname{appendix}{Appendix}{Appendices}
\crefname{subappendix}{Appendix}{Appendices}
\Crefname{subappendix}{Appendix}{Appendices}
\crefname{subsubappendix}{Appendix}{Appendices}
\Crefname{subsubappendix}{Appendix}{Appendices}

\providecommand{\spnewtheorem}[4]{\newtheorem{#1}{#2}}
\input{macros.tex}
\definecolor{mygreen}{RGB}{0,180,0}
\renewcommand{\greensat}{{\color{mygreen} sat}}
\renewcommand{\greenunsat}{{\color{mygreen} unsat}}

\makeatletter
\@ifundefined{theorem} {\newtheorem{theorem}{Theorem}}{}
\@ifundefined{lemma} {\newtheorem{lemma}[theorem]{Lemma}}{}
\@ifundefined{corollary} {}{}
\@ifundefined{definition}{\newtheorem{definition}[theorem]{Definition}}{}
\@ifundefined{remark} {\newtheorem{remark}[theorem]{Remark}}{}
\@ifundefined{example} {\newtheorem{example}[theorem]{Example}}{}
\makeatother

\AtBeginDocument{%
  \ifcsname proof\endcsname%
    \renewenvironment{proof}{\noindent{\it Proof}\/:}{\qed\hspace*{1pt}}%
  \else%
    \newenvironment{proof}{\noindent{\it Proof}\/:}{\qed\hspace*{1pt}}%
  \fi%
}

\usepackage{tikz}
\usetikzlibrary{shapes, arrows.meta, positioning, calc}

\tikzset{
  term node/.style={
    draw,
    rounded corners,
    align=center,
    inner sep=4pt,
  fill=white
  },
  >={Stealth[]}
}

\lstdefinelanguage{smtlib}{
  morekeywords={
    set-logic, declare-datatypes, declare-const, assert, check-sat, select, Array, Int, not, =, nT, cons, id, arr
  },
  sensitive=true,
  morecomment=[l];,
  morestring=[b]",
}

\renewcommand{\yoni}[1]{{}}
\renewcommand{\yoniFixed}[1]{{}}
\renewcommand{\tomer}[1]{{}}
\renewcommand{\tomerFixed}[1]{{}}

\begin{document}
\setcounter{secnumdepth}{3}

\copyrightyear{2026}
\copyrightclause{Copyright for this paper by its authors.
	Use permitted under Creative Commons License Attribution 4.0
	International (CC BY 4.0).}

\conference{SMT 2026, 24th International Workshop on Satisfiability Modulo Theories}

\title{Automated Reasoning with Nested Datatypes}

\author[1]{Tomer Hakak}
\author[1]{Yoni Zohar}
\author[2]{Andrew Reynolds}
\author[3]{Clark Barrett}
\author[2]{Cesare Tinelli}

\ExplSyntaxOn
\RenewDocumentCommand \ceuraddress { }
{
	\bool_if:NTF \g_ceur_augr_bool
	{ }
	{
		\group_begin:
		\ceurAffSetup { type = normal }
		\l_ceur_aff_setup_tl
		\par \vskip\l_ceur_aff_before_dim
		\l_ceur_aff_align_tl
		\l_ceur_aff_size_tl
		\l_ceur_aff_shape_tl
		\l_ceur_aff_weight_tl
		\color{ \l_ceur_aff_color_tl }
		\seq_use:cn { g_ceur_aff\int_use:N \g_ceur_aaugr_int _seq }
		{ \quad }
		\par\vskip\l_ceur_aff_after_dim
		\group_end:
	}
}

\cs_set:Npn \__make_tbl_caption:nn #1#2
{
	\l_tbl_align_tl
	\skip_vertical:N \l_tbl_abovecap_skip
	{\parbox{\linewidth}
		{\rightskip=0pt\sffamily\small\textbf{\color{scolor}#1}\enskip #2\par\vskip4pt }}
	\skip_vertical:N \l_tbl_belowcap_skip
}
\ExplSyntaxOff

\address[1]{Bar-Ilan University, Ramat Gan, Israel}
\address[2]{The University of Iowa, Iowa City, USA}
\address[3]{Stanford University, Stanford, USA}

\begin{abstract}
	We introduce a theory of nested datatypes. The theory is obtained
	by restricting the naive combination of datatypes and arrays, so as to prevent
	non-standard models from emerging.
	A decision procedure for the theory is given and proven correct.
	Finally, we describe an implementation of the procedure,
	as well as an evaluation over both real-world and crafted benchmarks.
\end{abstract}

\maketitle

\section{Introduction}
\label{sec:intro}

The SMT-LIB theory of datatypes allows for a direct and convenient encoding
of common data structures, such as lists, trees, and more,
which makes it useful for software verification.
And indeed, the 2025 release of the SMT-LIB benchmark library includes
over 50K benchmarks that involve datatypes~\cite{SMTLib2025}.
%

The expressivity of datatypes is, however, limited. For example,
there are no operators to update an existing datatype value,
and there are no random-access operators.
Such limitations call for the combination of datatypes with theories for other
data structures, like arrays.

A naive combination of datatypes and arrays introduces a problem:
arrays bypass the principle of acyclicity, according to which a datatype value cannot occur in itself.
Indeed,
a datatype value could now occur as an element
of an array, where the array is a field of the same value.
Such behavior is undesirable, as it does not match the intuitive combination of arrays and datatypes,
which only intends to increase expressivity, not break acyclicity.
We refer to such models that admit cycles as {\em non-standard models}.
A more detailed description of this issue, both intuitively and formally, is given in
\Cref{ex:NO} and \Cref{fig:dt-ndt-ex2} below.

In this paper we define a theory of {\em nested datatypes}, obtained
by restricting the aforementioned naive combination to
exclude non-standard models.
%
%
We introduce a decision procedure for the new theory, based on
a reduction to the (naive) combination of datatypes, arrays and uninterpreted functions.
%
We have implemented a prototype of our reduction-based algorithm in the state-of-the-art
SMT-solver $\cvcfive$~\cite{cvc5},
and evaluated it on benchmarks arising from the Move prover verification
tool~\cite{MoveProver}, and on synthetic benchmarks that
control the ``hidden cycles'' in datatypes and arrays.
Our experimental evaluation shows promise.
On the organic benchmarks from the Move Prover, it is comparable to and sometimes outperforms an existing solver for nested datatypes implemented in $\zzz$~\cite{DBLP:conf/tacas/MouraB08}.
On the synthetic benchmarks, it outperforms $\zzz$.

In summary, our contributions are:
$(i)$~The introduction of a theory of nested datatypes;
$(ii)$~A reduction-based decision procedure for the new theory;
$(iii)$~An implementation in $\cvcfive$ and an evaluation using both real-world and crafted benchmarks.

\paragraph{Informal overview:}
We intuitively explain the problem using 
the first row of
\Cref{fig:dt-ndt-ex2}.
This motivating example will be formally introduced and developed as a running
example throughout the paper.
Here $x$ is a datatype value, while $a$ is an array whose elements are datatype
values.
The first row visually describes two simple constraints: $x$ is read
from array $a$ at index $i$, $a$ is a field of $x$.
Each constraint is harmless by itself.
Together, however, they describe a cycle: following the array field of $x$
leads to $a$, and reading $a$ at index $i$ leads back to $x$.
The naive combination of the theories of datatypes and arrays 
misses this because the cycle is split between the
theories.
The theory of Nested datatypes that we propose here reject the combined constraint.

\yoni{This informal overview is also supposed to explain the solution, that is, refer informally
to the pictures of \Cref{fig:dt-ndt-ex-tr}.}

\paragraph{Related work:}

The introduced theory of nested datatypes generalizes the SMT-LIB theory of
algebraic datatypes~\cite{SMTLib2017} (see~\cite{BST07,Oppen:1980:RRD:322203.322204} for prominent decision procedures and \cite{DBLP:conf/aaai/ShahMS24} for a more recent one).
%
The SMT solver $\zzz$~\cite{DBLP:conf/tacas/MouraB08} supports nested datatypes.
However, to the best of our knowledge, the literature lacks a precise definition of the supported theory and the implemented algorithm.

\paragraph{Outline:}
\Cref{sec:prelim} goes over the necessary notions from first-order logic and Satisfiability Modulo Theories.
In \Cref{sec:theoryndtsec} we introduce the theory of nested datatypes.
In \Cref{sec:theory:preprocessing} we present a decision procedure for this theory, and then in \Cref{sec:imp}
we describe an implementation and evaluation of it.
The correctness proofs for the decision procedure are quite involved and long.
They are included in the appendix.

\tikzset{
	font=\sffamily,
	basewrap/.style={draw=black, fill=blue!5, rounded corners=2pt},
	idbox/.style={draw, fill=yellow!50, minimum width=0.6cm, minimum height=0.6cm, font=\bfseries\color{blue}},
	outerid/.style={idbox, minimum size=1cm},
	outerbg/.style={basewrap, rounded corners=8pt},
	outerframe/.style={draw=black, thick},
	childwrap/.style={basewrap, minimum height=0.9cm},
	tile/.style={draw=black, fill=gray!50, minimum width=0.9cm, minimum height=0.6cm},
	divider/.style={draw=white, line width=0.5pt}
}

\section{Preliminaries}
\label{sec:prelim}

For natural numbers $0\leq k \leq n$, we write $[k,n]$ to denote $\{k, k+1, \ldots, n\}$.
For sets $X$ and $Y$,
$\almostC{X}{Y}$
is the set of {\em almost constant functions} from $X$ to $Y$, i.e., $\{ f: X \to Y \mid \exists c \in Y.\; \exists S \subseteq X.\; |S| < \infty \wedge \forall x \notin S,\; f(x) = c \}$.
Given set $X$ and element $e$,
$\constDomArr{X}{e}$ is the constant function whose domain is $X$ and
always returns $e$. When $X$ is clear from the context, we omit it, writing
$\constArr{{e}}$.

\begin{definition}[Well-founded relation]
	\label{prelim:wfr}
	A binary relation \( R \) on a set $X$ is \emph{well-founded}
	if there is no infinite
	sequence \( x_0, x_1, x_2, \ldots \) of elements of \( X \) such that \(
	x_{n+1} \, R \, x_n \) for every \( n \).
\end{definition}

\subsection{Signatures and Structures}
We briefly review the usual definitions of
many-sorted first-order logic with equality (see~\cite{enderton2001mathematical,10.1007/978-3-540-30227-8_53}).
%
For any set $S$, an {\em $S$-sorted set} $A$ is a function from $S$ to
$\powerset{X}\setminus\set{\emptyset}$ for some set $X$
such that $A(s)\cap A(s')=\emptyset$
whenever $s\neq s'$. We use $A_{s}$ to denote $A(s)$.

A {\em many-sorted signature} $\Sigma$ consists of
a set $\sorts{\Sigma}$ of {\em sorts},
a set $\func{\Sigma}$ of {\em function symbols},
and a set $\pred{\Sigma}$ of {\em predicate symbols}.
Function symbols have arities $\sigma_{1}\times\ldots\times\sigma_{n}\ra\sigma$,
and predicate symbols have arities $\sigma_{1}\times\ldots\times\sigma_{n}$,
with $\sigma_{1}\til\sigma_{n},\sigma\in\sorts{\Sigma}$.
For each sort $\sigma\in\sorts{\Sigma}$,
$\pred{\Sigma}$ includes an {\em equality symbol} $=_{\sigma}$ of arity
$\sigma\times\sigma$, denoted
$=$ when $\sigma$ is clear.
$\Sigma$ is called {\em finite} if $\sorts{\Sigma}$, $\func{\Sigma}$,
and $\pred{\Sigma}$ are finite.
We assume a $\sorts{\Sigma}$-sorted set
of {\em variables}.
Well-sorted terms, formulas, and literals are defined as usual.
For a $\Sigma$-formula $\formulaProof$ and a sort $\sigma$, we denote the set of
free variables in $\formulaProof$ of sort $\sigma$ by $\fv{\sigma}{\formulaProof}$.
This notation naturally extends to $\fv{S}{\formulaProof}$ when $S$ is a set of
sorts. When $S=\sorts{\Sigma}$ we may omit it.
Let $B \subseteq \func{\Sigma}$ be a set of function symbols.
We denote by $\Sigma_{\mid B}$ the signature with the same sorts as $\Sigma$, no predicate symbols (except $=_{\sigma}$ for $\sigma\in\sorts{\Sigma}$), and whose set of function symbols is $B$.

A $\Sigma$-literal is called {\em flat} if it has one of the following forms:
$x=f(x_{1}\til x_{n})$,
$P(x_{1}\til x_{n})$, or $\neg P(x_{1}\til x_{n})$
for some variables $x,y,x_{1}\til x_{n}$ and function and predicate symbols
$f$ and $P$ from $\Sigma$.
A (flat) {\em cube} is a conjunction of (flat) literals.

A {\em $\Sigma$-structure} consists of a $\sorts{\Sigma}$-sorted set $A$, and interpretations
to the function and predicate symbols of $\Sigma$.
We further require that $=_{\sigma}$ is interpreted
as the identity relation over $A_\sigma$ for every $\sigma\in\sorts{\Sigma}$.
A {\em $\Sigma$-interpretation} $\model$ is an extension of
a $\Sigma$-structure with interpretations to some set of variables.
For any $\Sigma$-term $\alpha$, $\alpha^{\model}$
denotes the interpretation of $\alpha$ in $\model$.
When $X$ is a set of $\Sigma$-terms, $X^{\model}=\set{x^{\model}\mid x\in X}$.
Similarly, $f^{\model}$ and $P^{\model}$ denote the interpretation
of $f$ and $P$ in $\model$, and we denote $A_\sigma$ by $\sigma^{\model}$ for every
$\sigma\in\sorts{\Sigma}$.
Given a signature $\Sigma'$ which is contained in $\Sigma$, we denote by $\model^{\Sigma'}$ the restriction of $\model$ to $\Sigma'$.

Satisfaction is denoted by $\models$.
A {\em $\Sigma$-theory} $T$ is a class of $\Sigma$-structures.
A $\Sigma$-interpretation
whose variable-free part is in $T$ is
a {\em $T$-interpretation}.
$\formulaProof$ is {\em $T$-satisfiable} (denoted $\models_T\formulaProof$)
if $\model\models\formulaProof$ for some $T$-interpretation $\model$.
$\formulaProof$ and $\formulaExUnsat$ are {\em $T$-equivalent} if they are satisfied by the same class of $T$-interpretations.

The $\Sigma$-theory of equality and uninterpreted functions (EUF) is
the class of all $\Sigma$-structures.
A signature $\Sigma$ is called an {\em arrays signature}
if
$\sorts{\Sigma}=\{\arrType,I,E\}$ for distinct sorts $\arrType$, $I$, and $E$,
$\func{\Sigma}=\{\select{\arrType}, \store{\arrType}\}$,
and $\pred{\Sigma}$ is empty.
The arity of $\select{\arrType}$ is
$\arrType\rightarrow E$, and the arity of
$\store{\arrType}$ is $(\arrType\times I \times E)\rightarrow \arrType$.
When $\arrType$ is clear from context, we may omit it, writing $\select{}$ and $\store{}$.
For an array signature $\Sigma$, we say that a $\Sigma$-structure $\model$ is an {\em arrays $\Sigma$-structure} if it satisfies
the axioms for arrays as specified in SMT-LIB.
The $\Sigma$-theory of arrays is the class of all
arrays $\Sigma$-structures~(see, e.g. \cite{BradleyManna2007}).

\subsection{The \smtlib Theory of Datatypes}
\label{sec:theory-dt}
We adapt the definition
from \cite{DBLP:journals/jar/ShengZRLFB22}
of
the SMT-LIB theory of datatypes~\cite{SMTLib2017}.

\subsubsection{Syntax}

\begin{definition}[Well-founded sort (DT)]
	\label{def:well-sorted-dt}
	Let $\Sigma$ be a signature and
	$S\subseteq\sorts{\Sigma}$.
	Define a sequence of functions $G_0,G_1,\ldots$ as follows:
	$G_{0}(\Sigma,S) =S$;
	$G_{i+1}(\Sigma,S) =
		G_{i}(\Sigma,S) \cup \{\sigma \mid \exists c\in\func{\Sigma}.c:\sigma_1 \times \ldots \times
		\sigma_n \ra \sigma, \sigma_1, \ldots, \sigma_n \in G_{i}(\Sigma,S) \}$.
	Then, a sort $\sigma\in\sorts{\Sigma}$ is {\em well-founded w.r.t. $S$}
	in $\Sigma$
	if $\sigma\in \bigcup_{i=1}^{\infty}G_{i}(\Sigma,S)$.
	We sometimes drop $S$, and just say ``well-founded'' when it is clear.
\end{definition}

\begin{definition}[Datatypes signature]
	\label{def:IDT_sig}
	A finite signature $\Sigma$ is called a {\em datatypes signature}
	if
	$\sorts{\Sigma}=\elemsorts_{\Sigma}\uplus\structsorts_{\Sigma}$ and
	$\func{\Sigma}=\constructors_{\Sigma}\uplus\selectors_{\Sigma}$,
	for some $\elemsorts_{\Sigma},\structsorts_{\Sigma},\constructors_{\Sigma},\selectors_{\Sigma}$,
	such that
	$\selectors_{\Sigma}=
		\{\sel{c}{i}:\sigma\ra\sigma_{i} \mid
		c\in\constructors_{\Sigma},
		c:\sigma_{1}\times\ldots\times\sigma_{n}\ra\sigma,
		1\leq i\leq n
		\}$ and
	$\predecessors_{\Sigma}=
		\{\is{c}:\sigma \mid
		c\in\constructors_{\Sigma},
		c:\sigma_{1}\times\ldots\times\sigma_{n}\ra\sigma
		\}$.
The elements of $\constructors_{\Sigma}$ are called {\em constructors},
those of  $\selectors_{\Sigma}$ {\em selectors},
and of $\predecessors_{\Sigma}$ {\em testers}.
	We further require that for every constructor $c$ with arity $\sigma_1 \times \ldots \times \sigma_n \ra \sigma$, $\sigma \in \structsorts_\Sigma$ and all sorts in $\structsorts_{\Sigma}$ are well-founded w.r.t. $\elemsorts_{\Sigma}$ in $\consig{\Sigma}{\Sigma}$.
\end{definition}

Selectors are often renamed, e.g.,
linked lists are often modeled using
constructors $\nil$ and $\cons$, where the two selectors for $\cons$
are named $\head$ and $\tail$, rather than
$\sel{\cons}{0}$ and
$\sel{\cons}{1}$.

We will use the signatures in \Cref{tab:sigs} as running examples.
The signature $\exsigdt$ is a pure datatypes signature with element sort $I$
and two struct sorts, $\arrType$ and $E$.
In $\exsigdt$, the sort $\arrType$ is a list sort for $E$-elements, built using
$\nil$ and $\cons$.
The sort $E$ is built using the constructor $\container$, whose selectors are $\id$ and $\children$; intuitively, $\id$ stores an index of
sort $I$, and $\children$ stores the list of children of sort $\arrType$.
The signature $\Sigma_1$ uses the same sort names for arrays from $I$ to $E$,
with array sort $\arrType$.
Thus, $\arrType$ is not meant to be both a datatype list sort and an array sort
inside one component theory: it is a list sort in $\exsigdt$ and an array sort in
$\Sigma_1$.
The signature $\Sigma_2^{-}$ keeps only the datatype constructor $\container$,
and treats sort $\arrType$ as an element sort for datatypes.
$\Sigma_2$ extends it with the constant constructor $\emp:E$, used as an empty
$E$-value.
Finally, $\exsignwf$ and $\exsigwf$ are obtained by combining the array
signature $\Sigma_1$ with $\Sigma_2^{-}$ and $\Sigma_2$, respectively.
The formulas in the last column will be used to illustrate how cycles arise
across the datatype and array components.

\begin{table}[t]
	\newcolumntype{M}[1]{>{\centering\arraybackslash}m{#1}}
	\centering
	{
		\scriptsize
		\setlength{\tabcolsep}{2pt}
		\renewcommand{\arraystretch}{1.15}
		\begin{tabular}{|M{1.05cm}|>{\raggedright\arraybackslash}m{3cm}|>{\raggedright\arraybackslash}m{3.3cm}|>{\raggedright\arraybackslash}m{3.25cm}|>{\raggedright\arraybackslash}m{3.65cm}|}\hline
			Name                                                                      &
			Sorts                                                                     &
			Symbols                                                                   &
			Symbol arity                                                              &
			Formula                                                                            \\\hline\hline
			                                                                          &   &  &  & \\

			\shortstack{
				$\exsigdt$
			}                                                                         &
			\shortstack{
				$\elemsorts_{\exsigdt}=\{I\}$                                                      \\
				$\structsorts_{\exsigdt}=\{\arrType,E\}$
			}                                                                         &
			\shortstack{
				$\constructors_{\exsigdt}=\{\container,\nil,\cons\}$                                                        \\
				$\selectors_{\exsigdt} = \{\id,\children,\head,\tail\}$                                                            \\
				$\id=\sel{\container}{1}$                                                          \\
				$\children=\sel{\container}{2}$                                                     \\
				$\head=\sel{\cons}{1}$                                                              \\
				$\tail=\sel{\cons}{2}$                                                              \\
				$\pred{\exsigdt}=\{\is{\container},\is{\nil},\is{\cons}\}$
			}                                                                         &
			\shortstack{
				$\container:I\times\arrType\to E$                                     \\
				$\nil:\arrType$                                                       \\
				$\cons:E\times\arrType\to\arrType$                                    \\
				$\id:E\to I$                                                          \\
				$\children:E\to\arrType$                                              \\
				$\head:\arrType\to E$                                                  \\
				$\tail:\arrType\to\arrType$                                           \\
				$\is{\container}:E$                                                    \\
				$\is{\nil}:\arrType$                                                   \\
				$\is{\cons}:\arrType$
			}                                                                         &
			\shortstack{
				$\psi_1:=\is{\cons}(a)\wedge x=\head(a)$                                           \\
				$\psi_2:=\is{\container}(x)\wedge a=\children(x)$                                  \\[2pt]
				$\zeroForm:=\is{\cons}(a)\wedge x=\head(a)$                                        \\
				$\qquad{}\wedge\is{\container}(x)\wedge a=\children(x)$
			}                                                                                  \\&&&&\\\hline
			                                                                          &   &  &  & \\

			\shortstack{
				$\Sigma_1$
			}                                                                         &
			\shortstack{
				$\sorts{\Sigma_1}=\{I,E,\arrType\}$
			}                                                                         &
			\shortstack{
				$\func{\Sigma_1}=\{\select{},\store{}\}$
			}                                                                         &
			\shortstack{
				$\select{}:\arrType\times I\to E$                                     \\
				$\store{}:\arrType\times I\times E\to\arrType$
			}                                                                         &
			\shortstack{
				$\psi_1' := x=\select{}(a,i)$
			}                                                                                  \\&&&&\\\hline
			                                                                          &   &  &  & \\

			\shortstack{
				$\Sigma_2^{-}$
			}                                                                         &
			\shortstack{
				$\elemsorts_{\Sigma_2^{-}}=\{I,\arrType\}$                                         \\
				$\structsorts_{\Sigma_2^{-}}=\{E\}$
			}                                                                         &
			\shortstack{
				$\constructors_{\Sigma_2^{-}}=\{\container\}$                                                                   \\
				$\selectors_{\Sigma_2^{-}} = \{\id,\children,\head,\tail\}$                                                       \\
				$\id=\sel{\container}{1}$                                                          \\
				$\children=\sel{\container}{2}$                                                     \\
				$\pred{\Sigma_2^{-}}=\{\is{\container}\}$
			}                                                                         &
			\shortstack{
				$\container:I\times\arrType\to E$                                     \\
				$\id:E\to I$                                                          \\
				$\children:E\to\arrType$                                              \\
				$\is{\container}:E$
			}                                                                         &
			\shortstack{
				$\psi_2:=\is{\container}(x)\wedge a=\children(x)$
			}                                                                                  \\&&&&\\\hline
			                                                                          &   &  &  & \\

			$\Sigma_2$                                                                &
			$\sorts{\Sigma_2}=\sorts{\Sigma_2^{-}}$                                   &
			\shortstack{
				$\func{\Sigma_2}=\func{\Sigma_2^{-}}\cup\{\emp\}$                                                  \\
				$\pred{\Sigma_2}=\pred{\Sigma_2^{-}}\cup\{\is{\emp}\}$
			}
			                                                                          &
			\shortstack{
				$\container:I\times\arrType\to E$                                     \\
				$\emp:E$                                                              \\
				$\id:E\to I$                                                          \\
				$\children:E\to\arrType$                                              \\
				$\is{\container}:E$                                                    \\
				$\is{\emp}:E$
			}
			                                                                          &
			$\psi_2:=\is{\container}(x)\wedge a=\children(x)$
			\\&&&&\\\hline

			\vspace{1em}$\exsignwf$\vspace{1em}                                       &
			\multicolumn{3}{c|}{\shortstack{$\exsignwf := \Sigma_1\cup\Sigma_2^{-}$}} &
			\multirow{2}{*}[-0.4em]{\shortstack{
				$\formulaExUnsat := \psi_1'\wedge\psi_2 $
			}}
			\\\cline{1-4}

			\vspace{1em}$\exsigwf$\vspace{1em}                                        &
			\multicolumn{3}{c|}{$\exsigwf := \Sigma_1\cup\Sigma_2$}                   &
			\\\hline
		\end{tabular}
	}
	\caption{Signatures and formulas.}
	\label{tab:sigs}
\end{table}

\begin{example}
	\label{ex:wfdt}
	Consider the datatypes signature $\exsigdt$ from \Cref{tab:sigs}.
	Sorts $A$ and $E$ are well-founded w.r.t. $\elemsorts_{\exsigdt}$ in $\consig{\exsigdt}{\exsigdt}$:
	$G_0(\consig{\exsigdt}{\exsigdt},\elemsorts_{\exsigdt})=\{I\}$,
	$G_1(\consig{\exsigdt}{\exsigdt},\elemsorts_{\exsigdt})=\{I,A\}$, and
	$G_2(\consig{\exsigdt}{\exsigdt},\elemsorts_{\exsigdt})=\{I,A,E\}=\sorts{\exsigdt}$.
	But, if we remove constructor $\nil$,
	then both sorts are no longer well-founded.
\end{example}

\subsubsection{Semantics}

\begin{definition}[trees]
	\label{def:trees}
	Given a signature $\Sigma$,
	a set
	$S\suq\sorts{\Sigma}$ and
	an $S$-sorted set $A$,
	the set of {\em $\Sigma$-trees} over $A$ of sort $\sigma
		\in\sorts{\Sigma}$ is denoted by
	$T_{\sigma}(\Sigma,A)$ and is inductively defined as follows:
	$T_{\sigma,0}(\Sigma,A) = A_\sigma$ if $\sigma\in S$ and
	$\emptyset$ otherwise; and
	$T_{\sigma,i+1}(\Sigma,A) = T_{\sigma,i}(\Sigma,A) \cup \{
		c(t_1,\dots,t_n) \mid c\in\func{\Sigma},c:\sigma_{1}\times\ldots\times\sigma_{n}
		\ra\sigma \in \func{\Sigma}, t_j \in T_{\sigma_j,i}(\Sigma,A),
		1\leq j\leq n\}$ for each
	$i \geq 0$.
	Finally, $T_{\sigma}(\Sigma,A) = \bigcup_{i \geq 0}
		T_{\sigma,i}(\Sigma,A)$.
\end{definition}

$T_{\sigma}(\Sigma,A)$ contains the ground $\sigma$-sorted terms over $\Sigma$ with $A$ as constants.

\begin{definition}[Datatypes structure]
	\label{datatypes-axioms}
	Let $\Sigma$ be a datatypes signature and $D$ an $\elemsorts$-sorted set.
	A $\Sigma$-structure $\model$ is a {\em datatypes $\Sigma$-structure generated by $D$} if,
	for every $c\in\constructors_{\Sigma}$ of arity $(\sigma_{1}\times\ldots\times\sigma_{n})\ra\sigma$
	and every $t_j \in \sigma_{j}^{\model}$:
	$(i)$~$\sigma^{\model} = T_{\sigma}(\consig{\Sigma}{\Sigma},D)$ for every $\sigma \in \sorts{\Sigma}$;
	$(ii)$~$c^{\model}(t_1 \til t_n) = c(t_1 \til t_n)$;
	$(iii)$~$\sel{c}{i}^{\model}( c(t_1 \til t_n) )=t_i$ for $1\leq i\leq n$; and
	$(iv)$~$\is{c}^{\model}=\set{ c(t_1 \til t_n) \mid t_1 \in \sigma_{1}^{\model} \til t_n \in \sigma_{n}^{\model} }$.
	The {\em $\Sigma$-theory of datatypes} $\tdt{\Sigma}$ is the class of datatypes $\Sigma$-structures.
\end{definition}

Selectors $\sel{c}{i}$ on terms with top constructor different than $c$ are interpreted arbitrarily, as in \smtlib.

\begin{remark}
	\label{rem:axioms}
	Equivalently, the theory of datatypes is characterized by the initial model (see \cite{meinke1992universal}) of axioms $\axtestpos$, $\axtestneg$, $\axsel$ from \Cref{fig:axioms} (see, e.g., \cite{BST07}); axiom $\axcyc$ also holds in the initial model (and in all datatypes structures).
\end{remark}

\begin{figure}[t]
	\centering
	\begin{tabular}{cccc}
		$\axtestpos$~$is_{c}(c(x_1,\ldots,x_n))$      &
		$\axtestneg$~$\neg is_{c}(d(x_1,\ldots,x_n))$ &
		$\axsel$~$s_{c,i}(c(x_1,\ldots,x_n))=x_i$     &
		$\axcyc$~$x\neq t$
	\end{tabular}
	\caption{Axioms for the theory of datatypes.
		In axiom $\axcyc$, $t$ is built solely by constructor symbols
		$CO_{\Sigma}=\constructors_{\Sigma}$, 
\yoni{i don't understand the last equality. Do we have CO (non-macro) as a notation defined anywhere?}
and $t$ properly contains $x$.}
	\label{fig:axioms}
\end{figure}

\begin{figure}[t]
	\centering
	\begin{tabular}{|c|c|c|c|}\hline
		Theory                              &

		\scalebox{0.7}{
			\begin{tikzpicture}[baseline={(current bounding box.center)}]

				\def\top{0.5}
				\def\bot{-0.5}
				\pgfmathsetmacro{\midy}{(\top + \bot)/2}
				\def\leftx{0}
				\def\numslots{1}
				\pgfmathsetmacro{\slotwidth}{2.6}
				\pgfmathsetmacro{\rightx}{\leftx + \numslots*\slotwidth}

				\fill[gray!50] (\leftx,\top) rectangle (\rightx,\bot);
				\draw[outerframe] (\leftx,\top) rectangle (\rightx,\bot);
				\node {\textbf{}};
				\pgfmathsetmacro{\labelx}{\leftx + 0.5*\slotwidth}
				\node at (\labelx,-0.8) {\textbf{a}};
				\node at (\labelx,-1.2) {\textbf{}};
				\node at (\leftx+0.15*\slotwidth, 0.05*\top) {\textbf{x}};
				\draw[divider] (\leftx+0.3*\slotwidth,\top) -- (\leftx+0.3*\slotwidth,\bot);

			\end{tikzpicture}
		}
		                                    &

		\scalebox{0.7}{
			\begin{tikzpicture}[baseline={(current bounding box.center)}]

				\def\top{0.5}
				\def\bot{-0.5}
				\pgfmathsetmacro{\midy}{(\top + \bot)/2}
				\def\leftx{1.2}
				\def\numslots{1}
				\pgfmathsetmacro{\slotwidth}{2.6}
				\pgfmathsetmacro{\rightx}{\leftx + \numslots*\slotwidth}

				\pgfmathsetmacro{\bgpadtop}{0.3}
				\pgfmathsetmacro{\bgpadbot}{0.5}
				\path[outerbg] (-0.5,\bot - \bgpadbot) rectangle (\rightx + 0.5,\top + \bgpadtop);

				\node[outerid, anchor=north west] (id) at (0,\top) {0};

				\fill[gray!50] (\leftx,\top) rectangle (\rightx,\bot);
				\draw[outerframe] (\leftx,\top) rectangle (\rightx,\bot);
				\node at ($(id.south)+(0,-0.3)$) {\textbf{}};
				\pgfmathsetmacro{\labelx}{\leftx + 0.5*\slotwidth}
				\node at (\labelx,-0.8) {\textbf{a}};
				\node at (\labelx - 0.7, -1.2) {\textbf{x}};
				\draw[divider] (\leftx+0.3*\slotwidth,\top) -- (\leftx+0.3*\slotwidth,\bot);

			\end{tikzpicture}
		}

		                                    &

		\scalebox{0.7}{
			\begin{tikzpicture}[baseline={(current bounding box.center)}]

				\def\top{0.5}
				\def\bot{-0.5}
				\pgfmathsetmacro{\midy}{(\top + \bot)/2}
				\def\leftx{1.2}
				\def\numslots{1}
				\pgfmathsetmacro{\slotwidth}{2.6}
				\pgfmathsetmacro{\rightx}{\leftx + \numslots*\slotwidth}

				\pgfmathsetmacro{\bgpadtop}{0.3}
				\pgfmathsetmacro{\bgpadbot}{0.5}
				\path[outerbg] (-0.5,\bot - \bgpadbot) rectangle (\rightx + 0.5,\top + \bgpadtop);

				\node[outerid, anchor=north west] (id) at (0,\top) {0};

				\fill[gray!50] (\leftx,\top) rectangle (\rightx,\bot);
				\draw[outerframe] (\leftx,\top) rectangle (\rightx,\bot);
				\node at ($(id.south)+(0,-0.3)$) {\textbf{}};
				\pgfmathsetmacro{\labelx}{\leftx + 0.5*\slotwidth}
				\node at (\labelx,-0.8) {\textbf{a}};
				\node at (\leftx+0.15*\slotwidth, 0.05*\top) {\textbf{x}};
				\node at (\labelx - 0.7, -1.2) {\textbf{x}};
				\draw[divider] (\leftx+0.3*\slotwidth,\top) -- (\leftx+0.3*\slotwidth,\bot);

			\end{tikzpicture}
		}
		\\\hline
		\multirow{2}{*}{$\tdt{\exsigdt}$}   & $\psi_1$  & $\psi_2$  & $\psi_1\wedge\psi_2$  \\
		                                    & \greensat & \greensat & \greenunsat           \\\hline
		\multirow{2}{*}{$\tpndt{\exsigwf}$} & $\psi_1'$ & $\psi_2$  & $\psi_1'\wedge\psi_2$ \\
		                                    & \greensat & \greensat & \redsat               \\\hline
		\multirow{2}{*}{$\tndt{\exsigwf}$}  & $\psi_1'$ & $\psi_2$  & $\psi_1'\wedge\psi_2$ \\
		                                    & \greensat & \greensat & \greenunsat           \\\hline
	\end{tabular}
	\caption{Formalization of cycles.
		$\psi_1$,
		$\psi_1'$, and
		$\psi_2$ are defined in \Cref{tab:sigs}.
	}
	\label{fig:dt-ndt-ex2}
\end{figure}

\begin{example}
	\label{ex:dt-semantics}
	Consider again $\exsigdt$ from \Cref{tab:sigs},
	and let $\model$ be a $\tdt{\exsigdt}$-interpretation
	with $I$ interpreted as the integers.
	Then, e.g., $\container(0,\nil)$ and
	$\container(0,\cons(\container(0,\nil),\nil))$ are elements of $E^{\model}$.
	The $\exsigdt$-formula $\zeroForm$ from \Cref{tab:sigs} is $\tdt{\exsigdt}$-unsatisfiable,
	as it violates $\axcyc$: \Cref{fig:dt-ndt-ex2}'s first row shows why -- $x$ cannot contain itself.
\end{example}

\subsection{Decidability and Combination}
\label{ex:decidableth}
\label{ex:stably-infinite}
\looseness=-1
A $\Sigma$-theory is {\em decidable} if the set of $\T$-satisfiable flat $\Sigma$-cubes is decidable;
arrays, EUF, and datatypes are decidable (see, e.g., \cite{BradleyManna2007}).
Given signatures $\Sigma_1, \Sigma_2$ and theories $T_1, T_2$ over them,
$\Sigma_1\cup\Sigma_2$ is the union of their sort, function, and predicate symbols,
and the {\em combination} $T_1\oplus T_2$ is the class of $\Sigma_1\cup\Sigma_2$-structures $\model$
with $\model^{\Sigma_1}\in T_1$ and $\model^{\Sigma_2}\in T_2$.
$T$ is \emph{stably infinite w.r.t.\ $S\subseteq\sorts{\Sigma}$} if every $T$-satisfiable
quantifier-free $\Sigma$-formula $\formulaProof$ has a $T$-interpretation $\model\models\formulaProof$
with $\lvert\sigma^{\model}\rvert$ infinite for all $\sigma\in S$.
EUF and arrays are stably infinite w.r.t.\ all their sorts (see, e.g., \cite{BradleyManna2007});
for a datatypes signature $\Sigma$, $\tdt{\Sigma}$ is stably infinite w.r.t.\ $\elemsorts_{\Sigma}$ (see, e.g., \cite{BST07}).
The Nelson-Oppen (NO) method~\cite{NelsonOppen,10.1007/978-3-540-30227-8_53} decides satisfiability
in the combination of stably infinite theories, by reasoning about {\em arrangements} (conjunctions of (dis)equalities) between shared variables.


\section{The Theory of Nested datatypes}
\label{sec:theoryndtsec}
In this section we define the theory of nested
datatypes.
There are two challenges in doing so, one syntactic, and the other semantic.
The first invites a new definition for a well-foundedness of a sort, that must take into account the existence of arrays.
The second requires to exclude unwanted structures from the defined theory.

\subsection{First challenge: well-foundedness}
\label{sec:ndtsig}

Going back to \Cref{tab:sigs}, notice that
$\zeroForm$ uses
the theory $\tdt{\exsigdt}$ of datatypes to represent a nested structure.
This is
natural, but sometimes not expressive enough.
For example, $\tdt{\exsigdt}$ does not have an operator
to update such lists, or a random-access reading operator.
Thus, it might be desirable to have the second field of datatype $E$
be an {\em array} of datatypes, rather than a list,
as the next example does.

\begin{example}
	\label{ex:wfndt}
	Consider signature $\exsignwf$, obtained as the union of
	$\Sigma_1$ and $\Sigma_2^{-}$,
	from \Cref{tab:sigs}.
	$\Sigma_1$ is an arrays signature, while
	$\Sigma_2^{-}$ is a datatypes signature.
\end{example}

Noticeably, $\exsignwf$ no longer has the
constructor $\nil$, that was available in $\exsigdt$.
There, it was used as an empty list constructor.
However, we have now replaced lists with arrays, and there are no
empty arrays.
But now we have a problem.
The well-foundedness condition from \Cref{def:well-sorted-dt} that was ensured
by $\nil$ in \Cref{ex:wfdt} is now lacking:
in terms of \Cref{def:well-sorted-dt}, $G_1$ will no longer include the sort
$A$, and so $G_2$ will not include the sort $E$.
Thus, replacing lists by arrays requires an appropriate
definition of well-foundedness,
that explicitly incorporates arrays sorts.

\begin{definition}[Well-founded sort (NDT)]
	\label{def:theory:well-founded}
	Let
	$\Sigma_1,\dots,\Sigma_n$ be arrays signatures, each with $\sorts{\Sigma_k} =
		\{\arrType_k, I_k, E_k\}$
	for each $1\leq k\leq n$, and let $\Sigma_{n+1}$
	be a datatypes signature.
	Let
	$\Sigma \;=\; \bigcup_{k=1}^{n+1} \Sigma_k$.
	%
	For a set $S \subseteq \sorts{\Sigma}$, set:
	$F_{0}(\Sigma,S) =S$;
	$F_{i+1}(\Sigma,S) =
		F_{i}(\Sigma,S) \cup \{\sigma \mid \exists
		c\in\func{\Sigma}.c:\sigma_1 \times \ldots \times
		\sigma_n \ra \sigma, \sigma_1, \ldots, \sigma_n \in F_{i}(\Sigma,S) \} \cup
		\{\arrType_k \mid E_k,I_k \in F_{i}(\Sigma,S)\}$.
	A sort $\sigma$
	is \emph{well-founded in $\Sigma$}
	w.r.t. $S$
	if \( \sigma \;\in\; \bigcup_{i\ge 0}\!
	F_i\Bigl(\Sigma,\;S\Bigr) \).

\end{definition}

Notice that while in \Cref{def:well-sorted-dt} well-foundedness is defined for datatypes signatures,
in \Cref{def:theory:well-founded} it is defined for signatures obtained by a union
of a datatypes signature with a number of array signatures.

\begin{example}
	\label{ex:formula}
	\label{ex:well-founded-sig}
	Using \Cref{def:theory:well-founded}, we see that sorts $\arrType$ and $E$ are not well-founded in signature $\consig{\exsignwf}{\Sigma_2^{-}}$ w.r.t. $\{I\}$:
	indeed,
	$F_{0}(\consig{\exsignwf}{\Sigma_2^{-}}, \{I\})=\{I\}$ for every $i$.
	However, denote by $\Sigma_2$ the signature obtained
	from $\Sigma_2^{-}$ by adding a nullary constructor $\emp$ of sort $E$, and let $\exsigwf=\Sigma_1\cup\Sigma_2$, as described in \Cref{tab:sigs}.
	We now get that both $E$ and $\arrType$ are well-founded
	in $\consig{\exsigwf}{\Sigma_2}$ w.r.t. $\{I\}$.
\end{example}

We shall also require all array sorts to be distinct from one another
and from the datatype sorts.
We further assume that the nesting of arrays and datatypes only
involves the element sort of arrays, not the index sort;
and that there is no nesting between arrays.
This is formalized as follows:

\begin{notation}
	\label{not:ndt-sig}
	Let $\Sigma_1, \dots, \Sigma_n$ be array signatures,
	s.t. $\sorts{\Sigma_j} = \{\arrType_j, I_j,
		E_j\}$ for all $1 \leq j \leq n$.
	Let
	$\Sigma_{n+1}$
	be a signature s.t.
	$\Sigma_1,\ldots,\Sigma_{n+1}$ are pairwise disjoint.
	For $\Sigma := \bigcup_{i=1}^{n+1} \Sigma_i$,
	$\arrays_{\Sigma} := \{\arrType_j \mid 1 \leq j \leq
		n\}$.
\end{notation}

\begin{definition}[NDT signature]
	\label{sec:theory:ndt-sig}
	Let $\Sigma_1,\ldots,\Sigma_n$ be arrays signatures,
	each with
	$\sorts{\Sigma_i} = \{\,E_i,\, I_i,\, \arrType_i\}$,
	and
	let $\Sigma_{n+1}$ be a datatypes signature
	and denote
	$\sorts{\Sigma_{n+1}} = \elemsorts_{\Sigma_{n+1}} \uplus \structsorts_{\Sigma_{n+1}}$.
	Let $\Sigma = \bigcup_{k=1}^{n+1}\Sigma_k$.
	Suppose the sorts in $\arrays_{\Sigma}\,\cup\, \structsorts_{\Sigma_{n+1}}$ are well-founded w.r.t.
	$\sorts{\Sigma}\,\setminus\,(\arrays_{\Sigma}\,\cup\, \structsorts_{\Sigma_{n+1}})$,
	$|\arrays_\Sigma|=n$,
	$\{I_1,\ldots,I_n\} \;\cap\; (\structsorts_{\Sigma_{n+1}} \cup \arrays_{\Sigma}) \;=\; \emptyset$,
	and
	$\structsorts_{\Sigma_{n+1}} \,\cap\, \arrays_{\Sigma} \;=\; \emptyset$.
	Then $\Sigma$ is called the \emph{NDT signature} of
	$\Sigma_1,\ldots,\Sigma_{n+1}$.
\end{definition}

\begin{example}
	\label{ex:ndt-sig}
	Both $A$ and $E$ in $\exsigwf$ from \Cref{tab:sigs} are well founded w.r.t. $\{I\}$,
	$|\arrays_{\exsigwf}|=1$, $\{I\} \cap (\structsorts_{\Sigma_2} \cup \arrays_{\exsigwf}) = \emptyset$, and
	$\structsorts_{\Sigma_2} \,\cap\, \arrays_{\exsigwf}=\emptyset$.
	Therefore, $\exsigwf$ can be called the NDT signature of
	$\Sigma_1,\Sigma_2$.
\end{example}


\subsection{Second challenge: non-standard interpretations}
\label{sec:ndtsemantics}

Let us start with the most natural candidate as semantics for nested datatypes,
obtained by simply taking the theory combination of datatypes and arrays.

\begin{definition}[Pre-NDT theory]
	\label{sec:theory:pre-ndt-theory}
	Let $T_1,\ldots,T_n$ be array theories with signatures
	$\Sigma_1,\ldots,\Sigma_n$.
	Let $T_{n+1}$ be a datatypes theory with signature $\Sigma_{n+1}$.
	Assume that $\Sigma$ is the NDT signature of $\Sigma_{1},\ldots,\Sigma_{n+1}$.
	Then $\tpndt{\Sigma} := \bigoplus_{k=1}^{n+1}T_k$
	is the \emph{$\Sigma$-theory of pre-NDTs}.
\end{definition}

As the next example shows, this theory exhibits unexpected behavior.

\begin{example}
	\label{ex:NO}
	Let
	$T_1$ be the array theory over $\Sigma_1$ and $T_2$ the $\Sigma_2$-theory of datatypes,
	where $\Sigma_1$ and $\Sigma_2$ are as in \Cref{tab:sigs}.
	Then $\tpndt{\exsigwf} = T_1 \oplus T_2$ is the $\exsigwf$-theory of pre-NDTs.
	Consider the formula
	$\formulaExUnsat$
	from \Cref{tab:sigs},
	obtained from $\zeroForm$ by replacing the datatype selector
	with an array select.
	$\formulaExUnsat$
	is
	$\tpndt{\exsigwf}$-satisfiable, even though it can be viewed as a variant
	of the $\exsigdt$-formula $\zeroForm$,
	which is unsatisfiable in the theory of $\exsigdt$-datatypes.
	This is unsatisfactory.
	The intended meaning of $\formulaExUnsat$ is depicted in the last column
	of \Cref{fig:dt-ndt-ex2} -- the same meaning as $\zeroForm$, just
	using arrays.
	Its $\tpndt{\exsigwf}$-satisfiability can be shown using the
	Nelson-Oppen method:
	it suffices to show that
	$\psi_1'$ is $T_1$-satisfiable
	and $\psi_2$ is $T_2$-satisfiable,
	as for each sort they share at most one variable,
	and so only trivial arrangements exist.
\end{example}

We have seen that $\formulaExUnsat$ from \Cref{tab:sigs}
is satisfiable in the theory $\tpndt{\Sigma}$, even though
intuitively, it should not be.
The next definition identifies the problem with that
interpretation, and allows us to restrict our attention
to standard interpretations.
This is done by defining a relation in $\tpndt{\Sigma}$-structures,
and require its well-foundedness.
The relation connects input and output sorts of both arrays and datatypes.
Its formal definition is quite involved, but intuitively it simply
relates terms to their sub-terms and arrays to their elements.
With this relation, we can finally define the theory of nested datatypes that we will adopt.

\begin{definition}[NDT relation and theory]
	\label{sec:theory:nested-relation}
	\label{sec:theory:ndt-sigma-structure}
	Let $T_1,\ldots,T_n$ be the array theories of signatures
	$\Sigma_1,\ldots,\Sigma_n$ and
	$T_{n+1}$ the theory of $\Sigma_{n+1}$-datatypes.
	Let $\Sigma$ be the NDT signature of $\Sigma_{1},\ldots,\Sigma_{n+1}$.
	Further assume that $\func{\Sigma_{n+1}} = \constructors_{\Sigma_{n+1}} \uplus \selectors_{\Sigma_{n+1}}$,
	and $\sorts{\Sigma_i} = \{\arrType_i, I_i, E_i\}$
	for each $1\leq i\leq n$.
	Let $\model$ be a $\tpndt{\Sigma}$-structure.
	The {\em NDT-relation of $\model$} (denoted $\rel{\model}$) is
	defined as follows:
	$(\alpha,\beta)\in\rel{\model}$ if and only if one of the following
	conditions holds:
	$(i)$~$\alpha=t_j$ and $\beta=c(t_1,\ldots,t_n)$ for some
	$c\in\constructors_{\Sigma_{n+1}}$ with arity
	$(\sigma_{1}\times\ldots\times\sigma_{m})\ra\sigma$
	such that
	$t_1\in\sigma_1^\model,\ldots,t_m\in\sigma_m^{\model}$
	and $1\leq j\leq m$;
	$(ii)$~$\alpha=\select{\arrType_i}^{\model}(a,j)$ and $\beta=a$
	for some
	$1\leq i\leq n$, $a\in {\arrType_i}^{\model}$
	and $j\in I_i^{\model}$.
	We say
	that $\model$ is an \emph{NDT $\Sigma$-structure} if
	$\rel{\model}$ is well-founded.\footnote{The term {\em well-founded} has been overloaded.
		\Cref{tab:wdsum} summarizes its various usages.}
	The theory of $\Sigma$-nested datatypes, denoted $\tndt{\Sigma}$
	is the class of all NDT $\Sigma$-structures.
\end{definition}

\begin{table}[t]
	\centering
	\begin{tabular}{|c|c|c|}\hline
		Notion                                    & Definition                     & Usage                             \\\hline
		Well-founded relations                    & \Cref{prelim:wfr}              & \Cref{sec:theory:nested-relation} \\
		Well-founded sort in datatypes signatures & \Cref{def:well-sorted-dt}      & \Cref{def:IDT_sig}                \\
		Well-founded sort in NDT signatures       & \Cref{def:theory:well-founded} & \Cref{sec:theory:ndt-sig}         \\
		\hline
	\end{tabular}
	\caption{Summary of definitions and usages of the term ``well-founded'' in this paper.}
	\label{tab:wdsum}
\end{table}

\Cref{sec:theory:ndt-sigma-structure} fixes the problem raised in \Cref{ex:NO}:

\begin{example}
	\label{ex:varphi-unsat}
	The formula $\formulaExUnsat$ from
	\Cref{tab:sigs}
	is $\tndt{\exsigwf}$-unsatisfiable, as mentioned in
	\Cref{fig:dt-ndt-ex2}.
\end{example}

\section{Decision Procedure for Nested Datatypes}
\label{sec:theory:preprocessing}
In this section we introduce a decision procedure for the theory
of nested datatypes.
In what follows,
let $\Sigma$ be the NDT signature of $\Sigma_1,\ldots,\Sigma_{n+1}$,
where $\Sigma_1,\dots,\Sigma_n$ are arrays signatures and $\Sigma_{n+1}$ is a datatypes signature.
Also, let $\formulaProof$ be a $\Sigma$-formula.
Our procedure will determine the
$\tndt{\Sigma}$-satisfiability of $\formulaProof$.

The procedure works as follows:
we define a new signature $\newSig{\Sigma}{\formulaProof}$ and a
$\newSig{\Sigma}{\formulaProof}$-formula $\trF(\formulaProof)$ such that
$\formulaProof$ is $\tndt{\Sigma}$-satisfiable
if and only if
$\trF(\formulaProof)$ is satisfiable in the
$\newSig{\Sigma}{\formulaProof}$-theory obtained by combining
the theories of datatypes, arrays, and uninterpreted functions
(but without the restriction to NDT structures).
The latter can be decided using existing SMT-solvers
for the resulting combination, in a Nelson-Oppen fashion.

\subsection{Translation of Signatures and Theories}
\label{sec:sigstrans}

We begin by defining new sorts.
The idea is to have the theories of arrays and datatypes separated,
so that datatypes will not have array sorts
as element sorts.
The connection between the datatypes and the arrays from the original signature will be maintained by uninterpreted functions.

For every sort $\sigma$ in $\sorts{\Sigma}$
we define a new sort $\trtype{\sigma}$.
For each array sort $\arrType$ in $\arrays_{\Sigma}$,
we introduce two additional sorts:
$\trarr{\arrType}$ and $\idx{\arrType}$.
In total, each sort $\arrType$ in $\arrays_{\Sigma}$
gives rise to three new sorts:
$\trtype{\arrType}$, $\trarr{\arrType}$ and $\idx{\arrType}$;
and each other sort $\sigma\in\sorts{\Sigma}\setminus\arrays_{\Sigma}$
gives rise to a single new sort $\trtype{\sigma}$.
The new sorts are summarized in~\Cref{tab:new-sorts}.
Intuitively, for each non-array sort $\sigma$,
we create a copy of it named $\trtype{\sigma}$.
But if $\sigma$ is an array sort,
$\trtype{\sigma}$ is no longer a copy of it.
Instead, it is a datatypes sort that is meant to represent
the original array sort.
Additionally, $\trarr{\sigma}$ is an array sort, and its
element sort may represent a datatype, but not a nested one.
Thus, every array sort $\sigma$ is represented by two sorts:
$\trsig{\sigma}$ and $\trarr{\sigma}$.
The sort $\trsig{\sigma}$ will be used only for checking cycles,
whereas $\trarr{\sigma}$ is used for the regular array reasoning.
Finally, $\idx{\sigma}$ is an additional sort that serves
as a technical separator
between
datatypes with sort $\trtype{\sigma}$.
Then, the new signature $\newSig{\Sigma}{\formulaProof}$ is defined as follows.

\begin{table}[t]
	\begin{minipage}[t]{0.35\linewidth}
		\centering\footnotesize
		\begin{tabular}{|c|c|c|}\hline
			Sort                                      & Set & Sorts   \\\hline
			$\sigma$                                  &
			$\sorts{\Sigma}\setminus\arrays_{\Sigma}$ &
			$\trtype{\sigma}$                                         \\\hline
			$\arrType_i$                              &
			$\arrays_{\Sigma}$                        &
			$\trtype{\arrType_i},\trarr{\arrType_i},\idx{\arrType_i}$ \\\hline
		\end{tabular}
		\caption{New sorts.}
		\label{tab:new-sorts}
	\end{minipage}%
	\hfill
	\begin{minipage}[t]{0.64\linewidth}
		\centering\footnotesize
		\begin{tabular}{|c|c|c|}\hline
			$\sorts{\trsig{\Sigma_i}}$ &
			$\{\trtype{I_i},\trtype{E_i},\trarr{\arrType_i}\}$                 \\\hline
			$\func{\trsig{\Sigma_i}}$  &
			$\{\select{\trarr{\arrType_i}}:(\trarr{\arrType}_i \times \trtype{I_i}) \ra \trtype{E_i},
				\store{\trarr{\arrType_i}}:(\trarr{\arrType}_i
				\times \trtype{I_i} \times \trtype{E_i}) \ra \trarr{\arrType}_i\}$ \\\hline
			$\pred{\trsig{\Sigma_i}}$  & $\emptyset$                           \\\hline
		\end{tabular}
		\caption{Signatures $\trsig{\Sigma_i}$.
			Arities are inline.}
		\label{tab:new-signature-1}
	\end{minipage}
\end{table}

\begin{definition}
	\label{def:new-sig}
	$\newSig{\Sigma}{\formulaProof} = \Bigl(\bigcup_{i=1}^{n} \trsig{\Sigma_i}\Bigr)
		\cup {\Sigma^\formulaProof_{n+1}}
		\cup \Sigma_{n+2}$,
	where for each
	$1\leq i\leq n$,
	$\trsig{\Sigma_i}$ is given in \Cref{tab:new-signature-1},
	$\Sigma^{\formulaProof}_{n+1}$ is given in \Cref{fig:dt-sig-trans},
	and $\Sigma_{n+2}$ is given in \Cref{fig:one-to-one-sig}.
\end{definition}

Notice that the $n+2$ signatures
$\trsig{\Sigma_1},\ldots,\trsig{\Sigma_n},
	\Sigma^{\formulaProof}_{n+1},\Sigma_{n+2}$
may share sorts, but are otherwise disjoint.
Let us elaborate on $\newSig{\Sigma}{\formulaProof}$ by focusing on each
of the constructed signatures separately.
We start with signatures
$\trsig{\Sigma_1},\ldots,\trsig{\Sigma_n}$,
given in \Cref{tab:new-signature-1}.
Let $1\leq i\leq n$.
$\trsig{\Sigma_i}$ is an arrays signature,
whose index sort is
$\trtype{I_i}$,
its element sort is
$\trtype{E_i}$, and its array sort is
$\trarr{A_i}$.
As an arrays signature, it has two function symbols
$\select{\trarr{\arrType_i}},
	\store{\trarr{\arrType_i}}$,
with the appropriate arities.

Next, we describe ${\Sigma^\formulaProof_{n+1}}$.
It relies on the number of terms used
under selects and stores in $\formulaProof$.

\begin{definition}
	\label{def:free-vars-array-set}
	For a sort $\arrType_i \in \arrays_\Sigma$,
	and a flat $\Sigma$-cube $\formulaProof$,
	let $\indexSet{\arrType_i}{\formulaProof}$
	be the set of variables $x$
	in $\fv{I_i}{\formulaProof}$ such that
	$\select{\arrType_i}(a,x)$ or
	$\store{\arrType_i}(a,x,v)$ occur in $\formulaProof$
	for some $a$, $x$ and $v$.
	$\indexCar{\arrType_i}{\formulaProof}$ is its size.
\end{definition}

\begin{example}
	\label{ex:free-vars-array-set}
	Referring to $\exsigwf$-formula $\formulaExUnsat$ from \Cref{tab:sigs},
	we have $\indexSet{\arrType}{\formulaExUnsat}=\{i\}$ and $\indexCar{\arrType}{\formulaExUnsat}=1$.
\end{example}

$\Sigma^{\formulaProof}_{n+1}$ is defined in \Cref{fig:dt-sig-trans}.
The idea is to mimic
$\Sigma_{n+1}$, while avoiding arrays completely.
Let us start with its sorts.
%
For every $\sigma \in \structsorts_{\Sigma_{n+1}}$,
we have $\trtype{\sigma} \in
	\structsorts_{\Sigma^{\formulaProof}_{n+1}}$, and for every sort $\sigma \in
	\elemsorts_{\Sigma_{n+1}} \cup \{E_i \mid i \in [1,n]\}$ that is not in
$\arrays_{\Sigma}$ or $\structsorts_{\Sigma_{n+1}}$,
we have $\trtype{\sigma} \in \elemsorts_{\Sigma^{\formulaProof}_{n+1}}$.
For every $\arrType_i$, where $i \in [1,n]$,
we have $\trtype{\arrType_i} \in \structsorts_{\Sigma^{\formulaProof}_{n+1}}$.

For each constructor $c : (\rho_1 \times \ldots \times \rho_m) \ra \sigma$ in $\constructors_{\Sigma_{n+1}}$,
we create a corresponding constructor
$\trsig{c} : (\trtype{\rho_1} \times \ldots \times \trtype{\rho_m}) \ra \trtype{\sigma}$
in $\constructors_{\Sigma^{\formulaProof}_{n+1}}$.
Intuitively, $\trsig{c}$ is a copy of $c$ that operates on the transformed sorts.
For every $1\leq i\leq n$, we create a single constructor $\arrCons{i}$ of arity
$(\idx{\arrType_i} \times {\trtype{E_i} \times \ldots \times \trtype{E_i}}) \ra \trtype{\arrType_i}$.
Intuitively, this constructor represents an array using a finite tuple:
the first argument (of sort $\idx{\arrType_i}$) is a unique identifier for the array,
and the remaining $\indexCar{\arrType_i}{\formulaProof}$ arguments
store the elements that can be accessed using the indices from $\indexSet{\arrType_i}{\formulaProof}$.
The reason for the first argument is to ensure that if two arrays
are different, due to indices not in $\indexSet{\arrType_i}{\formulaProof}$,
their translations will remain different.
We can always assume w.l.o.g that $\indexCar{\arrType_i}{\formulaProof} \geq 1$;
if it is zero, we add a literal $\select{\arrType_i}(a,j) = v$ to $\formulaProof$ with fresh variables $a$, $j$, and $v$.
Once the sorts and constructors of $\Sigma^{\formulaProof}_{n+1}$ are defined,
the selectors and testers are induced as usual for datatypes.
Notice,
$\Sigma^{\formulaProof}_{n+1}$ itself is a datatypes signature:

\begin{restatable}{lemma}{sigisdtsigtr}
	\label{lem:well-foundedness-n+1}
	${\Sigma^\formulaProof_{n+1}}$ is a datatypes signature.
\end{restatable}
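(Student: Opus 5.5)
We check the conditions of \Cref{def:IDT_sig} one at a time. Most are immediate from the construction; the substantive one is well-foundedness, which we transfer from the NDT well-foundedness of $\Sigma$ (\Cref{sec:theory:ndt-sig}, \Cref{def:theory:well-founded}).

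\emph{Syntactic conditions.} Finiteness is clear, since $\Sigma$ is finite and each $\arrCons{i}$ has finite arity $1+\indexCar{\arrType_i}{\formulaProof}$. The element sorts $\trtype{\sigma}$ and the struct sorts (the $\trtype{\sigma}$ for $\sigma\in\structsorts_{\Sigma_{n+1}}$, together with the $\trtype{\arrType_i}$) form disjoint sets. Disjointness holds because $\structsorts_{\Sigma_{n+1}}\cap\arrays_\Sigma=\emptyset$, and the element-sort clause explicitly excludes $\arrays_\Sigma\cup\structsorts_{\Sigma_{n+1}}$. Every constructor has a struct sort as its codomain. For $\trsig{c}$ this is because $c$'s codomain lies in $\structsorts_{\Sigma_{n+1}}$. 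For $\arrCons{i}$ the codomain is $\trtype{\arrType_i}$.

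Each argument sort of a constructor is a sort of the signature. For $\trsig{c}$, an argument $\trtype{\rho}$ is either a struct sort or an element sort; the latter holds because $\rho\in\elemsorts_{\Sigma_{n+1}}$, or $\rho$ is an array sort, in which case $\trtype{\rho}$ is struct. For $\arrCons{i}$, $\trtype{E_i}$ is either struct or an element sort. The sort $\idx{\arrType_i}$ has to be included among the element sorts. We make that choice explicitly; it is harmless because no constructor targets it. Selectors and testers are defined as the induced sets, so they match the required form by construction.

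\emph{Well-foundedness.} Let $S=\sorts{\Sigma}\setminus(\arrays_\Sigma\cup\structsorts_{\Sigma_{n+1}})$ and $S'=\elemsorts_{\Sigma^\formulaProof_{n+1}}$. We prove by induction on $k$ that
\[
\sigma\in F_k(\consig{\Sigma}{\Sigma_{n+1}},S)\implies \trtype{\sigma}\in\bigcup_j G_j(\consig{\Sigma^\formulaProof_{n+1}}{\Sigma^\formulaProof_{n+1}},S').
\]
Since $\Sigma$ is an NDT signature, every sort in $\arrays_\Sigma\cup\structsorts_{\Sigma_{n+1}}$ lies in some $F_k$. The claim therefore covers every struct sort of $\Sigma^\formulaProof_{n+1}$.

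\begin{itemize}
\item Base case: $\sigma\in S$. Then $\trtype{\sigma}\in S'$. Here $\sigma$ is either an element sort of $\Sigma_{n+1}$ or some $E_i$ or $I_i$ not already a struct or array sort; we use $\{I_i\}\cap(\structsorts\cup\arrays)=\emptyset$. One subtlety: index sorts $I_i$ that are not datatype element sorts may be absent from the datatype signature. In that case they can be ignored, because the only array rule we need does not actually consult $I_i$ in the translation.
\item Constructor step. If $\sigma$ enters via a constructor $c$ whose arguments are already in $F_k$, then $\trsig{c}$ places $\trtype{\sigma}$ in the next $G$.
\item Array step. If $\arrType_i$ enters because $E_i,I_i\in F_k$, then $\trtype{E_i}$ is in some $G_j$ by induction, and $\idx{\arrType_i}\in G_0$. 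Hence $\arrCons{i}$ yields $\trtype{\arrType_i}$ in $G_{j+1}$.
\end{itemize}

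The main obstacle is bookkeeping rather than mathematics. The difficulty is matching the restricted signature used in \Cref{def:theory:well-founded}, namely $\consig{\exsigwf}{\Sigma_2}$-style restrictions to constructors, with the one in \Cref{def:IDT_sig}. It also requires checking that $\idx{\arrType_i}$ and every $\trtype{E_i}$ not among the struct sorts are genuinely declared element sorts. We would handle both points by stating the sort partition explicitly before running the induction.
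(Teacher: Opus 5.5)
Your proposal is correct and is essentially the paper's proof: induction on the $F$-index, where a constructor step is simulated by $\trsig{c}$ and an array step by $\arrCons{i}$, using $\idx{\arrType_i}\in G_0$ together with $\trsig{E_i}$ (which lies in $G_0$ when $E_i\notin\structsorts_{\Sigma_{n+1}}\cup\arrays_\Sigma$ and comes from the induction hypothesis otherwise). The paper states the invariant $\sigma\in F_m\Rightarrow\trsig{\sigma}\in G_m$ only for $\sigma\in\structsorts_{\Sigma_{n+1}}\cup\arrays_\Sigma$, and this is the clean way to make your remark about ignoring index sorts rigorous: your invariant as literally stated fails for an $I_i$ that is neither a sort of $\Sigma_{n+1}$ nor some $E_j$, since then $\trsig{I_i}$ is not even a sort of $\Sigma^{\formulaProof}_{n+1}$.
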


The last signature that is left to discuss is $\Sigma_{n+2}$,
presented in \Cref{fig:one-to-one-sig}.
Its purpose is to relate
the new sorts $\trarr{\arrType}$ and $\trtype{\arrType}$ to one another.
For each $\arrType\in\arrays_{\Sigma}$,
we have both $\trarr{\arrType}$ and $\trtype{\arrType}$
as sorts in $\Sigma_{n+2}$,
as well as two function symbols,
$f_{\arrType}$ and $g_{\arrType}$,
with arities
$\trtype{\arrType} \rightarrow \trarr{\arrType}$
and
$ \trarr{\arrType}\rightarrow \trtype{\arrType}$

\begin{table}[t]
	\hspace*{-1cm}%
	\begin{minipage}[t]{0.6\linewidth}
		\centering\footnotesize
		\begin{tabular}{|c|c|}\hline
			$\elemsorts_{{\Sigma^\formulaProof_{n+1}}}$                                              &
			$\{\trtype{\sigma}\mid \sigma\in\elemsorts_{\Sigma_{n+1}}\setminus \arrays_{\Sigma}\}\cup$                                                                                                  \\ & $
				\{\idx{\arrType_i}\mid i\in [1,n]\} \cup$                                                                                                                                                   \\
			                                                                                         & $\{\trsig{E_i} \mid i \in [1,n], E_i \notin (\arrays_\Sigma \cup \structsorts_{\Sigma_{n+1}})\}$
			\\\hline
			$\structsorts_{{\Sigma^\formulaProof_{n+1}}}$
			                                                                                         &
			$\{\trtype{\sigma}\mid
				\sigma\in\structsorts_{\Sigma_{n+1}}\cup\arrays_{\Sigma}\}$
			\\\hline
			$\constructors_{{\Sigma^\formulaProof_{n+1}}}$                                           &
			$\{\trsig{c}\mid c\in\constructors_{\Sigma_{n+1}}\}\cup
				\{\arrCons{i}\mid 1\leq i\leq n\}$
			\\\hline
			$\selectors_{{\Sigma^\formulaProof_{n+1}}},\predecessors_{{\Sigma^\formulaProof_{n+1}}}$ &
			As induced by $\constructors_{{\Sigma^\formulaProof_{n+1}}}$
			\\\hline
			\multirow{2}{*}{Arities}                                                                 &
			$\trsig{\container}:
				(\trtype{\sigma_{1}}\times\ldots\times\trtype{\sigma_{m}})\rightarrow\trtype{\sigma}$                                                                                                       \\ &
			$\arrCons{i}:
				(\idx{\arrType_{i}}\times\underbrace{\trtype{E_i}\times\ldots\times\trtype{E_i}}_{\indexCar{\arrType_i}{\formulaProof}})\rightarrow\trtype{\arrType_{i}}$
			\\\hline
		\end{tabular}
		\caption{The datatypes signature $\Sigma^{\formulaProof}_{n+1}$.
		}
		\label{fig:dt-sig-trans}
	\end{minipage}%
	\hfill
	\begin{minipage}[t]{0.39\linewidth}
		\centering\footnotesize
		\begin{tabular}{|c|c|}\hline
			$\sorts{\Sigma_{n+2}}$ &
			$\{\trtype{\arrType_i}\mid 1\leq i \leq n\}\cup$ \\ & $
				\{\trarr{\arrType_i}\mid 1\leq i \leq n\}$       \\\hline
			$\func{\Sigma_{n+2}} $ &
			$\{f_{\arrType_i}\mid 1\leq i \leq n\}\cup$      \\ & $\{g_{\arrType_i}\mid 1\leq i \leq n\}$ \\\hline
			$\pred{\Sigma_{n+2}} $ & $\emptyset$             \\\hline
			Arities                &
			$f_{\arrType}:
				\trtype{\arrType}\ra\trarr{\arrType}$,
			$g_{\arrType}:
				\trarr{\arrType}\ra\trtype{\arrType}$.
			\\\hline
		\end{tabular}
		\caption{Signature $\Sigma_{n+2}$.
		}
		\label{fig:one-to-one-sig}
	\end{minipage}
\end{table}

\begin{example}
	\label{ex:new-sig-sigma-n+1}
	Recall $\exsigwf$
	and $\formulaExUnsat$ from~\Cref{tab:sigs}.
	$\trsig{\Sigma_1}$,
	${\Sigma_2^\formulaExUnsat}$, and
	$\Sigma_3$ are
	given in \Cref{tab:sigs-tran}.
	$\trsig{\container}$ and $\trsig{\emp}$
	are copies of
	$\container$ and $\emp$.
	The reason we only add $\arrCons{1}$ is
	that $\exsigwf$ has one arrays signature -- $\Sigma_1$.
	$\arrCons{1}$ represents the array selects.
	Its first argument, of sort $\idx{A}$, is a unique identifier.
	Its remaining arguments represent the elements of the array that are accessed
	from relevant indices.
	As shown in \Cref{ex:free-vars-array-set}, there is one relevant index.
	Thus, $\arrCons{1}$ has one more input sort,
	$\trtype{E}$, representing that element.
	%

\end{example}

\begin{table}[t]
	\newcolumntype{M}[1]{>{\centering\arraybackslash}m{#1}}
	\centering
	{
		\scriptsize
		\begin{tabular}{|M{.8cm}|M{3.6cm}|M{5.5cm}|M{3.7cm}|}\hline
			Name                                                              &
			Sorts                                                             &
			Symbols                                                           &
			Formula                                                                                                                    \\\hline\hline
			                                                                  &                        &   &                           \\

			\shortstack{
				$\trsig{\Sigma_1}$
			}                                                                 &
			\shortstack{
				$\sorts{\trsig{\Sigma_1}} = \{\trarr{\arrType}, \trtype{I}, \trtype{E}\}$
			}                                                                 &
			\shortstack{
				$\func{\trsig{\Sigma_1}} = \{\select{\trarr{\arrType}}, \store{\trarr{\arrType}}\}$
			}                                                                 &
			\shortstack{
				$\tra(\psi_1') := \trsig{x}=\select{}(\trarr{a},\trsig{i})$
			}
			\\&&&\\\hline
			                                                                  &                        &   &                           \\

			$\Sigma_2^{\formulaExUnsat}$                                      &
			\shortstack{
				$\sorts{{\Sigma_2^\formulaExUnsat}} = \elemsorts_{\Sigma_2^\formulaExUnsat} \uplus
					\structsorts_{\Sigma_2^\formulaExUnsat}$                                                                                   \\
				$\elemsorts_{\Sigma_2^\formulaExUnsat} = \{ \trsig{I}, \idx{\arrType} \}$                                                  \\
				$\structsorts_{\Sigma_2^\formulaExUnsat} = \{ \trsig{\arrType}, \trsig{E} \}$
			}
			                                                                  &
			\shortstack{
				$\func{\Sigma_2^\formulaExUnsat} = \constructors_{\Sigma_2^\formulaExUnsat} \cup
					\selectors_{\Sigma_2^\formulaExUnsat}$                                                                                     \\
				$\constructors_{\Sigma_2^\formulaExUnsat} = \{ \trsig{\container}, \trsig{\emp}, \arrCons{1} \}$                           \\
				$\selectors_{\Sigma_2^\formulaExUnsat} = \{ \trsig{\id}, \trsig{\children}, \sel{\arrCons{1}}{1}, \sel{\arrCons{1}}{2} \}$ \\~\\
				\\~\\
				$\pred{{\Sigma_2^\formulaExUnsat}} = \{ \is{\trsig{\container}}, \is{\trsig{\emp}}, \is{\arrCons{1}} \}$
			}
			                                                                  &
			\shortstack{
				$\tra(\psi_2) :=$                                                                                                          \\ $\is{\trsig{\container}}(\trsig{x})\wedge \trsig{a} = \trsig{\children}(\trsig{x}) $
			}
			\\&&&\\\hline

			$\Sigma_3$                                                        &
			$\sorts{\Sigma_3} = \{\trarr{\arrType}, \trsig{\arrType}\}$       &
			\shortstack{
				$\func{\Sigma_3} = \{f_{\arrType}, g_{\arrType}\}$
			}                                                                 &
			$ \trarr{a} = f_{\arrType}(\trsig{a})$
			\\\hline
			                                                                  &
			\multicolumn{2}{|c|}{}                                            &                                                        \\
			$\newSig{\exsigwf}{\formulaExUnsat}$                              &
			\multicolumn{2}{|c|}{
			$\trsig{\Sigma_1} \cup {\Sigma_2^\formulaExUnsat} \cup \Sigma_3$} &
			$\tra(\formulaExUnsat)=\tra(\psi_1')\wedge\tra(\psi_2)$                                                                    \\
			                                                                  & \multicolumn{2}{|c|}{} &                               \\
			\hline
			\shortstack{Sorts                                                                                                          \\and\\Arities} &
			\multicolumn{3}{>{\centering\arraybackslash}m{12.8cm}|}{
			\shortstack{
				$\trsig{\container}:\trtype{I} \times \trtype{A} \ra \trtype{E}$,
				$\trsig{\emp}:\trtype{E}$,
				$\arrCons{1}:\idx{A} \times \trtype{E} \ra \trtype{A}$                                                                     \\
				$\select{\trarr{\arrType}}:(\trarr{\arrType} \times \trtype{I}) \ra \trtype{E}$,
				$\store{\trarr{\arrType}}:(\trarr{\arrType} \times \trtype{I} \times \trtype{E}) \ra \trarr{\arrType}$                     \\
				$f_{\arrType}:\trtype{\arrType} \rightarrow \trarr{\arrType}$,
				$g_{\arrType}:\trarr{\arrType} \rightarrow \trtype{\arrType}$,
				$\trsig{x}:\trsig{E}$, $\trsig{i}:\trsig{I}$, $\trsig{a}:\trsig{\arrType}$, $\trarr{a}:\trarr{\arrType}$
			}
			}                                                                                                                          \\\hline
		\end{tabular}
		\normalsize
	}
	\caption{Translations for
		\Cref{tab:sigs}.
		$\trsig{\id}$  and
		$\trsig{\children}$ resp. stand for
		$\sel{\trsig{\container}}{1}$ and
		$\sel{\trsig{\container}}{2}$.
	}
	\label{tab:sigs-tran}
\end{table}

We now define our target theory,
as a combination of arrays, datatypes and EUF.

\begin{definition}
	\label{def:new-theory}
	For $1\leq i\leq n$, let $\trsig{T_i}$ be the $\trsig{\Sigma_i}$-theory
	of arrays,
	$T_{n+1}^\formulaProof$ the
	${\Sigma^\formulaProof_{n+1}}$-theory of datatypes and $T_{n+2}$ the
	$\Sigma_{n+2}$-theory of EUF.
	Define the following
	$\newSig{\Sigma}{\formulaProof}$-theory:
	\(
	\newT{\formulaProof} = \Bigl(\biguplus_{i=1}^{n} \trsig{T_i}\Bigr) \uplus T_{n+1}^\formulaProof \uplus T_{n+2}.
	\)
\end{definition}

\begin{example}
	\label{ex:preprocessing-signature}
	For $\exsigwf$ and formula $\formulaExUnsat$ of~\Cref{tab:sigs}, we obtain
	$\newSig{\exsigwf}{\formulaExUnsat}$, given in
	\Cref{tab:sigs-tran}.
	$\trsig{T_1}$ is the array theory of $\Sigma_1$,
	$\trsig{T_2}$ is the datatype theory of $\Sigma_2$,
	and $T_3$ is the EUF theory of $\Sigma_3$.
	\(
	\newT{\formulaExUnsat} = \trsig{T_1} \uplus T_2^\formulaExUnsat \uplus T_3.
	\)
\end{example}

\begin{remark}
	It might be more natural to represent each array variable $a$
	with its own constructor $c_a$, thus omitting the first
	argument (and the sort $\idx{\arrType}$ altogether).
	However, we want to make sure that if our input formula entails
	an equality $a=b$ between two array variables $a$ and $b$,
	then the datatype terms that represent them will also be equal.
	But if these datatypes are constructed using distinct constructors,
	they will surely be distinct.
	In contrast, having an identifier as an argument to the constructor,
	allows us to keep the resulting datatypes equal.
\end{remark}

\subsection{Translation of Formulas and Terms}
\label{sec:transforms}
We define a function $\tra$ from  $\Sigma$ to
$\newSig{\Sigma}{\formulaProof}$.
For that,
we introduce fresh variables
in~\Cref{tab:new-vars}.
For each variable $u$ of sort $\sigma \in \sorts{\Sigma}$, we define a new variable $\trsig{u}$ of sort $\trtype{\sigma}$.
For each variable $a$ of sort $\arrType_i \in \arrays_{\Sigma}$,
we also define a new variable $\trarr{a}$ of sort $\trarr{\arrType_i}$ and $\indexCar{\arrType_i}{\formulaProof}$ variables $\tempvar{a}{1}, \ldots, \tempvar{a}{\indexCar{\arrType_i}{\formulaProof}}$ of sort $\trtype{E_i}$.
This last set of variables will be used in the next section.
The translation matches $\Sigma$-terms to $\newSig{\Sigma}{\formulaProof}$-terms as follows.

\begin{table}[t]
	\centering
	\begin{tabular}{|c|c|c|c|}
		\hline
		Original Variable    & Original Sort                                    & New Variable                                                                & New Sort           \\\hline
		$u$                  & $\sigma \in \sorts{\Sigma}$                      & $\trsig{u}$                                                                 & $\trsig{\sigma}$   \\\hline
		\multirow{2}{*}{$a$} & \multirow{2}{*}{$\arrType_i \in \arrays_\Sigma$} & $\trarr{a}$                                                                 & $\trarr{\arrType}$ \\\cline{3-4}
		                     &                                                  & $\tempvar{a}{1}, \ldots, \tempvar{a}{\indexCar{\arrType_i}{\formulaProof}}$ & $\trsig{E_i}$      \\\hline
	\end{tabular}
	\caption{Fresh variables introduced for the $\newSig{\Sigma}{\formulaProof}$ signature.}
	\label{tab:new-vars}
\end{table}


\begin{definition}
	\label{sec:preprocessing-tr}
	The function $\tra$
	is given in
	\Cref{tab:literal-transformations}
	for flat $\Sigma$-literals.
	For conjunctions of flat literals, it is naturally extended to
	$\tra(\ell_1\wedge\ldots\wedge\ell_n)=\tra(\ell_1)\wedge\ldots\wedge\tra(\ell_n)$.
\end{definition}

\begin{example}
	\label{ex:preprocessing-tr}
	\Cref{tab:sigs-tran} shows the translation of $\formulaExUnsat$
	from \Cref{tab:sigs}.
	\Cref{fig:dt-ndt-ex-tr} presents illustrations of the translation in the first three rows.
	Even though we showed in~\Cref{ex:varphi-unsat} that
	$\formulaExUnsat$ is not $\tndt{\Sigma}$-satisfiable,
	$\tra(\formulaExUnsat)$ is
	$\newT{\formulaExUnsat}$-satisfiable,
	which suggests that $\tra$ alone is not sufficient.
\end{example}

\begin{table}[t]
	\centering\footnotesize
	\begin{tabular}[t]{|l|l|}
		\hline
		\textbf{Literal} $\literal$   & $\tra(\literal)$                                                                 \\[6pt] \hline
		$x = y$                       & $\trsig{x} = \trsig{y}$                                                          \\[4pt] \hline
		$x \neq y$                    & $\trsig{x} \neq \trsig{y}$                                                       \\[4pt] \hline
		$x = \select{\arrType}(a,i)$  & $\trsig{x} = \select{\trarr{\arrType}} \bigl(\trarr{a},\trsig{i}\bigr)$          \\[4pt] \hline
		$b = \store{\arrType}(a,i,v)$ & $\trarr{b} = \store{\trarr{\arrType}} \bigl(\trarr{a},\trsig{i},\trsig{v}\bigr)$ \\[4pt] \hline
	\end{tabular}\hspace{1em}%
	\begin{tabular}[t]{|l|l|}
		\hline
		\textbf{Literal} $\literal$ & $\tra(\literal)$                                               \\[6pt] \hline
		$x = \sel{c}{i}(y)$         & $\trsig{x} = \sel{\trsig{c}}{i}\bigl(\trsig{y}\bigr)$          \\[4pt] \hline
		$x = c(t_1 \til t_n)$       & $\trsig{x} = \trsig{c}\bigl(\trsig{t_1}\til \trsig{t_n}\bigr)$ \\[4pt] \hline
		$\is{c}\bigl(x\bigr)$       & $\is{\trsig{c}}\bigl(\trsig{x}\bigr)$                          \\[4pt] \hline
		$\neg \is{c}\bigl(x\bigr)$  & $\neg \is{\trsig{c}}\bigl(\trsig{x}\bigr)$                     \\[4pt] \hline
	\end{tabular}
	\caption{Translation function $\tra$.}
	\label{tab:literal-transformations}
\end{table}

\begin{figure}[t]
	\centering
	\renewcommand{\arraystretch}{1.1}
	\setlength{\tabcolsep}{6pt}
	\setlength{\extrarowheight}{2pt}
	\newcommand{\rowpad}{\rule[-22pt]{0pt}{44pt}}
	\newcommand{\xstarpic}[1][]{%
		\begin{tikzpicture}[baseline={(current bounding box.center)}]
			\def\top{0.5}
			\def\bot{-0.5}
			\def\leftx{1.2}
			\pgfmathsetmacro{\slotwidth}{2.6}
			\pgfmathsetmacro{\rightx}{\leftx + \slotwidth}
			\path[outerbg] (-0.5,\bot - 0.5) rectangle (\rightx + 0.5,\top + 0.3);
			\node[outerid, anchor=north west] (id) at (0,\top) {0};
			\fill[gray!50] (\leftx,\top) rectangle (\rightx,\bot);
			\draw[outerframe] (\leftx,\top) rectangle (\rightx,\bot);
			\pgfmathsetmacro{\labelx}{\leftx + 0.5*\slotwidth}
			\node at (\labelx,-0.8) {\textbf{$\trsig{a}$}};
			\node at (\labelx - 0.7, -1.2) {\textbf{$\trsig{x}$}};
			#1
			\draw[divider] (\leftx+0.3*\slotwidth,\top) -- (\leftx+0.3*\slotwidth,\bot);
		\end{tikzpicture}%
	}
	\newcommand{\outerpic}[1]{%
		\begin{tikzpicture}[baseline={(current bounding box.center)}]
			\def\top{1.3}
			\def\bot{-1.3}
			\def\leftx{0}
			\pgfmathsetmacro{\slotwidth}{11.0}
			\pgfmathsetmacro{\rightx}{\leftx + \slotwidth}
			\fill[gray!50] (\leftx,\top) rectangle (\rightx,\bot);
			\draw[outerframe] (\leftx,\top) rectangle (\rightx,\bot);
			\pgfmathsetmacro{\labelx}{\leftx + 0.5*\slotwidth}
			\node at (\labelx,\bot-0.3) {\textbf{$\trarr{a}$}};
			\node at (\leftx+0.25*\slotwidth, 0) {#1};
			\draw[divider] (\leftx+0.5*\slotwidth,\top) -- (\leftx+0.5*\slotwidth,\bot);
		\end{tikzpicture}%
	}
	\begin{tabular}{|c|c|}\hline
		\textbf{Formula} & \textbf{Illustration}                                \\\hline
		\rowpad \shortstack{$\tra(\psi_1')$                                     \\[4pt] \greensat} &
		\scalebox{0.7}{\outerpic{\textbf{$\trsig{x}$}}}
		\\\hline
		\rowpad \shortstack{$\tra(\psi_2)$                                      \\[4pt] \greensat} &
		\scalebox{0.7}{\xstarpic}
		\\\hline
		\rowpad \shortstack{$\tra(\formulaExUnsat)$                             \\[4pt] \greensat} &
		\scalebox{0.7}{\outerpic{\xstarpic}}
		\\\hline
		\rowpad \shortstack{$\tra(\formulaExUnsat)\wedge\Lone{\formulaExUnsat}$ \\[4pt] \greenunsat} &
		\scalebox{0.7}{\outerpic{\xstarpic[\node at (\leftx+0.15*\slotwidth, 0.05*\top) {\textbf{$\trsig{x}$}};]}}
		\\\hline
	\end{tabular}
	\caption{Formalization of cycles in $\newT{\formulaExUnsat}$.
		$\tra(\psi_1'),
			\tra(\psi_2),
			\tra(\formulaExUnsat)$ are in \Cref{tab:sigs-tran}.
		$\Lone{\formulaExUnsat}$ is in \Cref{ex:L0}.
	}
	\label{fig:dt-ndt-ex-tr}
\end{figure}

\subsection{Lemmas}
\label{sec:lemmas}
To fix the issue raised by \Cref{ex:preprocessing-tr}, we augment $\tra$ with lemmas, that are
given in \Cref{fig:notation-summary}.

%
Recall that each constructor $\arrCons{j}$ has an arity based on the set
$\indexSet{\arrType_j}{\formulaProof}$.
%
In order to properly apply the constructors $\arrCons{j}$s, we fix
an ordering for the set  $\indexSet{\arrType_j}{\formulaProof}$ by
fixing a bijection:

\begin{definition}
	\label{def:indOrder}
	Given $1\leq j\leq n$ and $\formulaProof$,
	$\indOrder{\arrType_j}{\formulaProof}$
	denotes a fixed bijection
	from
	$\indexSet{\arrType_j}{\formulaProof}$ to $
		\{1, \ldots, \indexCar{\arrType_j}{\formulaProof}\}$.
	When $\arrType_j$ and $\formulaProof$ are clear from the context,
	we omit them and simply write $\indOrder{}{}$.
\end{definition}

\begin{table}[t]
	\centering
	\begin{tabular}{|l|p{0.75\textwidth}|}
		\hline
		Notation                        & Definition \\ \hline
		\(\Lzero{a}{i}{\formulaProof}\) &
		\(\displaystyle
		\begin{array}{l}
			\Bigl\{
			\tempvar{a}{\indOrder{\arrType_j}{\formulaProof}(i)} = \select{\trarr{\arrType_j}}(\trarr{a}, \trsig{i})~~~,~~~
			\tempvar{a}{\indOrder{\arrType_j}{\formulaProof}(i)}= \sel{\arrCons{j}}{\indOrder{\arrType_j}{\formulaProof}(i)+1}(\trsig{a})
			\Bigr\}
		\end{array}
		\)                                           \\ \hline
		\(\Lone{\formulaProof}\)        &
		\(\displaystyle
		\bigcup_{x = \select{\arrType}(a,i) \in \formulaProof}
		\Lzero{a}{i}{\formulaProof}
		\)                                           \\ \hline
		\(\Ltwo{\formulaProof}\)        &
		\(\displaystyle
		\bigcup_{b = \store{\arrType}(a, j, v) \in \formulaProof}~~
		\bigcup_{i \in \indexSet{\arrType}{\formulaProof}}
		\Bigl(
		\Lzero{a}{i}{\formulaProof} \cup \Lzero{b}{i}{\formulaProof}
		\Bigr)
		\)                                           \\ \hline
		\(\Lthree{\formulaProof}\)      &
		\(\displaystyle
		\bigcup_{\arrType \in \arrays_\Sigma}~~
		\bigcup_{a \in \fv{\arrType}{\formulaProof}}
		\Bigl\{
		\trarr{a} = f_{\arrType}(\trsig{a}),\quad
		\trsig{a} = g_{\arrType}(\trarr{a})
		\Bigr\}
		\)                                           \\ \hline
	\end{tabular}
	\caption{Lemmas \(\allArr{a}{\formulaProof}\), \(\Lone{\formulaProof}\), \(\Ltwo{\formulaProof}\), and \(\Lthree{\formulaProof}\).
		The sort of $a$ is $\arrType_j$.
	}
	\label{fig:notation-summary}
\end{table}

$\Lzero{a}{i}{\formulaProof}$ ensures that the value
obtained by a select on the array variable $\trarr{a}$ of sort
$\trarr{\arrType_j}$ at index $\trsig{i}$ agrees with the corresponding field of $\trsig{a}$.
It utilizes variables
$\tempvar{a}{1}, \ldots, \tempvar{a}{\indexCar{\arrType_i}{\formulaProof}}$
to keep the translation flat.
%
$\Lone{\formulaProof}$ applies $\LzeroF$ to every select literal in
$\formulaProof$, while $\Ltwo{\formulaProof}$ applies it for all indices
occurring in the formula and every array variable that appears under a store.

The motivation for $\Lthree{\formulaProof}$ is to ensure that
for every array sort $A$,
there is a one to one mapping between the interpretations of free variables of sort $\trarr{A}$ and those of sort $\trtype{A}$.
$f_{\arrType}$ and $g_{\arrType}$ are asserted to be inverses of one another
when restricted to the relevant domain, thus ensuring the 1-1 correspondence.

We now combine these sets of literals to obtain the full set of lemmas
associated with the formula $\formulaProof$.
This allows us to define the final translation function $\trF$.

\begin{definition}
	\label{sec:preprocessing-lemmas-reduced}
	\label{sec:preprocessing-full-alg}
	Let \(\formulaProof\) be a flat cube in an NDT signature \(\Sigma\).
	The set of lemmas associated with \(\formulaProof\) is
	\(
	\reduced{\formulaProof} = \Lone{\formulaProof} \cup \Ltwo{\formulaProof} \cup \Lthree{\formulaProof}.
	\)
	The translation $\trF$ is defined by:
	\(
	\trF(\formulaProof) = \tra(\formulaProof)
	\land \bigwedge \reduced{\formulaProof}.
	\)
\end{definition}

\begin{example}
	\label{ex:L0}
	\label{ex:L2}
	Consider $\formulaExUnsat$ from \Cref{tab:sigs}.
	$
		\Lzero{a}{i}{\formulaExUnsat} = \{
		\tempvar{a}{1} = \select{}(\trarr{a}, \trsig{i}),
		\tempvar{a}{1} = \sel{\arrCons{1}}{2}(\trsig{a})
		\}$.
	Further,
	\(
	\Lone{\formulaExUnsat} = \Lzero{a}{i}{\formulaExUnsat}
	\).
	Since there are no stores,   \(
	\Ltwo{\formulaExUnsat} = \emptyset.
	\)
	Thanks to $\Lone{\formulaExUnsat}$, we obtain the desired
	$\newT{\formulaExUnsat}$-unsatisfiability,
	as illustrated in \Cref{fig:dt-ndt-ex-tr}. Notice that unsatisfiability
	follows directly from the theory of datatypes, as there is a cycle that does not go through
	any array:
	indeed, $\trsig{x}$ and $\trsig{a}$ are of sorts
	$\trsig{E}$ and $\trsig{\arrType}$, respectively, and both are datatype sorts that belong to
	$\structsorts_{\Sigma_2^\formulaExUnsat}$.
	This is in contrast to the last two rows of \Cref{fig:dt-ndt-ex2}, where $x$ is a datatype variable
	and $a$ is an array variable.
\end{example}


\subsection{Correctness}
\label{sec:correctness}

The next theorem shows that our translation is correct, in the sense that it preserves
(un)satisfiability, assuming a stable infiniteness condition.
\begin{theorem}
	\label{thm:correctness-of-translation}
	Suppose $T_{n+1}$ is
	stably infinite w.r.t. $\big(\bigcup_{i=1}^{n}\sorts{\Sigma_{i}}\big) \cap \sorts{\Sigma_{n+1}}$.
	Then
	$\formulaProof$ is $\tndt{\Sigma}$-satisfiable
	iff
	$\trF(\formulaProof)$ is $\newT{\formulaProof}$-satisfiable.
	The left-to-right direction holds without
	assuming stable infiniteness.
\end{theorem}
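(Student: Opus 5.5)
The plan is to prove the two directions separately. Left-to-right is a direct model translation that uses the well-foundedness of $\rel{\model}$. Right-to-left rebuilds a standard model out of the finite trees of the datatype component, and this is where stable infiniteness is needed.

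\textbf{Left-to-right.} Let $\model$ be an NDT $\Sigma$-interpretation with $\model\models\formulaProof$, and let $\{i_1,\ldots,i_k\}=\indexSet{\arrType_j}{\formulaProof}$, ordered by $\indOrder{\arrType_j}{\formulaProof}$.
\begin{enumerate}
\item Choose the carrier of each $\idx{\arrType_j}$ to be $\arrType_j^{\model}$ itself, so the identity serves as the injective identifier.
\item Define a map $h$ from $\model$-values to $\Sigma^{\formulaProof}_{n+1}$-trees by well-founded recursion on $\rel{\model}$:
\begin{itemize}
\item elements of element sorts are sent to themselves;
\item $h(c(t_1,\ldots,t_m))=\trsig{c}(h(t_1),\ldots,h(t_m))$;
\item for an array value $a$ of sort $\arrType_j$, $h(a)=\arrCons{j}\bigl(a,\,h(\select{}^{\model}(a,i_1^{\model})),\ldots,h(\select{}^{\model}(a,i_k^{\model}))\bigr)$.
\end{itemize}
The recursion is legitimate because every argument is an $\rel{\model}$-predecessor of the value being defined.
\item Show that $h$ is injective, by induction along $\rel{\model}$; the identifier argument separates distinct arrays.
\item Interpret the datatype part of the new model $\mathcal{N}$ as the free $\Sigma^{\formulaProof}_{n+1}$-datatypes structure over these element carriers, and set $\trsig{u}^{\mathcal N}=h(u^{\model})$.
\item Interpret $\trarr{\arrType_j}$ as a full array structure over $\trtype{I_j}^{\mathcal N},\trtype{E_j}^{\mathcal N}$. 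Each $a\in\arrType_j^{\model}$ is represented by the array $\hat a$ with $\hat a(h(i))=h(a(i))$, extended by a fixed default value outside the image of $h$. Put $\trarr{a}^{\mathcal N}=\hat a$. Stores commute with $\hat{\cdot}$, and $\hat{\cdot}$ is injective, so $\tra(\formulaProof)$ holds.
\item Set $f(h(a))=\hat a$ and $g(\hat a)=h(a)$, and define both arbitrarily elsewhere; this gives $\Lthree{\formulaProof}$.
\item $\Lone{\formulaProof}$ and $\Ltwo{\formulaProof}$ hold by the choice of the fields of $\arrCons{j}$, after setting $\tempvar{a}{\ell}$ to the $(\ell+1)$-th field of $h(a^{\model})$.
\end{enumerate}
No stable infiniteness is used in this direction.

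\textbf{Right-to-left.} Let $\mathcal N\models\trF(\formulaProof)$. Read off from $\mathcal N$ the arrangement $\Delta$ that $\mathcal N$ induces on the variables of $\formulaProof$, where $x\sim y$ iff $\trsig{x}^{\mathcal N}=\trsig{y}^{\mathcal N}$. By $\Lthree{\formulaProof}$, $\trarr{a}$ and $\trsig{a}$ induce the same equivalence on array variables.
\begin{enumerate}
\item Build a standard NDT model $\model$ directly. The element sorts shared with arrays are taken infinite, which is possible by stable infiniteness of $T_{n+1}$.
\item Order the variables of $\formulaProof$ of struct and array sorts by the depth of the finite tree $\trsig{u}^{\mathcal N}$. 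Datatype constraints, and the lemmas $\Lone{\formulaProof}$ and $\Ltwo{\formulaProof}$, place every value read from $a$ at a tracked index strictly below $\trsig{a}^{\mathcal N}$.
\item Define values bottom-up along this order:
\begin{itemize}
\item a datatype variable becomes a constructor applied to the already-built values of its children;
\item an array variable $a$ becomes an almost-constant array whose values on $\indexSet{\arrType_j}{\formulaProof}$ are the already-built values of its selected elements.
\end{itemize}
\item Off these indices, $a$ takes a fresh default element, or agrees with the array it was stored from, as dictated by $\trarr{a}^{\mathcal N}$. The defaults are chosen pairwise distinct per class of $\Delta$, which infiniteness of the element sorts allows.
\item The resulting $\rel{\model}$ is well-founded, because every value is assembled as a finite tree of constructor applications and finitely-supported arrays over previously built values.
\item Check literal by literal that $\model\models\formulaProof$:
\begin{itemize}
\item equalities and constructor/selector/tester literals are inherited from the tree structure;
\item select literals hold by $\Lone{\formulaProof}$;
\item store literals hold by $\Ltwo{\formulaProof}$ together with the array axioms in $\mathcal N$.
\end{itemize}
\end{enumerate}

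\textbf{Main obstacle.} The hardest step is preserving the disequalities of $\formulaProof$ in this construction. Two arrays that differ in $\mathcal N$ only at an index outside $\indexSet{\arrType_j}{\formulaProof}$ must still differ in $\model$. Likewise, two datatype values must remain distinct when their distinctness in $\mathcal N$ rests only on distinct $\idx{\arrType_j}$ identifiers. I would handle this with a single invariant maintained during the bottom-up construction: the built value of $u$ equals the built value of $v$ iff $\trsig{u}^{\mathcal N}=\trsig{v}^{\mathcal N}$. Distinct arrays are separated by assigning them distinct fresh default witnesses, and stable infiniteness guarantees enough fresh elements for this; the invariant is then checked by induction on depth.
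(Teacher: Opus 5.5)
Your left-to-right direction is the paper's proof. Your $h$ is $\applyStar$ (the paper recurses on a height function rather than directly on $\rel{\model}$, which is equivalent), your $\hat a$ is $\applyTag(a)$, and $f,g$ are defined the same way. There is one small omission. A datatypes structure leaves selectors on foreign constructors unconstrained, so "the free datatypes structure" does not fix them. You must set $\sel{\trsig{c}}{i}^{\mathcal N}(h(s)):=h(\sel{c}{i}^{\model}(s))$ when $s$ is not a $c$-term, which is well defined because $h$ is injective. Otherwise literals $x=\sel{c}{i}(y)$ with $y^{\model}$ not built by $c$ do not transfer (cf.\ \Cref{tab:funprednewM}). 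The same fix is needed for the selectors of your model in the other direction (cf.\ \Cref{tab:funpredComp}).

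Your right-to-left direction has the paper's overall architecture:
\begin{itemize}
\item a stratified standard model;
\item a back-translation of the finite trees of $\mathcal N$. This is the paper's $\trComp$, defined on the whole subterm closure $\intTerms{}$ of the variables' values, not just on variables; you need that too, since arrays occurring inside trees also need defaults;
\item default values shared along store literals and distinct otherwise. These must be indexed by store-induced classes as in \Cref{def:comp-assign-tail}, not by $\Delta$-classes as you literally wrote;
\item an injectivity invariant proved by induction on tree depth using $\Lone{\formulaProof}$, $\Ltwo{\formulaProof}$ and $\Lthree{\formulaProof}$.
\end{itemize}

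The genuine gap is the one step where the hypothesis is actually used. You assert that the relevant sorts ``are taken infinite, which is possible by stable infiniteness of $T_{n+1}$'', but no such choice is available.

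Sorts outside $\structsorts_{\Sigma_{n+1}}\cup\arrays_{\Sigma}$, such as $I_j$, can be padded with no hypothesis at all (\Cref{lem:infinitely-more-elements}). The sort that must be large is $E_j$, where your distinct default witnesses live. But $E_j$ can be a struct sort (as $E$ is in the running example), and $\arrType_j$ can be a datatype element sort. In a standard NDT model their carriers are forced: $E_j^{\model}$ is a set of trees over element carriers that may include $\arrType_k^{\model}=\almostC{I_k^{\model}}{E_k^{\model}}$, which depends on $E_k^{\model}$ again. Stable infiniteness of $T_{n+1}$ only concerns datatypes structures in which the array sorts are arbitrary element carriers. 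So it says nothing directly about this mutually recursive model.

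The paper bridges this in \Cref{lem:infinite-Ei}, using the sort graph $\sigG{\Sigma}$:
\begin{itemize}
\item eventually cyclic sorts have infinite carriers (\Cref{lem:infinite-depth-eventually-cyclic}, which needs the stage index to handle cycles through array edges);
\item if some $E_{i_0}$ were finite, one finds an $E_{i_1}$ from which only struct sorts are reachable. Its carrier is then the same finite set in every $T_{n+1}$-model, contradicting stable infiniteness.
\end{itemize}
Without an argument of this kind, nothing justifies separating distinct arrays, in particular those that differ in $\mathcal N$ only in their $\idx{\arrType_j}$ identifier. The counterexample of \Cref{app:sec:stable-infiniteness-necessity} fails exactly at this point.
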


Given that the translation is correct and is easily computable, it is only left to show that the target theory
is decidable, which holds under the same stable infiniteness assumptions.

\begin{restatable}{theorem}{Decidability}
	\label{thm:decidability-of-new-theory}
	Suppose $T_{n+1}$ is stably infinite w.r.t.
	$\big(\bigcup_{i=1}^{n}\sorts{\Sigma_{i}}\big) \cap \sorts{\Sigma_{n+1}}$
	Then $\newT{\formulaProof}$ is decidable.
\end{restatable}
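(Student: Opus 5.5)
The plan is to decide $\newT{\formulaProof}$ by the Nelson--Oppen method. The theory is the disjoint-symbol combination of the array theories $\trsig{T_1},\ldots,\trsig{T_n}$, the datatypes theory $T^{\formulaProof}_{n+1}$, and the EUF theory $T_{n+2}$. Each component is decidable: arrays and EUF as recalled in \Cref{ex:decidableth}, and $T^{\formulaProof}_{n+1}$ because, by \Cref{lem:well-foundedness-n+1}, $\Sigma^{\formulaProof}_{n+1}$ is a datatypes signature, so the standard datatypes procedure applies. The component signatures share only sorts, never function or predicate symbols (as noted after \Cref{def:new-sig}). Given a flat $\newSig{\Sigma}{\formulaProof}$-cube, we purify it into component cubes, guess an arrangement of the shared variables sort by sort, and check each component cube together with that arrangement. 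What remains is to establish the stable infiniteness hypotheses on the shared sorts.

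The shared sorts come in four kinds.

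\begin{enumerate}
\item $\trtype{I_i}$ and $\trtype{E_i}$, shared between $\trsig{T_i}$ and possibly $T^{\formulaProof}_{n+1}$.
\item $\trarr{\arrType_i}$, shared between $\trsig{T_i}$ and $T_{n+2}$.
\item $\trtype{\arrType_i}$, shared between $T^{\formulaProof}_{n+1}$ and $T_{n+2}$.
\item The sorts $\trtype{I_i}$, $\trtype{E_i}$ when several arrays share index or element sorts.
\end{enumerate}

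Arrays and EUF are stably infinite with respect to all their sorts. So the burden falls entirely on $T^{\formulaProof}_{n+1}$, which must be stably infinite with respect to every sort it shares with the other components. These are the $\trtype{\sigma}$ with $\sigma\in(\bigcup_i\sorts{\Sigma_i})\cap\sorts{\Sigma_{n+1}}$, together with the struct sorts $\trtype{\arrType_i}$. One could instead use a variant of Nelson--Oppen in which only one theory per shared sort must be stably infinite. I would try to avoid that by showing stable infiniteness directly.

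For the element sorts of $\Sigma^{\formulaProof}_{n+1}$, I would transfer the hypothesis on $T_{n+1}$. This hypothesis says that every satisfiable quantifier-free formula has a model in which the shared element sorts are infinite. The sorts $\idx{\arrType_i}$ are fresh element sorts that occur only as first arguments of $\arrCons{i}$, so they can be taken infinite freely. For element sorts $\trtype{E_i}$ that were not in $\sorts{\Sigma_{n+1}}$, the sort occurs only as arguments to $\arrCons{i}$, so infinitely many fresh elements can be added without affecting the satisfaction of a cube.

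For the struct sorts $\trtype{\arrType_i}$ (and any $\trtype{\sigma}$ with $\sigma\in\structsorts_{\Sigma_{n+1}}$ that is shared), the domain is $T_{\trtype{\arrType_i}}$ over the element domains. This is infinite as soon as $\idx{\arrType_i}$ is infinite, since $\arrCons{i}$ applied to distinct identifiers yields distinct trees. For the remaining shared datatype sorts, infiniteness follows from the infinitely many elements of some shared element sort reachable through constructors, using the well-foundedness established in \Cref{lem:well-foundedness-n+1}.

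The main obstacle is the transfer step: the assumption is stated for $T_{n+1}$ over $\Sigma_{n+1}$, but we need it for $T^{\formulaProof}_{n+1}$ over the modified signature, where array sorts have become struct sorts built by $\arrCons{i}$. I would first try to build, from a $T^{\formulaProof}_{n+1}$-model of a cube, a $T_{n+1}$-style model by reinterpreting each $\trtype{\arrType_i}$ as an element sort. Then I would apply the hypothesis to enlarge the relevant element sorts to infinity, and map back by replacing the values of $\trtype{\arrType_i}$-terms with suitable $\arrCons{i}$-trees that use fresh distinct identifiers. Fresh identifiers preserve every disequality, and equalities are preserved by the choice of mapping. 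If this reconstruction is too delicate, the fallback is the asymmetric Nelson--Oppen variant described above, which applies because every shared sort of $T^{\formulaProof}_{n+1}$ is also a sort of an array or EUF component, and those components are stably infinite with respect to all their sorts. Once stable infiniteness is in place, the Nelson--Oppen theorem gives decidability.
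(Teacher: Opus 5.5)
Your overall architecture matches the paper's. You use Nelson--Oppen, reduce everything to stable infiniteness of $T^{\formulaProof}_{n+1}$ on its shared sorts, pump the element sorts to infinity, and make $\trtype{\arrType_i}$ infinite through the identifiers in $\idx{\arrType_i}$. The gap is the one step where the hypothesis on $T_{n+1}$ is actually needed: shared sorts $\trtype{E_i}$ with $E_i\in\structsorts_{\Sigma_{n+1}}$.

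You claim their infiniteness ``follows from the infinitely many elements of some shared element sort reachable through constructors, using the well-foundedness''. This is not true in general. Well-foundedness only gives nonemptiness. If every constructor path from $E_i$ ends in nullary constructors (as for $E$ in \Cref{app:sec:stable-infiniteness-necessity}, generated by $\emp$ alone), then $\trtype{E_i}$ is the same finite set in every model.

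The model transfer you propose for bringing in the hypothesis is left open, and it meets a concrete obstruction. The $\arrCons{i}$-literals of the cube have no $\Sigma_{n+1}$-counterpart. Once $\trtype{\arrType_i}$ is an unconstrained element sort, the model returned by the hypothesis may put an atom $\alpha$ of sort $\trtype{\arrType_i}$ inside the value of a $\trtype{E_i}$-variable. That variable must then become an argument of the very tree replacing $\alpha$, so rebuilding $\arrCons{i}$-trees can be circular. This happens even if you also fix the arrangement of the variables. Fresh identifiers do not help here, and they cannot be chosen freely anyway when the cube constrains identifiers through variables of sort $\idx{\arrType_i}$.

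Your fallback does not close the gap, because it is unsound. Nelson--Oppen with stable infiniteness on only one side of each shared sort is not a valid combination theorem; asymmetric combination needs politeness or shininess of one side. If it were valid, the hypothesis of the theorem would be superfluous. For a concrete failure, take the setting of \Cref{app:sec:stable-infiniteness-necessity}. The cube $\trarr{a}\neq\trarr{b}$ is $\trsig{T_1}$-satisfiable and shares no variables with the datatypes component. Yet it is $\newT{\theta_2}$-unsatisfiable, since $\trtype{E}$ is forced to be $\{\trsig{\emp}\}$ and extensionality then admits only one array.

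The missing idea is to use the hypothesis syntactically rather than through a model construction, as the paper does:
\begin{enumerate}
\item Pump all element sorts of a model of the cube (\Cref{lem:infinitely-more-elements}); no hypothesis is needed for this.
\item By \Cref{lem:trsig:infinite-depth-eventually-cyclic,lem:cardinality-partial-order-trsig,dec:finite-implies-struct}, any $\trtype{\sigma}$ that is still finite has $\sigma\in\structsorts_{\Sigma_{n+1}}$, and only struct sorts of $\Sigma_{n+1}$, without cycles, are reachable from $\sigma$ in $\sigG{\Sigma}$.
\item Such a $\sigma$ is finite in every $T_{n+1}$-model, so by the hypothesis it is not a sort of any $\Sigma_i$.
\item Hence $\trtype{\sigma}$ is not shared with any $\trsig{\Sigma_i}$ or with $\Sigma_{n+2}$ (\Cref{dec:S-hat-is-infinite}).
\end{enumerate}
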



\begin{example}
	Consider $\Sigma_1$, $\Sigma_2$, $\exsigwf$ from \Cref{tab:sigs};
	by \Cref{ex:ndt-sig}, $\exsigwf$ is the NDT signature of $\Sigma_1$ and $\Sigma_2$.
	Let $T_1, T_2$ be the array theory over $\Sigma_1$ and the $\Sigma_2$-theory of datatypes;
	$T_2$ is stably infinite w.r.t. $\{I,A\}$ (see \Cref{ex:stably-infinite}).
	By \Cref{thm:correctness-of-translation,thm:decidability-of-new-theory}, $\tndt{\exsigwf}$ is decidable.
\end{example}

\section{Implementation and Evaluation}
\label{sec:imp}
A prototype implementation of the translation is
 implemented as a preprocessing pass in the $\cvcfive$ SMT
solver~\cite{cvc5}. Although the reduction is defined over flat cubes, the
implementation handles arbitrary input formulas by applying the translation
function $\trF$ recursively to all subterms.

\subsection{Setup}
All experiments were conducted on a machine running Ubuntu 22.04 via WSL2 (Windows Subsystem for Linux~2, kernel \texttt{6.6.87.2-microsoft-standard-WSL2}), equipped with a 13th Gen Intel Core i7-1355U processor (4 cores, 8 logical CPUs) and 11\,GiB of RAM.
A time limit of 10 seconds was imposed per benchmark.
We used $\zzz$ version~4.15.6 with default options, and our fork of $\cvcfive$, invoked with the options to support our implementation.\footnote{An artifact that includes the source code, binaries, benchmarks, scripts, and results is available in \url{https://zenodo.org/records/20055878}.}

\subsection{Benchmarks}

\begin{figure}
	\[
		\formulaEval{k} := \bigwedge_{i=1}^k \Bigl(a_i = \children(x_i) \land x_{i \bmod k + 1} = \select{}(a_{i}, j) \land \is{\container}(x_i)\Bigr).
	\]

	\[
		\formulaSatEval{k} := \bigwedge_{i=1}^k \Bigl(a_i = \children(x_i) \land \is{\container}(x_i)\Bigr)
		\land
		\bigwedge_{i=1}^{k-1} x_{i \bmod k + 1} = \select{}(a_{i}, j).
	\]
	\caption{Hand-crafted formulas.}
	\label{fig:hand-crafted-form}
\end{figure}

We evaluated our reduction on two independent benchmark suites:

\medskip
\noindent
{\bf Synthetic cycle benchmarks:}
formulas that systematically generalize $\formulaExUnsat$ from \Cref{tab:sigs}.
For each $k\geq 1$, the formula $\formulaEval{k}$ is given in
\Cref{fig:hand-crafted-form}.
It encodes the existence of a cycle of size $2k$, where half of the cycle goes through arrays and half through datatypes.
Notice that
$\formulaEval{1}$ is the same as $\formulaExUnsat$ from \Cref{tab:sigs}.

For every $k$, $\formulaEval{k}$ is not satisfiable in the theory of nested datatypes.
In order to test our implementation on satisfiable formulas, we generate {\em
		satisfiable variants} by removing a single constraint from $\formulaEval{k}$,
which corresponds to breaking the cycle. These
are formulas $\formulaSatEval{k}$ from \Cref{fig:hand-crafted-form}.

We have generated $\formulaEval{k}$ for $1\leq k\leq 500$.
The satisfiable instances were much more difficult for the solvers,
and hence
we generated $\formulaSatEval{k}$ for $1\leq k\leq 50$.
This smaller bound is already beyond
the largest $k$ either solver can solve.

\medskip
\noindent
{\bf Move Prover benchmarks:}
verification benchmarks that originate from the Move Prover~\cite{MoveProver}.
The Move Prover~\cite{MoveProver} is a formal verification tool for smart contracts written in the Move language~\cite{Move}.
It works by encoding smart contracts, as well as verification conditions for them, in the Boogie intermediate verification language~\cite{Boogie}, which then generates SMT-LIB queries that are sent to SMT-solvers.\footnote{
	We used commit
	\texttt{ac0b70a656bfb63eb7e569444b7a3a5ed318d6de} of \cite{MoveProverRepo},
	and commit
	\texttt{044e533713346840b55e53c33082911fd077d299} of
	\cite{BoogieRepo}.}
In order to correctly model persistent data structures, the encoding
employs the combination of datatypes and arrays, where the intended meaning is that of nested datatypes.\footnote{In fact, the work presented in this paper was motivated and initiated by the needs of the Move Prover.}

We considered 15 tests
from the Move Library.
These tests use features that our translation does not support, like quantifier reasoning, incremental solving and constant arrays.
Therefore, we performed the following steps using a script (available in the artifact):
\begin{enumerate*}
	\item Deleted top-level commands that use unsupported features, namely
	quantifiers and constant arrays.
	\item Inlined all incremental \verb!check-sat! calls by tracking the \texttt{push} and
	\texttt{pop} commands and generating SMT-LIB files that contain all relevant
	definitions and assertions in each context.
	\item Removed options specific to $\zzz$ and Boogie labels, which are
	automatically produced by Boogie.
\end{enumerate*}
\yoni{what are Boogie labels? This is not clear.}

Concretely, across the 15 original benchmarks, the preprocessing considered
13,313 top-level SMT-LIB commands and removed 4029 of them:
2832 containing quantifiers, 236 containing constant arrays, 944 $\zzz$
options, and 17 Boogie label commands.
The remaining 9284 commands were then split according to the incremental
contexts.
This way, we generated 76 non-incremental quantifier-free SMT-LIB files (51 satisfiable and 25
unsatisfiable), with an average of about 5 non-incremental benchmarks for each
original incremental benchmark.

In the Move prover, some nested datatypes include arrays whose
index sort is a (nested) datatype, violating
the assumption
$\structsorts_{\Sigma_{n+1}} \,\cap\, \arrays_{\Sigma} \;=\; \emptyset$
of \Cref{sec:theory:ndt-sig}.
This assumption was introduced to simplify the presentation,
and
our implementation supports such nested datatypes.
The reason is that nested datatypes appearing in array indices are handled as
a separate NDT signature from the one used for array elements.
Thus, cycles through array elements are still detected by the reduction
described in this paper, while the nested structure of the index sort is
solved by an independent NDT component.
Sorts that cannot participate in such cycles, such as integers, may still be
shared between these components; the remaining interaction is then handled by
standard theory-combination techniques.
\yoni{This last paragraph is really unclear.}

While the Move prover benchmarks originate from a real-world application, 
they represent only a part of that application, as unsupported features are removed.
Still, including them is relevant because the resulting benchmarks represent 
the basic quantifier-free non-incremental reasoning necessary (though not sufficient)
for the original benchmarks.

\subsection{Results}

\begin{table}[t]
	\centering
	\footnotesize
	\begin{tabular}{|l|r||r|r||r|r|}
		\hline
		\multirow{2}{*}{Set} & \multirow{2}{*}{\#}
		                     & \multicolumn{2}{c||}{solved}
		                     & \multicolumn{2}{c|}{time (s)}                                                         \\
		\cline{3-6}
		                     &
		                     & \multicolumn{1}{c|}{$\cvcfive$} & \multicolumn{1}{c||}{$\zzz$}
		                     & \multicolumn{1}{c|}{$\cvcfive$} & \multicolumn{1}{c|}{$\zzz$}                         \\\hline\hline
		$\formulaEval{k}$    & 500                             & 500                          & 469 & 299.8 & 1035.5 \\\hline
		$\formulaSatEval{k}$ & 50                              & 25                           & 1   & < 1   & < 1    \\\hline\hline
		MPS                  & 51                              & 51                           & 51  & < 1   & 2.5    \\\hline
		MPU                  & 25                              & 25                           & 25  & < 1   & < 1    \\\hline
	\end{tabular}
	\caption{Results. MPS and MPU denote satisfiable and unsatisfiable Move Prover benchmarks.}
	\label{tab:results}
\end{table}


\Cref{tab:results} summarizes the results, in terms of number of benchmarks that were solved within the time limit, and the total runtimes (in seconds) over the benchmarks that
were commonly solved
by both $\cvcfive$ and $\zzz$.
$\cvcfive$ is able to solve all benchmarks from the unsatisfiable synthetic set, and $\zzz$ can solve most of them. On the commonly solved benchmarks, $\cvcfive$ is faster, and for most instances is able to solve the formula in under 2 seconds.
The satisfiable synthetic benchmarks are harder for both solvers. 
\yoni{reviewer C reqyested "an intuition of why that is the case".}
$\cvcfive$ solves half of them, while $\zzz$ only solves one.
Both solvers are able to solve all Move Prover benchmarks quickly, though $\cvcfive$ is slightly faster on the satisfiable ones.

\section{Conclusion}
We have introduced and studied a theory of nested datatypes, that avoids
non-standard models.
A decision procedure was presented, which is based
on a translation to the standard combination between arrays and datatypes,
augmented with uninterpreted functions.
Our empirical evaluation supports the decision procedure and shows its effectiveness
on both real-world and synthetic benchmarks.

Next immediate steps is to expand the theoretical and practical support of our framework
to incremental solving and quantifiers.
More generally,
he desired combination of arrays and datatypes is just a particular instance of
a wider problem with naive theory combination. Indeed, similar undetected cycles
can appear when replacing arrays with sequences, sets, and other data structures.
A more general solution is therefore needed.
We are hopeful that the particular case of arrays that we have solved here
can serve as a basis for a more general solution.

\newpage
\bibliography{polite}

\newpage
\appendix
\crefalias{section}{appendix}
\crefalias{subsection}{subappendix}
\crefalias{subsubsection}{subsubappendix}
\section*{Appendix}


In this appendix we prove the main theoretical results
of the paper, namely
\Cref{thm:correctness-of-translation} (correctness) and \Cref{thm:decidability-of-new-theory} (decidability).
For \Cref{thm:correctness-of-translation}, we split
the proof for soundness and completeness.
Soundness is proven in \Cref{app:sec:soundness}, and individually stated as \Cref{thm:preprocessing-correctness}.
Completeness is proven in \Cref{app:sec:completeness}, and individually stated as \Cref{thm:preprocessing-correctness-2}.
The proof of \Cref{thm:decidability-of-new-theory} is presented in \Cref{app:sec:decproof}.
%

We shall make use of the following classical result:

\begin{theorem}[Nelson--Oppen %
		\cite{NelsonOppen,10.1007/978-3-540-30227-8_53}]
	\label{thm:nelson-oppen}
	Let $T_1$ and $T_2$ be decidable $\Sigma_1$- and $\Sigma_2$-theories, respectively,
	such that:
	\begin{enumerate*}[label=(\roman*)]
		\item $\func{\Sigma_1} \cap \func{\Sigma_2} = \emptyset$ and
		$\pred{\Sigma_1} \cap \pred{\Sigma_2} = \emptyset$; and
		\item $T_1$ and $T_2$ are both stably infinite w.r.t.\ $S_0=\sorts{\Sigma_1}\cap\sorts{\Sigma_2}$.
	\end{enumerate*}
	Then $T_1\oplus T_2$ is decidable.
\end{theorem}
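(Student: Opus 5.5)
Since the statement is the classical Nelson--Oppen result, I would follow the standard route: reduce $T_1\oplus T_2$-satisfiability of a flat cube to finitely many calls to the decision procedures for $T_1$ and $T_2$, and justify the reduction with a \emph{combination lemma}.

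\textbf{Step 1: purification.} Take a flat $(\Sigma_1\cup\Sigma_2)$-cube $\formulaProof$. Because $\func{\Sigma_1}\cap\func{\Sigma_2}=\emptyset$ and $\pred{\Sigma_1}\cap\pred{\Sigma_2}=\emptyset$, every flat literal other than a (dis)equality between variables belongs to exactly one of the two signatures. Pure variable (dis)equalities can be placed on either side. This splits $\formulaProof$ into $\formulaProof_1\wedge\formulaProof_2$ with $\formulaProof_k$ a $\Sigma_k$-cube.

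Let $V$ be the set of variables shared by $\formulaProof_1$ and $\formulaProof_2$. Such variables necessarily have sorts in $S_0=\sorts{\Sigma_1}\cap\sorts{\Sigma_2}$. An \emph{arrangement} $\delta$ of $V$ is the conjunction of $x=y$ for $x,y$ in the same block and $x\neq y$ for $x,y$ in different blocks, over some sort-respecting equivalence relation on $V$. There are finitely many arrangements.

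The procedure answers ``satisfiable'' iff, for some $\delta$, both $\formulaProof_1\wedge\delta$ is $T_1$-satisfiable and $\formulaProof_2\wedge\delta$ is $T_2$-satisfiable. Each check is a flat cube, so it is decidable by assumption.

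\textbf{Step 2: soundness of the ``unsatisfiable'' answer.} Suppose $\model\models\formulaProof$ with $\model\in T_1\oplus T_2$. Let $\delta$ be the arrangement induced by $\model$ on $V$. Then $\model^{\Sigma_k}$ is a $T_k$-interpretation satisfying $\formulaProof_k\wedge\delta$ for $k=1,2$, so the procedure finds this $\delta$.

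\textbf{Step 3: the combination lemma (the main step).} Suppose $\model_1\models\formulaProof_1\wedge\delta$ with $\model_1$ a $T_1$-interpretation, and $\model_2\models\formulaProof_2\wedge\delta$ with $\model_2$ a $T_2$-interpretation. I must build a single $T_1\oplus T_2$-interpretation satisfying $\formulaProof$.

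First, by stable infiniteness with respect to $S_0$, I may assume each $\sigma^{\model_k}$ is infinite for every $\sigma\in S_0$. Next, I would apply downward L\"owenheim--Skolem so that all these shared domains are countably infinite. This requires care: a theory here is a class of structures, not an axiomatized theory. The standard fix is to work with countable signatures and a theory closed under elementary equivalence, or to take $T_k$ to be the models of its first-order theory, as the cited presentations do. I expect justifying this cardinality-matching step to be the main obstacle, and I would handle it by citing the cited presentations' formulation rather than re-deriving it.

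With all shared sorts now of equal cardinality $\aleph_0$, I would fix, for each $\sigma\in S_0$, a bijection $h_\sigma:\sigma^{\model_2}\to\sigma^{\model_1}$ with $h_\sigma(x^{\model_2})=x^{\model_1}$ for every $x\in V$ of sort $\sigma$. Such a bijection exists because $\delta$ forces both interpretations to induce the same equality pattern on $V$; since the domains are infinite, the remaining elements can then be matched arbitrarily.

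Transporting $\model_2$ along $h$ yields an isomorphic $\model_2'$ that is still in $T_2$ (theories are closed under isomorphism) and agrees with $\model_1$ on the shared sorts and on $V$. Finally, define $\model$ to be $\model_1$ on the $\Sigma_1$-symbols and $\model_2'$ on the $\Sigma_2$-symbols. This is well-defined because the symbol sets are disjoint and the shared carriers coincide. Then $\model^{\Sigma_k}\in T_k$ for each $k$, and $\model\models\formulaProof_1\wedge\formulaProof_2$, so $\model$ is the required $T_1\oplus T_2$-interpretation.
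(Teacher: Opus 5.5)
Your proposal is correct and is the standard argument. The paper does not prove \Cref{thm:nelson-oppen} itself, only cites it, but its proof of \Cref{thm:nelson-oppen-n} uses exactly your Step~3 ingredients: stable infiniteness, then \Cref{thm:lowenheim-skolem} to make the shared domains countably infinite, then \Cref{lem:no-helper}, which is your bijection-and-gluing construction.

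As with the L\"owenheim--Skolem step you already flagged, your claim that $T_2$ is closed under isomorphism comes from the framework of the cited works (first-order theories, or isomorphism-closed classes), not from the paper's definition of a theory as an arbitrary class of structures; for instance, the paper's datatypes structures have carriers that are literally sets of trees, so in that setting you must keep the carriers of the non-closed side and transport the isomorphism-closed one onto them.
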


\section{Soundness}
\label{app:sec:soundness}
In what follows,
let $\Sigma$ be the NDT signature of $\Sigma_1,\ldots,\Sigma_{n+1}$,
where $\Sigma_1,\dots,\Sigma_n$ are arrays signatures and $\Sigma_{n+1}$ is a datatypes signature.
We prove the following soundness theorem:
\begin{theorem}
	\label{thm:preprocessing-correctness}
	For every flat \(\Sigma\)-cube \(\formulaProof\),
	if $\formulaProof$ is $\tndt{\Sigma}$-satisfiable,
	then $\trF(\formulaProof)$ is $\newT{\formulaProof}$-satisfiable.
\end{theorem}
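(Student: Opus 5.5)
Given an NDT $\Sigma$-interpretation $\model$ with $\model\models\formulaProof$, I will build a $\newT{\formulaProof}$-interpretation $\modelB$ satisfying $\tra(\formulaProof)\wedge\bigwedge\reduced{\formulaProof}$. The idea is to copy $\model$ sort by sort and to represent every array value as a finite datatype term. This is possible because $\rel{\model}$ is well-founded, so recursion along it is legitimate.

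\textbf{Element sorts.} For $\trtype{\sigma}$ with $\sigma$ an element sort of $\Sigma^{\formulaProof}_{n+1}$, let $\trtype{\sigma}^{\modelB}=\sigma^{\model}$. For $\idx{\arrType_i}$, let $\idx{\arrType_i}^{\modelB}=\arrType_i^{\model}$, so the array value itself serves as its identifier. The datatypes part of $\modelB$ is then the datatypes structure generated by these element domains, which determines $\trtype{\sigma}^{\modelB}$ for every struct sort, including $\trtype{\arrType_i}$.

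\textbf{The embedding $h$.} Define $h$ from the values of $\model$ into those of $\modelB$ by well-founded recursion on $\rel{\model}$:
\begin{itemize}
\item on values of element sorts, $h$ is the identity;
\item $h(c(t_1,\ldots,t_m))=\trsig{c}(h(t_1),\ldots,h(t_m))$;
\item for $a\in\arrType_i^{\model}$, $h(a)=\arrCons{i}(a,h(\select{}^{\model}(a,k_1)),\ldots,h(\select{}^{\model}(a,k_N)))$, where $k_r$ is the value under $\model$ of the variable $x$ with $\indOrder{}{}(x)=r$, and $N=\indexCar{\arrType_i}{\formulaProof}$.
\end{itemize}
Each recursive call is on an $\rel{\model}$-predecessor, and this is exactly where well-foundedness, i.e.\ $\model$ being NDT, is used. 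I must check that $h$ is sort-correct; this uses the NDT-signature conditions, e.g.\ that $E_i$ is translated either as an element sort or as a struct sort. The map $h$ is injective: on constructor terms by freeness, and on arrays because the identifier argument is $a$ itself.

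\textbf{Array and EUF parts.} For each $i$, let $\trtype{I_i}^{\modelB}$ be the image of $I_i^{\model}$ and $\trtype{E_i}^{\modelB}$ the image of $E_i^{\model}$, which are sorts already fixed above. Then take $\trarr{\arrType_i}^{\modelB}$ to be the full standard arrays structure over these index and element domains. Map each $a$ to the array $\hat a = h\circ a$ (with $I$ identified via $h$). This map $a\mapsto\hat a$ is injective because $h$ is, and it commutes with $\select{}$ and $\store{}$. Interpret $f_{\arrType_i}$ on $h(\arrType_i^{\model})$ as $h(a)\mapsto\hat a$ and $g_{\arrType_i}$ as its inverse on the image, extending both arbitrarily elsewhere; this is well-defined by injectivity. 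Assign variables by $\trsig{u}\mapsto h(u^{\model})$, $\trarr{a}\mapsto\widehat{a^{\model}}$, and $\tempvar{a}{r}\mapsto h(\select{}^{\model}(a^{\model},k_r))$.

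\textbf{Verification.} I will check the literals in two groups.
\begin{itemize}
\item $\tra(\formulaProof)$: (dis)equalities hold because $h$ is injective; constructor, selector and tester literals hold because $h$ is a homomorphism on constructor terms; select and store literals hold because $\hat{\cdot}$ commutes with $\select{}$ and $\store{}$.
\item Lemmas: the lemmas in $\Lone{\formulaProof}$ and $\Ltwo{\formulaProof}$ hold by the definition of $h$ on arrays. In particular, the selector $\sel{\arrCons{j}}{r+1}$ applied to $h(a)$ returns $h(\select{}(a,k_r))$, and $\select{}(\widehat{a},h(k_r))$ equals the same value. The lemmas in $\Lthree{\formulaProof}$ hold by the choice of $f$ and $g$.
\end{itemize}

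\textbf{Main obstacle.} The main obstacle is setting up the recursion so that $h$ is well-defined and correctly sorted across the mixed sorts. In particular, when $E_i$ is itself an array sort or a struct sort, $\trtype{E_i}^{\modelB}$ must contain $h$ of every element, so the domain choices and the recursion have to be made consistent. To handle this I would first fix all datatype domains, and only afterwards define the array domains from them.
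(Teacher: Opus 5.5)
Your construction is the same as the paper's. Your $h$ is its $\applyStar$, and your $a\mapsto\hat a$ is its $\applyTag$. The first argument of $\arrCons{i}$ is the array value itself, just as the paper sets $\idx{\arrType_i}^{\newM}=\arrType_i^{\model}$. Your $f_{\arrType_i}$ and $g_{\arrType_i}$ coincide with the paper's on the relevant values. The one difference in set-up is how $h$ is obtained. You use well-founded recursion on $\rel{\model}$ directly. The paper instead proves that a height function on the elements of $\model$ is total, and then assembles $\applyStar$ from approximations $\applyPartStar{k}$ by induction on height. Your route is shorter and equally sound.

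One verification step fails as written: a selector literal $x=\sel{c}{i}(y)$ where $y^{\model}$ is not headed by $c$.
\begin{itemize}
\item In that case $x^{\model}=\sel{c}{i}^{\model}(y^{\model})$ is an arbitrary value chosen by $\model$.
\item The definition of datatypes structures leaves $\sel{\trsig{c}}{i}^{\newM}$ equally unconstrained on trees not headed by $\trsig{c}$. So there is no unique ``datatypes structure generated by these element domains''.
\item The fact that $h$ commutes with constructors says nothing about this case. Unless you choose it, $\sel{\trsig{c}}{i}^{\newM}(h(y^{\model}))$ need not equal $h(x^{\model})$, and the translated literal is false under your variable assignment.
\end{itemize}
The repair is to fix these values explicitly. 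Set $\sel{\trsig{c}}{i}^{\newM}(h(s)):=h(\sel{c}{i}^{\model}(s))$ for every $s$ not headed by $c$, and choose arbitrary values outside the image of $h$. This is well defined because $h$ is injective, it still gives a datatypes structure, and it is exactly the paper's interpretation of selectors.

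There are also two smaller wording issues.
\begin{itemize}
\item $\trtype{E_i}^{\newM}$ cannot be the image of $E_i^{\model}$ when $E_i$ is a struct or array sort. In that case it is the full tree domain already fixed by the datatypes part. You only need it to contain that image, as your last paragraph says.
\item $h$ should be the identity on all sorts outside $\structsorts_{\Sigma_{n+1}}\cup\arrays_{\Sigma}$, not on ``element sorts''. An array sort occurring in $\Sigma_{n+1}$ is an element sort there, and $I_i$ need not be a sort of $\Sigma_{n+1}$ at all.
\end{itemize}
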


As a motivating example, the formula $\formulaExUnsat$ from \Cref{tab:sigs} is $\tndt{\Sigma}$-unsatisfiable,
and thus cannot be used for the current section.
We therefore use the following $\tndt{\Sigma}$-satisfiable example instead.

\begin{example}
	\label{ex:proof-formula}
	\Cref{tab:tndtsatex} includes a $\tndt{\exsigwf}$-satisfiable formula
	$\formulaExSat$.
	The translated formula $\trF(\formulaExSat)$, according to~\Cref{sec:preprocessing-full-alg}, is
	also presented in \Cref{tab:tndtsatex}.
	Here, $\trsig{x},\trsig{y}$ and $\tempvar{a}{1}$ are variables of sort $\trsig{E}$,
	$\trsig{a}$ is a variable of sort $\trsig{\arrType}$,
	$\trsig{j},\trsig{k}$ are of sort $\trsig{I}$,
	and $\trarr{a}$ is an array variable of sort $\trarr{\arrType}$.
	We shall later see that indeed this translated formula is
	$\newT{\formulaProof}$-satisfiable.
\end{example}

Let us now consider an example of an NDT $\exsigwf$-interpretation.

\begin{example}
	\label{ex:intended-model}
	An example for a $\exsigwf$-interpretation $\modelS$
	is given in \Cref{tab:exInter}.
	Notice that $\modelS\models\formulaExSat$.
	For illustration, \Cref{tab:exInter} also includes
	interpretations for variables
	that do not occur in $\formulaExUnsat$.
	It can be shown that
	$E^{\modelS} = \{\emp\} \cup \{\container(n, y) \mid n \in \mathbb{N},\, y
		\in \arrType^{\modelS}\}$,
	and $\arrType^{\modelS} = \almostC{I^{\modelS}}{E^{\modelS}}$.
	$y^{\modelS}$ can be thought of as a container with id 2 and infinitely
	many copies of $\emp$ in an integer indexed array.
	%
	It can be shown that $\modelS$ is an NDT-$\exsigwf$-structure.

\end{example}

\begin{table}[t]
	\newcolumntype{M}[1]{>{\centering\arraybackslash}m{#1}}
	\centering
	\begin{tabular}{|M{3cm}|M{10cm}|}\hline
		$\formulaExSat$       &
		$
			x = \select{}(a,j)\;\land\;
			y = \container(k,a)
		$
		\\\hline
		$\trF(\formulaExSat)$ &
		\shortstack{
			$
				\trsig{x} = \select{}(\trarr{a},\trsig{j})
				\land
				\trsig{y} = \trsig{\container}(\trsig{k},\trsig{a})
				\land$                  \\
			$
				\tempvar{a}{1} = \select{}(\trarr{a}, \trsig{j})
				\land
				\tempvar{a}{1} = \sel{\arrCons{1}}{2}(\trsig{a})
				\land$                  \\ $
				\trarr{a} = f_{\arrType}(\trsig{a})
				\land
				\trsig{a} = g_{\arrType}(\trarr{a})
			$
		}
		\\\hline
	\end{tabular}
	\caption{A $\tndt{\exsigwf}$-formula and its translation.
		Note that $\newSig{\exsigwf}{\formulaExSat}$ is the same as $\newSig{\exsigwf}{\formulaExUnsat}$,which is
		given in \Cref{tab:sigs-tran}.
	}
	\label{tab:tndtsatex}
\end{table}

\begin{table}[t]
	\newcolumntype{M}[1]{>{\centering\arraybackslash}m{#1}}
	\centering
	\begin{tabular}{|M{2cm}|M{10cm}|}\hline
		\shortstack{
			Domains
		}          &
		\shortstack{
			$I^{\modelS} = \mathbb{N}$                        \\
			$E^{\modelS} = \bigcup_{i \in \mathbb{N}}
				\subdomain{E}{i}$                                 \\
			$\arrType^{\modelS} =
				\bigcup_{i \in \mathbb{N}} \subdomain{A}{i}$
			\\\\
			where:                                            \\\\
			$
				\subdomain{A}{0} = \subdomain{E}{0} = \emptyset
			$                                                 \\
			$\subdomain{E}{i+1} = \subdomain{E}{i} \cup \{\emp\} \cup \{\container(n,
				y) \mid n \in \mathbb{N},\, y \in \subdomain{A}{i}\}$,
			$\subdomain{A}{i+1} = \almostC{\mathbb{N}}{\subdomain{E}{i}}$
		}
		\\\hline
		Functions  &
		\shortstack{
			$\select{}$ and $\store{}$ are defined as         \\
			read and update
			operations on functions.                          \\\\
			$\container^{\modelS}(n,y)=\container(n,y)$       \\
			$\id^{\modelS}(\container(n,y))=n$                \\
			$\children^{\modelS}(\container(m,y))=y$          \\
			$\id^{\modelS}(\emp)=0$                           \\
			$\children^{\modelS}(\emp)=\constArr{{\emp}}$
		}
		\\\hline
		Predicates &
		\shortstack{
		$\is{\container}^{\modelS}(\container(n,y))=true$ \\
		$\is{\container}^{\modelS}(\emp)=false$           \\
		$\is{\emp}^{\modelS}(\emp)=true$                  \\
		$\is{\emp}^{\modelS}(\container(n,y))=false$
		}
		\\\hline
		Variables  &
		\shortstack{
			$x^{\modelS} = \emp$,
			$a^{\modelS} = \constArr{{\emp}}$,
			$i^{\modelS} = 1$,                                \\
			$y^{\modelS} = \container(2,\constArr{{\emp}})$,
			and
			$j^{\modelS} = 2$.
		}
		\\\hline
	\end{tabular}
	\caption{A $\exsigwf$-interpretation $\modelS$.
	}
	\label{tab:exInter}
\end{table}

\paragraph{Roadmap.}
Let $\model$ be a $\tndt{\Sigma}$-interpretation that satisfies $\formulaProof$. We will define a $\newT{\formulaProof}$-interpretation $\newM$ that satisfies $\trF(\formulaProof)$.
We define \(\newM\) (\Cref{app:sec:define-domains} --
\ref{app:sec:interpretation-functions}), and prove that it is a
$\newT{\formulaProof}$-interpretation
(Section~\ref{app:sec:newM-newT-interpretation}).
Then, we show that \(\newM\)
satisfies every translated literal
(Section~\ref{app:sec:satisfaction-literals}) and every lemma
(Section~\ref{app:sec:satisfaction-lemmas}).

\subsection{Defining the Domains}
\label{app:sec:define-domains}

\begin{table}[t]
	\centering
	\renewcommand{\arraystretch}{1.2}
	\begin{tabular}{|c|c|c|}
		\hline
		\textbf{New sort}    & \textbf{Original sort}                                                                & \textbf{Domain chosen in $\newM$} \\ \hline\hline
		$\idx{\arrType_i}$   & $\arrType_i \in \arrays_\Sigma$
		                     & $\idx{\arrType_i}^{\newM}= \arrType_i^{\model}$                                                                           \\ \hline
		\multirow{2}{*}{$\trsig{\tau}$}
		                     & $\tau \in \sorts{\Sigma} \setminus (\arrays_\Sigma \cup \structsorts_{\Sigma_{n+1}})$
		                     & $\trsig{\tau}^\newM= \tau^{\model}$                                                                                       \\ \cline{2-3}
		                     & $\tau \in \structsorts_{\Sigma_{n+1}} \cup \arrays_\Sigma$
		                     & $\displaystyle
			\trsig{\tau}^{\newM} = T_{\trsig{\tau}}
			\bigl(\consig{\Sigma_{n+1}^{\formulaProof}}{\Sigma_{n+1}^{\formulaProof}},\,B\bigr)$                                                             \\ \hline
		$\trarr{\arrType_i}$ & $\arrType_i \in \arrays_\Sigma$
		                     & $\displaystyle
			\trarr{\arrType_i}^{\newM}= \trsig{I_i}^{\newM}\!\to\!\trsig{E_i}^{\newM}$                                                                       \\
		\hline
	\end{tabular}
	\caption{Domains assigned to the sorts of $\newSig{\Sigma}{\formulaProof}$.
		Here $B$ is the $\elemsorts_{\newSig{\Sigma}{\formulaProof}}$-sorted set with
		$B_\rho=\rho^{\newM}$ for every
		$\rho\in\elemsorts_{\newSig{\Sigma}{\formulaProof}}$.}
	\label{tab:domains-newM}
\end{table}

The domains assigned to the sorts of $\Sigma$ by $\newM$ are given in \Cref{tab:domains-newM}.
We have three cases to consider:
\begin{enumerate}
	\item If $\sigma=\idx{\arrType_i}$ for some $i\in [1,n]$
	      then $\sigma$ is interpreted in $\newM$ exactly in the same way $\arrType_i$
	      is interpreted in $\model$.
	      Intuitively, this means that the ``identifier" of each array is itself.
	\item If $\sigma=\trsig{\tau}$ for some sort $\tau$, there are two sub-cases. If $\tau$ is neither in $\structsorts_{\Sigma_{n+1}}$ nor in $\arrays_{\Sigma}$, then $\trsig{\tau} \notin \structsorts_{\Sigma_{n+1}^\formulaProof}$ and we interpret $\trsig{\tau}$ in $\newM$ in the same way $\tau$ is interpreted in $\model$.
	      Otherwise, $\tau$ is in $\arrays_{\Sigma}$, or in $\structsorts_{\Sigma_{n+1}}$, and thus $\trsig{\tau} \in \structsorts_{\Sigma_{n+1}^\formulaProof}$, and we interpret $\trsig{\tau}$ in $\newM$ as the set of trees defined in \Cref{def:trees}.
	\item If $\sigma=\trarr{\arrType_i}$ for some $i\in [1,n]$, we interpret $\trarr{\arrType_i}$ in $\newM$ as the set of functions from $\trsig{I_i}^\newM$ to $\trsig{E_i}^\newM$.
\end{enumerate}

\begin{table}[t]
	\newcolumntype{M}[1]{>{\centering\arraybackslash}m{#1}}
	\centering
	\begin{tabular}{|M{2cm}|M{11cm}|}\hline
		\shortstack{
			Domains
		}          &
		\shortstack{
		$\trsig{I}^{\newMS} = I^{\model} = \mathbb{N}$                                                                    \\
		${\idx{\arrType}}^{\newMS} = A^\model$                                                                            \\
		$\trsig{E}^{\newMS} = T_{\trsig{E}}(\consig{\Sigma_{2}^\formulaExSat}{\Sigma_{2}^\formulaExSat}, B)$              \\
		$\trsig{\arrType}^{\newMS} =T_{\trsig{\arrType}}(\consig{\Sigma_{2}^\formulaExSat}{\Sigma_{2}^\formulaExSat}, B)$ \\
		$\trarr{A}^{\newMS} = \trsig{I}^\newMS \to \trsig{E}^\newMS$
		}
		\\\hline
		Functions  &
		\shortstack{
		$\select{}$ and $\store{}$ are defined as                                                                         \\
		read and update
		operations on functions.                                                                                          \\\\
		$\trsig{\container}^{\newMS}(n,y)=\trsig{\container}(n,y)$                                                        \\
		$\trsig{\id}^{\newMS}(\trsig{\container}(n,y))=n$                                                                 \\
		$\trsig{\children}^{\newMS}(\trsig{\container}(m,y))=y$                                                           \\
		$\trsig{\id}^{\newMS}(\trsig{\emp})=0$                                                                            \\
		$\trsig{\children}^{\newMS}(\trsig{\emp})=\constArr{{\trsig{\emp}}}$                                              \\
		$f_{\arrType}^{\newMS}(\arrCons{1}(a,e)) = \applyTag(a)$                                                          \\
		$g_{\arrType}^{\newMS}(t)=
			\begin{cases}
				\applyStar(a)               & \text{if } a\in\arrType^{\model}\text{ and } \applyTag(a)=t, \\[4pt]
				\defelem{\trtype{\arrType}} & \text{otherwise}.
			\end{cases}$
		}
		\\\hline
		Predicates &
		\shortstack{
		$\is{\trsig{\container}}^{\newMS}(\trsig{\container}(n,y))=true$                                                  \\
		$\is{\trsig{\container}}^{\newMS}(\emp)=false$                                                                    \\
		$\is{\emp}^{\newMS}(\emp)=true$                                                                                   \\
		$\is{\emp}^{\newMS}(\trsig{\container}(n,y))=false$
		}
		\\\hline
		Variables  &
		\shortstack{
			\(\trsig{x}^{\newMS} = \applyStar(x^\model) = \trsig{\emp}\)                                                      \\
			\(\trsig{a}^{\newMS} = \applyStar(a^\model) = \arrCons{1} (a^\model, \trsig{\emp})\)                              \\
			\(\trarr{a}^{\newMS} = \applyTag(a^\model) = \constArr{\trsig{\emp}}\)                                            \\
			\(\trsig{j}^{\newMS} = \applyStar(j^\model) = 1\)                                                                 \\
			\(\trsig{y}^{\newMS} = \applyStar(y^\model) = \trsig{\container}(2,\trsig{a}^{\newMS})\)                          \\
			\(\trsig{k}^{\newMS} = \applyStar(k^\model) = 2\)                                                                 \\
			\(\tempvar{a}{1}^{\newMS} = \applyStar\Bigl(\select{}^\model(a^\model,1)\Bigr) = \trsig{\emp}\)
		}
		\\\hline
	\end{tabular}
	\caption{A $\newSig{\exsigwf}{\formulaExSat}$-interpretation $\newMS$.
		$B$ is the $\elemsorts_{\newSig{\exsigwf}{\formulaExSat}}$-sorted set with\\
		$B_{\rho} := \rho^\newMS$ for every
		$\rho\in\elemsorts_{\newSig{\exsigwf}{\formulaExSat}}$.
	}
	\label{tab:extransinter}
\end{table}

\begin{example}
	\label{ex:define - domains}
	Consider the formula $\formulaExSat$ and its satisfying interpretation $\modelS$ from~\Cref{ex:proof-formula}.
	The sorts of $\newSig{\exsigwf}{\formulaExSat}$ are interpreted in $\newMS$
	as described in \Cref{tab:extransinter}.
\end{example}

\subsection{Interpretation of the Variables}
\label{app:variablessoundnessint}

In order to define the interpretation of the variables, we define two mappings between the domains of the sorts in $\Sigma$ and the domains of the sorts in $\newSig{\Sigma}{\formulaProof}$.

The mapping $\applyStar$ will be used to define the interpretation of all variables in $\trF(\formulaProof)$ whose sorts are not in $\{\trarr{\arrType_i} \mid i\in[1,n]\}$, and the mapping $\applyTag$ will be used to define the interpretation of the variables of sorts $\trarr{\arrType_i}$ for $i\in[1,n]$.

\subsubsection{Defining \(\applyStar\)}

In order to define the first mapping, $\applyStar$, we first need to define the following partial function:
\begin{definition}[Height]
	\label{def:height}
	We define the partial function $\height:\bigcup_{\sigma\in\sorts{\Sigma}} \sigma^\model \ra \mathbb{N}$ in the following way:
	For each $t \in \bigcup_{\sigma\in\sorts{\Sigma}} \sigma^\model$,
	\begin{enumerate}[label=(\alph*)]
		\item If \(t = c(t_1,\dots,t_m)\) for some $c \in \constructors_{\Sigma_{n+1}}$ and $t_1,\ldots,t_m$ and $\height$ is defined over $t_1,\ldots,t_m$, we thus define:
		      $\height(t) \;=\; 1 + \max\{\height(t_1),\ldots,\height(t_m)\}.$
		\item If $t \in \arrType_j^\model$ for some $j \in [1,n]$.
		      Then, let
		      \(
		      \iota_1,\dots,\iota_{\indexCar{\sigma}{\formulaProof}}
		      \)
		      be the elements of \(\indexSet{\sigma}{\formulaProof}\),
		      listed in the order determined by \(\indOrder{\sigma}{\formulaProof}\).
		      Then, we define:
		      \(
		      \height(t) \;=\; 1 + \max\{\height(\select{\sigma}^\model\!(t,\iota_1^\model)),\ldots,\height(\select{\sigma}^\model\!(t,\iota_{\indexCar{\sigma}{\formulaProof}}^\model))\}.
		      \)
		      If $\height$ is defined over all the elements of the set above.
		\item If \(\sigma\notin (\structsorts_{\Sigma_{n+1}}\cup \arrays_\Sigma)\), then
		      define:
		      $\height(t)=0$
	\end{enumerate}

\end{definition}

\begin{example}
	\label{ex:height-example}
	Consider $\modelS$ from \Cref{tab:exInter}
	Then, we have:

	\begin{align*}
		\height(x^{\modelS}) & = \height(\emp) = 1 + \max\emptyset = 1,                                                                                \\
		\height(a^{\modelS}) & = \height(\constArr{\emp}) = 1 + \max\bigl\{ \height(x^{\modelS}) \bigr\} = 1 + \max\{1\} = 2,                          \\
		\height(y^{\modelS}) & = \height(\cons(2, \constArr{\emp})) = 1 + \max\bigl\{ \height(2), \height(a^{\modelS}) \bigr\} = 1 + \max\{0, 2\} = 3, \\
		\height(i^{\modelS}) & = \height(j^{\modelS}) = 0.
	\end{align*}

\end{example}

\begin{lemma}
	\label{lem:height-total-function}
	The function \(\height\) is a total function over \(\bigcup_{\sigma\in\sorts{\Sigma}} \sigma^\model\).
\end{lemma}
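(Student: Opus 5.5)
We argue by well-founded induction along the NDT relation $\rel{\model}$. Since $\model$ is a $\tndt{\Sigma}$-interpretation, its variable-free part is an NDT $\Sigma$-structure, so $\rel{\model}$ is well-founded by \Cref{sec:theory:nested-relation}. It therefore suffices to show: for every element $t$, if $\height$ is defined on every $s$ with $(s,t)\in\rel{\model}$, then $\height(t)$ is defined. Formally, suppose the set $U$ of elements on which $\height$ is undefined is nonempty. We will show that every $t\in U$ has some $s\in U$ with $(s,t)\in\rel{\model}$. Iterating this (using dependent choice) yields an infinite descending $\rel{\model}$-chain, which contradicts \Cref{prelim:wfr}.

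Take $t\in U$ and let $\sigma$ be its sort. We split into cases according to the clauses of \Cref{def:height}.
\begin{enumerate}
\item If $\sigma\notin\structsorts_{\Sigma_{n+1}}\cup\arrays_\Sigma$, clause (c) gives $\height(t)=0$. So this case cannot occur.
\item If $\sigma\in\structsorts_{\Sigma_{n+1}}$, then by \Cref{datatypes-axioms}(i) (applied to the $\Sigma_{n+1}$-reduct, which is a datatypes structure) $t$ is a tree $c(t_1,\ldots,t_m)$ for some constructor $c\in\constructors_{\Sigma_{n+1}}$. We need $\sigma\notin\elemsorts_{\Sigma_{n+1}}$ here; this holds because the two sort sets are disjoint. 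Each pair $(t_j,t)$ lies in $\rel{\model}$ by clause (i). If all $t_j$ had a defined height, clause (a) would define $\height(t)$, so some $t_j\in U$. The nullary case $m=0$ is harmless: then $\height(t)=1+\max\emptyset=1$, using the convention $\max\emptyset=0$ already adopted in \Cref{ex:height-example}.
\item If $\sigma=\arrType_j\in\arrays_\Sigma$, clause (b) depends only on the finitely many elements $\select{\arrType_j}^\model(t,\iota_k^\model)$ for $\iota_k\in\indexSet{\arrType_j}{\formulaProof}$. Each such element is related to $t$ by clause (ii) of $\rel{\model}$. So if $\height(t)$ is undefined, one of these elements lies in $U$.
\end{enumerate}

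We also need the clauses of \Cref{def:height} to be unambiguous, so that $\height$ is a function and not merely a relation. The sort sets $\structsorts_{\Sigma_{n+1}}$, $\arrays_\Sigma$ and the remaining sorts are pairwise disjoint by \Cref{sec:theory:ndt-sig}, and carrier sets of distinct sorts are disjoint. Hence each $t$ falls under exactly one clause. Within clause (a), the decomposition $t=c(t_1,\ldots,t_m)$ is unique because elements are trees. Within clause (b), the value is determined by $\model$ and the fixed ordering $\indOrder{}{}$. Finiteness of $\indexSet{\arrType_j}{\formulaProof}$ makes each maximum a maximum over a finite set of naturals, so the result lies in $\mathbb{N}$.

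The main obstacle is the datatype case. Trees in $T_\sigma(\consig{\Sigma_{n+1}}{\Sigma_{n+1}},D)$ bottom out at elements of $D$, and $D$ may itself contain arrays, which is exactly the nesting. We must make sure clause (a) is applied to the correct decomposition. An array leaf of sort $\arrType_j$ is handled by case 3 above, and an element-sort leaf by case 1, so the induction passes through the leaves correctly. Alternatively, one could avoid the contradiction argument and do a direct transfinite induction on the $\rel{\model}$-rank, but the minimal-counterexample chain argument above is the simplest route.
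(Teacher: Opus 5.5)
Your proof is correct and follows the paper's argument. Assuming some element has undefined height, you use clause (i) of the NDT relation for struct sorts and clause (ii) for array sorts to find an $\rel{\model}$-predecessor whose height is also undefined, and iterating this gives an infinite descending chain that contradicts well-foundedness. Your extra remarks on the nullary-constructor case and on the clauses being unambiguous are not in the paper's proof, but they are harmless supplements.
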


\begin{proof}
	Suppose, towards a contradiction, that there exists some $t \in \bigcup_{\sigma\in\sorts{\Sigma}} \sigma^\model$ such that $\height(t)$ is not defined.
	We know that $t\notin \structsorts_{\Sigma_{n+1}}\cup \arrays_\Sigma$, since otherwise $\height(t)$ would be defined by the third case of~\Cref{def:height}.
	Thus, there are 2 possibilities:
	\begin{enumerate}
		\item $t$ is of sort $\sigma \in \structsorts_{\Sigma_{n+1}}$.
		      Then, since $\model^{\Sigma_{n+1}}$ is a datatypes $\Sigma_{n+1}$-interpretation, $t$ must be of the form $c(t_1,\ldots,t_m)$ for some constructor $c\in\constructors_{\Sigma_{n+1}}$ and $t_1,\ldots,t_m \in \sigma^\model$.
		      So there must be some $t_j$ such that $\height(t_j)$ is not defined.
		      Note that $(t_j, t) \in \rel{\model}$.
		\item $t$ is of sort $\arrType_j$ for some $j\in[1,n]$.
		      Then, there must be some $i\in\indexSet{\arrType_j}{\formulaProof}$ such that $\height(\select{\arrType_j}^\model(t,i^\model))$ is not defined.
		      Note that $(\select{\arrType_j}^\model(t,i^\model), t) \in \rel{\model}$.
	\end{enumerate}

	In either case, there exists some $s\in\bigcup_{\sigma\in\sorts{\Sigma}} \sigma^\model$ such that $(s,t)\in\rel{\model}$ and $\height(s)$ is not defined.
	We can apply this argument again and get an infinitely descending sequence of elements in $\rel{\model}$, which contradicts the well-foundedness of $\rel{\model}$.
\end{proof}

Now, we can use the function $\height$ to define the mapping $\applyStar$.
\begin{definition}
	\label{def:applyTag}
	For every $k \in \mathbb{N}$, we define the partial function
	\[
		\applyPartStar{k} \;\colon\;
		\bigcup_{\sigma \in \sorts{\Sigma}} \{ x \in \sigma^\model \mid \height(x) \leq k \}
		\;\longrightarrow\;
		\bigcup_{\sigma \in \sorts{\Sigma}} \trsig{\sigma}^{\newM}
	\]
	In the following way:
	For $k = 0$, we define $\applyPartStar{0}(t) = t$ for every $t \in \sigma^\model$ such that $\height(t) = 0$.
	For $k > 0$, we define $\applyPartStar{k}(t)$ as follows:
	\begin{enumerate}[label=(\alph*)]
		\item If $\height(t) < k$, then we define $\applyPartStar{k}(t) = \applyPartStar{k-1}(t)$.
		\item If $\height(t) = k$, then we define $\applyPartStar{k}(t)$ as follows:
		      \begin{enumerate}[label=(\roman*)]
			      \item If \(t = c(t_1,\dots,t_m)\) for some $c:\sigma_1 \times \ldots \times \sigma_m \ra \sigma \in \constructors_{\Sigma_{n+1}}$ and $\forall i\in[1,m]$, $\applyPartStar{k}(t_i)$ is defined and in $\trsig{\sigma_i}^{\newM}$, then we define:
			            $\applyPartStar{k}(t) \;=\; \trsig{c}\bigl(\applyPartStar{k}(t_1),\,\dots,\,\applyPartStar{k}(t_m)\bigr).$
			      \item If $t \in \arrType_j^\model$ for some $j \in [1,n]$, and for every $i\in\indexSet{\arrType_j}{\formulaProof}$, $\applyPartStar{k}( \select{\arrType_j}^\model(t,i^\model))$ is defined and in $\trsig{E_j}^{\newM}$, then we define:
			            \[
				            \applyPartStar{k}(t) \;=\; \arrCons{j} \bigl(\,
				            t,\,
				            \applyPartStar{k}(\select{\arrType_j}^\model(t,\iota_1^\model)),\dots,
				            \applyPartStar{k}(\select{\arrType_j}^\model(t,\iota_{\indexCar{\arrType_j}{\formulaProof}}^\model))
				            \bigr).
			            \]
		      \end{enumerate}
	\end{enumerate}

	We then define:
	\(
	\applyStar = \bigcup_{k=0}^{\infty} \applyPartStar{k}.
	\)

\end{definition}

\begin{lemma}
	\label{lem:applyStar-total}
	$\applyStar: \bigcup_{\sigma \in \sorts{\Sigma}} \sigma^\model \ra \bigcup_{\sigma \in \sorts{\Sigma} }\trsig{\sigma}^\newM$ is a total function.
	In addition,
	$\forall x \in \sigma^\model. \applyStar(x) \in \trsig{\sigma}^\newM$ for every $\sigma\in\sorts{\Sigma}$.
\end{lemma}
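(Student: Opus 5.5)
The plan is to prove, by induction on $k$, the following statement $P(k)$: for every $\sigma\in\sorts{\Sigma}$ and every $t\in\sigma^\model$ with $\height(t)\le k$, $\applyPartStar{k}(t)$ is defined and lies in $\trsig{\sigma}^\newM$. Lemma~\ref{lem:height-total-function} says $\height$ is total, so every element has some finite height. Hence $P(k)$ for all $k$ makes every element lie in the domain of some $\applyPartStar{k}$. Totality of $\applyStar=\bigcup_k\applyPartStar{k}$ also needs it to be a function, i.e.\ the partial functions must be compatible. This holds because clause (a) gives $\applyPartStar{k}(t)=\applyPartStar{k-1}(t)$ whenever $\height(t)<k$. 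So $\applyPartStar{k-1}\subseteq\applyPartStar{k}$, the union is a chain, and it is single-valued.

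For the base case $k=0$, note that $\height(t)=0$ happens only through case (c) of Definition~\ref{def:height}. Cases (a) and (b) always give $1+\max\{\ldots\}\ge 1$. So $t\in\sigma^\model$ with $\sigma\notin\structsorts_{\Sigma_{n+1}}\cup\arrays_\Sigma$. By Table~\ref{tab:domains-newM} we have $\trsig{\sigma}^\newM=\sigma^\model$, so $\applyPartStar{0}(t)=t\in\trsig{\sigma}^\newM$.

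For the inductive step, assume $P(k-1)$ and let $\height(t)\le k$. If $\height(t)<k$, clause (a) and the hypothesis finish. Suppose instead $\height(t)=k\ge1$. Then $\sigma\in\structsorts_{\Sigma_{n+1}}\cup\arrays_\Sigma$, and there are two cases.
\begin{enumerate}
\item If $\sigma\in\structsorts_{\Sigma_{n+1}}$, then since $\model^{\Sigma_{n+1}}$ is a datatypes structure, $t=c(t_1,\ldots,t_m)$ with $t_i\in\sigma_i^\model$. By the definition of $\height$, each $\height(t_i)<k$. By $P(k-1)$ and clause (a), $\applyPartStar{k}(t_i)=\applyPartStar{k-1}(t_i)\in\trsig{\sigma_i}^\newM$. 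Therefore clause (b)(i) applies and yields $\trsig{c}(\ldots)$. This is a $\Sigma^\formulaProof_{n+1}$-tree of sort $\trsig{\sigma}$, hence in $\trsig{\sigma}^\newM$ by Table~\ref{tab:domains-newM}, because $\trsig{c}:\trsig{\sigma_1}\times\cdots\times\trsig{\sigma_m}\ra\trsig{\sigma}$. It remains to observe that the arguments are legitimate tree arguments. If $\sigma_i$ is a struct or array sort, $\trsig{\sigma_i}^\newM$ is itself the tree set. Otherwise $\trsig{\sigma_i}$ is an element sort and $\trsig{\sigma_i}^\newM=B_{\trsig{\sigma_i}}$.
\item If $\sigma=\arrType_j$, the same argument applies to the selected elements $\select{\arrType_j}^\model(t,\iota^\model)$ for $\iota\in\indexSet{\arrType_j}{\formulaProof}$, which have height $<k$ and sort $E_j$. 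The first argument $t$ belongs to $\arrType_j^\model=\idx{\arrType_j}^\newM=B_{\idx{\arrType_j}}$. So $\arrCons{j}(t,\ldots)\in T_{\trsig{\arrType_j}}(\ldots,B)=\trsig{\arrType_j}^\newM$.
\end{enumerate}

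The main obstacle is the bookkeeping of sorts in Table~\ref{fig:dt-sig-trans}. One must verify that whenever an argument sort is not a struct sort of $\Sigma^\formulaProof_{n+1}$, it is indeed an element sort there, so that its domain is $B_\rho=\rho^\newM$. This covers $\trsig{E_j}$ when $E_j$ is neither an array nor a struct sort, as well as $\idx{\arrType_j}$. The other point to check is that the definition of $T_\sigma$ in Definition~\ref{def:trees} admits exactly these terms. A smaller subtlety is uniqueness of decomposition in case (b)(i): datatype values are trees with a unique top constructor, so $\applyPartStar{k}$ is well defined. Cases (b)(i) and (b)(ii) are also disjoint, since array sorts are disjoint from struct sorts.
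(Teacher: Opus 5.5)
Your proposal is correct and follows essentially the same route as the paper. You prove by induction on $k$ that $\applyPartStar{k}$ is defined on every element of height at most $k$ and lands in the corresponding translated sort, with the base case via case (c) of the height definition and the inductive step split into the struct and array cases; you then combine this with the chain property $\applyPartStar{k-1}\subseteq\applyPartStar{k}$ and the totality of $\height$ from Lemma~\ref{lem:height-total-function}. Your extra bookkeeping on why the constructor arguments are legitimate tree arguments (element versus struct sorts of $\Sigma^\formulaProof_{n+1}$, and $\idx{\arrType_j}^\newM=\arrType_j^\model$) fills in details the paper asserts without comment.
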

\begin{proof}
	First note that for every $k \in \mathbb{N}$, $\applyPartStar{k}$ is a partial function over $\bigcup_{\sigma \in \sorts{\Sigma}} \{ x \in \sigma^\model \mid \height(x) \leq k \}$, and that for every $j,k \in \mathbb{N}$, if $j < k$, then $\applyPartStar{j} \subseteq \applyPartStar{k}$.
	Thus, $\applyStar$ is a partial function over $\bigcup_{\sigma \in \sorts{\Sigma}} \sigma^\model$.

	In order to show it is a total function, we first use induction on \(k\) to show that \(\applyPartStar{k}\) is a total function over \(\bigcup_{\sigma \in \sorts{\Sigma}} \{ x \in \sigma^\model \mid \height(x) \leq k \}\) and that $\forall \sigma \in \sorts{\Sigma}.\forall x \in \sigma^\model.$ if $\height(x) \leq k$, then $\applyPartStar{k}(x) \in \trsig{\sigma}^\newM$.

	$\forall x \in \sigma^\model. \applyPartStar{k}(x) \in \trsig{\sigma}^\newM$

	\textbf{Base case:} For \(k=0\), let \(t\in \sigma^\model\) such that \(\height(t)=0\).
	Then, by~\Cref{def:height}, \(\sigma\notin (\structsorts_{\Sigma_{n+1}}\cup \arrays_\Sigma)\). Hence, we have \(\applyPartStar{0}(t) = t\) and since $\sigma^\model = \trsig{\sigma}^\newM$, we have that \(\applyPartStar{0}(t) \in \trsig{\sigma}^\newM\).

	\textbf{Inductive step:} Assume that \(\applyPartStar{k}\) is a total function over \(\bigcup_{\sigma \in \sorts{\Sigma}} \{ x \in \sigma^\model \mid \height(x) \leq k \}\) and that $\forall \sigma \in \sorts{\Sigma}.\forall x \in \sigma^\model.$ if $\height(x) \leq k$, then $\applyPartStar{k}(x) \in \trsig{\sigma}^\newM$.
	Let there be \(t\in \sigma^\model\) such that \(\height(t) \leq k+1\).
	If \(\height(t) < k + 1\), then by the inductive hypothesis, we know that \(\applyPartStar{k}(t)\) is defined and in \(\trsig{\sigma}^\newM\).
	Otherwise, by~\Cref{def:height}, we have two cases:
	\begin{enumerate}[label=(\alph*)]
		\item If \(t \in \sigma^\model\) where \(\sigma\in\structsorts_{\Sigma_{n+1}}\), then since $\model^{\Sigma_{n+1}}$ is a datatypes $\Sigma_{n+1}$-interpretation, we must have
		      \(t = c(t_1,\dots,t_m)\)
		      for some $c$ and $t_1,\ldots,t_m$. By the definition of the $\height$ function, we know that $\height(t_i) \leq k$ for every $i\in[1,m]$.
		      By the inductive hypothesis, we have that $\applyPartStar{k+1}(t_i) = \applyPartStar{k}(t_i)$ and in $\trsig{\sigma_i}^{\newM}$ for every $i\in[1,m]$.
		      Thus,
		      $$\applyPartStar{k+1}(t) \;=\; \trsig{c}\bigl(\applyPartStar{k+1}(t_1),\,\dots,\,\applyPartStar{k+1}(t_m)\bigr).$$
		      Since $\trsig{c} \in \constructors_{\newSig{\Sigma}{\formulaProof}}$, we have that
		      $\applyPartStar{k+1}(t) \in \trsig{\sigma}^\newM$.
		\item If \(t \in \arrType_j^\model\) for some $j \in [1,n]$, then by the definition of the $\height$ function, we know that $\height(\select{\arrType_j}^\model(t,i^\model)) \leq k$ for every $i\in\indexSet{\arrType_j}{\formulaProof}$.
		      By the inductive hypothesis, we have that $\applyPartStar{k+1}(\select{\arrType_j}^\model(t,i^\model)) = \applyPartStar{k}(\select{\arrType_j}^\model(t,i^\model))$ and in $\trsig{E_j}^{\newM}$ for every $i\in\indexSet{\arrType_j}{\formulaProof}$.
		      Thus,
		      $$\applyPartStar{k+1}(t) \;=\; \arrCons{j} \bigl(\,
			      t,\,
			      \applyPartStar{k+1}(\select{\arrType_j}^\model(t,\iota_1^\model)),\dots,
			      \applyPartStar{k+1}(\select{\arrType_j}^\model(t,\iota_{\indexCar{\arrType_j}{\formulaProof}}^\model))
			      \bigr).$$
		      Since $\arrCons{j} \in \constructors_{\newSig{\Sigma}{\formulaProof}}$, we have that
		      $\applyPartStar{k+1}(t) \in \trsig{\sigma}^\newM$.
	\end{enumerate}

	Let there be $t \in \sigma^\model$ we now show that $\applyStar(t)$ is defined and in $\trsig{\sigma}^\newM$.
	Since~\Cref{lem:height-total-function} shows that $\height$ is a total function over $\bigcup_{\sigma\in\sorts{\Sigma}} \sigma^\model$, we have that $\forall t \in \bigcup_{\sigma\in\sorts{\Sigma}} \sigma^\model.\; \exists k\in\mathbb{N}.\;\height(t) = k$. Thus, $\applyStar(t) = \applyPartStar{k}(t)$ and $\applyPartStar{k}(t) \in \trsig{\sigma}^\newM$.
\end{proof}

\begin{remark}
	\label{rem:applyStar-easier-representation}
	Note that the function \(\applyStar\) satisfies the following properties:
	For each $t \in \sigma^\model$ and every $\sigma\in\sorts{\Sigma}$, we have:
	\begin{enumerate}[label=(\alph*)]
		\item If \(t = c(t_1,\dots,t_m)\) for some $c \in \constructors_{\Sigma_{n+1}}$,
		      \(
		      \applyStar(t) \;=\; \trsig{c}\bigl(\applyStar(t_1),\,\dots,\,\applyStar(t_m)\bigr).
		      \)
		\item If $\sigma = \arrType_j$ for some $j \in [1,n]$.
		      Then, let
		      \(
		      \iota_1,\dots,\iota_{\indexCar{\sigma}{\formulaProof}}
		      \)
		      be the elements of \(\indexSet{\sigma}{\formulaProof}\),
		      listed in the order determined by \(\indOrder{\sigma}{\formulaProof}\).
		      Then,
		      \[
			      \applyStar(t)
			      \;=\;
			      \arrCons{j} \bigl(\,
			      t,\,
			      \applyStar(\,\select{\sigma}^\model\!(t,\iota_1^\model)),\dots,
			      \applyStar(\,\select{\sigma}^\model\!(t,\iota_{\indexCar{\sigma}{\formulaProof}}^\model))
			      \bigr).
		      \]
		\item If \(\sigma\notin (\structsorts_{\Sigma_{n+1}}\cup \arrays_\Sigma)\), then
		      $\applyStar(t)=t$.
	\end{enumerate}
\end{remark}

\begin{example}
	\label{ex:gamma}
	Consider the following elements in the interpretations of the sorts in~\Cref{tab:exInter}:
	\begin{enumerate}
		\item \(f_1 := \constArr{\emp} \in A^{\modelS}\).
		\item \(x_1 := \container(1, f_1) \in E^{\modelS}\).
		\item \(f_2 :=\constArr{x_1}\in A^{\modelS}\).
		\item \(x_2 = \container(2, f_2)\in E^{\modelS}\).
	\end{enumerate}

	Then,
	\begin{enumerate}
		\item \(\applyStar(f_1) = \arrCons{1}\bigl(f_1, \applyStar(\emp)\bigr) = \arrCons{1}\bigl(f_1, \trsig{\emp}\bigr)\).
		\item \(\applyStar(x_1) = \trsig{\container}\bigl(\applyStar(1), \applyStar(f_1)\bigr)\) =
		      \(\trsig{\container}\bigl(1, \arrCons{1}\bigl(f_1, \trsig{\emp}\bigr)\bigr)\).
		\item \(\applyStar(f_2) = \arrCons{1}\bigl(f_2, \applyStar(x_1)\bigr) = \arrCons{1}\bigl(f_2, \trsig{\container}\bigl(1, \arrCons{1}\bigl(f_1, \trsig{\emp}\bigr)\bigr)\bigr)\).
		\item \(\applyStar(x_2) = \trsig{\container}\bigl(\applyStar(2), \applyStar(f_1)\bigr)\) =
		      \(\trsig{\container}\bigl(2, \arrCons{1}\bigl(f_2, \trsig{\container}\bigl(1, \arrCons{1}\bigl(f_1, \trsig{\emp}\bigr)\bigr)\bigr))\).
	\end{enumerate}
\end{example}

\begin{lemma}
	\label{lem:applyStar-injective}
	$\applyStar$ is injective.
\end{lemma}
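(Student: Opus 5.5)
We argue by strong induction on height, using the recursive characterization of $\applyStar$ in \Cref{rem:applyStar-easier-representation}. More precisely, we show that for every $s,t\in\bigcup_{\sigma\in\sorts{\Sigma}}\sigma^\model$, if $\applyStar(s)=\applyStar(t)$ then $s=t$. The induction is on $\height(s)+\height(t)$, which is well defined by \Cref{lem:height-total-function}.

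First we dispose of sorts. By \Cref{lem:applyStar-total}, $\applyStar(s)\in\trsig{\sigma}^\newM$ whenever $s\in\sigma^\model$. The new sorts $\trsig{\sigma}$ are pairwise distinct, and $\newM$ interprets them as disjoint sets: the domains form a sorted set. Hence equal images force $s$ and $t$ to have the same sort $\sigma$, and we split into the three cases of \Cref{rem:applyStar-easier-representation}.

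\textbf{Case $\sigma\notin\structsorts_{\Sigma_{n+1}}\cup\arrays_\Sigma$.} Here $\applyStar$ is the identity, so $s=t$ immediately.

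\textbf{Case $\sigma=\arrType_j$.} Both images have the form $\arrCons{j}(s,\ldots)$ and $\arrCons{j}(t,\ldots)$. These are trees in $T_{\trsig{\arrType_j}}$, so equality of the trees gives equality of their first arguments, i.e.\ $s=t$ as elements of $\idx{\arrType_j}^\newM=\arrType_j^\model$. No induction is needed in this case; this is exactly why the identifier argument was included.

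\textbf{Case $\sigma\in\structsorts_{\Sigma_{n+1}}$.} Since $\model^{\Sigma_{n+1}}$ is a datatypes structure, $s=c(s_1,\ldots,s_m)$ and $t=d(t_1,\ldots,t_k)$. Their images are $\trsig{c}(\applyStar(s_1),\ldots)$ and $\trsig{d}(\applyStar(t_1),\ldots)$. Trees are free terms and the map $c\mapsto\trsig{c}$ is injective, with the $\trsig{c}$ distinct from every $\arrCons{i}$. Equality of the images therefore gives $c=d$ and $\applyStar(s_i)=\applyStar(t_i)$ for all $i$. Each child has strictly smaller height than its parent by \Cref{def:height}, so the induction hypothesis yields $s_i=t_i$, and hence $s=t$.

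The main obstacle is justifying the tree-level facts cleanly. These are that $\trsig{\sigma}^\newM$ are genuinely disjoint across sorts, and that equality in $T(\consig{\Sigma^\formulaProof_{n+1}}{\Sigma^\formulaProof_{n+1}},B)$ is syntactic. The second means equal trees have the same top constructor and equal arguments, including when an argument is an element of $B$ such as an array used as identifier. Both follow from \Cref{def:trees}, where trees are formal terms over disjoint sorted constants. We would state this as a short auxiliary observation before the induction.
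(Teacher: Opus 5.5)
Your proof is correct and follows essentially the same route as the paper. You first separate images of different sorts using the disjointness of the domains $\trsig{\sigma}^{\newM}$, and then induct on height with the same three cases: identity on sorts outside $\structsorts_{\Sigma_{n+1}}\cup\arrays_\Sigma$, the identifier argument of $\arrCons{j}$ for array sorts, and the top constructor plus the induction hypothesis on the children for struct sorts. The differences are cosmetic: you prove the contrapositive, you induct on $\height(s)+\height(t)$ rather than the maximum, and you state explicitly the syntactic equality of trees that the paper uses implicitly.
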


\begin{proof}
	We show first that if \(t_1\in\sigma_1^\model\) and \(t_2\in\sigma_2^\model\) with \(\sigma_1\neq\sigma_2\), then \(\applyStar(t_1)\neq\applyStar(t_2)\).
	Indeed, by~\Cref{lem:applyStar-total}
	\(\applyStar(t_1)\in\trsig{\sigma_1}^{\newM}\) and \(\applyStar(t_2)\in\trsig{\sigma_2}^{\newM}\),
	and the domains for different new sorts are disjoint (see \Cref{tab:domains-newM}).

	It remains to prove that for each fixed sort \(\sigma\), whenever \(t_1,t_2\in\sigma^\model\) and \(t_1\neq t_2\), then
	\(\applyStar(t_1)\neq\applyStar(t_2)\). We proceed by induction on
	\(
	N \;:=\;\max\bigl\{h(t_1),\,h(t_2)\bigr\},
	\)

	\medskip\noindent
	\textbf{Base case (\(N=0\)).}
	According to the definition of \(\height\), if \(N=0\), then $\sigma \notin (\structsorts_{\Sigma_{n+1}}\cup \arrays_\Sigma)$, and thus $\applyStar(t_1) = t_1 \neq t_2 = \applyStar(t_2)$.

	\medskip\noindent
	\textbf{Inductive step.}
	Assume the claim holds for all pairs whose maximum depth is \(<N\). Let \(t_1,t_2\in\sigma^\model\) with \(\max\{h(t_1),h(t_2)\}=N>0\). There are two cases to consider:
	\begin{enumerate}
		\item If $\sigma \in \structsorts_{\Sigma_{n+1}}$, then there are two options:
		      \begin{enumerate}[label=(\alph*)]
			      \item If \(t_1=c(s_1,\dots,s_m)\) and \(t_2=\hat c(r_1,\dots,r_k)\) with \(c\neq\hat c\), then according to~\Cref{rem:applyStar-easier-representation}, we have
			            \[
				            \applyStar(t_1)=\trsig{c}(\dots)
				            \;\neq\;
				            \trsig{\hat c}(\dots)
				            =\applyStar(t_2),
			            \]
			            since distinct constructors yield disjoint images.

			      \item Otherwise both use the same constructor \(c\), say
			            \(t_1=c(s_1,\dots,s_m)\) and \(t_2=c(r_1,\dots,r_m)\), but \(t_1\neq t_2\). Then for some \(j\), \(s_j\neq r_j\) and
			            \(h(s_j),h(r_j)<N\). By the inductive hypothesis
			            \(\applyStar(s_j)\neq\applyStar(r_j)\). Thus, due to~\Cref{rem:applyStar-easier-representation},
			            \(\applyStar(t_1) = \trsig{c}(\ldots,\applyStar(s_j),\ldots) \neq
			            \trsig{c}(\ldots,\applyStar(r_j),\ldots) = \applyStar(t_2)\).
		      \end{enumerate}

		\item If \(\sigma=\arrType_j\in\arrays_\Sigma\), then
		      \[
			      \applyStar(t_1)=\arrCons{j}\bigl(t_1,\ldots\bigr) \neq
			      \arrCons{j}\bigl(t_2,\ldots\bigr) = \applyStar(t_2)
		      \]
	\end{enumerate}
\end{proof}

We can now define the function $\applyTag$ that will be used to define the interpretation of some of the variables in $\trF(\formulaProof)$.

\subsubsection{Defining \(\applyTag\)}

We define a function
\(
\applyTag:
\bigcup_{i=1}^{n} \arrType_i^\model
\to
\bigcup_{i=1}^{n} \trarr{\arrType_i}^{\newM}
\)
as follows:
for each \(t\in \arrType_j^\model\) (with \(j\in [1,n]\)),
$\applyTag(t)\in \trarr{\arrType_j}^{\newM}= \trsig{I_j}^\newM \ra \trsig{E_j}^\newM$ is a function
defined as follows:
\(\forall i\in
\trsig{I_j}^{\newM} = I_j^\model\):
\[
	\applyTag(t)(i) = \applyStar\Bigl(\select{\arrType_j}^\model(t, i)\Bigr).
\]

\begin{lemma}
	\label{lem:applyTag-well-defined}
	For every \(t\in \arrType_j^\model\) (with \(j\in [1,n]\)),
	\(\applyTag(t) \in \trarr{\arrType_j}^{\newM}\)
\end{lemma}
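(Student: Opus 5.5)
This is essentially a well-typedness check. By \Cref{tab:domains-newM}, $\trarr{\arrType_j}^{\newM}$ is the set of all functions from $\trsig{I_j}^{\newM}$ to $\trsig{E_j}^{\newM}$. So it suffices to show two things. First, $\applyTag(t)$ is a well-defined total function on $\trsig{I_j}^{\newM}$. Second, each of its values lies in $\trsig{E_j}^{\newM}$.

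The plan has three steps.

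\textbf{Step 1: the domain.} We must check that $\trsig{I_j}^{\newM}=I_j^{\model}$. \Cref{sec:theory:ndt-sig} requires $I_j\notin \structsorts_{\Sigma_{n+1}}\cup\arrays_\Sigma$. Hence $I_j$ falls into the first sub-case of the $\trsig{\tau}$ row of \Cref{tab:domains-newM}, which gives $\trsig{I_j}^{\newM}=I_j^{\model}$. This is exactly what makes the definition $\applyTag(t)(i)=\applyStar(\select{\arrType_j}^\model(t,i))$, for $i\in\trsig{I_j}^{\newM}$, meaningful.

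\textbf{Step 2: applying $\applyStar$.} Fix $i\in I_j^{\model}$. Since $\model^{\Sigma_j}$ is an arrays structure, $\select{\arrType_j}^\model(t,i)$ is a well-defined element of $E_j^{\model}$. By \Cref{lem:applyStar-total}, $\applyStar$ is total, and it maps elements of $\sigma^\model$ into $\trsig{\sigma}^{\newM}$. Taking $\sigma=E_j$ gives $\applyStar(\select{\arrType_j}^\model(t,i))\in\trsig{E_j}^{\newM}$.

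\textbf{Step 3: conclusion.} Each value is uniquely determined, because $\select{}$ and $\applyStar$ are both functions. Therefore $\applyTag(t)$ is a total function from $\trsig{I_j}^{\newM}$ to $\trsig{E_j}^{\newM}$, that is, an element of $\trarr{\arrType_j}^{\newM}$.

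The only place needing care is the sort bookkeeping for $E_j$. The element sort $E_j$ may be a struct sort, an element sort, or even another array sort; the last case is not excluded for element sorts. In every case, $\trsig{E_j}^{\newM}$ is the appropriate domain from \Cref{tab:domains-newM}, and \Cref{lem:applyStar-total} covers all sorts $\sigma\in\sorts{\Sigma}$ uniformly. So no case split is needed, and I expect no real obstacle.
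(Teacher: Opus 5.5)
Your proposal is correct and follows the paper's proof essentially step for step. You identify $\trsig{I_j}^{\newM}$ with $I_j^{\model}$, note that $\select{\arrType_j}^\model(t,i)\in E_j^{\model}$, and invoke \Cref{lem:applyStar-total} to land in $\trsig{E_j}^{\newM}$; your explicit justification of $\trsig{I_j}^{\newM}=I_j^{\model}$ via the condition $I_j\notin\structsorts_{\Sigma_{n+1}}\cup\arrays_\Sigma$ of \Cref{sec:theory:ndt-sig} is a small improvement, since the paper uses that equality without comment.
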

\begin{proof}
	Let \(t\in \arrType_j^\model\) (with \(j\in [1,n]\)).
	Then note that every element in \(\trsig{I_j}^{\newM}\) is a function from \(\trsig{I_j}^{\newM}\) to \(\trsig{E_j}^{\newM}\).
	Let there be \(i\in \trsig{I_j}^{\newM}\).
	Since $\trsig{I_j}^{\newM} = I_j^\model$, we have that \(i\in I_j^\model\). We know that \(\select{\arrType_j}^\model(t, i)\) is defined and in \(E_j^\model\) since \(t\in \arrType_j^\model\) and \(i\in I_j^\model\). ~\Cref{lem:applyStar-total} shows that \(\applyStar(\select{\arrType_j}^\model(t, i)) \in \trsig{E_j}^{\newM}\).
	Thus, \(\applyTag(t)(i) = \applyStar(\select{\arrType_j}^\model(t, i))\) is defined and in \(\trsig{E_j}^{\newM}\).
\end{proof}

\begin{example}
	\label{ex:applyTag}
	Consider $a^{\modelS} = \constArr{\emp}$ from~\Cref{tab:exInter}.
	Then $\applyTag(a^{\modelS})$ is a function from $\trsig{I_1}^{\newM} = \mathbb{N}$ to $\trsig{E_1}^{\newM}$, defined as follows:
	\(\forall n \in \mathbb{N}. \applyTag(a^{\modelS})(n) = \applyStar(\select{\arrType_1}^{\modelS}(a^{\modelS}, n)) = \applyStar(\emp) = \trsig{\emp}\).
	Thus, \(\applyTag(a^{\modelS}) = \constArr{\trsig{\emp}}\)
\end{example}

\begin{lemma}
	\label{lem:applyTag-injective}
	\( \applyTag\) is injective.
\end{lemma}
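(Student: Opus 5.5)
We need injectivity of $\applyTag$. The plan is to show that two distinct arrays differ at some index, and that $\applyStar$ keeps that difference.

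\textbf{Different sorts.} Let $t_1 \in \arrType_{j_1}^\model$ and $t_2 \in \arrType_{j_2}^\model$ with $j_1 \neq j_2$. Then $\applyTag(t_1) \in \trarr{\arrType_{j_1}}^{\newM}$ and $\applyTag(t_2) \in \trarr{\arrType_{j_2}}^{\newM}$. These domains are disjoint, since the interpretation of $\newSig{\Sigma}{\formulaProof}$ is a sorted set. So the two images differ.

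\textbf{Same sort.} Now fix $j$, and let $t_1, t_2 \in \arrType_j^\model$ with $t_1 \neq t_2$.
\begin{itemize}
\item Since $\model^{\Sigma_j}$ is an arrays structure, it satisfies the SMT-LIB extensionality axiom. Hence there is an index $i \in I_j^\model$ with $\select{\arrType_j}^\model(t_1,i) \neq \select{\arrType_j}^\model(t_2,i)$.
\item Both of these values lie in $E_j^\model$. By \Cref{lem:applyStar-injective}, $\applyStar$ is injective, so their images under $\applyStar$ are distinct.
\item By the definition of $\applyTag$, this says $\applyTag(t_1)(i) \neq \applyTag(t_2)(i)$.
\item The index $i$ lies in $\trsig{I_j}^{\newM}$, because $\trsig{I_j}^{\newM} = I_j^\model$. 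By \Cref{lem:applyTag-well-defined}, both $\applyTag(t_1)$ and $\applyTag(t_2)$ are functions on $\trsig{I_j}^{\newM}$. They differ at $i$, so they are distinct elements of $\trarr{\arrType_j}^{\newM}$.
\end{itemize}

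\textbf{Main point to verify.} The only non-trivial step is the appeal to extensionality. I would check that the array axioms assumed in the Preliminaries include it; the SMT-LIB array theory does. Otherwise, the argument only uses that functions agreeing everywhere are equal. The index-sort requirement of \Cref{sec:theory:ndt-sig} ensures $I_j$ is not an array or struct sort, which is why $\trsig{I_j}^{\newM} = I_j^\model$ holds directly.
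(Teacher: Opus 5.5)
Your proof is correct and takes essentially the same approach as the paper. The cross-sort case uses disjointness of the sorted domains. The same-sort case uses extensionality in $\model$ to find an index where the two arrays differ, then injectivity of $\applyStar$ (\Cref{lem:applyStar-injective}) together with $\trsig{I_j}^{\newM} = I_j^\model$ to carry that difference over to $\applyTag$.
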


\begin{proof}
	Let \(t_1 \in \arrType_j^{\model}\) and \(t_2 \in \arrType_k^{\model}\) with \(j,k \in [1,n]\), such that \(t_1 \neq t_2\).

	\textbf{Case 1:} \(j \neq k\).
	Then \(t_1\in\arrType_j^\model\) and \(t_2\in\arrType_k^\model\), so
	\(
	\applyTag(t_1) \in \trarr{\arrType_j}^{\newM}
	\land
	\applyTag(t_2) \in \trarr{\arrType_k}^{\newM}.
	\)
	These are disjoint domains by construction, so \(\applyTag(t_1)\neq \applyTag(t_2)\).

	\textbf{Case 2:} \(j = k\), but \(t_1 \neq t_2\).
	Then \(t_1, t_2 \in \arrType_j^{\model}\). Since \(\model\) satisfies extensionality for arrays, there exists some index \(i \in I_j^{\model} = \trsig{I_j}^{\newM}\) such that
	\[
		\select{\arrType_j}^{\model}(t_1, i) \neq \select{\arrType_j}^{\model}(t_2, i).
	\]
	By the definition of \(\applyTag\) and the injectivity of \(\applyStar\) (Lemma~\ref{lem:applyStar-injective}),
	\begin{align*}
		\applyTag(t_1)(i)
		= \applyStar\bigl(\select{\arrType_j}^\model(t_1, i)\bigr) \neq \applyStar\bigl(\select{\arrType_j}^\model(t_2, i)\bigr)
		= \applyTag(t_2)(i)
	\end{align*}

	Hence \(\applyTag(t_1)\neq\applyTag(t_2)\) in both cases. This proves that
	\(\applyTag\) is injective.
\end{proof}

\subsubsection{Interpretation of the Variables, using $\applyStar$ and $\applyTag$}

The interpretation of the variables introduced during the translation is summarized in~\Cref{tab:new-vars-interpret}.
Given a variable $x$ for $\newSig{\Sigma}{\formulaProof}$ we have three possibilities.
If $x=\trsig{u}$ for some variable $u$ of a sort in $\Sigma$,
the interpretation of $x$ in $\newM$ is set to be the output
of $\applyStar$ on the interpretation of $u$ in $\model$;
If $x=\trarr{a}$ for some variable $a$ of an array sort of $\Sigma$,
the interpretation of $x$ in $\newM$ is set to be the output
of $\applyTag$ on the interpretation of $a$ in $\model$; finally,
if $x=\tempvar{a}{k}$ for some
variable $a$ of an array sort $A_i$ of $\Sigma$ and
$1\leq k\leq \indexCar{\arrType_i}{\formulaProof}$,
then $x$ is interpreted in $\newM$ using the following considerations.
First, let $j$ such that $k=\indOrder{\arrType_i}{\formulaProof}(j)$
(or, equivalently, $j=\indOrder{\arrType_i}{\formulaProof}^{-1}(k)$).
The reason that every $\tempvar{a}{k}$ is interpreted this way is that we want to ensure that every possible application of $\Lzero{a}{i}{\formulaProof}$ is satisfied, as seen in~\Cref{app:sec:satisfaction-lemmas}.

\begin{table}[t]
	\centering
	\begin{tabular}{|c|c|c|c|c|}
		\hline
		\shortstack{Original                                                                                                                                                        \\ Variable} &
		\shortstack{Original                                                                                                                                                        \\ Sort}
		                     & \shortstack{New                                                                                                                                      \\ Variable} & \shortstack{New \\ Sort} & \shortstack{Interpretation \\ in $\newM$} \\\hline
		$u$                  & $\sigma \in \sorts{\Sigma}$                          & $\trsig{u}$        & $\trsig{\sigma}$ & $\trsig{u}^{\newM}= \applyStar\bigl(u^{\model}\bigr)$ \\\hline
		\multirow{2}{*}{$a$} & \multirow{2}{*}{$\arrType_i \in \arrays_\Sigma$}
		                     & $\trarr{a}$                                          & $\trarr{\arrType}$
		                     & $\trarr{a}^{\newM}= \applyTag\bigl(a^{\model}\bigr)$                                                                                                 \\\cline{3-5}
		                     &
		                     & $\tempvar{a}{k}$
		                     & $\trsig{E_i}$
		                     & $\displaystyle
			\tempvar{a}{k}^{\newM}=
			\applyStar\!\bigl(
			\select{\arrType_i}^{\model}\!\bigl(
			a^{\model},
			(\indOrder{\arrType_i}{\formulaProof}^{-1}(k))^{\model}
			\bigr)
			\bigr)$                                                                                                                                                                     \\
		\hline
	\end{tabular}
	\caption{Fresh variables introduced for the signature $\newSig{\Sigma}{\formulaProof}$ and their interpretations in $\newM$. Here $k$ ranges over $1,\dots,\indexCar{\arrType_i}{\formulaProof}$.}
	\label{tab:new-vars-interpret}
\end{table}

\begin{example}
	\label{ex:interpretation-variables}
	Consider the formula $\trF(\formulaExSat)$ from~\Cref{tab:tndtsatex}.
	Its variable interpretations in $\newMS$ are given in \Cref{tab:extransinter},
	and rely on \Cref{ex:gamma,ex:applyTag}.
\end{example}

\subsection{Interpretation of Functions and Predicates}
\label{app:sec:interpretation-functions}

\begin{table}[t]
	\centering
	\begin{tabular}{|l|l|}
		\hline
		Array operators & standard read/update interpretation \\\hline
		Constructors    & Fixed by the theory                 \\\hline
		Testers         & Fixed by the theory                 \\\hline
		Selectors       &
		\shortstack{
			$
				\sel{\trsig{c}}{i}^{\newM}(t)=
				\begin{cases}
					t_i                                            & \text{if }
					t=\trsig{c}(t_1,\dots,t_m)\text{ for some }t_1,\ldots,t_m                     \\[4pt]
					\applyStar\!\bigl(\sel{c}{i}^{\model}(s)\bigr) &
					\text{otherwise, if there is } s\in\sigma^{\model}\text{ s.t.}\applyStar(s)=t \\[4pt]
					\defelem{\trtype{\sigma_{i}}}                  & \text{otherwise}
				\end{cases}
			$}                                                    \\\hline
		UF symbols      &
		\shortstack{
			$
				f_{\arrType_{j}}^{\newM}\bigl(\arrCons{j}(t_1,\dots,t_k)\bigr)
				=\applyTag(t_1)
			$                                                     \\
			$
				g_{\arrType_{j}}^{\newM}(t)=
				\begin{cases}
					\applyStar(a)                   & \text{if } a\in\arrType_{j}^{\model}\text{ and } \applyTag(a)=t, \\[4pt]
					\defelem{\trtype{\arrType_{j}}} & \text{otherwise}.
				\end{cases}
			$}                                                    \\\hline
	\end{tabular}
	\caption{Interpretation of function and predicate symbols,
		where \(\defelem{\trtype{\sigma_{i}}}\) is an arbitrary (but fixed) element of
		\(\trtype{\sigma_{i}}^{\newM}\).
	}
	\label{tab:funprednewM}
\end{table}

The interpretation of function and predicate symbols
in $\newM$ is defined in \Cref{tab:funprednewM} and
described as follows.

\paragraph{Array operations.}
The symbols \(\select{\trarr{\arrType_{j}}}\) and
\(\store{\trarr{\arrType_{j}}}\) are interpreted in the standard way,
as read and update operations over functions.

\paragraph{Datatype symbols.}
The interpretation of
constructors and testers is fixed, as we require
$\newM^{\trsig{{\Sigma_{n+1}^{\formulaProof}}}}$ to be a datatypes structure.
As for selectors, their interpretation is fixed when applied
to terms that are constructed by their corresponding constructors.
When applied to different constructors, we maintain the connection
to $\model$ as induced by $\applyStar$, or set them
arbitrarily if no such connection is induced.
Injectivity of \(\applyStar \) guarantees that this definition is unambiguous.

\paragraph{Uninterpreted functions.}
Let $1\leq j\leq n$.
We define
$f_{\arrType_{j}}^{\newM}:
	\trtype{\arrType_j}^{\newM}\ra\trarr{\arrType_j}^{\newM}$
and
$g_{\arrType_{j}}^{\newM}:
	\trarr{\arrType_j}^{\newM}\ra\trtype{\arrType_j}^{\newM}$
as follows.
For $f_{\arrType_{j}}$,
each element of $\trtype{\arrType_j}^{\newM}$
has the form $\arrCons{j}(t_1,\ldots,t_k)$,
where $t_1\in{\idx{\arrType_{j}}}^{\newM}=\arrType_j^{\model}$.
We then take $t_1$ and pass it through $\applyTag$.
For $g_{\arrType_{j}}$,
given $t\in\trarr{\arrType_j}$ we consider two cases.
In the first, there exists $a\in\arrType_j^{\model}$
with $\applyTag(a)=t$, and then we return
$\applyStar(a)$.
If there is no such $a$, we return an arbitrary but fixed element
\(\defelem{\trtype{\arrType_{j}}}\) of
\(\trtype{\arrType_{j}}^{\newM}\).
Injectivity of \(\applyTag \) guarantees that the definition of \(g_{\arrType_{j}}^{\newM}\) is unambiguous.

\subsection{$\newM$ is a $\newT{\formulaProof}$-interpretation}
\label{app:sec:newM-newT-interpretation}
Now that the construction of $\newM$ is done, we prove that
it is indeed a $\newT{\formulaProof}$-interpretation.
\begin{lemma}
	\label{lem:newM-newT-interpretation}
	$\newM$ is a $\newT{\formulaProof}$-interpretation.
\end{lemma}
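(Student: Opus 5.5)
By the definition of $\newT{\formulaProof}$ (\Cref{def:new-theory}) as a combination, it suffices to check that the variable-free part of $\newM$ restricted to each component signature is in the corresponding theory. The components are $\trsig{\Sigma_i}$ for $1\leq i\leq n$, $\Sigma^{\formulaProof}_{n+1}$, and $\Sigma_{n+2}$. Every domain must also be nonempty and distinct sorts must get disjoint domains; I will verify these together with the component checks. The plan is to treat the three kinds of components in turn, using the domains of \Cref{tab:domains-newM} and the symbol interpretations of \Cref{tab:funprednewM}.

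\textbf{The EUF component $\Sigma_{n+2}$.} Every $\Sigma_{n+2}$-structure is in $T_{n+2}$, so I only need $f_{\arrType_j}^{\newM}$ and $g_{\arrType_j}^{\newM}$ to be total functions of the right arity. For $f_{\arrType_j}^{\newM}$, every element of $\trtype{\arrType_j}^{\newM}$ is a tree whose only available top constructor of that sort is $\arrCons{j}$. Hence it has the form $\arrCons{j}(t_1,\ldots)$ with $t_1\in\idx{\arrType_j}^{\newM}=\arrType_j^{\model}$, and \Cref{lem:applyTag-well-defined} puts $\applyTag(t_1)$ in $\trarr{\arrType_j}^{\newM}$. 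For $g_{\arrType_j}^{\newM}$, the first case yields $\applyStar(a)\in\trtype{\arrType_j}^{\newM}$ by \Cref{lem:applyStar-total}, and it is unambiguous by \Cref{lem:applyTag-injective}. The default element exists because the datatype domain is nonempty, which I show below.

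\textbf{The array components $\trsig{\Sigma_i}$.} The array domain is the full function space $\trsig{I_i}^{\newM}\to\trsig{E_i}^{\newM}$, and $\select{}$ and $\store{}$ are application and pointwise update. This is the standard model, so it satisfies read-over-write and extensionality. I only need to confirm that the index and element domains are nonempty. The index domain is $I_i^{\model}$, since $I_i$ is never a struct or array sort. The element domain is either $E_i^{\model}$ or a tree set.

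\textbf{The datatypes component $\Sigma^{\formulaProof}_{n+1}$.} This is the step needing most care. By \Cref{lem:well-foundedness-n+1} it is a datatypes signature. I take as generating set $D$ the $\elemsorts$-sorted set $B$ of \Cref{tab:domains-newM}. Struct sorts are interpreted exactly as $T_{\trsig{\tau}}(\consig{\Sigma^{\formulaProof}_{n+1}}{\Sigma^{\formulaProof}_{n+1}},B)$, which gives condition $(i)$ of \Cref{datatypes-axioms}; element sorts trivially equal $B_\rho$. Constructors are interpreted as term formation and testers as in $(iv)$, since both are fixed by the table. For selectors, the first case of \Cref{tab:funprednewM} gives $(iii)$. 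The other cases only need to be well defined and land in the correct domain, and this is where the injectivity of $\applyStar$ (\Cref{lem:applyStar-injective}) and \Cref{lem:applyStar-total} are used. Nonemptiness of the tree sets follows from the NDT well-foundedness assumption in \Cref{sec:theory:ndt-sig}, transported through \Cref{lem:well-foundedness-n+1}: each struct sort is generated at some finite stage from the nonempty element domains.

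\textbf{The main obstacle.} I expect the hardest part to be the bookkeeping on shared sorts. A sort such as $\trsig{E_i}$ may be a struct sort of $\Sigma^{\formulaProof}_{n+1}$ while also being the element sort of an array component. So I must check that the domain given in \Cref{tab:domains-newM} is literally the same set in both restrictions. I must also check that sorts get pairwise disjoint domains, for instance that $\idx{\arrType_i}^{\newM}=\arrType_i^{\model}$ does not overlap the other domains. If literal disjointness fails, for example because $\trsig{\tau}^{\newM}=\tau^{\model}$ could share elements with a tree set, I would fix it by tagging elements with their sort. Such tagging preserves every property checked above.
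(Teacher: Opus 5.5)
Your proposal is correct and follows the same argument as the paper, which also checks the reducts separately: the $\trsig{\Sigma_i}$-reducts are standard function-space arrays structures, the $\Sigma^{\formulaProof}_{n+1}$-reduct is the tree-based datatypes structure, and the $\Sigma_{n+2}$-reduct is unconstrained. Your extra checks are details the paper leaves implicit, and they go through as you describe: well-definedness of the selectors and of $f_{\arrType_j}$, $g_{\arrType_j}$, nonemptiness via \Cref{lem:well-foundedness-n+1}, and disjointness of sorted domains repaired by tagging (overlap is possible, e.g.\ between $\idx{\arrType_i}^{\newM}=\arrType_i^{\model}$ and $\trarr{\arrType_i}^{\newM}$ when $E_i$ is not a struct sort); your shared-sort worry is vacuous, since all reducts of the single structure $\newM$ use the same domains.
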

\begin{proof}
	Recall that
	\(
	\newT{\formulaProof}
	\;=\;
	\Bigl(\biguplus_{j=1}^{n} \trsig{T_j}\Bigr)
	\;\uplus\;
	T_{n+1}^\formulaProof
	\;\uplus\;
	T_{n+2}\,.
	\)
	For each \(j\in[1,n]\), the reduct \(\newM^{\trsig{\Sigma_j}}\) interprets \(\select{\trarr{\arrType_j}}\) and \(\store{\trarr{\arrType_j}}\) in the standard way for an arrays structure.
	On \(\Sigma_{n+1}^\formulaProof\), \(\newM\) uses the usual tree‐based interpretation of
	constructors, selectors and testers (see Definition~\ref{datatypes-axioms}). By
	construction, each selector and tester in \(\Sigma_{n+1}^\formulaProof\) obeys the requirements of datatypes structures.
	Finally, on \(\Sigma_{n+2}\) we only have the symbols \(f_{\arrType_j}\) and \(g_{\arrType_j}\),
	and there is no restriction on their interpretation.
\end{proof}

\subsection{Satisfaction of the Translated Literals}
\label{app:sec:satisfaction-literals}
\begin{lemma}
	\label{lem:translit}
	Let $l$ be a $\Sigma$-literal and let $\tra(l)$ be its translation.
	If $\model \models l$ then $\newM \models \tra(l)$.
\end{lemma}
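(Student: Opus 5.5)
The plan is a case analysis over the eight literal forms of \Cref{tab:literal-transformations}. Each case follows from \Cref{rem:applyStar-easier-representation}, the injectivity of $\applyStar$ (\Cref{lem:applyStar-injective}), and the definition of $\applyTag$. Throughout, $\trsig{u}^{\newM}=\applyStar(u^{\model})$ and $\trarr{a}^{\newM}=\applyTag(a^{\model})$, as in \Cref{tab:new-vars-interpret}.

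\textbf{Equalities and disequalities.} If $\model\models x=y$, then $\trsig{x}^{\newM}=\applyStar(x^{\model})=\applyStar(y^{\model})=\trsig{y}^{\newM}$. If $\model\models x\neq y$, injectivity of $\applyStar$ gives $\trsig{x}^{\newM}\neq\trsig{y}^{\newM}$. This is the only place injectivity is needed for plain equality literals.

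\textbf{Constructors and testers.} For $x=c(t_1,\ldots,t_n)$, apply \Cref{rem:applyStar-easier-representation}(a) to $x^{\model}=c(t_1^{\model},\ldots,t_n^{\model})$. This yields $\applyStar(x^{\model})=\trsig{c}(\applyStar(t_1^{\model}),\ldots)$, which is exactly $\trsig{c}^{\newM}(\trsig{t_1}^{\newM},\ldots)$.

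For $\is{c}(x)$: $x^{\model}$ has top constructor $c$, so $\applyStar(x^{\model})$ has top constructor $\trsig{c}$.

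For $\neg\is{c}(x)$: the sort of $x$ is in $\structsorts_{\Sigma_{n+1}}$, so $x^{\model}$ has some top constructor $d\neq c$. Then $\applyStar(x^{\model})$ has top constructor $\trsig{d}\neq\trsig{c}$. Here I use that $c\mapsto\trsig{c}$ is injective and that $\arrCons{j}$ never heads an element of sort $\trsig{\sigma}$ for $\sigma\in\structsorts_{\Sigma_{n+1}}$, which follows from the sort disjointness in \Cref{sec:theory:ndt-sig}.

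\textbf{Selectors.} For $x=\sel{c}{i}(y)$, split on the shape of $y^{\model}$.
\begin{itemize}
\item If $y^{\model}=c(t_1,\ldots,t_m)$, then $\applyStar(y^{\model})=\trsig{c}(\applyStar(t_1),\ldots)$. The first clause of the selector interpretation in \Cref{tab:funprednewM} returns $\applyStar(t_i)=\applyStar(x^{\model})$.
\item Otherwise $\applyStar(y^{\model})$ is not headed by $\trsig{c}$. The second clause applies with $s=y^{\model}$; this is well defined because $\applyStar$ is injective. It returns $\applyStar(\sel{c}{i}^{\model}(y^{\model}))=\applyStar(x^{\model})$.
\end{itemize}

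\textbf{Array literals.} For $x=\select{\arrType}(a,i)$, compute
\[
\select{}^{\newM}(\trarr{a}^{\newM},\trsig{i}^{\newM})=\applyTag(a^{\model})(\applyStar(i^{\model})).
\]
The index sort $I_j$ lies outside $\structsorts_{\Sigma_{n+1}}\cup\arrays_\Sigma$, so $\applyStar(i^{\model})=i^{\model}$ by \Cref{rem:applyStar-easier-representation}(c). The value is therefore $\applyStar(\select{}^{\model}(a^{\model},i^{\model}))=\applyStar(x^{\model})$.

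For $b=\store{\arrType}(a,i,v)$, compare the two functions pointwise at each $k\in I_j^{\model}$:
\begin{itemize}
\item at $k=i^{\model}$, both sides equal $\applyStar(v^{\model})=\trsig{v}^{\newM}$;
\item at $k\neq i^{\model}$, both sides equal $\applyStar(\select{}^{\model}(a^{\model},k))$, using the array axioms in $\model$.
\end{itemize}
Since $\store{}$ is interpreted as function update, $\applyTag(b^{\model})=\store{}^{\newM}(\applyTag(a^{\model}),\trsig{i}^{\newM},\trsig{v}^{\newM})$.

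\textbf{Main obstacle.} The delicate steps are the selector case off its constructor and the $\neg\is{c}$ case. Both need care that the constructors of $\Sigma^{\formulaProof}_{n+1}$ arising as images of $\applyStar$ on a given sort are exactly the translated $\trsig{c}$'s of that sort. In the selector case I would also check that the second clause is reached only when $\trsig{y}^{\newM}$ lies in the image of $\applyStar$, which holds here by construction.
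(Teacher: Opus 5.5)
Your proposal is correct and follows the paper's proof essentially verbatim: the same case split over the literal forms of \Cref{tab:literal-transformations}, using \Cref{rem:applyStar-easier-representation}, the injectivity of $\applyStar$, the definition of $\applyTag$, and the two-clause selector interpretation of \Cref{tab:funprednewM}. You also make explicit two steps the paper leaves implicit. One is that $\trsig{i}^{\newM}=\applyStar(i^{\model})=i^{\model}$ because $I_j\notin\structsorts_{\Sigma_{n+1}}\cup\arrays_\Sigma$, which is used in the select and store cases; the other is the distinct-top-constructor argument in the $\neg\is{c}$ case.
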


\begin{proof}
	We go over all possible shapes of $l$.

	\begin{enumerate}
		\item\textbf{\(x = y\).}
		      Since $x^{\model}=y^{\model}$,
		      \(\applyStar(x^{\model})=\applyStar(y^{\model})\),
		      hence \(\trsig{x}^{\newM}=\trsig{y}^{\newM}\).
		\item\textbf{\(x \neq y\).}
		      $x^{\model}\neq y^{\model}$ and injectivity of $\applyStar$ gives \(\trsig{x}^{\newM} = \applyStar(x^\model) \neq \applyStar(y^\model) = \trsig{y}^{\newM}\).
		\item\textbf{\(x = \select{\arrType}(a,i)\).}
		      In $\model$ we have
		      \(x^{\model}= \select{\arrType}^{\model}(a^{\model},i^{\model})\).
		      Therefore
		      \[
			      \select{\trarr{\arrType}}^{\newM}(\trarr{a}^{\newM}, \trsig{i}^{\newM})
			      = \trarr{a}^{\newM}(\trsig{i}^{\newM})
			      = \applyStar\Bigl(\select{\arrType}^\model(a^\model,i^\model)\Bigr)
			      = \applyStar(x^\model) = \trsig{x}^{\newM}.
		      \]
		      The first equality is by the definition of $\select{\trarr{\arrType}}^{\newM}$, the second one is by the definition of $\trarr{a}^{\newM}$, the third one is by the substitution of $\select{\arrType}^\model(a^\model,i^\model)$ with $x^\model$, and the last one is by the definition of $\trsig{x}^{\newM}$.
		\item\textbf{\(b = \store{\arrType_l}(a,i,v)\).}
		      As $b^\model = \store{\arrType_l}^\model(a^\model,i^\model,v^\model)$, it follows that:
		      \begin{enumerate}
			      \item $\select{\arrType_l}^\model(b^\model,i^\model)=v^\model$, and
			      \item for all $j\in I_l^\model\setminus\{i^\model\}$, $\select{\arrType_l}^\model(b^\model,j)=\select{\arrType_l}^\model(a^\model,j)$.
		      \end{enumerate}
		      Then, by the new interpretation we have:
		      \begin{enumerate}
			      \item $\trarr{b}^{\newM}(\trsig{i}^{\newM})
				            = \applyStar\bigl(\select{\arrType_l}^\model(b^\model,i^\model)\bigr)
				            = \applyStar(v^\model) = \trsig{v}^{\newM}
				            = \store{\trarr{\arrType_l}}^{\newM}(\trarr{a}^{\newM}, \trsig{i}^{\newM}, \trsig{v}^{\newM})(\trsig{i}^\newM)$,
			      \item For every $j\in \trsig{I_l}^\newM\setminus\{i^\newM\}$,
			            $\trarr{b}^{\newM}(j)
				            = \applyStar\bigl(\select{\arrType_l}^\model(b^\model,j)\bigr)
				            = \applyStar\bigl(\select{\arrType_l}^\model(a^\model,j)\bigr)
				            = \select{\trarr{\arrType_l}}^{\newM}(\trarr{a}^{\newM}, j)
				            =\trarr{a}^{\newM}(j)$,
		      \end{enumerate}
		      Hence, $\forall j \in \trsig{I_l}^\newM.\trarr{b}^{\newM}(j) = \store{\trarr{\arrType_l}}^{\newM}(\trarr{a}^{\newM}, \trsig{i}^{\newM}, \trsig{v}^{\newM})(j)$,
		      and thus, \(\trsig{b}^{\newM} = \store{\trarr{\arrType_l}}^{\newM}(\trarr{a}^{\newM}, \trsig{i}^{\newM}, \trsig{v}^{\newM})\).

		\item\textbf{\(x = \sel{c}{i}(y)\).}
		      As \( x^\model = \sel{c}{i}^\model(y^\model) \), we distinguish two cases:
		      \begin{enumerate}
			      \item \textbf{Case 1:} If \( y^\model \) is an application of \( c \), then we can denote
			            \(
			            y^\model = c(t_1, \dots, t_m)
			            \).
			            Since $\model^{\Sigma_{n+1}}$ is a datatypes $\Sigma_{n+1}$-structure, and by~\Cref{datatypes-axioms}:
			            \(
			            x^\model = t_i
			            \).
			            Consequently, according to \Cref{rem:applyStar-easier-representation}
			            \[
				            \trsig{y}^{\newM} = \applyStar(y^\model) = \trsig{c}\bigl(\applyStar(t_1), \dots, \applyStar(t_m)\bigr),
			            \]
			            and applying the new selector yields
			            \[
				            \sel{\trsig{c}}{i}^{\newM}(\trsig{y}^{\newM}) = \applyStar(t_i) = \applyStar(x^\model) = \trsig{x}^{\newM}.
			            \]

			      \item \textbf{Case 2:} If \( y^\model \) is not an application of \( c \), then
			            \[
				            \sel{\trsig{c}}{i}^{\newM}(\trsig{y}^{\newM})
				            = \applyStar\Bigl(\sel{c}{i}^\model\bigl(y^\model\bigr)\Bigr)
				            = \applyStar\Bigl(x^\model\Bigr)
				            = \trsig{x}^{\newM}
			            \]
		      \end{enumerate}

		      In both cases, the equality
		      \[
			      \trsig{x}^{\newM} = \sel{\trsig{c}}{i}^{\newM}(\trsig{y}^{\newM})
		      \]
		      holds.

		\item\textbf{\(x = c(t_1,\dots,t_m)\).}
		      Since $x^\model = c(t_1^\model, \ldots, t_m^\model)$, \Cref{rem:applyStar-easier-representation} suggests that
		      \[
			      \trsig{x}^{\newM}
			      = \applyStar(x^\model)
			      = \applyStar(c(t_1^\model, \ldots, t_m^\model))
			      = \trsig{c}\bigl(\applyStar(t_1^\model), \ldots, \applyStar(t_m^\model)\bigr)
			      = \trsig{c}\bigl(\trsig{t_1}^{\newM}, \ldots, \trsig{t_m}^{\newM}\bigr).
		      \]

		\item\textbf{\(\is{c}(x)\) or \(\neg\is{c}(x)\).}
		      Whether $x^{\model}$ is (or is not) a $c$-term is preserved by
		      $\applyStar$, so the corresponding tester
		      $\is{\trsig{c}}$ (or its negation) is satisfied in $\newM$.
	\end{enumerate}
\end{proof}

Hence every literal of $\trF(\formulaProof)$ is true in $\newM$, so
$\newM\models\trF(\formulaProof)$.

\subsection{Satisfaction of the Lemmas}
\label{app:sec:satisfaction-lemmas}

\subsubsection{Satisfaction of \(\Lone{\formulaProof}\) and \(\Ltwo{\formulaProof}\)}

Fix an array variable \(a\in\fv{\arrType_j}{\formulaProof}\) (for some \(j\in[1,n]\)).
Let
\( \iota_1,\dots,\iota_{\indexCar{\arrType_j}{\formulaProof}}
\quad\text{be the indices in }
\indexSet{\arrType_j}{\formulaProof}
\text{ ordered by }
\indOrder{\arrType_j}{\formulaProof}.
\)
For each \(k\in[1,\indexCar{\arrType_j}{\formulaProof}]\) the block
\(
\Lzero{a}{\iota_k}{\formulaProof}
=
\{\,
\tempvar{a}{k}=\select{\trarr{\arrType_j}}(\trarr{a},\trsig{\iota_k}),
\;
\tempvar{a}{k}=\sel{\arrCons{j}}{\,k+1}(\trsig{a})
\,\}
\)
is satisfied in \(\newM\):

\[
	\tempvar{a}{k}^\newM
	= \applyStar\Bigl(\select{\arrType_j}^\model\bigl(a^\model, \iota_k^\model\bigr)\Bigr)
	= \trarr{a}^{\newM}(\trsig{\iota_k}^\newM)
	= \select{\trarr{\arrType_j}}^{\newM}\bigl(\trarr{a}^{\newM}, \trsig{\iota_k}^\newM\bigr),
\]
Also, the interpretation of $\trsig{a}$ in $\newM$ is defined as follows:

\[
	\trsig{a}^{\newM}
	= \arrCons{j}\Bigl(
	a^\model,
	\applyStar\Bigl(\select{\arrType_j}^\model\bigl(a^\model, \iota_1^\model\bigr)\Bigr),
	\ldots,
	\applyStar\Bigl(\select{\arrType_j}^\model\bigl(a^\model, \iota_{\indexCar{\arrType_j}{\formulaProof}}^\model\bigr)\Bigr)
	\Bigr).
\]
Hence, it follows that
\[
	\tempvar{a}{k}^\newM
	=\applyStar\Bigl(\select{\arrType_j}^\model\bigl(a^\model, \iota_k^\model\bigr)\Bigr)
	=\sel{\arrCons{j}}{k+1}^{\newM}(\trsig{a}^{\newM})
\]
and thus $\newM \models \Lzero{a}{\iota_k}{\formulaProof}$.

Note that
\[
	\Lone{\formulaProof} \cup \Ltwo{\formulaProof} \subseteq \bigcup_{a \in \fv{\arrType_j}{\formulaProof}} \bigcup_{i \in \indexSet{\arrType_j}{\formulaProof}} \Lzero{a}{i}{\formulaProof}.
\]

Thus, we have shown that for every literal \(l \in \Lone{\formulaProof} \cup \Ltwo{\formulaProof}\), we have $\newM \models l$.

\subsubsection{Satisfaction of \(\Lthree{\formulaProof}\)}
Let \(a \in \fv{\arrType_j}{\formulaProof}\) for some $j \in [1,n]$.
Note that:
\[
	f_{\arrType_j}^{\newM}(\trsig{a}^{\newM})
	= f_{\arrType_j}^{\newM}(\arrCons{j}(a^\model, \ldots))
	= \applyTag(a^\model)
	= \trarr{a}^{\newM}
\]

And:
\[
	g_{\arrType_j}^{\newM}(\trarr{a}^{\newM})
	= g_{\arrType_j}^{\newM}(\applyTag(a^\model))
	= \applyStar(a^\model)
	= \trsig{a}^{\newM}
\]

Thus, $\forall l \in \Lthree{\formulaProof}. \newM \models l$.

\section{Completeness}
\label{app:sec:completeness}

In what follows,
$\Sigma$ continues to be the NDT signature of $\Sigma_1,\ldots,\Sigma_{n+1}$,
where $\Sigma_1,\dots,\Sigma_n$ are arrays signatures and $\Sigma_{n+1}$ is a datatypes signature.
In contrast to \Cref{thm:preprocessing-correctness}, the
completeness result assumes that $T_{n+1}$
is stably infinite w.r.t.
$\big(\bigcup_{i=1}^{n}\sorts{\Sigma_{i}}\big) \cap \sorts{\Sigma_{n+1}}$,
a.k.a.
w.r.t. its {\em shared sorts}.

\begin{theorem}
	\label{thm:preprocessing-correctness-2}
	Let $\Sigma$ be the NDT signature of
	$\Sigma_1,\dots,\Sigma_{n+1}$ such that $T_{n+1}$ is stably infinite w.r.t. $\big(\bigcup_{i=1}^{n}\sorts{\Sigma_{i}}\big) \cap \sorts{\Sigma_{n+1}}$ and let
	$\formulaProof$ be a flat $\Sigma$-cube such that $\trF(\formulaProof)$ is
	$\newT{\formulaProof}$-satisfiable.
	Then $\formulaProof$ is $\tndt{\Sigma}$-satisfiable.
\end{theorem}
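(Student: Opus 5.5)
Starting from a $\newT{\formulaProof}$-interpretation $\newM\models\trF(\formulaProof)$, I will build an NDT $\Sigma$-interpretation $\model\models\formulaProof$. The idea is to read off, from $\newM$, the arrangement it induces on the variables of $\formulaProof$. This arrangement determines which original variables are equal, which constructors head which datatype variables, and which array variables coincide. I will then realise that arrangement in a \emph{standard} (well-founded) structure, using Nelson--Oppen style gluing and the stable infiniteness hypothesis on $T_{n+1}$.

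\textbf{Step 1: decompose.} Split $\formulaProof$ into its pure parts: $\formulaProof_i$, the array literals over $\Sigma_i$, for $1\le i\le n$, and $\formulaProof_{n+1}$, the datatype literals. Let $\delta$ be the arrangement on the shared variables (the sorts in $\bigcup_i\sorts{\Sigma_i}\cap\sorts{\Sigma_{n+1}}$, namely $I_i$, $E_i$ and $\arrType_i$) induced by $u\sim v$ iff $\trsig{u}^{\newM}=\trsig{v}^{\newM}$. For array variables I use $\trarr{a}^{\newM}$ instead, and the lemmas $\Lthree{\formulaProof}$ show that this agrees with equality of $\trsig{a}^{\newM}$. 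Reading $\select{}$ and $\store{}$ on $\trarr{\cdot}^{\newM}$ gives an arrays model of $\formulaProof_i\wedge\delta$. Reading the datatype part of $\newM$ gives a model of $\formulaProof_{n+1}\wedge\delta$ in which array-sorted variables are treated as element-sort constants.

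\textbf{Step 2: build the well-founded model by layers.} I will not combine the component models directly, because plain combination reproduces the non-standard models of \Cref{ex:NO}. Instead I use the acyclicity of $\newM$ restricted to $\Sigma^{\formulaProof}_{n+1}$. Define a dependency order on the (classes of) variables of $\formulaProof$: $u \prec a$ if $\trsig{u}^{\newM}$ is a proper subterm of $\trsig{a}^{\newM}$. The lemmas $\Lone{\formulaProof}$ and $\Ltwo{\formulaProof}$ force every selected element $\select{}(a,i)$ with $i\in\indexSet{\arrType}{\formulaProof}$ to appear as the $(\indOrder{}{}(i)+1)$-th argument of the $\arrCons{j}$-term $\trsig{a}^{\newM}$. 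Hence every cross-theory edge (datatype field, or array element at a relevant index) is a proper-subterm edge in a datatype model, and the order is well-founded and finite. I then define $\model$ by the inductive construction mirroring \Cref{tab:exInter}. The domains are built as unions of layers $\subdomain{E}{i}$, $\subdomain{A}{i}$, with arrays at level $i+1$ being almost-constant functions into level $i$. Constants come from elements whose existence is guaranteed by the well-foundedness of the NDT signature (\Cref{def:theory:well-founded}), e.g.\ $\emp$. Variables are assigned in increasing $\prec$-order: an array variable $a$ becomes an almost-constant function whose values at the (distinct, per $\delta$) indices $i^{\model}$ are the already-assigned images of the selected variables. At all other indices it takes a fixed lower-level default, perturbed at a fresh index to separate arrays that $\delta$ declares distinct. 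Stores are handled through $\Ltwo{\formulaProof}$, since $a$ and $b$ agree off $i$ on all relevant indices, and they can be made to agree on the default part too. Datatype variables are assigned by the constructor terms forced by $\formulaProof_{n+1}$ and $\delta$.

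\textbf{Step 3: verify.} I must check three things. First, $\model^{\Sigma_i}$ is an arrays structure and $\model^{\Sigma_{n+1}}$ is a datatypes structure generated by its element sorts, where the element sorts $\arrType_i$ and $E_i$ are the layered sets. Second, $\rel{\model}$ is well-founded: every element lies in some layer, and $\rel{\model}$ strictly decreases the layer index. Third, every literal of $\formulaProof$ holds. Disequalities between shared elements need enough distinct values, and this is exactly where stable infiniteness of $T_{n+1}$ on the shared sorts is used. It lets me take the datatype component with infinite element domains, so that $\delta$-distinct classes receive distinct fresh values and the datatype part embeds into the layered structure. Disequalities between array variables are handled by the fresh-index perturbation together with extensionality.

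\textbf{Main obstacle.} The hardest part is Step 2's simultaneous satisfaction of the store literals and array disequalities within almost-constant, layered arrays. In particular, $b=\store{}(a,i,v)$ chains must be assigned consistently while each array's entries stay strictly lower in the layering. If $v$ depends on some array of higher level, the layer of $b$ must be raised accordingly. My first attempt is to process arrays in $\prec$-order and define $b$ literally as the update of the already-built $a$, setting $a$'s level to the maximum over the chain. Then the well-foundedness of $\prec$, which comes from acyclicity in $\newM$ via $\Lone{\formulaProof}$ and $\Ltwo{\formulaProof}$, guarantees this terminates.
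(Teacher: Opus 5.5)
There is a genuine gap, and it sits exactly where you invoke stable infiniteness. In a standard NDT model the domains are not free. $\arrType_j^\model$ is fixed by $I_j^\model$ and $E_j^\model$ (in the layered model it is $\almostC{I_j^\model}{E_j^\model}$), and $E_j$ may itself be a struct sort whose domain is built over the $\arrType_k^\model$. Three of your steps need these domains to be large: ``$\delta$-distinct classes receive distinct fresh values'', the perturbation at a fresh index (which needs a second value in $E_j^\model$), and ``the datatype part embeds into the layered structure''. Nothing in the proposal shows they are large.

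Your use of the hypothesis is vacuous. Taking a datatype component with infinite \emph{element} domains needs no assumption, since $\tdt{\Sigma_{n+1}}$ is always stably infinite on $\elemsorts_{\Sigma_{n+1}}$. So if your argument were sound, it would prove the theorem without the hypothesis. The paper's counterexample shows this is false. Take $E$ with only the constructor $\emp$, a struct sort $S$ with $\container:\arrType\to S$, and $\formulaProof:=x\neq y$ over $S$. Then $\trF(\formulaProof)$ is satisfiable, yet in every NDT model $E^\model=\{\emp\}$, so $\arrType^\model$ and $S^\model$ are singletons and your embedding step fails.

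The missing ingredient is a lemma that every $E_i^\model$ is infinite in the constructed model. The paper proves it via the sort graph $\sigG{\Sigma}$. Eventually cyclic sorts get infinite domains: iterating a cycle produces elements of strictly increasing depth or layer index. If some $E_{i_0}^\model$ were finite, one finds a non-eventually-cyclic $E_{i_1}$ from which only struct sorts are reachable. Its interpretation is then the same finite set in every datatypes model, contradicting stable infiniteness on the shared sort $E_{i_1}$. This lemma is what lets $\assignTail{i}$ give pairwise distinct default values to the finitely many array classes that must be separated, while store-related arrays with equal tails share one.

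The assignment step also needs repair.
\begin{itemize}
\item Ordering \emph{variables} by the subterm relation on their $\newM$-values does not put $a$ before $b$ in $b=\store{}(a,i,v)$, since typically neither value is a subterm of the other.
\item Store literals can even form cycles (e.g.\ $b=\store{}(a,i,v)\wedge a=\store{}(b,i,w)$), so ``define $b$ as the update of the already-built $a$'' has no well-defined order.
\item An array's relevant entries cannot be only images of selected variables of $\formulaProof$. From $b=\store{}(a,i,v)$, $c=\store{}(b,j,w)$, $x=\select{}(c,k)$, the entries $a[k]$ and $b[k]$ are selected by no variable of $\formulaProof$, only by the fresh variables of $\Ltwo{\formulaProof}$, yet they must equal $x^\model$.
\end{itemize}
The paper's fix is a single injective map $\trComp$ on the finite, subterm-closed set of all relevant $\newM$-values. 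This set includes arrays that occur only nested inside datatype values, whose distinctness is needed for struct disequalities. $\trComp$ is defined by recursion on datatype depth. It sends $\arrCons{i}(t_1,\ldots,t_k)$ to the almost-constant function that takes the image of the corresponding argument at each associated select index, and the class default elsewhere. Once the domains are fixed as unions of layers, no ``raising'' of levels is needed: every such function already lies in $\arrType_i^\model$, and well-foundedness of $\rel{\model}$ follows from the domain construction alone.
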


The main idea of this section is to generalize the construction of the interpretation presented in~\Cref{tab:exInter}.

\begin{table}[t]
	\newcolumntype{M}[1]{>{\centering\arraybackslash}m{#1}}
	\centering
	\begin{tabular}{|M{2cm}|M{11cm}|}
		\hline
		\textbf{Domains}    &
		\begin{tabular}{@{}l@{}}
			$\trsig{I}^{\hat{\newM}} = \mathbb{N}$                                                                             \\[3pt]
			${\idx{A}}^{\hat{\newM}} = \{\gamma\}$                                                                             \\[3pt]
			$\trsig{E}^{\hat{\newM}}
				= T_{\trsig{E}}\bigl(\consig{\newSig{\exsigwf}{\formulaExSat}}{\newSig{\exsigwf}{\formulaExSat}},\,B\bigr)$        \\[3pt]
			$\trsig{\arrType}^{\hat{\newM}}
				= T_{\trsig{\arrType}}\bigl(\consig{\newSig{\exsigwf}{\formulaExSat}}{\newSig{\exsigwf}{\formulaExSat}},\,B\bigr)$ \\[3pt]
			$\trarr{A}^{\hat{\newM}}
				= \trsig{I}^{\hat{\newM}}\;\to\;\trsig{E}^{\hat{\newM}}$
		\end{tabular}
		\\\hline
		\textbf{Functions}  &
		\begin{tabular}{@{}l@{}}
			\(\select{},\store{}\): standard read/update on \(\trarr{A}^{\hat{\newM}}\).      \\[3pt]
			\(\trsig{\container}^{\hat{\newM}}(n,y)=\trsig{\container}(n,y)\)                 \\[3pt]
			\(\trsig{\emp}^{\hat{\newM}}=\trsig{\emp}\)                                       \\[3pt]
			\(\trsig{\id}^{\hat{\newM}}(\trsig{\container}(n,y))=n\),\quad
			\(\trsig{\children}^{\hat{\newM}}(\trsig{\container}(m,y))=y\)                    \\[3pt]
			\(\trsig{\id}^{\hat{\newM}}(\trsig{\emp})=0\),\quad
			\(\trsig{\children}^{\hat{\newM}}(\trsig{\emp})=\constArr{\trsig{\emp}}\)         \\[6pt]
			\(f_{A}^{\hat{\newM}}\): constant function returning \(\trarr{a}^{\hat{\newM}}\). \\[3pt]
			\(g_{A}^{\hat{\newM}}\): constant function returning \(\trsig{a}^{\hat{\newM}}\).
		\end{tabular}
		\\\hline
		\textbf{Predicates} &
		\begin{tabular}{@{}l@{}}
			\(\is{\trsig{\container}}^{\hat{\newM}}(\trsig{\container}(n,y))=\text{true}\), \\[3pt]
			\(\is{\trsig{\container}}^{\hat{\newM}}(\trsig{\emp})=\text{false}\),           \\[3pt]
			\(\is{\trsig{\emp}}^{\hat{\newM}}(\trsig{\emp})=\text{true}\),                  \\[3pt]
			\(\is{\trsig{\emp}}^{\hat{\newM}}(\trsig{\container}(n,y))=\text{false}\).
		\end{tabular}
		\\\hline
		\textbf{Variables}  &
		\begin{tabular}{@{}l@{}}
			\(\trsig{x}^{\hat{\newM}} \;=\; \tempvar{a}{1}^{\hat{\newM}}
			= \trsig{\container}\bigl(3,\;\arrCons{1}(\gamma,\;\trsig{\emp})\bigr)\)                   \\[6pt]
			\(\trarr{a}^{\hat{\newM}}
			= \constArr{\trsig{x}^{\hat{\newM}}}
			= \constArr{\trsig{\container}(3,\arrCons{1}(\gamma,\trsig{\emp}))}\)                      \\[6pt]
			\(\trsig{j}^{\hat{\newM}} = \trsig{k}^{\hat{\newM}} = 1\)                                  \\[6pt]
			\(\trsig{a}^{\hat{\newM}}
			= \arrCons{1}\bigl(\gamma,\;\trsig{\container}(3,\arrCons{1}(\gamma,\trsig{\emp}))\bigr)\) \\[6pt]
			\(\trsig{y}^{\hat{\newM}}
			= \trsig{\container}\bigl(1,\;\arrCons{1}(\gamma,\trsig{\container}(3,\arrCons{1}(\gamma,\trsig{\emp})))\bigr)\)
		\end{tabular}
		\\\hline
	\end{tabular}
	\caption{A $\newSig{\exsigwf}{\formulaExSat}$–interpretation, $\hat{\newM}$, that satisfies $\trF(\formulaExSat)$, from \Cref{tab:tndtsatex}.
		Here $B$ is the $\elemsorts_{\newSig{\exsigwf}{\formulaExSat}}$–sorted set with $B_{\rho}=\rho^{\hat{\newM}}$ for each $\rho\in\elemsorts_{\newSig{\exsigwf}{\formulaExSat}}$.
	}
	\label{tab:exCompleteness}
\end{table}

\begin{example}
	\label{ex:proof-formula-2}
	Consider the formula $\formulaExSat$ from
	\Cref{tab:tndtsatex}, and its translation $\trF(\formulaExSat)$, also from
	\Cref{tab:tndtsatex}.
	We already know from \Cref{ex:proof-formula} that $\formulaExSat$ is
	$\tndt{\exsigwf}$-satisfiable,
	and from \Cref{thm:preprocessing-correctness} we get that
	$\trF(\formulaExSat)$ is $\newT{\formulaExSat}$-satisfiable.
	We reuse these formulas in order to provide examples
	for the steps of our completeness proofs.
	Since our starting point should be a
	$\newT{\formulaExSat}$-interpretation that satisfies $\trF(\formulaExSat)$,
	which may not be the one constructed in \Cref{app:sec:soundness},
	we define in \Cref{tab:exCompleteness} a new $\newT{\formulaExSat}$-interpretation,
	not obtained from the process of \Cref{app:sec:soundness},
	that satisfies $\trF(\formulaExSat)$.
	We shall later see that $\formulaExSat$ is $\tndt{\exsigwf}$-satisfiable, by constructing a satisfying $\tndt{\exsigwf}$-interpretation
	based on the one from \Cref{tab:exCompleteness}.
	In turn, this satisfying interpretation will be different
	from the one shown in \Cref{tab:exInter}.
\end{example}

\begin{table}[t]
	\newcolumntype{M}[1]{>{\centering\arraybackslash}m{#1}}
	\centering
	\begin{tabular}{|M{2cm}|M{11cm}|}
		\hline
		\textbf{Domains}    &
		\begin{tabular}{@{}l@{}}
			$\trsig{I}^{\newM} = \mathbb{N}$                                                                                                                                                                   \\[3pt]
			${\idx{A}}^{\newM} = \{\gamma, \delta\}$                                                                                                                                                           \\[3pt]
			$\trsig{E}^{\newM}
				= T_{\trsig{E}}\bigl(\consig{\newSig{\exsigwf}{\formulaExSat}}{\newSig{\exsigwf}{\formulaExSat}},\,B\bigr) = \{\trsig{\emp}\}$                                                                     \\[3pt]
			$\trsig{\arrType}^{\newM}
				= T_{\trsig{\arrType}}\bigl(\consig{\newSig{\exsigwf}{\formulaExSat}}{\newSig{\exsigwf}{\formulaExSat}},\,B\bigr) = \{\arrCons{1}(\gamma), \arrCons{1}(\delta)\}$                                  \\[3pt]
			$\trsig{S}^{\newM}
				= T_{\trsig{S}}\bigl(\consig{\newSig{\exsigwf}{\formulaExSat}}{\newSig{\exsigwf}{\formulaExSat}},\,B\bigr) = \{\trsig{\container}(\arrCons{1}(\gamma)), \trsig{\container}(\arrCons{1}(\delta))\}$ \\[3pt]
			$\trarr{A}^{\newM}
				= \trsig{I}^{\newM}\;\to\;\trsig{E}^{\newM}$
		\end{tabular}
		\\\hline
		\textbf{Functions}  &
		\begin{tabular}{@{}l@{}}
			\(\select{},\store{}\): standard read/update on \(\trarr{A}^{\newM}\).      \\[3pt]
			\(\trsig{\container}^{\newM}(y)=\trsig{\container}(y)\)                     \\[3pt]
			$\trsig{\emp}^{\newM}=\trsig{\emp}$                                         \\[3pt]
			$\arrCons{1}^{\newM}(u)=\arrCons{1}(u)$                                     \\[3pt]

			$\sel{\trsig{\container}}{1}^{\newM}(\trsig{\container}(y))=y$              \\[3pt]
			$\sel{\arrCons{1}}{1}^{\newM}(\arrCons{1}(u))=u$                            \\[3pt]
			\(f_{A}^{\newM}\): constant function returning an arbitrary constant value. \\[3pt]
			\(g_{A}^{\newM}\): constant function returning an arbitrary constant value.
		\end{tabular}
		\\\hline
		\textbf{Predicates} &
		\begin{tabular}{@{}l@{}}
			\(\is{\trsig{\container}}^{\newM}(\trsig{\container}(y))=\text{true}\), \\[3pt]
			\(\is{\trsig{\emp}}^{\newM}(\trsig{\emp})=\text{true}\),                \\[3pt]
			\(\is{\trsig{\emp}}^{\newM}(\trsig{\emp})=\text{true}\).
		\end{tabular}
		\\\hline
		\textbf{Variables}  &
		\begin{tabular}{@{}l@{}}
			\(\trsig{x}^{\newM} \;=\; \trsig{\container}(\arrCons{1}(\gamma))\) \\[6pt]
			\(\trsig{y}^{\newM} \;=\; \trsig{\container}(\arrCons{1}(\delta))\) \\[6pt]
		\end{tabular}
		\\\hline
	\end{tabular}
	\caption{A $\newSig{\exsigwf}{\formulaExSat}$–interpretation $\newM$ satisfying
		$\trsig{x}\neq\trsig{y}$.
		Here $B$ is the $\elemsorts_{\newSig{\exsigwf}{\formulaExSat}}$–sorted set with $B_{\rho}=\rho^{\newM}$ for each $\rho\in\elemsorts_{\newSig{\exsigwf}{\formulaExSat}}$.}
	\label{tab:exCompleteness-2}
\end{table}

\paragraph{Roadmap.}
Assume $\newM\models\trF(\formulaProof)$ for some arbitrary
$\newT{\formulaProof}$-interpretation~$\newM$.
The goal is, of course, to construct
a
$\tndt{\Sigma}$-interpretation~$\model$ that satisfies $\formulaProof$.
This is carried out in steps:

\begin{enumerate}[label=(\roman*)]
	\item \emph{Repairing $\newM$.}
	      Starting from the given $\newT{\formulaProof}$-interpretation~$\newM$,
	      \Cref{app:sec:comp-intuitive-model} builds a well-behaved $\newT{\formulaProof}$-interpretation that still satisfies $\trF(\formulaProof)$.

	\item \emph{Defining the domains.}
	      \Cref{app:sec:comp-define-domains} defines the interpretation of every
	      sort in the sought $\tndt{\Sigma}$-interpretation~$\model$.
	      In \Cref{app:sec:comp-properties-domains} we prove several useful properties of these domains.

	\item \emph{Guaranteeing infinite $|E_i^\model|$.}
	      ~\Cref{app:sec:comp-ei_infinite} shows that $\forall i \in [1,n]. |E_i^{\model}| = \infty$. This is needed in order to define the interpretation of the variables.
	\item \emph{Defining the interpretation of the symbols.}
	      In~\Cref{app:sec:comp-interpretation-vars}–\Cref{app:sec:comp-interpretation-funcs}, we define the interpretation of the functions, predicates, and variables in $\model$.
	\item \emph{Proving that $\model$ is a
		      $\tndt{\Sigma}$-interpretation.}
	      In~\Cref{app:sec:comp-model-interpretation}, we show that the
	      interpretation $\model$ is a $\tndt{\Sigma}$-interpretation.
	\item \emph{Checking literal satisfaction.}
	      Finally, in~\Cref{app:sec:comp-satisfaction-literal}, we show that the interpretation $\model$ satisfies every literal in $\formulaProof$.
\end{enumerate}

\subsection{Creating a well-behaved interpretation}
\label{app:sec:comp-intuitive-model}

Given a $\newT{\formulaProof}$-interpretation $\newM$,
that satisfies $\trF(\formulaProof)$,
our overall goal is to construct a
$\tndt{\Sigma}$-interpretation $\model$ that satisfies $\formulaProof$.
Our first step is to transform $\newM$ to another interpretation that
is more comfortable to work with.
Namely, we construct a $\newT{\formulaProof}$-interpretation that satisfies
$\trF(\formulaProof)$ and satisfies the following two requirements:
$(i)$~the domains of all the sorts that are not in $\structsorts_{\Sigma_{n+1}^\formulaProof} \cup \arrays_{\newSig{\Sigma}{\formulaProof}}$ ($\arrays_{newSig{\Sigma}} = \{\trarr{\arrType_i} \mid 1 \leq i \leq n\}$, see \Cref{not:ndt-sig}) are infinite; and
$(ii)$~the domains of array sorts consist of functions from
the domains of their index sorts to the domains of their
element sorts
(while this is typically the ``intended" semantics of arrays,
other interpretations can be built that satisfy the axioms of the theory).

We start with the following lemma, that establishes
a simple property of datatype structures.

\begin{lemma}
	\label{lem:elemsorts-set-containment}
	Let $\dtSig$ be a datatype signature, and
	let \(\bar B\) and \(\tilde B\) be \(\elemsorts_{\dtSig}\)-sorted sets with
	\(\bar B_\tau\subseteq\tilde B_\tau\) for every \(\tau\in\elemsorts_{\dtSig}\).
	Then for all \(\sigma\in\sorts{\dtSig}\)
	\(
	T_{\sigma}\bigl(\consig{\dtSig}{\dtSig},\bar B\bigr)
	\;\subseteq\;
	T_{\sigma}\bigl(\consig{\dtSig}{\dtSig},\tilde B\bigr).
	\)
\end{lemma}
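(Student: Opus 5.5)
The proof goes by induction on the stage index $i$ of the tree construction in \Cref{def:trees}. We show that for every $i\geq 0$ and every sort $\sigma\in\sorts{\dtSig}$,
\[
T_{\sigma,i}\bigl(\consig{\dtSig}{\dtSig},\bar B\bigr)\subseteq T_{\sigma,i}\bigl(\consig{\dtSig}{\dtSig},\tilde B\bigr).
\]
The claim then follows by taking the union over all $i$, since $T_{\sigma}=\bigcup_{i\geq 0}T_{\sigma,i}$ on both sides.

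\textbf{Base case $i=0$.} If $\sigma\in\elemsorts_{\dtSig}$, then $T_{\sigma,0}(\cdot,\bar B)=\bar B_\sigma\subseteq\tilde B_\sigma=T_{\sigma,0}(\cdot,\tilde B)$ by hypothesis. Otherwise both sides are $\emptyset$.

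\textbf{Inductive step.} Assume the inclusion holds at stage $i$ for all sorts simultaneously, and let $t\in T_{\sigma,i+1}(\cdot,\bar B)$. There are two possibilities.
\begin{enumerate}
\item $t\in T_{\sigma,i}(\cdot,\bar B)$. Then the induction hypothesis gives $t\in T_{\sigma,i}(\cdot,\tilde B)$, and this set is contained in $T_{\sigma,i+1}(\cdot,\tilde B)$.
\item $t=c(t_1,\dots,t_m)$ for some constructor $c:\sigma_1\times\dots\times\sigma_m\ra\sigma$, with each $t_j\in T_{\sigma_j,i}(\cdot,\bar B)$. The induction hypothesis, applied to each sort $\sigma_j$, gives $t_j\in T_{\sigma_j,i}(\cdot,\tilde B)$. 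Hence $c(t_1,\dots,t_m)\in T_{\sigma,i+1}(\cdot,\tilde B)$ by the defining clause of stage $i+1$.
\end{enumerate}

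Both cases are routine unfoldings of the definition. The only point needing care is that the induction hypothesis must quantify over all sorts at once, because the argument sorts $\sigma_j$ of a constructor may differ from $\sigma$. We also use that the function symbols of $\consig{\dtSig}{\dtSig}$ are exactly the constructors, and that this signature is the same on both sides. The argument only needs $\bar B$ and $\tilde B$ to be $\elemsorts_{\dtSig}$-sorted sets; their elements are treated as atomic leaves, so no disjointness issues arise.
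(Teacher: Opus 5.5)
Your proof is correct and takes essentially the same approach as the paper: induction on the stage index with the hypothesis quantified over all sorts, then a union over stages. Your version is slightly more careful in two places. You spell out the case $t\in T_{\sigma,i}$ in the inductive step, which the paper leaves implicit. You also treat element sorts uniformly within the induction, whereas the paper handles them separately and relies on constructors having struct target sorts.
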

\begin{proof}
	First, we denote $\widetilde{\dtSig}:=\consig{\dtSig}{\dtSig}$ for brevity.
	If \(\sigma\in\elemsorts\), then by definition, for every \(k\),
	\[
		T_{\sigma,k}(\widetilde{\dtSig},\bar B)
		= \bar B_\sigma
		\subseteq
		\tilde B_\sigma
		= T_{\sigma,k}(\widetilde{\dtSig},\tilde B).
	\]
	Now let \(\sigma\in\structsorts\). We argue by induction on \(k\) that \(
	T_{\sigma,k}\bigl(\widetilde{\dtSig},\bar B\bigr)
	\;\subseteq\;
	T_{\sigma,k}\bigl(\widetilde{\dtSig},\tilde B\bigr).
	\)

	\emph{Base (\(k=0\)).}
	\[
		T_{\sigma,0}(\widetilde{\dtSig},\bar B)
		= \emptyset
		\subseteq
		T_{\sigma,0}(\widetilde{\dtSig},\tilde B).
	\]

	\emph{Step.} Suppose
	\(T_{\tau,k}(\widetilde{\dtSig},\bar B)\subseteq T_{\tau,k}(\widetilde{\dtSig},\tilde B)\)
	for every \(\tau\). Then
	\[
		T_{\sigma,k+1}(\widetilde{\dtSig},\bar B)
		=
		T_{\sigma,k}(\widetilde{\dtSig},\bar B)
		\;\cup\;
		\bigl\{c(t_1,\dots,t_n)\mid c:\sigma_1\times\cdots\times\sigma_n\to\sigma,\;
		t_j\in T_{\sigma_j,k}(\widetilde{\dtSig},\bar B)\bigr\}.
	\]
	By the IH each \(t_j\in T_{\sigma_j,k}(\widetilde{\dtSig},\tilde B)\),
	so each \(c(t_1,\dots,t_n)\) lies in
	\(T_{\sigma,k+1}(\widetilde{\dtSig},\tilde B)\).
	Hence
	\(
	T_{\sigma}(\widetilde{\dtSig},\bar B)
	= \bigcup_{k\in \mathbb{N}} T_{\sigma,k}(\widetilde{\dtSig},\bar B)
	\subseteq \bigcup_{k\in \mathbb{N}} T_{\sigma,k}(\widetilde{\dtSig},\tilde B)
	= T_{\sigma}(\widetilde{\dtSig},\tilde B)
	\)
\end{proof}

\begin{table}[t]
	\centering
	\renewcommand{\arraystretch}{1.2}
	\begin{tabular}{|c|c|c|}
		\hline
		\textbf{Sort (in $\sorts{\newSig{\Sigma}{\formulaProof}}$)} & \textbf{Original sort (in $\sorts{\Sigma}$)}                                          & \textbf{Domain chosen in $\nextM$} \\ \hline\hline
		$\idx{\arrType_i}$                                          & $\arrType_i \in \arrays_\Sigma$
		                                                            & $\idx{\arrType_i}^{\nextM}=
			\begin{cases}
				\idx{\arrType_i}^{\newM} \cup \{ e_k^\idx{\arrType_i} \mid k \in \mathbb{N} \} & \text{if } |\idx{\arrType_i}^{\newM}| < \infty \\[6pt]
				\idx{\arrType_i}^{\newM}                                                       & \text{otherwise}
			\end{cases}$                                                                                                                                                             \\ \hline
		\multirow{2}{*}{$\trsig{\tau}$}
		                                                            & $\tau \in \sorts{\Sigma} \setminus (\arrays_\Sigma \cup \structsorts_{\Sigma_{n+1}})$
		                                                            & $\trsig{\tau}^\nextM = \begin{cases}
				\trsig{\tau}^{\newM} \cup \{ e_k^{\trsig{\tau}} \mid k \in \mathbb{N} \} & \text{if } |\trsig{\tau}^{\newM}| < \infty \\[6pt]
				\trsig{\tau}^{\newM}                                                     & \text{otherwise}
			\end{cases}$                                                                        \\ \cline{2-3}
		                                                            & $\tau \in \structsorts_{\Sigma_{n+1}} \cup \arrays_\Sigma$
		                                                            & $\displaystyle
			\trsig{\tau}^{\nextM} = T_{\trsig{\tau}}
			\bigl(\consig{\Sigma_{n+1}^{\formulaProof}}{\Sigma_{n+1}^{\formulaProof}},\,B\bigr)$                                                                                                     \\ \hline
		$\trarr{\arrType_i}$                                        & $\arrType_i \in \arrays_\Sigma$
		                                                            & $\displaystyle
			\trarr{\arrType_i}^{\nextM}= \trsig{I_i}^{\nextM}\!\to\!\trsig{E_i}^{\nextM}$                                                                                                            \\
		\hline
	\end{tabular}
	\caption{Interpretations of the sorts in $\nextM$.
	The set \(\{e_k^{\sigma} \mid k \in \mathbb{N}\}\)
	consists of countably many fresh
	elements.
	$B$ is the $\elemsorts_{\newSig{\Sigma}{\formulaProof}}$-sorted set with
	$B_\rho=\rho^{\nextM}$ for every
	$\rho\in\elemsorts_{\newSig{\Sigma}{\formulaProof}}$.
	}
	\label{tab:comp-domains-nextM}
\end{table}

\begin{lemma}
	\label{lem:infinitely-more-elements}
	Let \(\newM\) be a $\newT{\formulaProof}$-interpretation that satisfies a flat \(\newSig{\Sigma}{\formulaProof}\)-cube $\theta$.
	Then, there exists $\nextM$, a $\newT{\formulaProof}$-interpretation that satisfies $\theta$ such that:
	\begin{enumerate}
		\item \(
		      \forall \sigma \in \sorts{\newSig{\Sigma}{\formulaProof}} \setminus (\structsorts_{\Sigma_{n+1}^\formulaProof} \cup \arrays_{\newSig{\Sigma}{\formulaProof}}). |\sigma^{\nextM}| = \infty
		      \).
		\item $\forall i \in [1,n]. \trarr{\arrType_i}^{\nextM} = \trsig{I_i}^{\nextM} \to \trsig{E_i}^{\nextM}$.
		\item The $\select{\trarr{\arrType_i}}$ and $\store{\trarr{\arrType_i}}$ functions are interpreted in the usual way for every $i\in[1,n]$.
	\end{enumerate}
\end{lemma}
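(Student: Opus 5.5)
The plan is to build $\nextM$ from $\newM$ by following \Cref{tab:comp-domains-nextM}. We enlarge finite element domains with fresh elements, take the datatype sorts to be the induced tree sets, and take array sorts to be full function spaces. Then we map the variables of $\theta$ so that every literal of $\theta$ keeps its truth value.

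\textbf{Step 1: replace array domains by function spaces.} First we normalize $\newM$ so that arrays are functions. For each $i$, the array axioms (read-over-write and extensionality) make the map $a\mapsto \lambda j.\,\select{\trarr{\arrType_i}}^{\newM}(a,j)$ injective from $\trarr{\arrType_i}^{\newM}$ into $\trsig{I_i}^{\newM}\to\trsig{E_i}^{\newM}$. It also commutes with $\store{}$ when $\store{}$ is read as function update. Replacing each array by its image and extending the domain to the full function space gives an isomorphic copy of the arrays reduct, embedded in the standard model. This preserves satisfaction of flat array literals.

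\textbf{Step 2: enlarge the element domains.} Let $\rho$ range over the sorts outside $\structsorts_{\Sigma^\formulaProof_{n+1}}\cup\arrays_{\newSig{\Sigma}{\formulaProof}}$. These are exactly the element sorts of $\Sigma^\formulaProof_{n+1}$: the $\trsig{\tau}$ for non-struct, non-array $\tau$, and the sorts $\idx{\arrType_i}$. We set $\rho^{\nextM}:=\rho^{\newM}\cup\{e^\rho_k\}$ when $\rho^{\newM}$ is finite, and leave it unchanged otherwise. Struct sorts become $T_{\trsig{\tau}}(\ldots,B)$ with the enlarged $B$. By \Cref{lem:elemsorts-set-containment}, the old tree sets are contained in the new ones. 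Constructors and testers are the canonical ones. Selectors keep their old values on old trees, and on new trees they return the correct argument for the matching constructor and an arbitrary fixed element otherwise. Array sorts become $\trsig{I_i}^{\nextM}\to\trsig{E_i}^{\nextM}$ with read/update. Each old function $a$ is extended to a function on the new index elements by sending them to a fixed old element. This extension still commutes with update. It stays injective, since two old arrays that differ already differ at an old index. For $f_{\arrType}$ and $g_{\arrType}$, we keep the old values on old arguments and send new arguments to a fixed element. Every variable keeps its value, with arrays mapped through the extension.

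\textbf{Step 3: check the literals of $\theta$.} Each flat literal is preserved because every component is an extension of the old one that agrees on old elements.
\begin{itemize}
\item Equalities and disequalities of variables are preserved because the extension maps are injective.
\item Constructor literals are preserved because constructors are canonical on old trees.
\item Selector literals are preserved because selectors agree with their old values on old trees.
\item Tester literals are preserved because membership in the image of a constructor is unchanged.
\item $\select{}$ and $\store{}$ literals are preserved because the extension commutes with read and update.
\item UF literals are preserved because $f_{\arrType}$ and $g_{\arrType}$ are unchanged on old arguments.
\end{itemize}
Finally, each reduct is in its theory. The datatypes reduct is generated by $B$ by construction. The arrays reducts are standard. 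EUF imposes no constraint.

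\textbf{Main obstacle.} The delicate part is the selector case of Step 3. In $\newM$, a selector applied to the wrong constructor may return an arbitrary value. We must keep exactly those values on old trees and define selectors only on the genuinely new trees. Since the old tree sets sit inside the new ones by \Cref{lem:elemsorts-set-containment}, this is well defined.
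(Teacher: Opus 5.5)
Your proposal is correct and takes essentially the same route as the paper. Your two-stage treatment of arrays (functional normalization via extensionality, then extension by a fixed element on the new indices) composes to exactly the paper's injection $\Psi_i$, and your domains, selectors and literal-by-literal check match \Cref{tab:comp-domains-nextM} and the paper's case analysis; two points should be stated more carefully, though neither affects the argument: $f_{\arrType_i}$ and $g_{\arrType_i}$ are not literally kept but transported along this injection (as $\Psi_i\circ f_{\arrType_i}^{\newM}$ on old arguments and $g_{\arrType_i}^{\newM}\circ\Psi_i^{-1}$ on its image), and not every enlarged sort is an element sort of $\Sigma^{\formulaProof}_{n+1}$ (e.g.\ $\trsig{I_i}$ need not occur there).
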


\begin{proof}
	W.l.o.g., we can assume that the domains in $\newM$ of all the sorts in
	$\sorts{\newSig{\Sigma}{\formulaProof}} \setminus
		(\structsorts_{\Sigma_{n+1}^\formulaProof} \cup
		\arrays_{\newSig{\Sigma}{\formulaProof}})$ are completely fresh and flat
	(i.e., they in particular do not contain any constructor symbols in
	them, or more specifically, we may assume that they are ``copies" of subsets of the natural numbers).
	In order to show the existence of such $\nextM$, we go through the following
	steps:
	\begin{enumerate}
		\item We define the domains of the sorts in \(\nextM\).
		\item We define the interpretations of the symbols in \(\nextM\).
		\item We show that \(\nextM\) is a $\newT{\formulaProof}$-interpretation.
		\item We show that every literal in \(\formulaProof\) is satisfied by \(\nextM\).
	\end{enumerate}

	\paragraph{1. Defining the domains.}
	The domains of the sorts in \(\newSig{\Sigma}{\formulaProof}\) are defined in \Cref{tab:comp-domains-nextM}.
	Let there be $\sigma \in \sorts{\newSig{\Sigma}{\formulaProof}}$. We have three cases to consider:
	\begin{enumerate}
		\item If $\sigma=\idx{\arrType_i}$ for some $i\in [1,n]$
		      then $\sigma$ is interpreted in $\nextM$ exactly in the same way it is interpreted in $\newM$ if its domain is already infinite.
		      Otherwise, we add infinitely many fresh elements to it, denoted as $\{e_k^{\idx{\arrType_i}} \mid k \in \mathbb{N}\}$.
		\item If $\sigma=\trsig{\tau}$ for some sort $\tau$, there are two sub-cases.
		      \begin{enumerate}
			      \item If $\tau$ is neither in $\structsorts_{\Sigma_{n+1}}$ nor in $\arrays_{\Sigma}$, then according to ~\Cref{fig:dt-sig-trans}, $\trsig{\tau} \notin \structsorts_{\Sigma_{n+1}^\formulaProof}$. We want to make sure that the cardinality of $\trsig{\tau}^\nextM$ is infinite, so if $\trsig{\tau}^\newM$ is finite, we add infinitely many fresh elements to it, denoted as $\{e_k^{\trsig{\tau}} \mid k \in \mathbb{N}\}$.
			            If $\trsig{\tau}^\newM$ is already infinite, we simply set $\trsig{\tau}^\nextM = \trsig{\tau}^\newM$.
			      \item Otherwise, $\tau$ is in $\arrays_{\Sigma}$, or in $\structsorts_{\Sigma_{n+1}}$, and thus, according to ~\Cref{fig:dt-sig-trans}, $\trsig{\tau} \in \structsorts_{\Sigma_{n+1}^\formulaProof}$, and we interpret $\trsig{\tau}$ in $\nextM$ as the set of trees defined in \Cref{def:trees}. (Note that the interpretations of all the sorts in $\elemsorts_{\newSig{\Sigma}{\formulaProof}}$ have already been defined in $\nextM$ as described in items 1 and 2a above, corresponding to the structure specified by \Cref{fig:dt-sig-trans}).
		      \end{enumerate}
		\item If $\sigma=\trarr{\arrType_i}$ for some $i\in [1,n]$, we interpret $\trarr{\arrType_i}$ in $\nextM$ as the set of functions from $\trsig{I_i}^\nextM$ to $\trsig{E_i}^\nextM$.
	\end{enumerate}

	By \Cref{lem:elemsorts-set-containment} we have $\forall \sigma \in \sorts{\Sigma}. \trsig{\sigma}^{\newM} \subseteq \trsig{\sigma}^{\nextM}$.

	\paragraph{2. Defining the interpretation of the variables.}
	For every $\sigma \in \sorts{\newSig{\Sigma}{\formulaProof}}$, we choose an arbitrary element $\defelem{\sigma} \in \sigma^{\nextM}$.

	We also create for every $i\in [1,n]$ a mapping
	\(
	\Psi_i: \trarr{\arrType_i}^{\newM} \ra \trarr{\arrType_i}^{\nextM}
	\).
	Let there be $x \in \trarr{\arrType_i}^{\newM}$.
	Note that $\trarr{\arrType_i}^{\nextM}$ is the set of functions from $\trsig{I_i}^{\nextM}$ to $\trsig{E_i}^{\nextM}$, and thus we can define $\Psi_i(x):\trsig{I_i}^{\nextM} \to \trsig{E_i}^{\nextM}$ in the following way:
	\[
		\forall j \in \trsig{I_i}^{\nextM}. \Psi_i(x)(j) =
		\begin{cases}
			\select{\trarr{\arrType_i}}^\newM(x,j) & \text{if } j \in \trsig{I_i}^{\newM} \\
			\defelem{\trsig{E_i}}                  & \text{otherwise}
		\end{cases}
	\]
	This is well defined since
	$\select{\trarr{\arrType_i}}^\newM(x,j) \in \trsig{E_i}^{\newM}
		\subseteq \trsig{E_i}^{\nextM}$ for every $j \in
		\trsig{I_i}^{\newM}$, and $\defelem{\trsig{E_i}} \in
		\trsig{E_i}^{\nextM}$.

	The mapping \(\Psi_i\) is an injection:
	let there be $a,b \in \trarr{\arrType_i}^{\newM}$ such that $a \neq b$.
	Then, there exists $j \in \trsig{I_i}^{\newM}$ such that $\select{\trarr{\arrType_i}}^{\newM}(a,j) \neq \select{\trarr{\arrType_i}}^{\newM}(b,j)$.
	Thus, $\Psi_i(a)(j) \neq \Psi_i(b)(j)$.

	The variables are interpreted in \(\nextM\) as follows:
	\begin{enumerate}
		\item For every array variable \(a\) of sort \(\trarr{\arrType_i}\), we set \(a^{\nextM} = \Psi_i(a^{\newM})\).
		\item For every variable of any other sort ($\sorts{\Sigma^\formulaProof} \setminus \arrays_{\Sigma^\formulaProof}$) , we set $x^\nextM = x^\newM$. This is well defined since we have shown that $\forall \sigma \in \sorts{\Sigma}. \trsig{\sigma}^{\newM} \subseteq \trsig{\sigma}^{\nextM}$.
	\end{enumerate}

	\paragraph{3. Defining the interpretation of the functions and predicates.}
	The function symbols in \(\nextM\) are interpreted as follows:
	\begin{enumerate}
		\item For every $i \in [1,n]$, the $\select{\trarr{\arrType_i}}$ function symbol is interpreted as
		      \[
			      \select{\trarr{\arrType_i}}^{\nextM}(a,j) =
			      a(j)
		      \]
		\item For every $i \in [1,n]$, the $\store{\trarr{\arrType_i}}$ function symbol is interpreted as
		      \[
			      \store{\trarr{\arrType_i}}^{\nextM}(a,j,v)(k) =
			      \begin{cases}
				      v    & \text{if } k = j \\
				      a(k) & \text{otherwise}
			      \end{cases}
		      \]
		\item The constructor symbols are interpreted according to the datatype theory.
		\item The selector symbols are interpreted as follows ($\sel{c}{i}:\trsig{\sigma} \to \trsig{\tau}$ when $\sigma, \tau \in \sorts{\Sigma}$):
		      \[
			      \sel{c}{i}^{\nextM}(x) = \begin{cases}
				      t_i                    & \text{if } x = c(t_1,\dots,t_n)                    \\
				      \sel{c}{i}^{\newM}(x)  & \text{otherwise, if } x \in \trsig{\sigma}^{\newM} \\
				      \defelem{\trsig{\tau}} & \text{otherwise}
			      \end{cases}
		      \]

		      Note that this function is well defined since $\trsig{\tau}^{\newM} \subseteq \trsig{\tau}^{\nextM}$ and thus $\sel{c}{i}^{\newM}(x) \in \trsig{\tau}^{\nextM}$ when $x \in \trsig{\sigma}^{\newM}$.
		\item For every function symbol \(f_{\arrType_i}\) of arity \(\trsig{\arrType_i} \to \trarr{\arrType_i}\), we define:
		      \[
			      f_{\arrType_i}^{\nextM}(x) =
			      \begin{cases}
				      \Psi_i(f_{\arrType_i}^{\newM}(x)) & \text{if } x \in \trsig{\arrType_i}^{\newM} \\
				      \defelem{\trarr{\arrType_i}}      & \text{otherwise}
			      \end{cases}
		      \]
		      Note that this function is well defined since $f_{\arrType_i}^{\newM}:\trsig{\arrType_i}^{\newM} \to \trarr{\arrType_i}^{\newM}$ and $\Psi_i: \trarr{\arrType_i}^{\newM} \to \trarr{\arrType_i}^{\nextM}$. Thus, we have that $\Psi_i(f_{\arrType_i}^{\newM}(x)) \in \trarr{\arrType_i}^{\nextM}$ for every $x \in \trsig{\arrType_i}^{\newM}$.

		\item For every function symbol \(g_{\arrType_i}\) of arity \(\trarr{\arrType_i} \to \trsig{\arrType_i}\), we set
		      \[ g_{\arrType_i}^{\nextM}(x) =
			      \begin{cases}
				      g_{\arrType_i}^{\newM}(\Psi_i^{-1}(x)) & \text{if } \Psi_i^{-1}(x) \neq \emptyset \\
				      \defelem{\trsig{\arrType_i}}           & \text{otherwise}
			      \end{cases}
		      \].

		      Note that this function is well defined.
		      First, we have already shown that $\Psi_i$ is injective, so $\Psi_i^{-1}(x)$ contains a single element, if any.
		      Second, since $\Psi_i \colon \trarr{\arrType_i}^{\newM} \to \trarr{\arrType_i}^{\nextM}$, we know that the restricted inverse
		      \[
			      \Psi_i^{-1}\!\!\upharpoonright_{\{y \in \trarr{\arrType_i}^{\nextM} \mid \Psi_i^{-1}(y) \neq \emptyset\}} \colon \{y \in \trarr{\arrType_i}^{\nextM} \mid \Psi_i^{-1}(y) \neq \emptyset\} \to \trarr{\arrType_i}^{\newM}
		      \]
		      is well defined.

		      Thus, as
		      $g_{\arrType_i}^{\newM}:\trarr{\arrType_i}^{\newM} \to
			      \trsig{\arrType_i}^{\newM}$ and $\trsig{\arrType_i}^{\newM} \subseteq
			      \trsig{\arrType_i}^{\nextM}$, we have that
		      $g_{\arrType_i}^{\newM}(\Psi_i^{-1}(x)) \in \trsig{\arrType_i}^{\nextM}$
		      when $\Psi_i^{-1}(x) \neq \emptyset$.

		\item The tester symbols are interpreted in the standard way for datatype structors:
		      \[
			      \forall c \in \constructors_{\newSig{\Sigma}{\formulaProof}}.\;
			      \is{c}^{\nextM}(x) =
			      \begin{cases}
				      true  & \text{if } x = c(\ldots) \\
				      false & \text{otherwise}
			      \end{cases}
		      \]
	\end{enumerate}
	By the way $\nextM$ interprets the $\select{\trarr{\arrType_j}}$ and $\store{\trarr{\arrType_j}}$ for every $j \in [1,n]$, we get that $\nextM^{\trsig{\Sigma_j}}$ is an array $\trsig{\Sigma_j}$-structure.
	Likewise, $\nextM^{\newSig{\Sigma}{\formulaProof}}$ is a datatype $\newSig{\Sigma}{\formulaProof}$-structure.

	\paragraph{3. Showing that $\nextM$ is a $\newT{\formulaProof}$-interpretation.}

	Recall that
	\(
	\newT{\formulaProof}
	\;=\;
	\Bigl(\biguplus_{j=1}^{n} \trsig{T_j}\Bigr)
	\;\uplus\;
	T_{n+1}^\formulaProof
	\;\uplus\;
	T_{n+2}\,.
	\)
	For each \(j\in [1,n]\), the reduct \(\nextM^{\trsig{\Sigma_j}}\) interprets
	\(\select{\trarr{\arrType_j}}\) and \(\store{\trarr{\arrType_j}}\) in the standard way for an arrays structure.
	On \(\Sigma_{n+1}^\formulaProof\), \(\nextM\) uses the usual tree‐based interpretation of
	constructors, selectors and testers (see Definition~\ref{datatypes-axioms}). By
	construction, each selector and tester in \(\Sigma_{n+1}^\formulaProof\) obeys the requirements of datatypes structures.
	Finally, on \(\Sigma_{n+2}\) we only have the symbols \(f_{\arrType_j}\) and \(g_{\arrType_j}\),
	and there is no restriction on their interpretation.

	\paragraph{4. Showing that $\nextM$ satisfies every literal in $\formulaProof$.}
	Let $\ell$ be a literal in $\formulaProof$.
	Suppose $\newM$ satisfies $\ell$.
	We prove $\nextM$ satisfies it,
	by considering all of its possible shapes.
	\begin{enumerate}
		\item For a literal of the form $x = y$, there are two cases:
		      \begin{enumerate}
			      \item If the sort of $x$ and $y$ is $\trarr{\arrType_i}$, then:
			            \[
				            x^{\nextM} = \Psi_i(x^{\newM}) = \Psi_i(y^{\newM}) = y^{\nextM}.
			            \]

			      \item If the sort of $x$ and $y$ is not $\trarr{\arrType_i}$, then we have that
			            \[
				            x^{\nextM} = x^{\newM} = y^{\newM} = y^{\nextM}.
			            \]
		      \end{enumerate}
		\item For a literal of the form $x \neq y$, there are two cases:
		      \begin{enumerate}
			      \item If the sort of $x$ and $y$ is $\trarr{\arrType_i}$, the injectivity of $\Psi_i$ ensures that:
			            \[
				            x^{\nextM} = \Psi_i(x^{\newM}) \neq \Psi_i(y^{\newM}) = y^{\nextM}.
			            \]

			      \item If the sort of $x$ and $y$ is not $\trarr{\arrType_i}$, then we have that
			            \[
				            x^{\nextM} = x^{\newM} \neq y^{\newM} = y^{\nextM}.
			            \]
		      \end{enumerate}
		\item For a literal of the form $x = \select{\trarr{\arrType_i}}(a,j)$, we have that
		      \[
			      x^{\nextM}
			      = x^{\newM}
			      = \select{\trarr{\arrType_i}}^{\newM}(a^{\newM},j^{\newM})
			      = a^{\nextM}(j^{\nextM})
			      = \select{\trarr{\arrType_i}}^{\nextM}(a^{\nextM},j^{\nextM}).
		      \]

		      The first equality holds since $x \in \fv{E_i}{\formulaExSat}$ and thus $x^{\nextM} = x^{\newM}$.
		      The second equality holds because $\newM$ satisfies the literal $x = \select{\trarr{\arrType_i}}(a,j)$. The third equality holds by the definition of array variables in $\nextM$.
		      The fourth equality holds by the interpretation of the $\select{\trarr{\arrType_i}}$ function symbol in $\nextM$.

		\item For a literal of the form $b = \store{\trarr{\arrType_i}}(a,i,v)$, we already know that
		      \[
			      b^{\newM} = \store{\trarr{\arrType_i}}^{\newM}(a^{\newM},i^{\newM},v^{\newM}).
		      \]
		      Thus:
		      \begin{enumerate}
			      \item $\select{\trarr{\arrType_i}}^{\newM}(b^{\newM},i^{\newM}) = v^{\newM}$.
			      \item $\forall k \in \trsig{I_i}^{\newM} \setminus \{i^{\newM}\}.
				            \select{\trarr{\arrType_i}}^{\newM}(b^{\newM},k) = \select{\trarr{\arrType_i}}^{\newM}(a^{\newM},k).$
		      \end{enumerate}
		      Let there be $k \in \trsig{I_i}^{\nextM}$.
		      Since, $I_i \in \sorts{\Sigma} \setminus (\structsorts \cup \arrays)$, we have that $\trsig{I_i}^{\newM} \subseteq \trsig{I_i}^{\nextM}$. Hence,
		      $\trsig{I_i}^{\nextM} = \{i^{\nextM}\} \cup (\trsig{I_i}^{\newM} \setminus \{i^{\nextM}\}) \cup (\trsig{I_i}^{\nextM} \setminus \trsig{I_i}^{\newM})$ and since $i^\nextM = i^\newM$ there are three cases:
		      \begin{enumerate}
			      \item If $k = i^{\nextM}$, then
			            \[
				            \store{\trarr{\arrType_i}}^{\nextM}(a^{\nextM},i^{\nextM},v^{\nextM})(k) = v^{\nextM} = v^{\newM} = \select{\trarr{\arrType_i}}^{\newM}(b^{\newM},i^\newM)
				            = b^{\nextM}(k).
			            \]
			            The first equality holds by the definition of $\store{\trarr{\arrType_i}}^{\nextM}$, the second equality holds since the $v^{\nextM} \in \trsig{E_i}^{\nextM}$, and thus $v^{\nextM} = v^{\newM}$, the third equality is the first equality derived from the fact that $b^{\newM} = \store{\trarr{\arrType_i}}^{\newM}(a^{\newM},i^{\newM},v^{\newM})$, and the last equality holds by the definition of $\select{\trarr{\arrType_i}}^{\nextM}$ and by substituting $k = i^{\nextM} = i^{\newM}$.
			      \item If $k \in \trsig{I_i}^{\newM} \setminus \{i^{\nextM}\}$, then
			            \[
				            \store{\trarr{\arrType_i}}^{\nextM}(a^{\nextM},i^{\nextM},v^{\nextM})(k) = a^{\nextM}(k) = \select{\trarr{\arrType_i}}^{\newM}(a^{\newM},k)
				            = \select{\trarr{\arrType_i}}^{\newM}(b^{\newM},k) = b^{\nextM}(k).
			            \]
			      \item If $k \in \trsig{I_i}^{\nextM} \setminus \trsig{I_i}^{\newM}$, then
			            \[
				            \store{\trarr{\arrType_i}}^{\nextM}(a^{\nextM},i^{\nextM},v^{\nextM})(k)
				            =a^{\nextM}(k)
				            = \defelem{\trsig{E_i}} = b^{\nextM}(k).
			            \]
		      \end{enumerate}
		      Thus, because of the extensionality axiom, we have that
		      \[
			      b^{\nextM} = \store{\trarr{\arrType_i}}^{\nextM}(a^{\nextM},i^{\nextM},v^{\nextM})
		      \]
		\item For a literal of the form $y = c(x_1,\ldots,x_n)$, we have that
		      \[
			      y^{\nextM}
			      = y^{\newM}
			      = c^{\newM}(x_1^{\newM},\ldots,x_n^{\newM})
			      = c^{\nextM}(x_1^{\nextM},\ldots,x_n^{\nextM})
		      \]

		\item For a literal of the form $x = \sel{c}{i}(y)$, we have that
		      \[
			      x^{\nextM}
			      = x^{\newM}
			      = \sel{c}{i}^{\newM}(y^{\newM})
			      = \sel{c}{i}^{\nextM}(y^{\nextM})
		      \]
		\item For a literal of the form $\is{c}(y)$ or $\neg\is{c}(y)$, we have that $y^{\nextM}$ is an application of a constructor \(c\) if and only if \(y^{\newM}\) is an application of a constructor \(c\).
		\item For a literal of the form $a = f_{\arrType_i}(x)$, we have that
		      \[
			      a^{\nextM}
			      = \Psi_i(a^{\newM})
			      = \Psi_i(f_{\arrType}^{\newM}(x^{\newM}))
			      = f_{\arrType}^{\nextM}(x^{\nextM}).
		      \]

		\item For a literal of the form $x = g_{\arrType_i}(a)$, we have that
		      \[
			      x^{\nextM}
			      = x^{\newM}
			      = g_{\arrType}^{\newM}(a^{\newM})
			      = g_{\arrType}^{\newM}(\Psi_i^{-1}(\Psi_i(a^{\newM})))
			      = g_{\arrType}^{\nextM}(a^{\nextM}).
		      \]

	\end{enumerate}
\end{proof}

\begin{table}[t]
	\newcolumntype{M}[1]{>{\centering\arraybackslash}m{#1}}
	\centering
	\begin{tabular}{|M{2cm}|M{11cm}|}
		\hline
		\textbf{Domains}    &
		\begin{tabular}{@{}l@{}}
			$\trsig{I}^{\newMC} = \mathbb{N}$                                                                                  \\[3pt]
			${\idx{\arrType}}^{\newMC} = \{\gamma\}\;\cup\;\{ e_k^{\idx{\arrType}} \mid k \in \mathbb{N}\}$                    \\[3pt]
			$\trsig{E}^{\newMC}
				= T_{\trsig{E}}\bigl(\consig{\newSig{\exsigwf}{\formulaExSat}}{\newSig{\exsigwf}{\formulaExSat}},\,B\bigr)$        \\[3pt]
			$\trsig{\arrType}^{\newMC}
				= T_{\trsig{\arrType}}\bigl(\consig{\newSig{\exsigwf}{\formulaExSat}}{\newSig{\exsigwf}{\formulaExSat}},\,B\bigr)$ \\[3pt]
			$\trarr{A}^{\newMC}
				= \trsig{I}^{\newMC}\;\to\;\trsig{E}^{\newMC}$
		\end{tabular}
		\\\hline
		\textbf{Functions}  &
		\begin{tabular}{@{}l@{}}
			\(\select{},\store{}\): standard read/update on \(\trarr{A}^{\newMC}\).        \\[3pt]
			\(\trsig{\container}^{\newMC}(n,y)=\trsig{\container}(n,y)\)                   \\[3pt]
			\(\trsig{\id}^{\newMC}(\trsig{\container}(n,y))=n\),\quad
			\(\trsig{\children}^{\newMC}(\trsig{\container}(m,y))=y\)                      \\[3pt]
			\(\trsig{\id}^{\newMC}(\trsig{\emp})=0\),\quad
			\(\trsig{\children}^{\newMC}(\trsig{\emp})=\constArr{\trsig{\emp}}\)           \\[6pt]
			\(f_{\arrType}^{\newMC}\): constant function returning \(\trarr{a}^{\newMC}\). \\[3pt]
			\(g_{\arrType}^{\newMC}\): constant function returning \(\trsig{a}^{\newMC}\).
		\end{tabular}
		\\\hline
		\textbf{Predicates} &
		\begin{tabular}{@{}l@{}}
			\(\is{\trsig{\container}}^{\newMC}(\trsig{\container}(n,y))=\text{true}\), \\[3pt]
			\(\is{\trsig{\container}}^{\newMC}(\trsig{\emp})=\text{false}\),           \\[3pt]
			\(\is{\trsig{\emp}}^{\newMC}(\trsig{\emp})=\text{true}\),                  \\[3pt]
			\(\is{\trsig{\emp}}^{\newMC}(\trsig{\container}(n,y))=\text{false}\).
		\end{tabular}
		\\\hline
		\textbf{Variables}  &
		\begin{tabular}{@{}l@{}}
			\(\trsig{x}^{\newMC} \;=\; \tempvar{a}{1}^{\newMC}
			= \trsig{\container}\bigl(3,\;\arrCons{1}(\gamma,\;\trsig{\emp})\bigr)\)                   \\[6pt]
			\(\trarr{a}^{\newMC}
			= \constArr{\trsig{x}^{\newMC}}
			= \constArr{\trsig{\container}(3,\arrCons{1}(\gamma,\trsig{\emp}))}\)                      \\[6pt]
			\(\trsig{j}^{\newMC} = \trsig{k}^{\newMC} = 1\)                                            \\[6pt]
			\(\trsig{a}^{\newMC}
			= \arrCons{1}\bigl(\gamma,\;\trsig{\container}(3,\arrCons{1}(\gamma,\trsig{\emp}))\bigr)\) \\[6pt]
			\(\trsig{y}^{\newMC}
			= \trsig{\container}\bigl(1,\;\arrCons{1}(\gamma,\trsig{\container}(3,\arrCons{1}(\gamma,\trsig{\emp})))\bigr)\)
		\end{tabular}
		\\\hline
	\end{tabular}
	\caption{A $\newSig{\exsigwf}{\formulaExSat}$–interpretation $\newMC$.
		Here $B$ is the $\elemsorts_{\newSig{\exsigwf}{\formulaExSat}}$–sorted set with $B_{\rho}=\rho^{\newMC}$ for every $\rho\in\elemsorts_{\newSig{\exsigwf}{\formulaExSat}}$.}
	\label{tab:InterpretationM}
\end{table}

\begin{example}
	\label{ex:infinitely-more-elements}
	Consider $\hat{\newM}$ from~\Cref{tab:exCompleteness}, which satisfies the formula $\trF(\formulaExSat)$ (\Cref{tab:tndtsatex}).
	According to~\Cref{lem:infinitely-more-elements}, we can construct a $\newSig{\exsigwf}{\formulaExSat}$-interpretation $\newMC$ that satisfies the same formula, and the three
	conditions specified in \Cref{lem:infinitely-more-elements}.
	This new interpretation $\newMC$ is defined in~\Cref{tab:InterpretationM}.
\end{example}

\begin{table}[t]
	\newcolumntype{M}[1]{>{\centering\arraybackslash}m{#1}}
	\centering
	\begin{tabular}{|M{4.5cm}|M{3.5cm}|M{6.5cm}|}
		\hline
		\textbf{Sort type}         &
		\textbf{Base case ($i=0$)} &
		\textbf{Inductive step}
		\\\hline
		$\displaystyle
			\sigma \in \sorts{\Sigma}
			\setminus(\arrays_\Sigma \cup \structsorts_{\Sigma_{n+1}})$
		                           &
		$\displaystyle
			\subdomain{\sigma}{0}
			= \trsig{\sigma}^{\newM}$
		                           &
		$\displaystyle
			\subdomain{\sigma}{i+1}
			= \trsig{\sigma}^{\newM}$
		\\[6pt]\hline
		$\displaystyle
			\sigma = \arrType_{j}\;(j\in [1,n])$
		                           &
		$\displaystyle
			\subdomain{\sigma}{0} = \emptyset$
		                           &
		$\displaystyle
			\subdomain{\sigma}{i+1}
			= \almostC{\subdomain{I_j}{i}}{\subdomain{E_j}{i}}$
		\\[6pt]\hline
		$\displaystyle
			\sigma \in \structsorts_{\Sigma_{n+1}}$
		                           &
		$\displaystyle
			\subdomain{\sigma}{0} = \emptyset$
		                           &
		$\displaystyle
			\subdomain{\sigma}{i+1}
			= \subdomain{\sigma}{i} \cup
			T_{\sigma}\!\bigl(\consig{\Sigma},\,B^{i}\bigr),\;
			\text{where }B^{i}_{\tau}=\subdomain{\tau}{i}\;
			(\tau\in\elemsorts)$
		\\[6pt]\hline
	\end{tabular}
	\caption{Iterative construction of the domains
		$\subdomain{\sigma}{i}$ for each sort $\sigma$ in $\model$.
		For each $\sigma$, $\sigma^\model =
			\bigcup_{i\in\mathbb{N}} \subdomain{\sigma}{i}$.}
	\label{tab:comp-domains}
\end{table}

\subsection{Defining the Domains of the $\tndt{\Sigma}$-interpretation}
\label{app:sec:comp-define-domains}
Recall that starting from a $\newT{\formulaProof}$ interpretation
$\newM$, we are constructing a $\tndt{\Sigma}$-interpretation $\model$.
Due to~\Cref{lem:infinitely-more-elements} we can assume that
the conditions it specifies hold.

We define the domains of $\model$ as follows.
For each sort $\sigma\in\sorts{\Sigma}$, we inductively define a sequence
$\{\subdomain{\sigma}{i}\}_{i\in\mathbb{N}}$
in \Cref{tab:comp-domains},
and then set
\( \sigma^\model = \bigcup_{i\in\mathbb{N}} \subdomain{\sigma}{i}
\).

\begin{example}
	Consider again the formula $\formulaExSat$ in~\Cref{tab:tndtsatex}.
	We have seen in \Cref{ex:infinitely-more-elements} that
	$\trF(\formulaExSat)$ is
	satisfied by the interpretation
	$\newMC$
	from \Cref{tab:InterpretationM}.
	The domains of the sorts in the above construction
	are the same as those in~\Cref{tab:exInter}
	(The interpretations of the variables may differ).
\end{example}

Our next goal is to show that the defined domains are indeed non-empty. In order to prove it we first show the following lemma:

\begin{lemma}
	\label{lem:subset-by-index}
	For every sort \(\sigma\in\sorts{\Sigma}\) and every \(i\in\mathbb{N}\),
	$
		\subdomain{\sigma}{i} \;\subseteq\; \subdomain{\sigma}{i+1}
	$.
\end{lemma}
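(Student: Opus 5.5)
The claim is monotonicity of the sequence in \Cref{tab:comp-domains}. I will prove it by induction on \(i\), simultaneously for all sorts \(\sigma\in\sorts{\Sigma}\). The statement at stage \(i\) is: for every sort \(\sigma\), \(\subdomain{\sigma}{i}\subseteq\subdomain{\sigma}{i+1}\). A simultaneous induction is needed because the array and struct cases at stage \(i+1\) are built from the domains of other sorts at stage \(i\).

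\textbf{Base case \(i=0\).} For \(\sigma\notin\arrays_\Sigma\cup\structsorts_{\Sigma_{n+1}}\), both \(\subdomain{\sigma}{0}\) and \(\subdomain{\sigma}{1}\) equal \(\trsig{\sigma}^{\newM}\), so the inclusion is equality. For \(\sigma=\arrType_j\) and for \(\sigma\in\structsorts_{\Sigma_{n+1}}\) we have \(\subdomain{\sigma}{0}=\emptyset\), which is contained in anything.

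\textbf{Inductive step.} Assume the inclusion holds at stage \(i\) for all sorts; we show \(\subdomain{\sigma}{i+1}\subseteq\subdomain{\sigma}{i+2}\) by cases on \(\sigma\).
\begin{enumerate}
\item Non-array, non-struct sorts are constant along the sequence, so there is nothing to prove.
\item Struct sorts: \(\subdomain{\sigma}{i+2}=\subdomain{\sigma}{i+1}\cup T_\sigma(\cdots)\) contains \(\subdomain{\sigma}{i+1}\) syntactically from the definition. Alternatively, one can apply \Cref{lem:elemsorts-set-containment} to \(B^{i}\subseteq B^{i+1}\), which follows from the induction hypothesis on the element sorts.
\item Array sorts \(\sigma=\arrType_j\): we must show \(\almostC{\subdomain{I_j}{i}}{\subdomain{E_j}{i}}\subseteq\almostC{\subdomain{I_j}{i+1}}{\subdomain{E_j}{i+1}}\).
\end{enumerate}

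\textbf{The array case is the main obstacle.} The difficulty is that a function is tied to its domain, so a domain change would break the inclusion. By \Cref{sec:theory:ndt-sig}, \(I_j\notin\structsorts_{\Sigma_{n+1}}\cup\arrays_\Sigma\), so \(I_j\) falls into the constant case and \(\subdomain{I_j}{i}=\subdomain{I_j}{i+1}=\trsig{I_j}^{\newM}\). The domain of the functions therefore does not change. Only the codomain grows: by the induction hypothesis \(\subdomain{E_j}{i}\subseteq\subdomain{E_j}{i+1}\), whatever kind of sort \(E_j\) is.

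Now take an almost constant \(f:\trsig{I_j}^{\newM}\to\subdomain{E_j}{i}\), with exceptional finite set \(S\) and default value \(c\in\subdomain{E_j}{i}\). Then \(f\) is also a function into \(\subdomain{E_j}{i+1}\), and the same \(S\) and \(c\) witness that it is almost constant there. Hence \(f\in\subdomain{\arrType_j}{i+2}\).

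The edge case \(\subdomain{E_j}{i}=\emptyset\) is harmless: the left-hand set is then empty, because \(\trsig{I_j}^{\newM}\) is nonempty (indeed infinite by \Cref{lem:infinitely-more-elements}). This completes the induction.
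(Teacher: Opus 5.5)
Your proof is correct and follows the paper's argument. You use the same three-way case split on the kind of sort, and the array case rests on the same two facts: $I_j$ is constant along the sequence, and enlarging the codomain $Y$ of $\almostC{X}{Y}$ preserves almost-constant functions. Your explicit simultaneous induction over all sorts only spells out the inclusion $\subdomain{E_j}{i}\subseteq\subdomain{E_j}{i+1}$, which the paper's array-case induction uses without stating it separately.
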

\begin{proof}
	We consider three cases depending on the nature of \(\sigma\).

	\begin{enumerate}
		\item If \(\sigma \notin \structsorts_{\Sigma_{n+1}} \cup \arrays_{\Sigma}\), then by definition
		      $
			      \subdomain{\sigma}{i}
			      = \trsig{\sigma}^{\newM}
			      = \subdomain{\sigma}{i+1}
		      $,
		      so the inclusion is immediate.

		\item If \(\sigma \in \structsorts_{\Sigma_{n+1}}\), then
		      \(
		      \subdomain{\sigma}{i+1}
		      = \subdomain{\sigma}{i}
		      \;\cup\;
		      T_{\sigma}\bigl(\consig{\Sigma}, B^{i}\bigr),
		      \)
		      and hence \(\subdomain{\sigma}{i} \subseteq \subdomain{\sigma}{i+1}\).

		\item For all \(k \in [1,n]\), we use induction over $i \in \mathbb{N}$ to show that $\subdomain{\arrType_k}{i} \subseteq \subdomain{\arrType_k}{i+1}$.

		      \textbf{base:} if \(i = 0\), then
		      \(
		      \subdomain{\arrType_k}{0} = \emptyset,
		      \)

		      \textbf{inductive step:}
		      If \(i \ge 1\), then since
		      $\subdomain{E_k}{i-1} \subseteq \subdomain{E_k}{i}$
		      and
		      $\subdomain{I_k}{i-1} = \subdomain{I_k}{i}$,
		      we have
		      \[
			      \subdomain{\arrType_k}{i}
			      = \almostC{\subdomain{I_k}{i-1}}{\subdomain{E_k}{i-1}}
			      = \almostC{\subdomain{I_k}{i}}{\subdomain{E_k}{i-1}}
			      \;\subseteq\;
			      \almostC{\subdomain{I_k}{i}}{\subdomain{E_k}{i}}
			      = \subdomain{\arrType_k}{i+1}.
		      \]

		      The inclusion holds since enlarging the codomain $Y$ in $\almostC{X}{Y}$ preserves all almost constant functions.

	\end{enumerate}
\end{proof}

Now we can prove that the domains of the sorts in \(\Sigma\) are non-empty.

\begin{lemma}
	\label{lem:non-empty}
	For every sort \(\sigma\in\sorts{\Sigma}\), the domain \(\sigma^{\model}\) is nonempty.
\end{lemma}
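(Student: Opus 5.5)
We prove non-emptiness by showing that every sort eventually receives an element at some finite stage of the construction in \Cref{tab:comp-domains}. Since the stages are monotone by \Cref{lem:subset-by-index}, it suffices to show that for every $\sigma\in\sorts{\Sigma}$ there is some $i$ with $\subdomain{\sigma}{i}\neq\emptyset$. The natural tool is the well-foundedness condition on NDT signatures (\Cref{sec:theory:ndt-sig}): every sort in $\arrays_\Sigma\cup\structsorts_{\Sigma_{n+1}}$ lies in $\bigcup_k F_k(\Sigma,S)$, where $S=\sorts{\Sigma}\setminus(\arrays_\Sigma\cup\structsorts_{\Sigma_{n+1}})$. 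We therefore argue by induction on $k$ that for every $\sigma\in F_k(\Sigma,S)$ we have $\subdomain{\sigma}{k'}\neq\emptyset$ for some $k'$; the choice $k'=k+1$ should suffice.

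For the base case, take $\sigma\in F_0=S$. Then $\subdomain{\sigma}{i}=\trsig{\sigma}^{\newM}$ for all $i$. This set is nonempty because $\newM$ is a structure and domains are nonempty. In fact, by \Cref{lem:infinitely-more-elements} we may assume it is infinite.

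For the inductive step, take $\sigma\in F_{k+1}\setminus F_k$; there are two cases.

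\emph{Constructor case.} There is a constructor $c:\sigma_1\times\dots\times\sigma_m\to\sigma$ with all $\sigma_j\in F_k$. Any array sort in $\func{\Sigma}$ arities here is a codomain of $\store{}$ only, and $\store{}$ has an array argument. I would treat $\select{}/\store{}$ as instances of the same pattern or note that they add nothing new, since $E_k$ and $I_k$ are handled separately. By the induction hypothesis, each $\subdomain{\sigma_j}{i}$ is nonempty for large $i$; take $i$ to be the maximum over $j$. If $\sigma\in\structsorts_{\Sigma_{n+1}}$, then $T_\sigma(\consig{\Sigma},B^i)$ contains $c(t_1,\dots,t_m)$ with $t_j$ chosen from these nonempty sets. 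Here one must check that the tree construction over $B^i$ admits non-element arguments of struct sorts via the intermediate levels $T_{\sigma_j}$. The subtlety is that $T_\sigma(\consig{\Sigma},B^i)$ is built only from element-sort leaves $B^i$, which include array sorts at stage $i$. Nested struct sorts are then built inside the tree construction itself, which I would verify by a secondary induction on $G$-levels within $T$.

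\emph{Array case.} $\sigma=\arrType_j$ with $E_j,I_j\in F_k$. Then $\subdomain{\arrType_j}{i+1}=\almostC{\subdomain{I_j}{i}}{\subdomain{E_j}{i}}$. This set contains a constant function as soon as $\subdomain{E_j}{i}$ is nonempty, and this holds regardless of whether $I_j$ is empty (the empty function is also almost constant).

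Finally, sorts outside $S\cup\arrays_\Sigma\cup\structsorts_{\Sigma_{n+1}}$ do not exist, so every sort is covered, and $\sigma^\model=\bigcup_i\subdomain{\sigma}{i}\supseteq\subdomain{\sigma}{k'}\neq\emptyset$.

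The main obstacle is the constructor case. The well-foundedness levels $F_k$ interleave constructors and array formation, whereas \Cref{tab:comp-domains} builds struct sorts by a full tree closure over the current element-level sets. One must match these two stratifications. My approach is to show that the constructor closure of $F$-levels restricted to struct sorts is reproduced inside $T_\sigma(\consig{\Sigma},B^i)$. This would use \Cref{lem:elemsorts-set-containment} to pass from smaller to larger $B^i$ as $i$ grows, so that elements obtained at earlier stages persist.
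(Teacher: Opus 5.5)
Your proposal is correct and is essentially the paper's proof: you induct on the level $k$ with $\sigma\in F_k(\Sigma,S)$, take the base case from the nonempty domains of $\newM$, handle the constructor case via $T_\sigma(\consig{\Sigma}{\Sigma},B^{k})\subseteq\subdomain{\sigma}{k+1}$, and handle the array case via $\almostC{\subdomain{I_i}{k}}{\subdomain{E_i}{k}}$ (the paper keeps the tight index and shows $\subdomain{\sigma}{k}\neq\emptyset$ directly). The constructor-case subtlety you flag needs no secondary induction, because for a struct sort $\tau$ one has $\subdomain{\tau}{i}=\bigcup_{j<i}T_\tau(\consig{\Sigma}{\Sigma},B^{j})\subseteq T_\tau(\consig{\Sigma}{\Sigma},B^{i})$ by \Cref{lem:subset-by-index} and \Cref{lem:elemsorts-set-containment}, and tree sets are closed under constructor application, which is exactly the step the paper uses implicitly; likewise your aside on $\select{}/\store{}$ is unnecessary, since $F_k$ is meant over constructors only, as in the paper's case analysis.
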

\begin{proof}
	Let
	\(
	S \;=\;\sorts{\Sigma}\,\setminus\,\bigl(\arrays_{\Sigma}\cup\structsorts_{\Sigma_{n+1}}\bigr).
	\)
	Since all the sorts in \(\sorts{\Sigma}\) are well-founded with respect to \(S\), for
	each \(\sigma\in\sorts{\Sigma}\) there is some \(k\in\mathbb{N}\)
	with
	\(\sigma\in F_k(\Sigma,S)\)
	(see \Cref{def:theory:well-founded}).
	We prove by induction on \(k\) that
	\(
	\forall\sigma\in F_k(\Sigma,S).\quad \subdomain{\sigma}{k}\neq\emptyset.
	\)
	\begin{description}
		\item[Base case (\(k=0\)).] Then \(\sigma\in F_0(\Sigma,S)=S\), and by definition \(\subdomain{\sigma}{0}=\trsig{\sigma}^\newM\), which is nonempty, as $\newM$ is an interpretation.
		\item[Inductive step.] Assume the claim holds for \(k\). Take any \(\sigma\in F_{k+1}(\Sigma,S)\). By definition
		      \[
			      F_{k+1}
			      =F_k\;\cup\;
			      \bigl\{\sigma\mid\exists\,c:\sigma_1\times\cdots\times\sigma_n\to\sigma,\;\sigma_i\in F_k\bigr\}
			      \;\cup\;
			      \{\arrType_i\mid I_i,E_i\in F_k\}.
		      \]
		      There are three cases:
		      \begin{enumerate}
			      \item If \(\sigma\in F_k\), then \(\subdomain{\sigma}{k}\neq\emptyset\) by IH, so according to~\Cref{lem:subset-by-index}, \(\subdomain{\sigma}{k+1} \supseteq \subdomain{\sigma}{k} \neq\emptyset\).
			      \item If there exists a constructor \(c:\sigma_1\times\cdots\times\sigma_n\to\sigma\) with each \(\sigma_i\in F_k\), then IH gives
			            \(\subdomain{\sigma_i}{k}\neq\emptyset\). Hence
			            \(
			            \bigl\{c(t_1,\dots,t_n)\mid t_i\in\subdomain{\sigma_i}{k}\bigr\}
			            \;\subseteq\;
			            T_{\sigma}(\consig{\Sigma},B^k)
			            \;\subseteq\;
			            \subdomain{\sigma}{k+1},
			            \)
			            so \(\subdomain{\sigma}{k+1}\neq\emptyset\).
			      \item If \(\sigma=\arrType_i\) is an array sort where \(E_i, I_i \in F_k\), then IH gives \(\subdomain{I_i}{k}\neq\emptyset\) and \(\subdomain{E_i}{k}\neq\emptyset\). Therefore:
			            \(
			            \subdomain{\arrType_i}{k+1} = \almostC{\subdomain{I_i}{k}}{\subdomain{E_i}{k}}
			            \neq\emptyset
			            \).
		      \end{enumerate}
	\end{description}
	Hence for each \(\sigma\) there is some \(k\) with \(\subdomain{\sigma}{k}\neq\emptyset\), and so \(\sigma^{\model}=\bigcup_{k \in \mathbb{N}}\subdomain{\sigma}{k}\) is nonempty.
\end{proof}

\subsection{Properties of the Domains of the $\tndt{\Sigma}$-interpretation}
\label{app:sec:comp-properties-domains}

In order to show that \(\model\) is a \(\tndt{\Sigma}\)-interpretation, we need to show that the relation \(\rel{\model}\) is well-founded (see \Cref{sec:theory:nested-relation}).\footnote{Technically, $\model$ has not been fully defined yet, however, for proving well-foundedness, the interpretations of the domains, the constructors and array operators suffice.
}

We first prove the following lemmas, which provide a more intuitive understanding of the domains of the sorts in \(\model\).
\begin{lemma}
	\label{lem:domain-arr-intuition}
	For every \(i \in [1,n]\):
	\[
		\arrType_i^\model \;=\;\almostC{I_i^\model}{E_i^\model}.
	\]
\end{lemma}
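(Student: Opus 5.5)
We prove the two inclusions separately, working from the definitions in \Cref{tab:comp-domains}: $\arrType_i^\model=\bigcup_{k}\subdomain{\arrType_i}{k}$ with $\subdomain{\arrType_i}{k+1}=\almostC{\subdomain{I_i}{k}}{\subdomain{E_i}{k}}$. A first observation is that $I_i$ is neither an array sort nor a struct sort (by the NDT signature conditions of \Cref{sec:theory:ndt-sig}). Hence $\subdomain{I_i}{k}=\trsig{I_i}^{\newM}$ for every $k$, and so $I_i^\model=\trsig{I_i}^{\newM}=\subdomain{I_i}{k}$ for all $k$. The index set is therefore fixed along the chain, and only the codomain grows. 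We also use that the chain $\subdomain{E_i}{k}$ is increasing (\Cref{lem:subset-by-index}), which holds whichever of the three cases $E_i$ falls in.

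\emph{Inclusion $\subseteq$.} Take $f\in\arrType_i^\model$. Then $f\in\subdomain{\arrType_i}{k+1}$ for some $k$. So $f$ is an almost constant function from $I_i^\model$ into $\subdomain{E_i}{k}\subseteq E_i^\model$. Enlarging the codomain preserves being almost constant, so $f\in\almostC{I_i^\model}{E_i^\model}$.

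\emph{Inclusion $\supseteq$.} This is the only step with real content, and it is exactly where almost-constancy is needed. Let $f\in\almostC{I_i^\model}{E_i^\model}$. Then there are a default value $c$ and a finite set $S\subseteq I_i^\model$ such that $f(x)=c$ for all $x\notin S$. The range of $f$ is therefore contained in the finite set $\{c\}\cup f(S)$. Each element $e$ of this set lies in $E_i^\model$, so it lies in some $\subdomain{E_i}{k_e}$. Let $K$ be the maximum of these finitely many $k_e$. By monotonicity of the chain, all values of $f$ lie in $\subdomain{E_i}{K}$. Hence
\[
f\in\almostC{\subdomain{I_i}{K}}{\subdomain{E_i}{K}}=\subdomain{\arrType_i}{K+1}\subseteq\arrType_i^\model.
\]

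I expect no serious obstacle. The points to check carefully are:
\begin{itemize}
\item that $I_i$ is indeed outside $\arrays_\Sigma\cup\structsorts_{\Sigma_{n+1}}$, so the index domain never changes along the chain;
\item that monotonicity of $\subdomain{E_i}{k}$ holds in every case for $E_i$ (element sort, struct sort, or array sort), which is what \Cref{lem:subset-by-index} supplies.
\end{itemize}
Finiteness of the range is essential here: a general function with infinitely many distinct values could draw them from unboundedly high levels $\subdomain{E_i}{k}$, and then no single $K$ would exist.
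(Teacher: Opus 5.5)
Your proof is correct and follows the paper's argument essentially step for step. For $\subseteq$, you use that the index domain $\subdomain{I_i}{k}=I_i^\model$ is the same at every level and that $\subdomain{E_i}{k}\subseteq E_i^\model$. For $\supseteq$, you use that the function has finite image, together with the monotonicity of \Cref{lem:subset-by-index}, to place all its values in a single level $\subdomain{E_i}{K}$. One small improvement over the paper: you explicitly include the default value $c$ in the finite set, whereas the paper only bounds the image of the function; the definition of almost-constant functions requires $c$ to lie in the codomain.
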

\begin{proof}
	\emph{($\supseteq$)} Let there be \(t\in\almostC{I_i^\model}{E_i^\model}\), then its image is finite and lies in
	\(\bigcup_j\subdomain{E_i}{j}=E_i^\model\). \Cref{lem:subset-by-index} suggests that
	\(
	\forall j\in\mathbb{N}.\;\subdomain{E_i}{j}\subseteq \subdomain{E_i}{j+1}
	\),
	so there exists a \(k\) such that the image of $t$ lies in
	\(\subdomain{E_i}{k}\).
	Since \(I_i^\model=\subdomain{I_i}{k}\),
	we get \(
	t\in\almostC{\subdomain{I_i}{k}}{\subdomain{E_i}{k}}\;\subseteq\;\arrType_i^\model.
	\)

	\emph{($\subseteq$)} By definition
	\(
	\arrType_i^\model
	\;=\;\bigcup_{j\in\mathbb N}\almostC{\subdomain{I_i}{j}}{\subdomain{E_i}{j}},
	\)
	and each
	\(\subdomain{E_i}{j}\subseteq E_i^\model\),
	\(\subdomain{I_i}{j} = I_i^\model\),
	so \(\almostC{\subdomain{I_i}{j}}{\subdomain{E_i}{j}}\subseteq\almostC{I_i^\model}{E_i^\model}\).
	Hence \(\arrType_i^\model\subseteq\almostC{I_i^\model}{E_i^\model}\).

\end{proof}

\begin{lemma}
	\label{lem:domain-struct-intuition}
	Let \(B\) be the \(\elemsorts\)-sorted set with \(B_\tau=\tau^\model\) for every \(\tau\in\elemsorts\).
	Then for every \(\sigma\in\structsorts\).
	\(
	\sigma^\model \;=\; T_{\sigma}\bigl(\consig{\Sigma},B\bigr).
	\)
\end{lemma}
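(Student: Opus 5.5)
We prove the two inclusions separately, mirroring \Cref{lem:domain-arr-intuition}. Fix \(\sigma\in\structsorts\) and write \(\widetilde{\Sigma}\) for \(\consig{\Sigma}\). Recall that \(\sigma^\model=\bigcup_{i}\subdomain{\sigma}{i}\), where \(\subdomain{\sigma}{i+1}=\subdomain{\sigma}{i}\cup T_{\sigma}(\widetilde{\Sigma},B^{i})\) and \(B^{i}_\tau=\subdomain{\tau}{i}\). The key observation is that \(B^{i}_\tau\subseteq B_\tau\) for every \(\tau\in\elemsorts\), since \(\subdomain{\tau}{i}\subseteq\tau^\model\).

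\emph{(\(\subseteq\)).} We show by induction on \(i\) that \(\subdomain{\sigma}{i}\subseteq T_{\sigma}(\widetilde{\Sigma},B)\) for all \(\sigma\in\structsorts\).
\begin{itemize}
\item The base case is trivial, since \(\subdomain{\sigma}{0}=\emptyset\).
\item For the step, \(\subdomain{\sigma}{i}\subseteq T_\sigma(\widetilde{\Sigma},B)\) holds by the induction hypothesis.
\item Moreover \(T_{\sigma}(\widetilde{\Sigma},B^{i})\subseteq T_{\sigma}(\widetilde{\Sigma},B)\) by \Cref{lem:elemsorts-set-containment}, using \(B^{i}_\tau\subseteq B_\tau\).
\end{itemize}
This inclusion is therefore immediate.

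\emph{(\(\supseteq\)).} We prove by induction on the tree level \(k\) that for every sort \(\rho\in\sorts{\Sigma}\) and every \(t\in T_{\rho,k}(\widetilde{\Sigma},B)\) there is an \(i\) with \(t\in\subdomain{\rho}{i}\). Elements are tracked across all sorts, not only struct sorts, because the inductive step passes through argument sorts of arbitrary kind.

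For \(k=0\), \(t\in B_\rho=\rho^\model=\bigcup_i\subdomain{\rho}{i}\). For \(k+1\), write \(t=c(t_1,\dots,t_m)\) with \(c:\rho_1\times\cdots\times\rho_m\to\rho\). By the induction hypothesis each \(t_j\in\subdomain{\rho_j}{i_j}\). Let \(N=\max_j i_j\); by \Cref{lem:subset-by-index}, all \(t_j\in\subdomain{\rho_j}{N}\).

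We now need \(\subdomain{\rho_j}{N}\) to be contained in the generating set used at stage \(N\).
\begin{itemize}
\item If \(\rho_j\in\elemsorts\), then \(\subdomain{\rho_j}{N}=B^{N}_{\rho_j}\) by definition.
\item If \(\rho_j\in\structsorts\), then \(\subdomain{\rho_j}{N}\subseteq T_{\rho_j}(\widetilde{\Sigma},B^{N})\). This needs a small auxiliary claim: \(\subdomain{\rho'}{N}\subseteq T_{\rho'}(\widetilde{\Sigma},B^{N})\) for struct sorts \(\rho'\). It follows by induction on \(N\) from \(B^{N-1}\subseteq B^{N}\) together with \Cref{lem:elemsorts-set-containment}.
\end{itemize}
Since \(T\) is closed under constructor application, \(t\in T_\rho(\widetilde{\Sigma},B^{N})\subseteq\subdomain{\rho}{N+1}\).

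The main obstacle is bookkeeping about which sorts count as \(\elemsorts\) here. In the NDT signature, array sorts and sorts like \(E_i\) may be element sorts of the datatype component, so \(B\) includes the array domains \(\arrType_i^\model\). The base case must then use \(\rho^\model=\bigcup_i\subdomain{\rho}{i}\) for array sorts as well, which holds by definition of \(\model\). Aside from this, and the auxiliary stage-wise containment above, the argument is a routine double induction.
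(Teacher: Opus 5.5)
Your proof is correct and follows essentially the same route as the paper: $\subseteq$ via \Cref{lem:elemsorts-set-containment} applied to $B^{i}\subseteq B$, and $\supseteq$ by induction on the tree level, taking the maximal stage index of the arguments and using \Cref{lem:subset-by-index}. Your auxiliary stage-wise claim $\rho^{N}\subseteq T_{\rho}(\widetilde{\Sigma},B^{N})$ for struct sorts makes explicit a step the paper uses silently, namely when it places each argument $t_j$ in some $T_{\tau_j}(\cdot,B^{l_j})$.
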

\begin{proof}
	\textbf{(\(\subseteq\)).}
	Recall \(\sigma^\model=\bigcup_{i\ge0}\subdomain{\sigma}{i}\) with
	\[
		\subdomain{\sigma}{i+1}=\subdomain{\sigma}{i}\cup
		T_{\sigma}\bigl(\consig{\Sigma},B^{i}\bigr),\qquad
		\forall \tau \in \elemsorts. B^{i}_\tau:=\subdomain{\tau}{i}.
	\]
	For every \(\tau\in\elemsorts\) we have \(B^{i}_\tau = \subdomain{\tau}{i} = \subseteq \bigcup_{j \in \mathbb{N}} \subdomain{\tau}{j} = B_\tau\)

	so according to~\Cref{lem:elemsorts-set-containment}
	\(T_{\sigma}(\consig{\Sigma},B^{i})\subseteq
	T_{\sigma}(\consig{\Sigma},B)\) for all \(k\).
	Taking the union over \(i\) yields
	\(\sigma^\model\subseteq T_{\sigma}(\consig{\Sigma},B)\).

	\textbf{(\(\supseteq\)).}
	For \(\sigma\in\structsorts\) we proceed by induction on \(k\) that \(T_{\sigma,k}(\consig{\Sigma}, B) \subseteq \sigma^\model\).

	\emph{Base \(k=0\).} \(T_{\sigma,0}=\emptyset\subseteq\sigma^\model\).

	\emph{Step.} Assume
	\(T_{\tau,k}(\consig{\Sigma},B)\subseteq\tau^\model\) for all \(\tau\).
	Let
	\(c(t_1,\dots,t_m)\in T_{\sigma,k+1}(\consig{\Sigma},B)\);
	each \(t_j\) lies in \(T_{\tau_j,k}(\consig{\Sigma},B)\)
	and hence in \(\tau_j^\model\) by the IH.
	Hence there exists $l_1, \ldots l_m$ such that
	\(
	\forall j\in\{1,\ldots,m\}.\ t_j \in T_{\tau_j}(\consig{\Sigma},B^{l_j})
	\)
	Denoting \(l=\max\{l_1,\ldots,l_m\}\), and since \(B^{l_j}_\tau\subseteq B^{l}_\tau\) for every \(\tau\in\elemsorts\), we have, according to~\Cref{lem:elemsorts-set-containment}
	\(
	\forall j\in\{1,\ldots,m\}.\ t_j \in T_{\tau_j}(\consig{\Sigma},B^{l})
	\)
	Therefore \(c(t_1,\dots,t_n)\in T_{\tau}(\consig{\Sigma},B^{l})\subseteq\sigma^\model\).
\end{proof}

\subsection{Infiniteness of Each $E_i^\model$}
\label{app:sec:comp-ei_infinite}

In order to prove that each \(E_i^\model\) is infinite, we first
define the directed graph \(\sigG{\Sigma}=(V,E)\) over the sorts of
\(\Sigma\), such that the edges go either from an output sort of a constructor into one of its input sorts, or from an array sort to its corresponding element sort.
Formally:
\begin{definition}
	\label{def:cyclic-sorts}
	Let \(\Sigma\) be an NDT signature.
	We define \(\sigG{\Sigma}=(V,E)\) to be the directed graph with
	\[
		V = \sorts{\Sigma},\quad
		E = \{(\sigma,\sigma_j)\mid \exists c\colon\sigma_1\times\cdots\times\sigma_k\to\sigma \in \constructors_{\Sigma_{n+1}},\; j\in[1,k]\}
		\;\cup\;\{(\arrType_i,E_i)\mid j\in [1,n]\}.
	\]
	A sort $\sigma$ is called \emph{cyclic} if there is a path with at least one edge in
	\(\sigG{\Sigma}\) that starts and ends with $\sigma$.
	A sort is called \emph{eventually cyclic} if there is a nonempty path in \(\sigG{\Sigma}\) that starts with it and ends with a cyclic sort.
\end{definition}
Note, that by definition every cyclic sort is also eventually cyclic.

\begin{example}
	\label{ex:cyclic-sorts}
	Consider $\exsigwf$ from \Cref{tab:sigs}.
	Then the nodes in the graph are \(\{I, E, A\}\) and the edges are:
	\begin{itemize}
		\item \(E\to A\) (from the constructor \(\container\))
		\item \(E\to I\) (from the constructor \(\container\))
		\item \(A\to E\) (from an array sort to its element sort)
	\end{itemize}
	Thus, the cyclic sorts are \(E\) and \(A\) (and \(I\) is not eventually cyclic as it has no outgoing edges).
\end{example}

\begin{lemma}
	\label{lem:cardinality-partially-ordered}
	Let there be $\sigma, \tau \in \sorts{\Sigma}$ such that there is a path in \(\sigG{\Sigma}\) from $\sigma$ to $\tau$. Then $|\sigma^\model| \geq |\tau^\model|$.
\end{lemma}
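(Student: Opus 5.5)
Since cardinality comparison is transitive, I will argue by induction on the length of the path from $\sigma$ to $\tau$ in $\sigG{\Sigma}$. A path of length zero gives $\sigma=\tau$, and the claim is trivial. For the inductive step I only need the single-edge case: if $(\sigma,\tau)$ is an edge of $\sigG{\Sigma}$, then $|\sigma^\model|\geq|\tau^\model|$. By \Cref{def:cyclic-sorts} there are two kinds of edges, and I will treat them separately. In each case I will exhibit an injection $\tau^\model\to\sigma^\model$, or equivalently a surjection in the other direction. Since $\sigma^\model$ is nonempty by \Cref{lem:non-empty}, either will do.

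\emph{Array edges} $(\arrType_i,E_i)$. By \Cref{lem:domain-arr-intuition}, $\arrType_i^\model=\almostC{I_i^\model}{E_i^\model}$. I map each $e\in E_i^\model$ to the constant function $\constDomArr{I_i^\model}{e}$. This function is almost constant, since it differs from $e$ on the empty set, so it lies in $\arrType_i^\model$. The map is injective because $I_i^\model$ is nonempty (\Cref{lem:non-empty}), so two constant functions with different values differ at some index. Hence $|\arrType_i^\model|\geq|E_i^\model|$.

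\emph{Constructor edges} $(\sigma,\sigma_j)$, arising from some $c\colon\sigma_1\times\cdots\times\sigma_k\to\sigma$ in $\constructors_{\Sigma_{n+1}}$. By \Cref{lem:domain-struct-intuition}, $\sigma^\model=T_\sigma(\consig{\Sigma},B)$ with $B_\tau=\tau^\model$. I will use \Cref{lem:non-empty} to fix elements $u_l\in\sigma_l^\model$ for each $l\neq j$. Then I map $t\in\sigma_j^\model$ to the tree $c(u_1,\ldots,u_{j-1},t,u_{j+1},\ldots,u_k)$. For this to be well-defined I must check that the tree lies in $\sigma^\model$. That requires each argument to be an element of $T_{\sigma_l}(\consig{\Sigma},B)$ for its sort $\sigma_l$, which amounts to checking that $\sigma_l^\model$ coincides with the tree set (or with $B_{\sigma_l}$) for every kind of argument sort.

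This well-definedness check is the main obstacle, because the argument sorts come in different kinds. For struct argument sorts it follows from \Cref{lem:domain-struct-intuition}. For element sorts it is immediate, since $T_{\sigma_l}(\consig{\Sigma},B)=B_{\sigma_l}=\sigma_l^\model$. The delicate kind is array sorts $\arrType_i$ used as constructor arguments: there I must confirm that they are treated as element sorts of $\Sigma_{n+1}$, so that $B_{\arrType_i}=\arrType_i^\model$. I expect this to follow from the NDT signature setup, where $\structsorts_{\Sigma_{n+1}}\cap\arrays_\Sigma=\emptyset$. Once that is settled, the tree lies in $\sigma^\model$, and injectivity is immediate because distinct trees are distinct objects and selection at position $j$ recovers $t$. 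Combining the single-edge inequalities along the path by transitivity then gives $|\sigma^\model|\geq|\tau^\model|$.
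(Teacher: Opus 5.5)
Your proof is correct and is essentially the paper's argument: you reduce to single edges of the path and use the injections $x\mapsto c(u_1,\ldots,u_{j-1},x,u_{j+1},\ldots,u_k)$ for constructor edges and $x\mapsto\constDomArr{I_i^\model}{x}$ for array edges. The well-definedness point you flag holds as you expect, because an array sort occurring as a constructor argument lies in $\sorts{\Sigma_{n+1}}\setminus\structsorts_{\Sigma_{n+1}}=\elemsorts_{\Sigma_{n+1}}$; your explicit membership and injectivity checks via \Cref{lem:domain-arr-intuition,lem:domain-struct-intuition,lem:non-empty} make precise what the paper's proof leaves implicit.
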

\begin{proof}
	Write the (finite) path in \(\sigG{\Sigma}\) from \(\sigma\) to \(\tau\) as
	\[
		\sigma=\rho_{0}\;\longrightarrow\;
		\rho_{1}\;\longrightarrow\;\dots\;\longrightarrow\;
		\rho_{l}=\tau
		\qquad(l\ge 0).
	\]

	It is sufficient to show that for every $i \in [0,l-1].|\rho_i^\model| \geq |\rho_{i+1}^\model|$.

	By the definition of the graph, the edge \((\rho_i,\rho_{i+1})\) can be justified due to one of the following two cases:
	\begin{enumerate}[label=(\arabic*), leftmargin=1.5em]
		\item There is a constructor \(c : \tau_1\,\dots\,\tau_k \to \rho_i\) such that \(\rho_{i+1} = \tau_j\) for some \(j\).
		\item \(\rho_i = \arrType_k\) and \(\rho_{i+1} = E_k\) for some \(k \in [1,n]\).
	\end{enumerate}

	In both cases, there exists an injective function
	\(f_{i} : \rho_{i+1}^\model \to \rho_i^\model \).
	\begin{enumerate}
		\item In case (1), we fix for each $ s \in [1,k]\setminus \{j\}$ some $ u_s \in \tau_s^\model$ and define:
		      \[
			      f_{i}(x) = c(u_1,\dots,u_{j-1},x,u_{j+1},\dots,u_k).
		      \]
		\item In case (2), we define \(f_{i}(x)\) as the constant array from \(I_k^\model\) to \(E_k^\model\) whose image is the singleton \(\{x\}\).
	\end{enumerate}
\end{proof}

\begin{definition}
	\label{def:dtIx}
	Let $\dtSig$ be a datatype signature and $\nextM$ be a $\dtSig$-interpretation.
	For every $x \in \sigma^\nextM$ for some $\sigma \in \sorts{\dtSig}$, we
	define the {\em minimum index of $x$}, denoted $\dtIx{x}$, as follows:
	\[
		\dtIx{x} = \min\{i \in \mathbb{N} \mid x \in T_{\sigma, i}(\consig{\dtSig}{\dtSig}, B)\}
	\]
	where $B$ is the $\elemsorts$-sorted set with $B_\tau = \tau^\nextM$ for every $\tau \in \elemsorts$.
\end{definition}

The underlying signature and interpretation in \Cref{def:dtIx} will always be clear from context.

We will make use of the following lemma concerning $\dtIx{\cdot}$ and we include its routine proof for completeness.

\begin{lemma}
	\label{lem:dtIx-depth-children}
	Let $\dtSig$ be a datatype signature and $\nextM$ a $\dtSig$-interpretation.
	If $x = c(x_1,\dots,x_r)\in\sigma^\nextM$ for some constructor
	$c:\sigma_1\times\cdots\times\sigma_r\to\sigma\in\constructors_{\dtSig}$, then
	\[
		\dtIx{x} \;=\; 1 + \max\{\dtIx{x_1},\dots,\dtIx{x_r}\}.
	\]
\end{lemma}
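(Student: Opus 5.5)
I will prove the two inequalities $\dtIx{x}\le 1+\max_j \dtIx{x_j}$ and $\dtIx{x}\ge 1+\max_j\dtIx{x_j}$ separately, working directly from \Cref{def:trees} and \Cref{def:dtIx}. Throughout, write $B$ for the $\elemsorts$-sorted set with $B_\tau=\tau^\nextM$, abbreviate $T_{\rho,i}:=T_{\rho,i}(\consig{\dtSig}{\dtSig},B)$, and set $m:=\max\{\dtIx{x_1},\dots,\dtIx{x_r}\}$. Note that $x\in\sigma^\nextM=T_\sigma$ and each $x_j\in T_{\sigma_j}$, so all the minima involved are over nonempty sets and hence well defined.

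The first step is monotonicity: $T_{\rho,i}\subseteq T_{\rho,i+1}$ for every sort $\rho$. This is immediate from \Cref{def:trees}, since each stage is defined as the previous stage union new terms. Consequently, each $x_j$ lies in $T_{\sigma_j,m}$. By the inductive clause of \Cref{def:trees}, $c(x_1,\dots,x_r)\in T_{\sigma,m+1}$. This gives $\dtIx{x}\le m+1$.

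For the reverse inequality, let $k=\dtIx{x}$. Since $\sigma$ is the target sort of a constructor, $\sigma\in\structsorts$, so $T_{\sigma,0}=\emptyset$ and therefore $k\ge 1$. By minimality, $x\in T_{\sigma,k}\setminus T_{\sigma,k-1}$. Unfolding the definition of $T_{\sigma,k}$, the term $x$ must then be of the form $d(s_1,\dots,s_p)$ with each $s_l\in T_{\sigma'_l,k-1}$. The main (though mild) obstacle is to identify this decomposition with the given one, i.e.\ to show $d=c$, $p=r$ and $s_l=x_l$. This holds because trees are syntactic terms: the top symbol and the argument list of a term are uniquely determined, and the element sorts are assumed flat and fresh, as in the proof of \Cref{lem:infinitely-more-elements}, so no element of $B$ can be confused with a constructor application. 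Hence $x_j\in T_{\sigma_j,k-1}$ for every $j$, which gives $\dtIx{x_j}\le k-1$ and so $m\le k-1$.

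Combining the two inequalities yields $\dtIx{x}=1+m$, as required. The only point needing care is the unique-readability step in the reverse direction; everything else is a direct unfolding of the definitions.
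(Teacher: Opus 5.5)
Your proof is correct and follows the same route as the paper's. The paper also gets $\dtIx{x}\ge 1$ from $T_{\sigma,0}=\emptyset$, then reads off $x_i\in T_{\sigma_i,k}$ from $x\in T_{\sigma,k+1}\setminus T_{\sigma,k}$ to bound the maximum, and finally lifts each $x_i$ to stage $\max_i\dtIx{x_i}$ to get the other inequality; it leaves implicit the monotonicity and unique-decomposition steps that you make explicit.

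One small remark: you do not need the ``flat and fresh'' assumption from the proof of \Cref{lem:infinitely-more-elements}, which is a local w.l.o.g.\ there and would not cover the arbitrary interpretation in this lemma. Unique readability already holds because every element of $T_\sigma$ for a struct sort $\sigma$ is a formal constructor application, so its top symbol and argument list are determined by the term itself.
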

\begin{proof}
	Let's assume towards a contradiction that $\dtIx{x} = 0$. Since $\sigma \in \structsorts_{\dtSig}$, $T_{\sigma, 0}(\consig{\dtSig}{\dtSig}, B) = \emptyset$, in contradiction.

	Hence, $\dtIx{x} \geq 1$. Let's denote $\dtIx{x}$ as $k+1$ for some $k \in \mathbb{N}$ and $\hat{k} = \max\{\dtIx{x_1}, \ldots, \dtIx{x_r}\}$.
	By definition of $\dtIx{\cdot}$,
	$x \in T_{\sigma, k+1}(\consig{\dtSig}{\dtSig}, B) \setminus T_{\sigma, k}(\consig{\dtSig}{\dtSig}, B)$.

	Note that, by definition,
	\begin{align*}
		T_{\sigma, k+1}(\consig{\dtSig}{\dtSig}, B) \;=\; &
		T_{\sigma, k}(\consig{\dtSig}{\dtSig}, B) \;\cup                                            \\[4pt]
		                                                  & \Bigl\{\, c(t_1, \ldots, t_k) \;\Big|\;
		c : \sigma_1 \times \cdots \times \sigma_k \to \sigma
		\in \constructors_{\dtSig},                                                                 \\
		                                                  & \hspace{3.5cm}
		t_i \in T_{\sigma_i, k}(\consig{\dtSig}{\dtSig}, B)
		\Bigr\}.
	\end{align*}

	Since $x \notin T_{\sigma, k}(\consig{\dtSig}{\dtSig}, B)$, $x_i \in T_{\sigma_i, k}(\consig{\dtSig}{\dtSig}, B)$ for every $i \in [1,r]$. Since $\dtIx{x_i}$ is the minimum index $j$ such that $x_i \in T_{\sigma_i, j}(\consig{\dtSig}{\dtSig}, B)$ and $x_i \in T_{\sigma_i, k}(\consig{\dtSig}{\dtSig}, B)$,
	$\dtIx{x_i} \leq k$. Thus, by definition of $\hat{k}$, $\hat{k} \leq k$.

	Since $x_i \in T_{\sigma_i, \dtIx{x_i}}(\consig{\dtSig}{\dtSig}, B)$ for every $i \in [1,r]$,
	we know that $x_i \in T_{\sigma_i, \hat{k}}(\consig{\dtSig}{\dtSig}, B)$ for every $i \in [1,r]$.
	Hence, $x \in T_{\sigma, \hat{k} + 1}(\consig{\dtSig}{\dtSig}, B)$ and $k+1 =\dtIx{x} \leq 1 + \hat{k}$.

	Thus $k = \hat{k}$ and $\dtIx{x} = 1 + \max\{\dtIx{x_1}, \ldots, \dtIx{x_r}\}$.
\end{proof}

\begin{lemma}
	\label{lem:infinite-depth-eventually-cyclic}
	Let $\sigma\in\sorts{\Sigma}$ be \emph{eventually cyclic}.
	Then $|\sigma^{\model}| = \infty$.
\end{lemma}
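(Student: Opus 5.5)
The plan is to reduce to cyclic sorts and then, for a cyclic sort, build an infinite sequence of pairwise distinct elements by going around the cycle over and over. Let $\sigma$ be eventually cyclic. By \Cref{def:cyclic-sorts} there is a path from $\sigma$ to a cyclic sort $\tau$, so \Cref{lem:cardinality-partially-ordered} gives $|\sigma^\model|\geq|\tau^\model|$. It therefore suffices to show $|\tau^\model|=\infty$ for every cyclic $\tau$.

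Fix a cyclic $\tau$ and a cycle $\tau=\rho_0\to\rho_1\to\dots\to\rho_l=\tau$ with $l\geq 1$. For each edge, the proof of \Cref{lem:cardinality-partially-ordered} gives an injection $f_i:\rho_{i+1}^\model\to\rho_i^\model$. There are two kinds.
\begin{enumerate}
\item For a constructor edge, $f_i(x)=c(u_1,\dots,x,\dots,u_k)$, with the other arguments fixed; these exist by \Cref{lem:non-empty}.
\item For an array edge, $f_i(x)=\constArr{x}$.
\end{enumerate}
Let $F=f_0\circ\cdots\circ f_{l-1}:\tau^\model\to\tau^\model$, which is injective. By \Cref{lem:non-empty} we can pick some $x_0\in\tau^\model$ and set $x_{k+1}=F(x_k)$. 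The goal is to show that the $x_k$ are pairwise distinct. For this it is enough to find a measure $\mu$ into a well-ordered set such that $\mu(f_i(x))>\mu(x)$ for every edge. Then $\mu(x_{k+1})>\mu(x_k)$ for all $k$, and the $x_k$ are distinct.

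The measure will be lexicographic, $\mu(y)=(s(y),d(y))$. The first component is the stage $s(y)=\min\{i\mid y\in\subdomain{\rho}{i}\}$ for $y\in\rho^\model$ (it exists since $\rho^\model=\bigcup_i\subdomain{\rho}{i}$). The second component is, for a struct-sort element $y$, its depth $\dtIx{y}$ inside $T_\rho(\consig{\Sigma},B^{s(y)-1})$ in the sense of \Cref{def:dtIx}, and $0$ for all other elements. The two edge kinds are handled as follows.
\begin{enumerate}
\item Array edge. If $x\in\subdomain{E_j}{i}$, then $\constArr{x}\in\almostC{\subdomain{I_j}{i}}{\subdomain{E_j}{i}}=\subdomain{\arrType_j}{i+1}$. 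Conversely, $\constArr{x}\in\subdomain{\arrType_j}{m+1}$ forces $x\in\subdomain{E_j}{m}$. Hence $s(\constArr{x})=s(x)+1$, which is a strict increase.
\item Constructor edge. The term $c(\dots,x,\dots)$ first appears in $T_\rho(\consig{\Sigma},B^{i})$ at a stage $i+1$ with $x$ available at stage $\leq i+1$. Array arguments count as element sorts of $\Sigma_{n+1}$ and so sit in $B^i$, while struct arguments lie in the same tree set. So the stage does not decrease. When the stage is equal, \Cref{lem:dtIx-depth-children} gives that the tree depth strictly increases. Monotonicity of stages (\Cref{lem:subset-by-index}) together with \Cref{lem:elemsorts-set-containment} lets us compare trees built over $B^{i}$ and $B^{i'}$.
\end{enumerate}

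The main obstacle is exactly this constructor case. One has to show carefully that the stage of $c(\dots,x,\dots)$ is at least the stage of $x$, and that at equal stages the depths are computed over the same base set $B^{i}$. The delicate point is that $\subdomain{\rho}{i+1}$ also contains $\subdomain{\rho}{i}$, so minimality of the stage has to be used to rule out a constructor term appearing at an earlier stage than its struct arguments. If this bookkeeping becomes awkward, the fallback is to show that $F^k(x)\neq x$ for every $k\geq1$ by a direct height argument, using \Cref{lem:dtIx-depth-children} for the constructor edges. Together with injectivity of $F$, this already gives an infinite orbit, and hence $|\tau^\model|=\infty$.
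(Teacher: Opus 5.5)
Your proposal is correct, and its skeleton matches the paper's: reduce to a sort on a cycle via \Cref{lem:cardinality-partially-ordered}, iterate the constructor/constant-array injections around the cycle (your $F$ is a packaged form of the paper's inductive construction of the $x_i$), and separate the iterates by a monotone measure. The difference is in the distinctness argument.

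The paper splits into two cases. If every edge of the cycle is a constructor edge, it uses $\dtIx{\cdot}$ over the full base $B$, which increases at each step by \Cref{lem:dtIx-depth-children}. Otherwise it uses only the stage $\minIx{\cdot}$, which does not decrease along constructor edges (\Cref{lem:minIx-constructor-application}) and strictly increases along array edges (\Cref{lem:minIx-select}), so it grows once per traversal. Your lexicographic pair $(s,d)$ fuses the two into a single measure that increases strictly at every edge, so no case split is needed.

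The ``delicate point'' you flag is exactly \Cref{lem:minIx-constructor-application}. If $s(y)=t$, then $y\notin\subdomain{\rho}{t-1}$ forces $y\in T_\rho(\consig{\Sigma}{\Sigma_{n+1}},B^{t-1})$. Hence a struct-sort child has stage at most $t$, and an element-sort (in particular array-sort) child has stage at most $t-1$. With that, your case analysis goes through as written.

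You could also simplify the second component. Struct-sort domains are the tree sets over the full $B$ (\Cref{lem:domain-struct-intuition}), so you can take $d$ to be the global $\dtIx{\cdot}$ (and $0$ off struct sorts). This already increases strictly at every constructor edge, and array edges are covered by the strict stage increase. So the lexicographic pair $(\minIx{\cdot},\dtIx{\cdot})$ works without comparing trees built over different bases $B^{t-1}$.
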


\begin{proof}

	Because $\sigma$ is eventually cyclic, there is a path
	\[
		\sigma=\sigma_{k+\ell}\;\to\;\sigma_{k+\ell-1}\;\to\;\cdots\;\to\;\sigma_\ell
	\]
	ending at a node $\sigma_\ell$ that belongs to a directed cycle of some length $\ell$:
	\[
		\sigma_0\;\to\;\sigma_{1}\;\to\;\cdots\;\to\;\sigma_{\ell }
		\quad(\sigma_{\ell}=\sigma_0).
	\]

	Due to \Cref{lem:cardinality-partially-ordered}, it is sufficient to show that $|\sigma_\ell^{\model}| = \infty$ in order to prove that $|\sigma^{\model}| = \infty$.

	Consider the sorts \(\sigma_0, \sigma_1, \ldots, \sigma_{\ell}\) that form a directed cycle in \(\sigG{\Sigma}\).

	Let there be $m\in\mathbb{N}$, traversing the cycle $m$ additional times yields a path of length $M:=m\ell$.

	We use induction over $i \in [0,M]$ to define $x_{i} \in \sigma_{i}^{\model}$.

	\textbf{Base Case:} Since all domains are not empty, there exists $x_0 \in \sigma_0^{\model}$.

	\textbf{Inductive Step:} We define $x_{i+1} \in \sigma_{i+1}^{\model}$ depending on the edge $(\sigma_i, \sigma_{i+1})$ in $\sigG{\Sigma}$.

	The edge $(\sigma_{i+1},\sigma_{i})$ occurs in the graph $\sigG{\Sigma}$ because of one of two options (Since $\arrays \cap \structsorts = \emptyset$, only one of the two cases can hold):
	\begin{itemize}
		\item \emph{Constructor edge:}
		      there exists a constructor
		      \(c:\tau_1\times\!\cdots\!\times\tau_k\to\sigma_{i+1}\)
		      with $\tau_j=\sigma_{i}$ for some $1\leq j\leq k$.
		      As $\tau_1^\model, \ldots, \tau_k^\model$ are all non-empty, we can choose arbitrary $u_r \in \tau_r^\model$ for $r \in [1,k] \setminus \{j\}$.
		      We define:
		      \[
			      x_{i+1}:=c(u_1,\dots,u_{j-1},x_{i},u_{j+1},\dots,u_k) \in
			      T_{\sigma_{i+1}}(\consig{\Sigma},B) =
			      \sigma_{i+1}^{\model}
		      \]
		      Where the equality holds due to \Cref{lem:domain-struct-intuition} as $\sigma_{i+1} \in \structsorts$.
		\item \emph{Array edge:}
		      the edge $(\sigma_{i+1},\sigma_{i})$ occurs in the graph because there exists some $1\leq j \leq n$ such that $\sigma_{i+1}=\arrType_j$, $\sigma_{i}=E_j$.
		      We define $x_{i+1}$ as the constant array from $I_j^\model$ to $E_j^\model$ whose image is the singleton $\{x_{i}\}$.
	\end{itemize}

	We now show that $\{x_{j\ell}\mid j\in [0,M]\}$ are distinct.
	Consider two possibilities. If all the edges are constructor edges then we make use of $\dtIx{\cdot}$. Otherwise (i.e. at least one edge is an array edge) we make use of $\minIx{\cdot}$ (see \Cref{def:dtIx,not:height}).
	\begin{itemize}
		\item If all edges $(\sigma_0, \sigma_1), \ldots, (\sigma_{\ell - 1},\sigma_{\ell})$ are constructor edges, then by \Cref{lem:dtIx-depth-children} we have $\dtIx{x_{i+1}}>\dtIx{x_i}$ for every $i \in [0,M-1]$. Consequently, each full traversal of the cycle (of length $\ell$) strictly increases the minimum index, i.e.
		      \(
		      \forall k\in [0,M-1].\quad \dtIx{x_{(k+1)\ell}}>\dtIx{x_{k\ell}}.
		      \)
		      Therefore the elements $\{x_{k\ell}\mid 0\le k<m\}$ are distinct.
		\item Suppose some edge in the cycle is an array edge. Fix any traversal index \(k\in\{0,\dots,m-1\}\) and consider the \((k\ell)\)-th segment. For the specific position \(j\in[0,\ell-1]\) where \((\sigma_{j},\sigma_{j+1})\) is an array edge, \Cref{lem:minIx-select} gives a strict increase
		      \(
		      \minIx{x_{k\ell+j+1}}>\minIx{x_{k\ell+j}}.
		      \)
		      For every other step in the same traversal a constructor edge does not decrease the minimum index by \Cref{lem:minIx-constructor-application},
		      \(
		      \minIx{x_{k\ell+i+1}}\ge\minIx{x_{k\ell+i}},
		      \)
		      and an array edge again yields a strict increase by \Cref{lem:minIx-select}. Hence each full cycle raises the minimum index strictly, i.e.
		      \(
		      \forall k\in[0,M-1].\qquad \minIx{x_{(k+1)\ell}}>\minIx{x_{k\ell}}.
		      \)
		      It follows that
		      \(
		      \minIx{x_{m\ell}}>\minIx{x_{(m-1)\ell}}>\cdots>\minIx{x_0},
		      \)
		      so the elements \(x_{0},x_{\ell},\dots,x_{(m-1)\ell}\) have pairwise distinct \(\minIx{\cdot}\) values and therefore are distinct.
	\end{itemize}

	Since $m$ can be chosen arbitrarily large, we conclude that $|\sigma_\ell^{\model}| = \infty$.
\end{proof}

Now we can prove the intended lemma:

\begin{lemma}
	\label{lem:infinite-Ei}
	For every \(i\in [1,n]\) the domain \(E_i^\model\) is infinite.
\end{lemma}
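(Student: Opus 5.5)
The plan is a case split on the nature of $E_i$, using the sort graph $\sigG{\Sigma}$, \Cref{lem:cardinality-partially-ordered}, \Cref{lem:infinite-depth-eventually-cyclic}, and the stable-infiniteness hypothesis of \Cref{thm:preprocessing-correctness-2} for the hard case.

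\emph{Case 1: $E_i\notin\structsorts_{\Sigma_{n+1}}\cup\arrays_\Sigma$.} By \Cref{tab:comp-domains}, $E_i^\model=\trsig{E_i}^{\newM}$. By \Cref{fig:dt-sig-trans}, $\trsig{E_i}$ is neither in $\structsorts_{\Sigma^\formulaProof_{n+1}}$ nor an array sort of $\newSig{\Sigma}{\formulaProof}$. We may assume $\newM$ satisfies the conditions of \Cref{lem:infinitely-more-elements}, so its first item gives $|\trsig{E_i}^{\newM}|=\infty$ immediately.

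\emph{Case 2: $E_i$ is eventually cyclic in $\sigG{\Sigma}$.} Then \Cref{lem:infinite-depth-eventually-cyclic} applies directly. This covers every situation in which some path from $E_i$ returns to a cycle, whether through arrays or through constructors.

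\emph{Case 3: $E_i\in\structsorts_{\Sigma_{n+1}}\cup\arrays_\Sigma$ and $E_i$ is not eventually cyclic.} This is the main obstacle.

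\begin{itemize}
\item \emph{Reduction to an acyclic part.} Suppose towards a contradiction that $E_i^\model$ is finite. By \Cref{lem:cardinality-partially-ordered}, every sort reachable from $E_i$ in $\sigG{\Sigma}$ has a finite domain in $\model$. By Case 1, none of these sorts lies outside $\structsorts_{\Sigma_{n+1}}\cup\arrays_\Sigma$, and since $E_i$ is not eventually cyclic, the reachable subgraph is acyclic. Consequently every constructor below $E_i$ has only struct or array argument sorts.
\item \emph{Reachable arrays.} For a reachable array sort $\arrType_k$, \Cref{lem:domain-arr-intuition} gives $\arrType_k^\model=\almostC{I_k^\model}{E_k^\model}$. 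Here $I_k^\model$ is infinite (Case 1 applied to $I_k$). Hence $\arrType_k^\model$ is finite only if $|E_k^\model|=1$.
\item \emph{Finiteness of the whole reachable part.} By induction along the acyclic reachable subgraph, every reachable sort then has a finite domain determined purely by the constructor shape.
\item \emph{Contradiction with stable infiniteness.} $E_i$ is a shared sort of $T_{n+1}$. By stable infiniteness, applied to a trivially satisfiable formula such as $x=x$ with $x$ of sort $E_i$, some datatypes $\Sigma_{n+1}$-structure interprets $E_i$ as an infinite set. In that structure $E_i$'s trees are generated, via finitely many non-recursive constructor layers, from the element-sort domains reachable through constructor edges. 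These element sorts are exactly the array sorts met in the reachable subgraph. So some reachable array sort must be infinite there.
\end{itemize}

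I would try to push this last point into a contradiction in $\model$ by arguing that such an array sort $\arrType_k$ is itself a shared sort and tracing why $E_k$ must be infinite. If this does not close, I would fall back to showing that stable infiniteness forces the reachable constructor shape to contain a non-struct, non-array element sort, contradicting Case 1. I expect this bookkeeping to be the delicate part of the proof.
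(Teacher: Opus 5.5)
\textbf{There is a genuine gap, at the step you flag as delicate.} Your Cases 1 and 2, and the reduction at the start of Case 3, agree with the paper: everything reachable from $E_i$ is finite in $\model$, so it lies in $\structsorts_{\Sigma_{n+1}}\cup\arrays_\Sigma$, and the reachable subgraph is acyclic. The final step, however, cannot close as you set it up.

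Stable infiniteness of $T_{n+1}$ applied at $E_i$ only gives some datatypes $\Sigma_{n+1}$-structure in which $E_i$ is infinite. When an array sort $\arrType_k$ is reachable from $E_i$, that sort is an \emph{element} sort of $\Sigma_{n+1}$. So $T_{n+1}$ puts no constraint on it, and the witness structure can simply interpret $\arrType_k$ as an infinite set. Nothing about $\model$ follows, so you get no contradiction at $E_i$. Your fallback fails for the same reason: stable infiniteness at $E_i$ only forces \emph{some} element sort below $E_i$, which may be an array sort. There is also a smaller problem: if $E_i$ is itself some $\arrType_k$, it need not be a sort of $\Sigma_{n+1}$, so ``$E_i$ is a shared sort of $T_{n+1}$'' is unjustified.

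\textbf{The missing idea is a descent to a bottom element sort.} Because the subgraph reachable from $E_i$ is a finite DAG, there is some $E_{i_1}$ reachable from $E_i$ (possibly $E_i$ itself) from which no other sort among $E_1,\dots,E_n$ is reachable. You can find it by repeatedly moving to a reachable $E_j$; acyclicity prevents revisiting, so this stops.

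\begin{itemize}
\item No array sort is reachable from $E_{i_1}$. A reachable $\arrType_k$ would make $E_k$ reachable, and $E_k\neq E_{i_1}$ since $E_i$ is not eventually cyclic.
\item Every sort reachable from $E_{i_1}$ is finite in $\model$, because it is also reachable from $E_i$.
\item Non-struct, non-array sorts are infinite in $\model$, so every sort reachable from $E_{i_1}$, including $E_{i_1}$, lies in $\structsorts_{\Sigma_{n+1}}$.
\item Hence $E_{i_1}\in\sorts{\Sigma_{n+1}}\cap\sorts{\Sigma_{i_1}}$ is a genuine shared sort.
\item Its trees are built only from constructors with struct argument sorts. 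By induction on tree depth, every $T_{n+1}$-interpretation gives $E_{i_1}$ the same set of element-free trees, namely the finite set $E_{i_1}^{\model}$.
\end{itemize}

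This contradicts stable infiniteness. Your observation that reachable arrays force $|E_k^{\model}|=1$ is correct but not needed once the descent is made. What closes the proof is this choice of an array-free bottom sort, not a finer analysis at $E_i$.
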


\begin{proof}
	Assume, for the sake of contradiction, that there exists an index
	\(i_0\in [1,n]\) such that
	\(|E_{i_0}^\model|<\infty\).
	\paragraph{\textbf{Step 0. $\forall \sigma \in \sorts{\Sigma} \setminus (\arrays\cup\structsorts). |\sigma^\model| = \infty$.}} By construction,
	\(
	\sigma^\model
	=\trsig{\sigma}^{\newM},
	\)
	and we assume w.l.o.g. according to~\Cref{lem:infinitely-more-elements} that $\trsig{\sigma}^{\newM}$ is infinite.
	\paragraph{\textbf{Step 1. Single Signature and cyclicity.}}
	If the sort \(E_{i_0}\) does not belong to any other signature other than $\Sigma_{i_0}$, as all the sorts of $\Sigma_{i_0}$ are unique,
	\(E_{i_0}\in\sorts{\Sigma}\setminus(\arrays\cup\structsorts)\) and thus $|\sigma^\model|=\infty$ according to Step 0, contradicting our assumption.
	Hence \(E_{i_0}\) belongs to at least one other signature.

	\medskip

	Consider \(\sigG{\Sigma}=(V,E)\) as defined in
	\Cref{def:cyclic-sorts}.
	If \(E_{i_0}\) were \emph{eventually cyclic} in \(\sigG{\Sigma}\),
	\Cref{lem:infinite-depth-eventually-cyclic} would entail
	\(|E_{i_0}^{\model}|=\infty\).
	Therefore \(E_{i_0}\) is \emph{not} eventually cyclic.

	\paragraph{\textbf{Step 2. The reachable sub‑graph \(\sigG{\Sigma}'\) is a tree.}}
	Let
	\(\sigG{\Sigma}'=(V',E')\)
	be the sub‑graph of \(\sigG{\Sigma}\) that contains every sort reachable from
	\(E_{i_0}\).
	Because \(E_{i_0}\) is not eventually cyclic, \(\sigG{\Sigma}'\) is a DAG.
	Moreover, by \Cref{lem:cardinality-partially-ordered},
	every sort \(\sigma\in V'\) satisfies \(|\sigma^\model|<\infty\).

	\paragraph{\textbf{Step 3. Pick $E_{i_1}$.}}
	Note that there exists some $E_{i_1}$ from which no other sort in $\{E_i\mid i\in [1,n]\}$ is reachable.
	Assume towards a contradiction that there is no such sort, then from $E_{i_0}$ we can reach some $E^2 \in \{E_i\mid i\in [1,n]\}$, and from $E^2$ we can reach some $E^3 \in \{E_i\mid i\in [1,n]\}$ and so on.
	Thus when we look at $E^1 = E_{i_0}, E^2, \ldots, E^{n+1}$ the pigeonhole principle suggests that there must be a sort appearing twice, as there are only $n$ elements in $\{E_i\mid i\in [1,n]\}$, which contradicts the acyclicity of $\sigG{\Sigma}'$.

	We hereby denote as $E_{i_1}$ a sort in $V'$ from which no other sort in $\{E_i\mid i\in [1,n]\}$ is reachable.

	Let
	\(\sigG{\Sigma}''=(V'',E'')\)
	be the sub‑graph of \(\sigG{\Sigma}'\) that contains every sort reachable from
	\(E_{i_1}\).
	As a subgraph of \(\sigG{\Sigma}'\), \(\sigG{\Sigma}''\) is itself a DAG.

	\paragraph{\textbf{Step 4. $V'' \subseteq \structsorts$.}}

	Note that:
	\begin{itemize}
		\item Since $V'' \subseteq V'$ every sort \(\sigma\in V''\) satisfies \(|\sigma^\model|<\infty\).
		\item By definition of $E_{i_1}$, no other sort in $\{E_i \mid i\in [1,n]\}$ except $E_{i_1}$ itself belongs to $V''$.
	\end{itemize}
	Let there be $\sigma \in V''$. Since
	$V'' \subseteq \sorts{\Sigma}$ and $\sorts{\Sigma} = \structsorts_{\Sigma_{n+1}} \uplus \arrays_\Sigma \uplus (\sorts{\Sigma} \setminus (\arrays_\Sigma \cup \structsorts_{\Sigma_{n+1}}))$,
	$\sigma$ belongs to one of the three disjoint sets. ($\arrays_\Sigma$ and $\structsorts_{\Sigma_{n+1}}$ are disjoint by \Cref{sec:theory:ndt-sig}).
	\begin{enumerate}
		\item If \(\sigma \in \sorts{\Sigma}\setminus(\arrays_\Sigma\cup\structsorts_{\Sigma_{n+1}})\),
		      step 0 suggests that every such sort has an infinite domain, contradicting the fact that every sort in \(V''\) has a finite domain.
		\item If $\sigma = \arrType_k$ for some $k \in [1,n]$,
		      then its element sort \(E_k\) would be reachable from \(\arrType_k\) inside \(\sigG{\Sigma}''\),
		      contradicting the definition of $E_{i_1}$.
	\end{enumerate}
	Hence $\sigma \in \structsorts_{\Sigma_{n+1}}$ and thus \(V''\subseteq\structsorts_{\Sigma_{n+1}}\).

	\paragraph{\textbf{Step 5. The sorts in $V''$ must be interpreted as in \(\model\).}}

	For every \(T_{n+1}\)-interpretation \(\nextM\) and every
	\(\sigma\in V''\) we have \(\sigma^{\nextM} = \sigma^\model\).
	\smallskip\noindent

	In order to prove this claim, we first remember that according to \Cref{lem:domain-struct-intuition}, for every \(\sigma\in\structsorts\), $\sigma^\model = T_{\sigma}(\consig{\Sigma},B)$ where \(B\) is the \(\elemsorts\)-sorted set with \(B_\tau=\tau^\model\) for every \(\tau\in\elemsorts\).

	Since $\nextM$ is a \(T_{n+1}\)-interpretation, it follows that for every \(\sigma\in\structsorts\), $\sigma^{\nextM} = T_{\sigma}(\consig{\Sigma},\widehat{B})$ where \(\widehat{B}\) is the \(\elemsorts\)-sorted set with \(\widehat{B}_\tau=\tau^{\nextM}\) for every \(\tau\in\elemsorts\).

	We use induction to show that $\forall k \in \mathbb{N}.\forall \sigma \in V''. T_{\sigma, k}(\consig{\Sigma},B) = T_{\sigma, k}(\consig{\Sigma},\widehat{B})$
	(In other words: the elements of $B$ and $\widehat{B}$ do not occur in any of the terms of $T_{\sigma, k}(\consig{\Sigma},B)$ or $T_{\sigma, k}(\consig{\Sigma},\widehat{B})$ for every $\sigma \in V''$).

	\begin{enumerate}
		\item \textbf{Base case:} \(k=0\). Then
		      \(
		      T_{\sigma,0}(\consig{\Sigma},B) = \emptyset
		      = T_{\sigma,0}(\consig{\Sigma},\widehat{B}),
		      \)
		      since \(\sigma \in V'' \subseteq \structsorts\).

		\item \textbf{Step:} Assume
		      \(
		      T_{\tau, k}(\consig{\Sigma},B)
		      = T_{\tau, k}(\consig{\Sigma},\widehat{B})
		      \)
		      for every \(\tau \in V''\). Then, let \(\sigma \in V''\):
		      \(
		      T_{\sigma, k+1}(\consig{\Sigma},B)
		      = T_{\sigma, k}(\consig{\Sigma},B)
		      \cup
		      \bigl\{\, c(t_1,\ldots,t_m)
		      \;\bigm|\;
		      c:\tau_1 \times \cdots \times \tau_m \to \sigma \in \constructors_{\Sigma_{n+1}},\;
		      t_i \in T_{\tau_i, k}(\consig{\Sigma},B) \,\bigr\}.
		      \)
		      Since \(\tau_1, \ldots, \tau_m \in V''\), by the IH we have \(T_{\tau_i, k}(\consig{\Sigma},B) = T_{\tau_i, k}(\consig{\Sigma},\widehat{B})\) for every \(i \in [1,m]\).
		      Thus,
		      \(
		      T_{\sigma, k+1}(\consig{\Sigma},B)
		      = T_{\sigma, k+1}(\consig{\Sigma},\widehat{B}).
		      \)

		      Thus, $\sigma^\nextM = T_{\sigma}(\consig{\Sigma},\widehat{B}) = \cup_{i \in \mathbb{N}} T_{\sigma, i}(\consig{\Sigma},\widehat{B}) = \cup_{i \in \mathbb{N}} T_{\sigma, i}(\consig{\Sigma},B) = T_{\sigma}(\consig{\Sigma},B) = \sigma^\model$.
	\end{enumerate}

	\paragraph{\textbf{Step 6. Final contradiction.}}
	Since for every $T_{n+1}$-interpretation $\nextM$, $E_{i_1}^\nextM$ = $E_{i_1}^\model$ and $E_{i_1}^\model$ is finite, we get that $T_{n+1}$ is not stably infinite over $E_{i_1}$ even though $E_{i_1} \in \sorts{\Sigma_{n+1}} \cap \sorts{\Sigma_{i_1}} \subseteq \sorts{\Sigma_{n+1}} \cap \big(\sorts{\Sigma_1} \cup \ldots \cup \sorts{\Sigma_n}\big)$ contradicting the assumption in \Cref{thm:preprocessing-correctness-2}.
\end{proof}

\subsection{Defining the Interpretation of Variables}
\label{app:sec:comp-interpretation-vars}
We now define the interpretation of variables in $\model$.
For that, we first introduce several helper definitions.
$\intTerms{}$ (\Cref{def:translation-domain}) collects the interpretations of variables in $\newM$ and closes
it under the sub-term relation.
Then, we define functions $\tail$ (\Cref{def:tail}) that restricts the domains of constant arrays, and use it
to construct, for each $1\leq i\leq n$, a function $\assignTail{i}$ over subsets of $\intTerms{}$ (\Cref{def:comp-assign-tail}).
Finally, these are used to map $\intTerms{}$ to elements in the domains of $\model$ (\Cref{def:comp-translation-vars}), which
allows us to assign a value to each variable (\Cref{def:vars-interpretation}).

We start with the definition of $\intTerms{}$.
\begin{definition}
	\label{def:translation-domain}
	For each $i$, let $\intTerms{i}$ be defined as follows.
	\[
		\begin{aligned}
			\intTerms{0}   & = \bigl\{\,x^\newM \;\bigm|\; x\in\fv{\trsig{\sigma}}{\trF(\formulaProof)},\;\sigma\in\sorts{\Sigma}\bigr\}, \\[0.8em]
			\intTerms{i+1} & = \intTerms{i}
			\;\cup\;
			\Bigl\{\,t_j\;\Bigm|\;
			\begin{aligned}
				c \in\constructors_{\Sigma^\formulaProof_{n+1}},
				c(t_1,\dots,t_k)\in \intTerms{i}, j \in[1,k] \\
			\end{aligned}
			\Bigr\}
		\end{aligned}
	\]
	Then,
	$ \intTerms{} = \bigcup_{i\in\mathbb{N}} \intTerms{i}$.
\end{definition}

\begin{example}
	\label{ex:translation-domain}
	Consider the interpretation $\newMC$ in \Cref{tab:InterpretationM} and set $\formulaProof $ to be $\formulaExSat$ from \Cref{tab:tndtsatex}.
	Then the set \(\intTerms{0}\) consists of the following
	elements:
	$\trsig{\container}(3, {\arrCons{1}}(\gamma, \trsig{\emp}))$,
	${\arrCons{1}}(\gamma, \trsig{\container}(3, {\arrCons{1}}(\gamma, \trsig{\emp})))$, $1$ and
	$\trsig{\container}(1,{\arrCons{1}}(\gamma, \trsig{\container}(3, {\arrCons{1}}(\gamma, \trsig{\emp}))))$.
	Then the set \(\intTerms{1}\) adds
	$3$,
	${\arrCons{1}}(\gamma, \trsig{\emp})$ and
	$\gamma$
	to the elements of $\intTerms{0}$.
	The set \(\intTerms{} = \intTerms{2}\) is obtained
	by adding $\trsig{\emp}$.

\end{example}

\begin{lemma}
	\label{comp:lem:S-is-finite}
	$\intTerms{}$ is finite.
\end{lemma}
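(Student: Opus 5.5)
The plan is to show that $\intTerms{0}$ is finite and that every element of $\intTerms{0}$ contributes only finitely many elements to $\intTerms{}$. The finiteness of $\intTerms{0}$ is immediate. $\trF(\formulaProof)$ is a finite conjunction: $\tra(\formulaProof)$ together with the finite set of lemmas $\reduced{\formulaProof}$. Hence it has finitely many free variables, and $\intTerms{0}$, being the image of a finite set of variables under $\newM$, is finite.

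Next I would bound what the closure adds, by induction on $\dtIx{\cdot}$ from \Cref{def:dtIx}, taken with respect to the datatypes signature $\Sigma^{\formulaProof}_{n+1}$ (\Cref{lem:well-foundedness-n+1}). For an element $t$, let $\mathrm{sub}(t)$ be the set of elements reachable from $t$ by repeatedly taking constructor arguments, including $t$ itself. I claim $\mathrm{sub}(t)$ is finite for every $t$ in the domain of a sort of $\newM$.

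\begin{itemize}
\item If $t$ is of an element sort, or of a sort outside $\Sigma^{\formulaProof}_{n+1}$, then $t$ is not of the form $c(t_1,\dots,t_k)$, so $\mathrm{sub}(t)=\{t\}$. Here I rely on the w.l.o.g. assumption of \Cref{lem:infinitely-more-elements} that these domains are flat.
\item If $t$ is of a struct sort of $\Sigma^{\formulaProof}_{n+1}$, then by \Cref{datatypes-axioms} it is a tree $c(t_1,\dots,t_k)$. By \Cref{lem:dtIx-depth-children}, each $t_j$ satisfies $\dtIx{t_j}<\dtIx{t}$. By induction $\mathrm{sub}(t)=\{t\}\cup\bigcup_j \mathrm{sub}(t_j)$ is a finite union of finite sets.
\end{itemize}

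A trivial induction on $i$ then shows $\intTerms{i}\subseteq \bigcup_{s\in\intTerms{0}}\mathrm{sub}(s)$. Therefore $\intTerms{}$ is contained in a finite union of finite sets, and is finite.

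The main obstacle is that elements of $\trarr{\arrType_i}$ are functions rather than trees. One must check that the closure in \Cref{def:translation-domain} only descends through constructors $c\in\constructors_{\Sigma^\formulaProof_{n+1}}$ and never through array values. This holds by the definition, and array-sorted values are simply leaves. The only other care needed is that $\dtIx{\cdot}$ is well defined, i.e.\ that every struct-sorted element lies in some finite level $T_{\sigma,i}$. This is exactly how the tree domains are built in \Cref{def:trees}.
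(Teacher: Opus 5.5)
Your proof is correct, but it is organized differently from the paper's.

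\textbf{The paper's route.} The paper works with the chain $\intTerms{0}\subseteq\intTerms{1}\subseteq\cdots$ directly, using two inductions.
\begin{enumerate}
\item By induction on $i$, each $\intTerms{i}$ is finite.
\item Let $\hat{l}$ be the largest value of $\dtIx{t}$ over elements $t\in\intTerms{0}$ whose sort belongs to $\Sigma^{\formulaProof}_{n+1}$. Every element added at stage $i+1$ has index at most $\hat{l}-i-1$. So the chain stops growing by stage $\hat{l}$, and $\intTerms{}=\intTerms{\hat{l}}$.
\end{enumerate}
The second induction is essential there, since finiteness of each stage alone says nothing about the union.

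\textbf{Your route.} You prove a stage-independent fact: every element has only finitely many descendants. This follows by well-founded induction on $\dtIx{\cdot}$ via \Cref{lem:dtIx-depth-children}. Finiteness then comes from the containment $\intTerms{}\subseteq\bigcup_{s\in\intTerms{0}}\mathrm{sub}(s)$, and you never need to know when the chain stabilizes.

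\textbf{Comparison.} The paper's route yields an explicit stabilization index. Yours is shorter and more modular. It also states openly the flatness of non-struct domains, which the paper uses only tacitly when it concludes that index-$0$ elements contribute nothing new.

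The array-valued ``obstacle'' you flag is actually moot. $\intTerms{0}$ collects only values of $\trsig{\sigma}$-sorted variables. No constructor of $\Sigma^{\formulaProof}_{n+1}$ takes an argument of sort $\trarr{\arrType_i}$. So function values never enter $\intTerms{}$ at all.
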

\begin{proof}
	First, we use induction to show that $\forall i \in \mathbb{N}. \intTerms{i}$ is finite.
	\begin{itemize}
		\item \textbf{Base case:} For $i=0$, $\intTerms{0}$ is finite by definition, as it contains only interpretations of free variables in $\trF(\formulaProof)$, which are finitely many.
		\item \textbf{Inductive step:} Assume $\intTerms{i}$ is finite. We show that $\intTerms{i+1}$ is finite.
		      By definition, $\intTerms{i+1} = \intTerms{i} \cup \{t_j \mid c : \sigma_1 \times \cdots \times \sigma_k \to \sigma \in \constructors_{\Sigma^\formulaProof_{n+1}}, c(t_1, \ldots, t_k) \in \intTerms{i}, j\in[1,k]\}$.
		      Since $\intTerms{i}$ is finite, the set $\{t_j \mid c : \sigma_1 \times \cdots \times \sigma_k \to \sigma \in \constructors_{\Sigma^\formulaProof_{n+1}}, c(t_1, \ldots, t_k) \in \intTerms{i}, j\in[1,k]\}$ is also finite, as it contains only finitely many elements for each term in $\intTerms{i}$.
	\end{itemize}

	Now, we show that there is some $\hat{l} \in \mathbb{N}$ such that $\intTerms{\hat{l}} = \intTerms{}$ and hence $\intTerms{}$ is finite.

	We define $\hat{l} = \max\{\dtIx{t} \mid t \in \intTerms{0}, t \in \sigma^\newM, \sigma \in \sorts{\Sigma_{n+1}^\formulaProof}\}$.

	We use induction on $i\in \{0, \ldots, \hat{l} - 1\}$ to show that $\forall s \in \intTerms{i+1} \setminus \intTerms{i}. \dtIx{s} \leq\hat{l} - i - 1$.

	\begin{description}
		\item[Base case \(i=0\).] We want to show that $\forall s \in \intTerms{1} \setminus \intTerms{0}. \dtIx{s} \leq\hat{l} - 1$.

		      Since $s \in \intTerms{1} \setminus \intTerms{0}$, there is some $x \in \intTerms{0}$ such that $x = c(x_1, \ldots, x_k)$ for some constructor $c : \sigma_1 \times \cdots \times \sigma_k \to \sigma$ and $s = x_j$ for some $j \in [1,k]$.
		      Thus, according to \Cref{lem:dtIx-depth-children} ,$\hat{l} \geq \dtIx{x} = 1 + \max\{\dtIx{x_1}, \ldots, \dtIx{x_k}\} \geq 1 + \dtIx{s}$, and therefore $\dtIx{s} \leq \hat{l} - 1$.
		\item[Inductive step.] Assume the claim holds for $i$, we now show that it holds for $i+1$.
		      Let $t \in \intTerms{i+2} \setminus \intTerms{i+1}$ Then there is some $x \in \intTerms{i+1}$ such that $x = c(x_1, \ldots, x_k)$ for some constructor $c : \sigma_1 \times \cdots \times \sigma_k \to \sigma$ and $t = x_j$ for some $j \in [1,k]$.
		      Assume towards a contradiction that $x \in \intTerms{i}$, then $t \in \intTerms{i+1}$, in contradiction.

		      Hence, $x \in \intTerms{i+1} \setminus \intTerms{i}$, and by the inductive hypothesis, $\dtIx{x} \leq \hat{l} - i - 1$.
		      Due to \Cref{lem:dtIx-depth-children}, $\dtIx{t} + 1 \leq \dtIx{x}$, thus $\dtIx{t} \leq \hat{l} - (i + 1) - 1$.
	\end{description}

	Let there be some $s \in \intTerms{\hat{l}} \setminus \intTerms{\hat{l} - 1}$.
	By the above, $\dtIx{s} \leq \hat{l} - \hat{l} = 0$.
	By definition of $\dtIx{\cdot}$, $s \in T_{\sigma, 0}(\consig{\Sigma^\formulaProof_{n+1}}{\Sigma^\formulaProof_{n+1}}, B)$ for some $\sigma \in \sorts{\dtSig}$.
	Since $T_{\sigma, 0}(\consig{\Sigma^\formulaProof_{n+1}}{\Sigma^\formulaProof_{n+1}}, B) \neq \emptyset$, we know that $\sigma \in \elemsorts_{\dtSig}$.

	That implies that no new elements can be added in $\intTerms{\hat{l}+1}$, hence $\intTerms{\hat{l}+1} =\intTerms{\hat{l}}$, and therefore $\intTerms{\hat{l}} = \intTerms{}$.
\end{proof}

We partition the elements of $\intTerms{} \cap \bigcup_{i\in [1,n]} \trsig{\arrType_i}^\newM$ into two disjoint sets: in one, we collect those elements that are the interpretations
of array variables that occur in store literals, and in the other,
the remaining elements reside.

\begin{definition}
	\label{def:direct-translation}
	The set $\storeElem{\formulaProof}{\newM}$ consists of the elements $s\in\intTerms{}\cap\bigcup_{i\in [1,n]}\trsig{\arrType_i}^{\newM}$ for which there exist some $a,b,i,x \in\fv{}{\formulaProof}$
	such that either
	$s=\trsig{a}^\newM$ or $s=\trsig{b}^{\newM}$,
	and
	$a=\store{\arrType_j}(b,i,x)$ is a literal
	of $\formulaProof$.

	The set
	$\nonStoreElem{\formulaProof}{\newM}$
	is its complement, that is,
	$\nonStoreElem{\formulaProof}{\newM} =\big( \intTerms{} \cap \bigcup_{i\in [1,n]} \trsig{\arrType_i}^\newM \big)\setminus \storeElem{\formulaProof}{\newM}$.
\end{definition}

Note that by definition, $\storeElem{\formulaProof}{\newM} \cup \nonStoreElem{\formulaProof}{\newM} = \intTerms{} \cap \bigcup_{i\in [1,n]} \trsig{\arrType_i}^\newM$ and $\storeElem{\formulaProof}{\newM} \cap \nonStoreElem{\formulaProof}{\newM} = \emptyset$.

\begin{example}
	\label{ex:store-nonstore}
	Consider the formula $\formulaExSat$ from \Cref{tab:tndtsatex}, and its translation $\trF(\formulaExSat)$, also from \Cref{tab:tndtsatex}.
	Consider as well the interpretation $\newM$ in \Cref{tab:exCompleteness}.
	We have a single array sort, $\arrType$. Thus:
	$\intTerms{} \cap \trsig{\arrType}^\newM = \{ {\arrCons{1}}(\gamma, \trsig{\emp}), {\arrCons{1}}(\gamma, \trsig{\container}(3, {\arrCons{1}}(\gamma, \trsig{\emp}))) \}$.

	Moreover, $\storeElem{\formulaExSat}{\newM} = \emptyset$ and $\nonStoreElem{\formulaExSat}{\newM} = \intTerms{} \cap \trsig{\arrType}^\newM$.
\end{example}

We shall make use of the following function $\tail$.
For each almost constant function $a\in\trarr{\arrType_i}^\newM$,
it removes from its domain the interpretations of variables
of sort $I_i$ that correspond to occurrences under a select or store operation in $\phi$, that is, we remove the interpretations of
$\indexSet{\arrType_i}{\formulaProof}$.
Formally:

\begin{definition}
	\label{def:tail}
	Let $a \in \trarr{\arrType_i}^\newM$ for some $i \in [1,n]$.
	Then we define:
	$\tail(a) = a |_{\trsig{I_i}^\newM \setminus \{\trsig{j}^\newM \mid j \in \indexSet{\arrType_i}{\formulaProof}\} }$.
\end{definition}

We then use $\tail$ in order to prove the existence
of another function $\assignTail{i}$, for each $1\leq i\leq n$.
This function takes an object $s$ from
the datatype
$\trsig{\arrType_i}^{\newM}$, such that
$a$ is also in $S$,
and returns some element in $E_i^\model$.

Recall that $\trsig{\arrType_i}$ is a datatype sort
that is intended to model the original array sort
$\arrType_i$.

The element returned by $\assignTail{i}$ belongs
to the interpretation of the corresponding element sort $E_i$
in $\model$.

We require that $\tail$ assigns two objects
$s_1$ and $s_2$ to the same element, if and only if
the following two conditions are met:
first,
both $s_1$ and $s_2$ are in
$\storeElem{\formulaProof}{\newM}$, that is,
they are interpretations of variables that occur in
a literal with an array store operation;
and second: recall that
$f_{\arrType_i}^\newM:\trtype{\arrType_i}^{\newM}\to\trarr{\arrType_i}^{\newM}$,
and thus it assigns $s_1$ and $s_2$ to two almost constant
functions (see \Cref{app:sec:comp-intuitive-model}).
These functions must agree on the restriction given by $\tail$.

The purpose of $\assignTail{i}$ is to ensure that if an array is obtained from another array
by a single store operation in the formula, then they will have the same $\tail$.

\begin{lemma}
	\label{def:comp-assign-tail}
	For every $i \in [1,n]$, there exists a function $\assignTail{i}: \intTerms{}
		\cap \trsig{\arrType_i}^\newM \to E_i^\model$, such that for every $s_1,
		s_2 \in \intTerms{} \cap \trsig{\arrType_i}^\newM$:
	\[
		\assignTail{i}(s_1) = \assignTail{i}(s_2)
		\Leftrightarrow
		\big( s_1, s_2 \in \storeElem{\formulaProof}{\newM} \wedge \tail({f_{\arrType_i}^\newM}(s_1)) = \tail({f_{\arrType_i}^\newM}(s_2)) \big)
	\]
\end{lemma}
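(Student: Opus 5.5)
The plan is to build $\assignTail{i}$ by hand, using only two facts: the domain $\intTerms{} \cap \trsig{\arrType_i}^\newM$ is finite, and the target $E_i^\model$ is infinite. Fix $i\in[1,n]$. By \Cref{comp:lem:S-is-finite}, $\intTerms{}$ is finite, so $D_i := \intTerms{}\cap\trsig{\arrType_i}^\newM$ is finite as well. By \Cref{lem:infinite-Ei}, $E_i^\model$ is infinite; this is the only place where the stable infiniteness hypothesis enters. The required condition on $\assignTail{i}$ then becomes a purely combinatorial statement about partitioning $D_i$.

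\textbf{Step 1: define the equivalence relation.} On $D_i$, let $s_1 \equiv s_2$ hold iff either $s_1=s_2$, or
\[
s_1,s_2\in\storeElem{\formulaProof}{\newM}\ \text{ and }\ \tail(f_{\arrType_i}^\newM(s_1))=\tail(f_{\arrType_i}^\newM(s_2)).
\]
Three checks make this an equivalence relation:
\begin{itemize}
\item Reflexivity is built into the definition.
\item Symmetry follows from the symmetry of equality.
\item For transitivity, suppose $s_1\equiv s_2\equiv s_3$ with some of these elements distinct. Then all the elements involved lie in $\storeElem{\formulaProof}{\newM}$, and the equalities of tails chain together. The degenerate cases, where two of the elements coincide, are trivial.
\end{itemize}

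\textbf{Step 2: the reflexive case of the biconditional.} The stated biconditional must also hold when $s_1=s_2$. For $s\in\nonStoreElem{\formulaProof}{\newM}$, the right-hand side is false, yet $\assignTail{i}(s)=\assignTail{i}(s)$ trivially. So the statement has to be read for distinct $s_1,s_2$, or, equivalently, with the right-hand side disjoined with $s_1=s_2$. I will make this reading explicit at the start of the proof. Otherwise the statement is false whenever $\nonStoreElem{\formulaProof}{\newM}\cap D_i\neq\emptyset$.

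\textbf{Step 3: choose representatives.} Since $D_i$ is finite, $D_i/{\equiv}$ has finitely many classes, say $C_1,\dots,C_r$. Because $E_i^\model$ is infinite, I can pick $r$ pairwise distinct elements $e_1,\dots,e_r\in E_i^\model$ and set $\assignTail{i}(s)=e_k$ for $s\in C_k$. Finiteness of $D_i$ is all that is needed to find an injection from the quotient into $E_i^\model$. The infiniteness of $E_i^\model$ is presumably stated because later steps want these values chosen to avoid certain finitely many elements, but that is not required here.

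\textbf{Step 4: verify the biconditional.} $\assignTail{i}(s_1)=\assignTail{i}(s_2)$ holds iff $s_1$ and $s_2$ lie in the same class, by the injectivity of the choice in Step 3. By the definition of $\equiv$, being in the same class is exactly the right-hand side, modulo the trivial reflexive case discussed in Step 2.

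The only real subtlety is that reflexive case in Step 2. Everything else rests directly on \Cref{comp:lem:S-is-finite} and \Cref{lem:infinite-Ei}.
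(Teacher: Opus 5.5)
Your argument is the paper's: the same relation (pairs of store elements whose $f_{\arrType_i}^\newM$-images have equal $\tail$, together with the diagonal), finiteness via \Cref{comp:lem:S-is-finite}, and an injection of the classes into $E_i^\model$ via \Cref{lem:infinite-Ei}. Your reflexive-case caveat is also correct: the paper's relation contains the diagonal as well, and its later uses (\Cref{lem:comp-translation-injective} and the store case of \Cref{lem:satisfactionbyA}) only need the reading with $s_1\neq s_2$.

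Drop the aside that infiniteness of $E_i^\model$ is ``not required here''; it contradicts your own first paragraph. An injection of the classes needs $|E_i^\model|$ to be at least their number, and that is exactly what stable infiniteness provides. In the appendix's stable-infiniteness counterexample, $E^\model=\{\emp\}$ while $\intTerms{}$ contains two distinct non-store array elements that would need distinct values.
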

\begin{proof}
	Define a relation $R$ on $\intTerms{} \cap \trsig{\arrType_i}^\newM$ by
	\[
		R = \{(s_1, s_2) \mid s_1, s_2 \in \storeElem{\formulaProof}{\newM},\
		\tail(f_{\arrType_i}^\newM(s_1)) = \tail(f_{\arrType_i}^\newM(s_2))\}
		\cup \{(s,s) \mid s \in \intTerms{} \cap \trsig{\arrType_i}^\newM \}.
	\]
	We check that $R$ is an equivalence relation:
	\begin{enumerate}
		\item \textbf{Reflexivity:} For every $s \in \intTerms{} \cap \trsig{\arrType_i}^\newM$, $(s,s) \in R$ by definition.
		\item \textbf{Symmetry:} Assume $(s_1, s_2) \in R$. If $s_1 = s_2$, then $(s_2, s_1) = (s_1, s_2) \in R$. Otherwise, $s_1, s_2 \in \storeElem{\formulaProof}{\newM}$ and $\tail({f_{\arrType_i}^\newM}(s_1)) = \tail({f_{\arrType_i}^\newM}(s_2))$. By symmetry of equality, $\tail({f_{\arrType_i}^\newM}(s_2)) = \tail({f_{\arrType_i}^\newM}(s_1))$, hence $(s_2, s_1) \in R$.
		\item \textbf{Transitivity:} Assume $(s_1, s_2), (s_2, s_3) \in R$. If $s_1 = s_2$ or $s_2 = s_3$, then by substitution we obtain $(s_1, s_3) \in R$. Otherwise, $s_1, s_2, s_3 \in \storeElem{\formulaProof}{\newM}$ and $\tail({f_{\arrType_i}^\newM}(s_1)) = \tail({f_{\arrType_i}^\newM}(s_2))$ and $\tail({f_{\arrType_i}^\newM}(s_2)) = \tail({f_{\arrType_i}^\newM}(s_3))$. By transitivity of equality, $\tail({f_{\arrType_i}^\newM}(s_1)) = \tail({f_{\arrType_i}^\newM}(s_3))$, hence $(s_1, s_3) \in R$.
	\end{enumerate}

	Thus $R$ partitions $\intTerms{} \cap \trsig{\arrType_i}^\newM$ into equivalence classes.
	Since this set is finite (\Cref{comp:lem:S-is-finite}) and $E_i^\model$ is infinite (\Cref{lem:infinite-Ei}), there exists an injection $f_{\text{class}}$ from the classes of $R$ into $E_i^\model$.

	For $s \in \intTerms{} \cap \trsig{\arrType_i}^\newM$, let $[s]_R$ be its equivalence class.
	Define
	\(
	\assignTail{i}(s) := f_{\text{class}}([s]_R).
	\)
	This satisfies the required property.
\end{proof}

\begin{example}
	\label{ex:comp:assign-tail}
	Consider the formula $\formulaExSat$ from \Cref{tab:tndtsatex}, and its translation $\trF(\formulaExSat)$, also from \Cref{tab:tndtsatex}.
	Consider as well the interpretation $\newMC$ in \Cref{tab:InterpretationM}.
	As mentioned in \Cref{ex:store-nonstore}, $\storeElem{\formulaExSat}{\newM} = \emptyset$ and $\nonStoreElem{\formulaExSat}{\newM} = \intTerms{} \cap \trsig{\arrType}^\newM = \{\arrCons{1}(\gamma, \trsig{\emp}), \arrCons{1}(\gamma, \trsig{\container}(3,\arrCons{1}(\gamma,\trsig{\emp})))\}$.
	Thus, $\assignTail{1}$ can be any injection from $\intTerms{} \cap \trsig{\arrType}^\newM$ to ${E^\model}$.
	For example, we can define it as follows:
	\begin{itemize}
		\item $\assignTail{1}(\arrCons{1}(\gamma, \trsig{\emp})) = \emp$.
		\item $\assignTail{1}(\arrCons{1}(\gamma, \trsig{\container}(3,\arrCons{1}(\gamma,\trsig{\emp})))) = \container(3, \constArr{\emp})$.
	\end{itemize}
\end{example}

\begin{definition}
	\label{def:associated-indices}
	For each $i \in [1,n]$ and $s \in \intTerms{} \cap \trsig{\arrType_i}^\newM$,
	define the set of associated indices as
	\begin{align*}
		\associatedIndices{s}{i}
		 & = \bigl\{\, j \in \indexSet{\arrType_i}{\formulaProof} \;\bigm|\;
		\exists a\in\fv{\arrType_i}{\formulaProof}.\;
		\exists x\in\fv{\trsig{E_i}}{\trF(\formulaProof)}.\;
		\trsig{a}^{\newM}=s \land {}                                         \\
		 & \qquad\qquad\qquad\qquad
		x=\select{\trarr{\arrType_i}}(\trarr{a},\trsig{j})\in\trF(\formulaProof)
		\bigr\}.
	\end{align*}
\end{definition}

In the following lemma we clarify the connection between $\indexSet{\arrType_i}{\formulaProof}$
and $\associatedIndices{s}{i}$.

\begin{lemma}
	\label{rem:associatedIndices-and-indexset}
	For every $i \in [1,n]$ it holds that $\indexSet{\arrType_i}{\formulaProof}=\bigcup_{s \in \intTerms{} \cap \trsig{\arrType_i}^\newM} \associatedIndices{s}{i} $.
\end{lemma}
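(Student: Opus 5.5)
The plan is to prove the two inclusions separately. The inclusion $\supseteq$ is immediate from \Cref{def:associated-indices}. Every $\associatedIndices{s}{i}$ is by definition a set of $j\in\indexSet{\arrType_i}{\formulaProof}$ satisfying an extra condition, so each one is a subset of $\indexSet{\arrType_i}{\formulaProof}$, and hence so is their union.

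For $\subseteq$, I fix $j\in\indexSet{\arrType_i}{\formulaProof}$. By \Cref{def:free-vars-array-set}, some literal of $\formulaProof$ has the form $x=\select{\arrType_i}(a,j)$ or $b=\store{\arrType_i}(a,j,v)$. Here $a$ (and $b$) are variables of sort $\arrType_i$. This includes the case where the literal was added w.l.o.g.\ to make $\indexCar{\arrType_i}{\formulaProof}\geq 1$, since it is a select literal. In both cases I take $s:=\trsig{a}^{\newM}$, and I need two facts.
\begin{enumerate}
\item First, $s\in\intTerms{}\cap\trsig{\arrType_i}^{\newM}$. The variable $\trsig{a}$ has sort $\trsig{\arrType_i}$, so $s\in\trsig{\arrType_i}^{\newM}$. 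It occurs free in $\trF(\formulaProof)$, for instance through the literal $\trsig{a}=g_{\arrType_i}(\trarr{a})$ in $\Lthree{\formulaProof}$. Hence $s\in\intTerms{0}\subseteq\intTerms{}$.
\item Second, $\trF(\formulaProof)$ contains a literal $y=\select{\trarr{\arrType_i}}(\trarr{a},\trsig{j})$ with $y\in\fv{\trsig{E_i}}{\trF(\formulaProof)}$.
\end{enumerate}

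For the second fact I split on the shape of the literal. If it is $x=\select{\arrType_i}(a,j)$, then \Cref{tab:literal-transformations} gives $\tra$ of it as $\trsig{x}=\select{\trarr{\arrType_i}}(\trarr{a},\trsig{j})$, which is a conjunct of $\trF(\formulaProof)$. The variable $\trsig{x}$ has sort $\trsig{E_i}$ because $x$ has sort $E_i$. If the literal is the store $b=\store{\arrType_i}(a,j,v)$, there is no select in its translation, so I use the lemmas instead. Since $j\in\indexSet{\arrType_i}{\formulaProof}$, the set $\Ltwo{\formulaProof}$ contains $\Lzero{a}{j}{\formulaProof}$. That set includes $\tempvar{a}{k}=\select{\trarr{\arrType_i}}(\trarr{a},\trsig{j})$ with $k=\indOrder{\arrType_i}{\formulaProof}(j)$, and $\tempvar{a}{k}$ is of sort $\trsig{E_i}$ by \Cref{tab:new-vars}. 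This witnesses $j\in\associatedIndices{s}{i}$.

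The only delicate point is the store case: the witness literal comes from $\Ltwo{\formulaProof}$ rather than from $\tra$. I also need to check that $\Ltwo{\formulaProof}$ applies $\LzeroF$ to the array $a$ under the store with precisely this index $j$. It does, because $\Ltwo{\formulaProof}$ ranges over all indices in $\indexSet{\arrType}{\formulaProof}$, not only the stored one. With that, $j\in\associatedIndices{s}{i}$ in both cases, which gives the inclusion.
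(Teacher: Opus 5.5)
Your proof is correct and essentially matches the paper's. For $\subseteq$ you use the same case split: a select literal is handled by its translation, and a store literal by $\Ltwo{\formulaProof}$. You also make explicit, via $\Lthree{\formulaProof}$, that $\trsig{a}^{\newM}\in\intTerms{}\cap\trsig{\arrType_i}^{\newM}$, which the paper leaves implicit. For $\supseteq$ the paper does a case analysis on the origin of the select literal, whereas you rightly note the inclusion is immediate because \Cref{def:associated-indices} already restricts $j$ to $\indexSet{\arrType_i}{\formulaProof}$.
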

\begin{proof}
	\textbf{($\subseteq$):} Let $j \in \indexSet{\arrType_i}{\formulaProof}$.
	Then there exists some $a \in \fv{\arrType_i}{\formulaProof}$ such that one of the following holds:
	\begin{itemize}
		\item There exists some $x \in \fv{\trsig{E_i}}{\trF(\formulaProof)}$ such that $x = \select{\arrType_i}(a, j) \in \formulaProof$.
		      In this case, $\select{\trarr{\arrType_i}}(\trarr{a}, \trsig{j}) = \trsig{x} \in \trF(\formulaProof)$.
		\item There exists some $x \in \fv{\trsig{E_i}}{\trF(\formulaProof)}$ and $b \in \fv{\arrType_i}{\formulaProof}$ such that $b = \store{\arrType_i}(a, j, x) \in \formulaProof$.
		      In this case, according to $\Ltwo{\formulaProof}$, $\tempvar{a}{\indOrder{\arrType_j}{\formulaProof}(i)} = \select{\trarr{\arrType_j}}(\trarr{a}, \trsig{i}) \in \trF(\formulaProof)$.
	\end{itemize}

	Hence, in both cases, $j \in \associatedIndices{\trsig{a}^\newM}{i}$, and thus $j \in \bigcup_{s \in \intTerms{} \cap \trsig{\arrType_i}^\newM} \associatedIndices{s}{i}$.

	\textbf{$\supseteq$:} Let there be $j \in \bigcup_{s \in \intTerms{} \cap \trsig{\arrType_i}^\newM} \associatedIndices{s}{i}$.
	Then there exists some $s \in \intTerms{} \cap \trsig{\arrType_i}^\newM$ such that $j \in \associatedIndices{s}{i}$.
	By definition of $\associatedIndices{s}{i}$, there exists some $a \in \fv{\arrType_i}{\formulaProof}$ such that $\trsig{a}^\newM = s$ and there exists some $x \in \fv{\trsig{E_i}}{\trF(\formulaProof)}$ such that $x = \select{\trarr{\arrType_i}}(\trarr{a}, \trsig{j}) \in \trF(\formulaProof)$.

	There are three cases to consider:
	\begin{itemize}
		\item $x = \select{\trarr{\arrType_i}}(\trarr{a}, \trsig{j}) \in \trF(\formulaProof)$ because it is a translation of a literal of the form $x = \select{\arrType_i}(a, j) \in \formulaProof$. Note, that in that case, there exists some $y \in \fv{E_i}{\formulaProof}$ such that $y = \select{\arrType_i}(a, j) \in \formulaProof$. Therefore, $j \in \indexSet{\arrType_i}{\formulaProof}$.
		      $j \in \indexSet{\arrType_i}{\formulaProof}$.
		\item $x = \select{\trarr{\arrType_i}}(\trarr{a}, \trsig{j}) \in \trF(\formulaProof)$ because of $\Lone{\formulaProof}$. Note, that in that case, there exists some $y \in \fv{E_i}{\formulaProof}$ such that $y = \select{\arrType_i}(a, j) \in \formulaProof$. Therefore, $j \in \indexSet{\arrType_i}{\formulaProof}$.
		\item $x = \select{\trarr{\arrType_i}}(\trarr{a}, \trsig{j}) \in \trF(\formulaProof)$ because of $\Ltwo{\formulaProof}$. Note, that in that case, there exists some $b,c \in \fv{\arrType_i}{\formulaProof}$ and some $y \in \fv{E_i}{\formulaProof}$ such that $b = \store{\arrType_i}(c, j, y) \in \formulaProof$. Therefore, $j \in \indexSet{\arrType_i}{\formulaProof}$.
	\end{itemize}

	Hence, in all cases, $j \in \indexSet{\arrType_i}{\formulaProof}$.
\end{proof}

\begin{example}
	\label{ex:associated-indices}
	Consider the formula $\formulaExSat$ from \Cref{tab:tndtsatex}, and its translation $\trF(\formulaExSat)$, also from \Cref{tab:tndtsatex}.
	Consider as well the interpretation $\newMC$ in \Cref{tab:InterpretationM}.
	Thus:
	\begin{itemize}
		\item $\associatedIndices{(\arrCons{1}(\gamma, \trsig{\emp}))}{1} = \emptyset$.
		\item $\associatedIndices{(\arrCons{1}(\gamma, \trsig{\container}(3,\arrCons{1}(\gamma,\trsig{\emp}))))}{1} = \{j\}$.
	\end{itemize}
\end{example}

Finally, we define a function $\trComp$,
whose goal is to map elements of $S$
(and in particular, interpretations of variables
that appear in $\formulaProof$)
to elements in the domains of $\model$.
For technical reasons, we first define it as
a partial function, but later
(\Cref{lem:comp-translation-total}) prove that it is total;
and, we define its range to be a superset of
the domains of $\model$.
Later on we show that for every $x \in \fv{\sigma}{\formulaProof}$, it hods that $\trComp(\trsig{x}^\newM) \in \sigma^\model$ (also in \Cref{lem:comp-translation-total}).
The definition relies on notation
$\indOrder{}{}$ from \Cref{def:indOrder}.

\begin{definition}
	\label{def:comp-translation-vars}
	We define the partial function \(\trComp\colon \intTerms{} \to \cup\{\sigma^\model \mid \sigma \in \sorts{\Sigma}\} \cup \big(\bigcup_{i \in [1,n]} \idx{\arrType_i}^\newM \big)\) as follows:
	\begin{enumerate}
		\item If $x \in \sigma^\newM$ for some $\sigma \notin \structsorts_{\Sigma^\formulaProof_{n+1}}$
		      and $x \in \cup\{\sigma^\model \mid \sigma \in \sorts{\Sigma}\} \cup \big(\bigcup_{i \in [1,n]} \idx{\arrType_i}^\newM \big)$, then $\trComp(x) = x$.

		\item If $x = \trsig{c}(t_1, \ldots, t_k)$ for some $c \colon \sigma_1 \times \cdots \times \sigma_k \to \sigma \in \constructors_{\Sigma_{n+1}}$,
		      $\trComp$ is defined for $t_1, \ldots, t_k$,
		      and $\forall i \in [1,k]. \trComp(t_i) \in \sigma_i^\model$
		      then
		      $\trComp(x) = c(\trComp(t_1), \ldots, \trComp(t_k))$.

		\item If $x = \arrCons{i}(t_1, \ldots, t_k)$ for some $i \in [1,n]$, $\trComp$ is defined over $t_1, \ldots, t_k$ and
		      $\forall j \in {2,\ldots,k}. \trComp(t_j) \in E_i^\model$,
		      then we define $\trComp(x)$ as follows:
		      \[
			      \forall j \in I_i^\model. \trComp(x)(j) =
			      \begin{cases}
				      \trComp(t_{\min\{\indOrder{}{}(k) \mid k \in \associatedIndices{x}{i}. \trsig{k}^\newM = j \}+1}) & \text{if } \exists k \in \associatedIndices{x}{i}. \trsig{k}^\newM = j \\
				      \assignTail{i}(x)                                                                                 & \text{otherwise}
			      \end{cases}
		      \]
	\end{enumerate}
\end{definition}

\begin{example}
	\label{ex:translation-function}
	Consider the formula \( \formulaExSat\) and its translation in \Cref{tab:tndtsatex} and $\newMC$ as defined in \Cref{tab:InterpretationM}.

	The set $\intTerms{}$ is given in \Cref{ex:translation-domain}, and is repeated here:
	\[
		\intTerms{} =
		\{
		\trsig{\container}(3, {\arrCons{1}}(\gamma, \trsig{\emp})),
		{\arrCons{1}}(\gamma, \trsig{\container}(3, {\arrCons{1}}(\gamma, \trsig{\emp}))),
		\trsig{\container}(1,{\arrCons{1}}(\gamma, \trsig{\container}(3, {\arrCons{1}}(\gamma, \trsig{\emp})))),
		3,
		{\arrCons{1}}(\gamma, \trsig{\emp}),
		\gamma,
		1,
		\trsig{\emp}
		\}
	\]
	Since the computation of $\trComp$ is done recursively, we order the elements of $\intTerms{}$ according to $\dtIx{\cdot}$.

	Note that we use the function $\assignTail{1}$ defined in \Cref{ex:comp:assign-tail} to compute $\trComp$ for the elements in $\intTerms{} \cap \trsig{\arrType}^\newM$.
	First, we note that $\idx{\arrType},\trsig{I} \notin \structsorts_{\Sigma^\formulaExSat_{n+1}}$.
	Thus, for every $x \in \intTerms{}$ such that $x \in {\idx{\arrType}}^\newM$ or $x \in I^\newM$, we have $\trComp(x) = x$.
	\begin{itemize}
		\item $\trComp(3) = 3$ (Note that $3 \in I^\model$)
		\item $\trComp(\gamma) = \gamma$
		\item $\trComp(1) = 1$
	\end{itemize}

	Since $\emp \in \constructors_{\Sigma_{2}}$, we have \(\trComp(\trsig{\emp}) = \emp\).

	Note that by \Cref{def:free-vars-array-set}, $\indexSet{\arrType}{\formulaExSat} = \{j\}$, $\trsig{j}^\newM
		= 1$ and $\indOrder{}{}(j) = 1$.
	Thus, $\min\{\indOrder{}{}(k) \mid k \in \indexSet{\arrType}{\formulaExSat}, \trsig{k}^\newM = 1\} = 1$.
	Hence, according to the definition of $\assignTail{1}$ in \Cref{ex:comp:assign-tail} we have:
	\[
		\trComp({\arrCons{1}}(\gamma, \trsig{\emp}))(i) =
		\begin{cases}
			\trComp(\trsig{\emp}) & \text{if } i = 1 \\
			\emp                  & \text{otherwise}
		\end{cases}
	\]
	So $\trComp({\arrCons{1}}(\gamma, \trsig{\emp})) = \constArr{\emp}$.

	Once again, since $\container \in \constructors_{\Sigma_{2}}$, we have:
	\(\trComp(\trsig{\container}(3, {\arrCons{1}}(\gamma, \trsig{\emp})))\):
	\[
		\trComp(\trsig{\container}(3, {\arrCons{1}}(\gamma, \trsig{\emp})))
		= \container(\trComp(3), \trComp({\arrCons{1}}(\gamma, \trsig{\emp})))
		= \container(3, \constArr{\emp})
	\]

	Similarly, we have:
	\[
		\trComp({\arrCons{1}}(\gamma, \trsig{\container}(3, {\arrCons{1}}(\gamma, \trsig{\emp}))))(i) =
		\begin{cases}
			\trComp(\trsig{\container}(3, {\arrCons{1}}(\gamma, \trsig{\emp}))) & \text{if } i = 1 \\
			\container(3, \constArr{\emp})                                      & \text{otherwise}
		\end{cases}
	\]
	So $\trComp({\arrCons{1}}(\gamma, \trsig{\container}(3, {\arrCons{1}}(\gamma, \trsig{\emp})))) = \constArr{\container(3, \constArr{\emp})}$.

	Finally, since $\container \in \constructors_{\Sigma_{2}}$, we have
	\begin{align*}
		 & \trComp(\trsig{\container}(1,{\arrCons{1}}(\gamma, \trsig{\container}(3, {\arrCons{1}}(\gamma, \trsig{\emp})))))     \\
		 & = \container(\trComp(1), \trComp({\arrCons{1}}(\gamma, \trsig{\container}(3, {\arrCons{1}}(\gamma, \trsig{\emp}))))) \\
		 & = \container(1, \constArr{\container(3, \constArr{\emp})})
	\end{align*}
\end{example}

\begin{lemma}
	\label{lem:comp-translation-total}
	The function $\trComp: \intTerms{} \to \cup\{\sigma^\model \mid \sigma \in \sorts{\Sigma}\} \cup \big(\bigcup_{i \in [1,n]} \idx{\arrType_i}^\newM\big)$ is total and $\forall \sigma \in \sorts{\Sigma}. \forall x \in \intTerms{} \cap \trsig{\sigma}^\newM. \trComp(x) \in \sigma^\model$.
\end{lemma}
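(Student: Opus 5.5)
We prove both claims simultaneously by strong induction on $\dtIx{x}$, where $\dtIx{\cdot}$ is taken with respect to the datatypes signature $\Sigma^\formulaProof_{n+1}$ and the interpretation $\newM$ (\Cref{def:dtIx}). The inductive statement for $k$ is: for every $x\in\intTerms{}$ with $\dtIx{x}\le k$, $\trComp(x)$ is defined; if $x\in\trsig{\sigma}^\newM$ for some $\sigma\in\sorts{\Sigma}$ then $\trComp(x)\in\sigma^\model$; and if $x\in\idx{\arrType_i}^\newM$ then $\trComp(x)=x$. The key structural fact is that $\intTerms{}$ is closed under taking immediate subterms (\Cref{def:translation-domain}). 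Hence whenever $x=c(t_1,\ldots,t_k)\in\intTerms{}$, every $t_j$ also lies in $\intTerms{}$, and by \Cref{lem:dtIx-depth-children} each $t_j$ satisfies $\dtIx{t_j}<\dtIx{x}$. This is what makes the recursive clauses of \Cref{def:comp-translation-vars} applicable.

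\textbf{Base case ($\dtIx{x}=0$).} Here $x$ lies in an element sort of $\Sigma^\formulaProof_{n+1}$, so clause 1 must be checked. There are two subcases.
\begin{itemize}
\item If $x\in\trsig{\sigma}^\newM$ with $\sigma\notin\arrays_\Sigma\cup\structsorts_{\Sigma_{n+1}}$, then by \Cref{tab:comp-domains} we have $\sigma^\model=\bigcup_i\subdomain{\sigma}{i}=\trsig{\sigma}^\newM$. So $x\in\sigma^\model$ and $\trComp(x)=x$.
\item If $x\in\idx{\arrType_i}^\newM$, then $x$ is already in the stated codomain and $\trComp(x)=x$.
\end{itemize}
The same subcases also cover any $x\in\intTerms{}$ lying in a non-struct sort. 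We rely on the standing w.l.o.g.\ assumption that these domains are fresh and flat, so no clash with constructor terms arises.

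\textbf{Inductive step, clause 2.} Let $x=\trsig{c}(t_1,\ldots,t_m)$ with $c\colon\sigma_1\times\cdots\times\sigma_m\to\sigma\in\constructors_{\Sigma_{n+1}}$, so that $t_j\in\trsig{\sigma_j}^\newM$. By the induction hypothesis each $\trComp(t_j)\in\sigma_j^\model$. Clause 2 then gives $\trComp(x)=c(\trComp(t_1),\ldots,\trComp(t_m))$. To see this lies in $\sigma^\model$, use \Cref{lem:domain-struct-intuition}: $\sigma^\model=T_\sigma(\consig{\Sigma},B)$, and the tree set is closed under applying constructors to elements of the argument domains. The argument sorts may be array sorts, element sorts, or struct sorts, but in every case membership in $\sigma_j^\model$ is exactly what the tree construction requires.

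\textbf{Inductive step, clause 3.} Let $x=\arrCons{i}(t_1,t_2,\ldots,t_k)$. Then $t_1\in\idx{\arrType_i}^\newM$, so $\trComp(t_1)=t_1$ is defined. For $j\ge2$ we have $t_j\in\trsig{E_i}^\newM$, so by the induction hypothesis $\trComp(t_j)\in E_i^\model$. Clause 3 then defines a total function $\trComp(x)\colon I_i^\model\to E_i^\model$, as follows.
\begin{itemize}
\item On indices $j$ that equal $\trsig{k}^\newM$ for some $k\in\associatedIndices{x}{i}$, the value is $\trComp$ of a component. The minimum is taken over a nonempty finite set, so the subscript is well defined. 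The index satisfies $\indOrder{}{}(k)+1\in[2,k]$, since $\indOrder{}{}$ maps into $\{1,\ldots,\indexCar{\arrType_i}{\formulaProof}\}$ and the arity of $\arrCons{i}$ is $1+\indexCar{\arrType_i}{\formulaProof}$.
\item Everywhere else the value is the constant $\assignTail{i}(x)\in E_i^\model$ from \Cref{def:comp-assign-tail}. Note that $x\in\intTerms{}\cap\trsig{\arrType_i}^\newM$, so $\assignTail{i}$ applies; this is where \Cref{lem:infinite-Ei} was used.
\end{itemize}
Here we use $I_i^\model=\trsig{I_i}^\newM$, because $I_i$ is not a struct or array sort. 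Since $\associatedIndices{x}{i}$ is finite, $\trComp(x)$ deviates from a constant on only finitely many points. It is therefore almost constant, and by \Cref{lem:domain-arr-intuition} it lies in $\almostC{I_i^\model}{E_i^\model}=\arrType_i^\model$.

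\textbf{Main obstacle.} The delicate point is case disambiguation. We must ensure that each $x$ matches exactly one clause of \Cref{def:comp-translation-vars}, and that the sort bookkeeping $x\in\trsig{\sigma}^\newM\Rightarrow\trComp(x)\in\sigma^\model$ is consistent for element sorts that are also struct sorts in $\Sigma_{n+1}$. For example, if $E_i\in\structsorts_{\Sigma_{n+1}}$ then $\trsig{E_i}$ is a struct sort, and the claim must come through clause 2 rather than clause 1. This is settled by a case split on whether $\trsig{\sigma}\in\structsorts_{\Sigma^\formulaProof_{n+1}}$, together with the fact that the constructors $\trsig{c}$ and $\arrCons{i}$ have disjoint target sorts. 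Finally, totality follows because every element of $\intTerms{}$ has a finite $\dtIx{\cdot}$, so the induction reaches all of $\intTerms{}$.
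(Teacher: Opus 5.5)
Your proof is correct and follows the paper's argument: induction on $\dtIx{\cdot}$, with elements of non-struct sorts (including $\idx{\arrType_i}$) handled by clause~1 via $\sigma^\model=\trsig{\sigma}^\newM$, and the two constructor cases discharged by \Cref{lem:domain-struct-intuition} and \Cref{lem:domain-arr-intuition}. You also spell out points the paper leaves implicit, such as the closure of $\intTerms{}$ under immediate subterms and why $\trComp(x)$ is almost constant. These additions are welcome but do not change the route.
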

\begin{proof}
	First note that for every $\sigma \in \sorts{\newSig{\Sigma}{\formulaProof}} \setminus \structsorts_{\Sigma^\formulaProof_{n+1}}$,
	and every $x \in \sigma^\newM$, there are 2 options:
	\begin{enumerate}
		\item $\sigma = \idx{\arrType_i}$ for some $i \in [1,n]$. In that case $\idx{\arrType_i}^\newM \subset \cup\{\sigma^\model \mid \sigma \in \sorts{\Sigma}\} \cup \big(\bigcup_{i \in [1,n]} \idx{\arrType_i}^\newM\big)$, and thus $\trComp(x) = x$.
		\item $\sigma = \trsig{\tau}$ for some $\tau \in \sorts{\Sigma} \setminus (\structsorts_{\Sigma_{n+1}} \cup \arrays_\Sigma)$. In that case, $\sigma^\newM = \tau^\model$, and thus, according to the first case in \Cref{def:comp-translation-vars}, $\trComp(x) = x \in \tau^\model$.
	\end{enumerate}

	We are left to prove the claim for the interpretations of the sorts of $\structsorts_{\Sigma_{n+1}^\formulaProof}$. We prove the claim by induction on $\dtIx{x}$.
	\begin{description}
		\item[Base case \(\dtIx{x} = 0\).] Since $\forall \sigma \in \structsorts_{\Sigma_{n+1}^\formulaProof}. T_{\sigma, 0}(\consig{\Sigma^\formulaProof_{n+1}}{\Sigma^\formulaProof_{n+1}}, B) = \emptyset$, the claim holds vacuously for all $x$ such that $\dtIx{x} = 0$.
		\item[Inductive step.] Assume the claim holds for all $x$ such that $\dtIx{x} \leq k$ (whether its sort is in $\structsorts_{\Sigma_{n+1}^\formulaProof}$ or not). We show that it holds for $\dtIx{x} = k+1$. Let $x \in \intTerms{}$ such that $\dtIx{x} = k+1$.
		      Then there are two cases:
		      \begin{itemize}
			      \item \(x = \trsig{c}(t_1, \ldots, t_m)\) for some $c : \sigma_1 \times \cdots \times \sigma_k \to \sigma \in \constructors_{\Sigma_{n+1}}$.
			            Note that since $\dtIx{x} = k+1$,
			            \Cref{lem:dtIx-depth-children} implies that $\dtIx{t_i} \leq
				            k$ for all $i \in [1,m]$. Thus, $\forall i\in [1,m]. \trComp(t_i)$ is
			            defined by the inductive hypothesis, and also $\trComp(t_i) \in
				            \sigma_i^\model$ for all $i \in [1,m]$.
			            Thus, $\trComp(x) =
				            c(\trComp(t_1), \ldots, \trComp(t_m)) \in \sigma^\model$.
			      \item \(x = \arrCons{i}(t_1, \ldots, t_k)\) for some $i \in [1,n]$.
			            Note that since $\dtIx{x} = k+1$, \Cref{lem:dtIx-depth-children} suggests that $\dtIx{t_j} \leq k$ for all $j \in [0,k]$. Thus $\forall j \in [0,k].\trComp(t_j)$ is defined and $\forall j \in [2,k]. \trComp(t_j) \in E_i^\model$.
			            Hence, according to the third case in the \Cref{def:comp-translation-vars}, $\trComp(x)$ is defined by the inductive hypothesis, and $\trComp(x) \in \almostC{I_i^\model}{E_i^\model} = \sigma^\model$ (the equality holds by \Cref{lem:domain-arr-intuition}).
		      \end{itemize}
	\end{description}
\end{proof}

\begin{lemma}
	\label{lem:comp-translation-injective}
	The function $\trComp$ is injective.
\end{lemma}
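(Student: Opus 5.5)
We prove injectivity by strong induction on $N:=\max\{\dtIx{x},\dtIx{y}\}$, where $x,y\in\intTerms{}$ and $x\neq y$, with the convention $\dtIx{\cdot}=0$ for elements of sorts outside $\structsorts_{\Sigma^\formulaProof_{n+1}}$.

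\textbf{Reduction to a single sort.} First dispose of pairs of different sorts. By \Cref{lem:comp-translation-total}, $\trComp$ maps $\intTerms{}\cap\trsig{\sigma}^\newM$ into $\sigma^\model$, and it maps $\idx{\arrType_i}^\newM$ identically onto itself. The sorts of $\model$ are pairwise disjoint. We may assume, as in the proof of \Cref{lem:infinitely-more-elements}, that the $\newM$-domains of non-struct sorts are fresh copies disjoint from everything else. Hence images of elements of different sorts are different. It therefore suffices to fix a sort.

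\textbf{Routine cases.} For a non-struct sort, $\trComp$ is the identity, which settles the base case. For $x=\trsig{c}(t_1,\dots,t_m)$ and $y=\trsig{c'}(u_1,\dots,u_{m'})$ there are two subcases. If $c\neq c'$, then $\trComp(x)=c(\dots)$ and $\trComp(y)=c'(\dots)$ are distinct trees. If $c=c'$, some $t_j\neq u_j$. By \Cref{lem:dtIx-depth-children} both have smaller index, so the induction hypothesis gives $\trComp(t_j)\neq\trComp(u_j)$, and hence the trees differ.

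\textbf{The array case.} This is the main obstacle. Let $x=\arrCons{i}(t_1,\dots,t_k)$ and $y=\arrCons{i}(u_1,\dots,u_k)$ with $x\neq y$. The functions $\trComp(x)$ and $\trComp(y)$ agree with $\assignTail{i}(x)$, respectively $\assignTail{i}(y)$, outside the finite set $\{\trsig{j}^\newM\mid j\in\indexSet{\arrType_i}{\formulaProof}\}$. Since $I_i^\model$ is infinite (by \Cref{lem:infinitely-more-elements}), some index lies outside this set. So if $\assignTail{i}(x)\neq\assignTail{i}(y)$ we are done.

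By \Cref{def:comp-assign-tail}, equality of the tail values forces $x,y\in\storeElem{\formulaProof}{\newM}$ and $\tail(f_{\arrType_i}^\newM(x))=\tail(f_{\arrType_i}^\newM(y))$. Being in $\storeElem{\formulaProof}{\newM}$, we have $x=\trsig{a}^\newM$ and $y=\trsig{b}^\newM$ for array variables $a,b$ occurring in store literals. By $\Lthree{\formulaProof}$,
\[
f_{\arrType_i}^\newM(x)=\trarr{a}^\newM,\qquad f_{\arrType_i}^\newM(y)=\trarr{b}^\newM .
\]
If $\trarr{a}^\newM=\trarr{b}^\newM$, then, again by $\Lthree{\formulaProof}$, $x=g_{\arrType_i}^\newM(\trarr{a}^\newM)=g_{\arrType_i}^\newM(\trarr{b}^\newM)=y$, a contradiction. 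Hence $\trarr{a}^\newM\neq\trarr{b}^\newM$. Their tails coincide, and they are genuine functions (by \Cref{lem:infinitely-more-elements}). So they differ at $\trsig{j}^\newM$ for some $j\in\indexSet{\arrType_i}{\formulaProof}$.

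Since $a$ and $b$ occur under store literals, $\Ltwo{\formulaProof}$ contains $\Lzero{a}{j'}{\formulaProof}$ and $\Lzero{b}{j'}{\formulaProof}$ for every $j'\in\indexSet{\arrType_i}{\formulaProof}$. This has two consequences.
\begin{enumerate}
\item $\associatedIndices{x}{i}=\associatedIndices{y}{i}=\indexSet{\arrType_i}{\formulaProof}$.
\item For every such $j'$, $t_{\indOrder{}{}(j')+1}=\trarr{a}^\newM(\trsig{j'}^\newM)$, and similarly for $y$ with $u$ and $\trarr{b}^\newM$.
\end{enumerate}
Let $k_0$ be the index realizing the minimum in \Cref{def:comp-translation-vars} at the point $\trsig{j}^\newM$. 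This minimum is the same for $x$ and $y$, because it depends only on $\indexSet{\arrType_i}{\formulaProof}$ and $\trsig{j}^\newM$. Since $\trsig{k_0}^\newM=\trsig{j}^\newM$, we get
\[
t_{\indOrder{}{}(k_0)+1}=\trarr{a}^\newM(\trsig{j}^\newM)\neq\trarr{b}^\newM(\trsig{j}^\newM)=u_{\indOrder{}{}(k_0)+1}.
\]
These arguments have smaller $\dtIx{\cdot}$ than $x$ and $y$, so the induction hypothesis yields $\trComp(x)(\trsig{j}^\newM)\neq\trComp(y)(\trsig{j}^\newM)$. This completes the induction.
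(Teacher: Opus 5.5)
Your proof is correct and follows essentially the same route as the paper's. Both arguments induct on $\max\{\dtIx{x},\dtIx{y}\}$ and treat the constructor cases the same way. In the array case both split on whether $\assignTail{i}(x)=\assignTail{i}(y)$, use $\Lthree{\formulaProof}$ to get $\trarr{a}^\newM\neq\trarr{b}^\newM$, and then apply $\Ltwo{\formulaProof}$ and the induction hypothesis at the $\indOrder{}{}$-minimal index. You also make explicit two points the paper leaves implicit: that $I_i^\model$ is infinite, so some index avoids the finite set of $\trsig{j}^\newM$, and that $\associatedIndices{x}{i}=\associatedIndices{y}{i}=\indexSet{\arrType_i}{\formulaProof}$, so the minimizing index is the same for $x$ and $y$.
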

\begin{proof}
	Let there be $x,y \in \intTerms{}$ such that $x \neq y$.
	If $x \in \sigma_1^\newM$ and $y \in \sigma_2^\newM$ where $\sigma_1 \neq \sigma_2$, there are two options:
	\begin{itemize}
		\item WLOG, if $\sigma_1 \notin \structsorts_{\Sigma^\formulaProof_{n+1}}$, then $\trComp(x) = x$. Thus the only way to have $\trComp(x) = \trComp(y)$ is if $x = y$, contradicting our assumption.

		\item If $\sigma_1, \sigma_2 \in \structsorts_{\Sigma^\formulaProof_{n+1}}$, then we can denote $\sigma_1 = \trsig{\tau_1}$ and $\sigma_2 = \trsig{\tau_2}$ for some $\tau_1, \tau_2 \in \sorts{\Sigma}$.
		      Since $\trComp(x) \in \tau_1^\model$ and $\trComp(y) \in \tau_2^\model$, we have $\trComp(x) \neq \trComp(y)$.
	\end{itemize}
	Hence, we can assume $x,y \in \sigma^\newM$ for some $\sigma \in \sorts{\newSig{\Sigma}{\formulaProof}}$.

	If $\sigma \notin \structsorts_{\Sigma^\formulaProof_{n+1}}$, then $\trComp(x) = x \neq y = \trComp(y)$.

	We are left to prove the claim where $\sigma \in \structsorts_{\Sigma^\formulaProof_{n+1}}$.

	We prove our claim by induction on the $l = \max\{\dtIx{x}, \dtIx{y}\}$.

	\textbf{Base Case}: If $\dtIx{x} = \dtIx{y} = 0$, since $\sigma^\model = \emptyset$, the claim holds vacuously.

	\textbf{Inductive Case}: Assume the claim holds for all $x,y \in \intTerms{} \cap \tau^\newM$
	such that $\max\{\dtIx{x}, \dtIx{y}\} < k$ for every $\tau \in \structsorts_{\Sigma_{n+1}^\formulaProof}$ . We show that it holds for $\max\{\dtIx{x}, \dtIx{y}\} = k$.
	There are 2 cases to consider:

	\begin{itemize}
		\item If $x, y \in \trsig{\sigma}^\newM$ where $\sigma \in \structsorts_{\Sigma_{n+1}^\formulaProof}$, there are two subcases:
		      \begin{enumerate}
			      \item If $x = \trsig{c}(t_1, \ldots, t_m)$ and $y = \trsig{c}(s_1, \ldots, s_m)$ for some $c : \sigma_1 \times \cdots \times \sigma_k \to \sigma \in \constructors_{\Sigma_{n+1}}$, then by definition there exists some $j \in [1,k]$ such that $t_j \neq s_j$. According to the IH, $\trComp(t_j) \neq \trComp(s_j)$, and thus
			            \(
			            \trComp(x) = c(\trComp(t_1), \ldots, \trComp(t_m)) \neq c(\trComp(s_1), \ldots, \trComp(s_m)) = \trComp(y).
			            \)
			      \item If $x = \trsig{\hat{c}}(t_1, \ldots, t_m)$ and $y = \trsig{\tilde{c}}(s_1, \ldots, s_l)$, then
			            \(
			            \trComp(x) = \hat{c}(\trComp(t_1), \ldots, \trComp(t_m)) \neq \tilde{c}(\trComp(s_1), \ldots, \trComp(s_l)) = \trComp(y),
			            \)
			            since $\hat{c} \neq \tilde{c}$.
		      \end{enumerate}
		\item If $x, y \in \trsig{\arrType_i}^\newM$ for some $i \in [1,n]$, then we can denote $x = \arrCons{i}(t_1, \ldots, t_k)$ and $y = \arrCons{i}(s_1, \ldots, s_l)$.
		      There are 2 subcases:
		      \begin{enumerate}
			      \item If
			            \(
			            \neg \big( x, y \in \storeElem{\formulaProof}{\newM} \wedge \tail({f_{\arrType_i}^\newM}(x)) = \tail({f_{\arrType_i}^\newM}(y)) \big)
			            \)
			            according to \Cref{def:comp-assign-tail}, $\assignTail{i}(x) \neq \assignTail{i}(y)$. Hence, for every $k \in \trsig{I_i} \setminus \{\trsig{l}^\newM \mid l \in \indexSet{\arrType_i}{\formulaProof}\}$ we have that $\trComp(x)(k) = \assignTail{i}(x) \neq \assignTail{i}(y) = \trComp(y)$.

			      \item Otherwise, we have that $ \big( x, y \in \storeElem{\formulaProof}{\newM} \wedge \tail({f_{\arrType_i}^\newM}(x)) = \tail({f_{\arrType_i}^\newM}(y)) \big)$.

			            Since $x,y\in\storeElem{\formulaProof}{\newM}$ there are $a,b\in\fv{\arrType_i}{\formulaProof}$ with $\trsig{a}^\newM=x$ and $\trsig{b}^\newM=y$.

			            $\newM$ satisfies $\Lthree{\formulaProof}$:
			            \begin{itemize}
				            \item $\trsig{a} = g_{\arrType_i}(\trarr{a})$,
				            \item $\trsig{b} = g_{\arrType_i}(\trarr{b})$,
				            \item $\trarr{a} = f_{\arrType_i}(\trsig{a})$,
				            \item $\trarr{b} = f_{\arrType_i}(\trsig{b})$.
			            \end{itemize}
			            Thus, we have $\trarr{a}^\newM = f_{\arrType_i}^\newM(x)$ and $\trarr{b}^\newM = f_{\arrType_i}^\newM(y)$.

			            Suppose towards contradiction that $\trarr{a}^\newM = \trarr{b}^\newM$.

			            Using the first 2 equalities, we have:
			            \(
			            x = \trsig{a}^\newM = g_{\arrType_i}^\newM(\trarr{a}^\newM) = g_{\arrType_i}^\newM(\trarr{b}^\newM) = \trsig{b}^\newM = y,
			            \)
			            contradicting $x \neq y$.

			            Thus, $\trarr{a}^\newM \neq \trarr{b}^\newM$, meaning there exists some $j \in \trsig{I_i}^\newM$ such that
			            \(\trarr{a}^\newM(j) \neq \trarr{b}^\newM(j)\). Since $\tail({f_{\arrType_i}^\newM}(x)) = \tail({f_{\arrType_i}^\newM}(y))$, there exists some $k \in \indexSet{\arrType_i}{\formulaProof}$ such that $\trsig{k}^\newM = j$.

			            WLOG, for every $l \in \indexSet{\arrType_i}{\formulaProof}$ with $\trsig{l}^\newM = j$ we have $\indOrder{\arrType_i}{\formulaProof}(k)\le\indOrder{\arrType_i}{\formulaProof}(l)$.

			            Note, that since $x,y \in \storeElem{\formulaProof}{\newM}$ and according to the definition of $\Ltwo{\formulaProof}$ we know that $k \in \associatedIndices{x}{i}$ and $k \in \associatedIndices{y}{i}$.

			            Since $x, y \in \trsig{\arrType_i}^\newM$, since there is a single constructor whose target sort is $\trsig{\arrType_i}$, we can denote $x = \arrCons{i}(x_1, \ldots, x_m)$ and $y = \arrCons{i}(y_1, \ldots, y_m)$.

			            \begin{alignat*}{2}
				            \trComp(x)(j)
				             & \stackrel{(1)}{=}    & \; \trComp\bigl(x_{\indOrder{}{}(k)+1}\bigr)                                             \\
				             & \stackrel{(2)}{=}    & \; \trComp\bigl(\sel{\arrCons{i}}{\indOrder{}{}(k)+1}(x)\bigr)                           \\
				             & \stackrel{(3)}{=}    & \; \trComp\bigl(\sel{\arrCons{i}}{\indOrder{}{}(k)+1}(\trsig{a}^\newM)\bigr)             \\
				             & \stackrel{(4)}{=}    & \; \trComp\bigl(\select{\trarr{\arrType_i}}^\newM(\trarr{a}^\newM,\trsig{k}^\newM)\bigr) \\
				             & \stackrel{(5)}{=}    & \; \trComp\bigl(\select{\trarr{\arrType_i}}^\newM(\trarr{a}^\newM,j)\bigr)               \\
				             & \stackrel{(6)}{=}    & \; \trComp\bigl(\trarr{a}^\newM(j)\bigr)                                                 \\
				             & \stackrel{(7)}{\neq} & \; \trComp\bigl(\trarr{b}^\newM(j)\bigr)                                                 \\
				             & \stackrel{(8)}{=}    & \; \trComp\bigl(\select{\trarr{\arrType_i}}^\newM(\trarr{b}^\newM,j)\bigr)               \\
				             & \stackrel{(9)}{=}    & \; \trComp\bigl(\select{\trarr{\arrType_i}}^\newM(\trarr{b}^\newM,\trsig{k}^\newM)\bigr) \\
				             & \stackrel{(10)}{=}   & \; \trComp\bigl(\sel{\arrCons{i}}{\indOrder{}{}(k)+1}(\trsig{b}^\newM)\bigr)             \\
				             & \stackrel{(11)}{=}   & \; \trComp\bigl(\sel{\arrCons{i}}{\indOrder{}{}(k)+1}(y)\bigr)                           \\
				             & \stackrel{(12)}{=}   & \; \trComp(y_{\indOrder{}{}(k)+1})                                                       \\
				             & \stackrel{(13)}{=}   & \; \trComp(y)(j)
			            \end{alignat*}

			            and $\trComp(x) \neq \trComp(y)$.
		      \end{enumerate}

		      (1) According to the first subcase in the third case in the definition of $\trComp$ (See \Cref{def:comp-translation-vars})

		      (2) According to the interpretation of selectors in $\newM$ and since $x = \arrCons{i}(x_1, \ldots, x_m)$;

		      (3) Substitute \(x=\trsig{a}^\newM\);

		      (4) Since $k \in \indexSet{\arrType_i}{\formulaProof}$ and $\trsig{a}^\newM = x \in \storeElem{\formulaProof}{\newM}$, since $\newM$ satisfies $\Ltwo{\formulaProof}$, we have that $\sel{\arrCons{i}}{\indOrder{}{}(k)+1}(\trsig{a}^\newM) = \select{\trarr{\arrType_i}}^\newM(\trarr{a}^\newM,\trsig{k}^\newM)$;

		      (5) Substitute \(\trsig{k}^\newM = j\);

		      (6) According to the interpretation of $\select{\trarr{\arrType_i}}$ in $\newM$;

		      (7) Since \(\trarr{a}^\newM(j) \neq \trarr{b}^\newM(j)\), and due to the IH regarding injectivity of \(\trComp\) (Note that \(\dtIx{x_{\indOrder{}{}(k)+1}} < \dtIx{x}\) and \(\dtIx{y_{\indOrder{}{}(k)+1}} < \dtIx{y}\) by \Cref{lem:dtIx-depth-children});

		      (8) - (13) Reverse of (6) - (1).
	\end{itemize}
\end{proof}

\begin{definition}
	\label{def:vars-interpretation}
	For every $\sigma \in \sorts{\Sigma}$, and every \(x \in \fv{\sigma}{\formulaProof}\) its interpretation in $\model$ is defined as follows:
	\[
		x^\model = \trComp(\trsig{x}^\newM).
	\]
\end{definition}

Note that for every $\sigma \in \sorts{\Sigma}$, and every \(x \in \fv{\sigma}{\formulaProof}\), we have that $\trComp(\trsig{x}^\newM)$ is well defined and $\trComp(\trsig{x}^\newM) \in \sigma^\model$
by \Cref{lem:comp-translation-total}.

\subsection{Interpretation of Functions and Predicates}

\begin{table}[t]
	\centering
	\begin{tabular}{|l|l|}
		\hline
		Array operators & Standard read/update operations \\\hline
		Constructors    & Fixed by the theory             \\\hline
		Testers         & Fixed by the theory             \\\hline
		Selectors       &
		\shortstack{
			$
				\sel{c}{i}^{\model}(t)=
				\begin{cases}
					t_i                                                & \text{if }
					t=c(t_1,\dots,t_m)\text{ for some }t_1,\ldots,t_n                            \\[4pt]
					\trComp\!\bigl({\sel{\trsig{c}}{i}^\newM}(x)\bigr) &
					\text{otherwise, if there is } x \in \intTerms{}\text{ with } \trComp(x) = t \\[4pt]
					\defelem{\sigma}                                   & \text{otherwise}
				\end{cases}
			$}                                                \\\hline
	\end{tabular}
	\caption{Interpretation of function and predicate symbols,
		where \(\defelem{\sigma}\) is an arbitrary (but fixed) element of
		\(\sigma^{\model}\).
		Note that the interpretation of selectors is well defined as $\trComp$ is injective
		(\Cref{lem:comp-translation-injective}).
	}
	\label{tab:funpredComp}
\end{table}

\label{app:sec:comp-interpretation-funcs}
The interpretation of function and predicate symbols
in $\newM$ is defined in \Cref{tab:funpredComp} and
described as follows.

\paragraph{Array operations.}
The symbols \(\select{\arrType_{j}}\) and
\(\store{\arrType_{j}}\) are interpreted in the standard way,
as read and update operations over functions.

\paragraph{Datatype symbols.}
The interpretation of
constructors and testers is fixed, as we require
$\model^{\Sigma_{n+1}}$ to be a datatypes structure.
As for selectors, their interpretation is fixed when applied
to terms that are constructed by their corresponding constructors.
When applied to different constructors, we maintain the connection
to $\newM$ as induced by $\trComp$, or set them
arbitrarily if no such connection is induced.

\subsection{$\model$ is a $\tndt{\Sigma}$-interpretation}
\label{app:sec:comp-model-interpretation}

In order to show that $\model$ is a $\tndt{\Sigma}$-interpretation, we need to show that
the relation \(\rel{\model}\) is well-founded.

Recall that for every sort $\sigma$,
$\sigma^\model$ is defined to be
\( \bigcup_{i\in\mathbb{N}} \subdomain{\sigma}{i}
\).
Thus, each element of $\sigma^\model$ belongs
to $\subdomain{\sigma}{i}$ for some minimal $i$.
For the next lemma, as well as for several
subsequent lemmas, it will be useful to
have a special notation for that minimal $i$.

\begin{notation}
	\label{comp:model-height}
	\label{not:height}
	For every $\sigma \in \sorts{\Sigma}$ and every $x \in \sigma^\model$, we define $\minIx{x} = \min\{l \in \mathbb{N} \mid x \in \subdomain{\sigma}{l}\}$.
\end{notation}

In order to show that \(\rel{\model}\) is well-founded, we prove the following lemmas about the function \(\minIx{\cdot}\).

\begin{lemma}
	\label{lem:minIx-constructor-application}
	Let \(x=c(t_1,\dots,t_m)\) for some constructor \(c:\sigma_1\times\cdots\times\sigma_m\to\sigma\in\constructors_{\Sigma_{n+1}}\) and \(t_1\in\sigma_1^\model,\ldots,t_m\in\sigma_m^\model\). Then,
	\(
	\minIx{x} \geq \max\{\minIx{t_1},\ldots,\minIx{t_m}\}
	\).
\end{lemma}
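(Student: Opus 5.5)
The plan is to unfold the definition of $\minIx{x}$ for a struct-sort element, using the iterative construction in \Cref{tab:comp-domains}. Let $l=\minIx{x}$, so $x\in\subdomain{\sigma}{l}$. Since $\sigma\in\structsorts_{\Sigma_{n+1}}$, we have $\subdomain{\sigma}{0}=\emptyset$, hence $l\geq 1$. By minimality of $l$, $x\notin\subdomain{\sigma}{l-1}$, so by the definition $\subdomain{\sigma}{l}=\subdomain{\sigma}{l-1}\cup T_{\sigma}(\consig{\Sigma},B^{l-1})$ we get $x\in T_{\sigma}(\consig{\Sigma},B^{l-1})$, where $B^{l-1}_\tau=\subdomain{\tau}{l-1}$ for $\tau\in\elemsorts$.

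The key step is to show that every argument $t_j$ lies in $\subdomain{\sigma_j}{l}$, which gives $\minIx{t_j}\leq l$ for every $j$ and hence the claim. Since $x=c(t_1,\ldots,t_m)$ is a tree in $T_{\sigma}(\consig{\Sigma},B^{l-1})$ with top constructor $c$, each $t_j$ lies in $T_{\sigma_j}(\consig{\Sigma},B^{l-1})$; this is the usual unique-readability of trees from \Cref{def:trees}. We then split into two cases on the sort $\sigma_j$.

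\begin{itemize}
\item If $\sigma_j\in\elemsorts$ (i.e.\ not a struct sort, which covers array sorts and sorts outside $\arrays_\Sigma\cup\structsorts_{\Sigma_{n+1}}$), then $T_{\sigma_j}(\consig{\Sigma},B^{l-1})=B^{l-1}_{\sigma_j}=\subdomain{\sigma_j}{l-1}\subseteq\subdomain{\sigma_j}{l}$ by \Cref{lem:subset-by-index}. So $\minIx{t_j}\leq l-1$.
\item If $\sigma_j\in\structsorts_{\Sigma_{n+1}}$, then $t_j\in T_{\sigma_j}(\consig{\Sigma},B^{l-1})\subseteq\subdomain{\sigma_j}{l}$ directly by the definition of $\subdomain{\sigma_j}{l}$. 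So $\minIx{t_j}\leq l$.
\end{itemize}

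In both cases $\minIx{t_j}\leq\minIx{x}$, and taking the maximum over $j$ finishes the proof.

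The only delicate point is the unique-readability claim that the subterms of a tree in $T_{\sigma}(\consig{\Sigma},B^{l-1})$ belong to the corresponding $T_{\sigma_j}(\consig{\Sigma},B^{l-1})$. I would justify it by induction on the stage $k$ at which $x$ enters $T_{\sigma,k}$. Elements of $B^{l-1}$ are flat, not constructor applications, so $x$ must come from a constructor application. Terms are syntactic, so that application must be $c$ with exactly the arguments $t_1,\ldots,t_m$, and these were drawn from the $(k-1)$-st stage of the same construction.
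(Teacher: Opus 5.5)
Your proof is correct and follows the same route as the paper: from $\subdomain{\sigma}{0}=\emptyset$ and the minimality of $l=\minIx{x}$ you place $x$ in $T_{\sigma}(\consig{\Sigma}{\Sigma_{n+1}},B^{l-1})$, and then you bound the indices of the arguments. Your case split on the sort of $t_j$ is in fact more careful than the paper's proof, which claims (in your notation) $t_j\in\subdomain{\sigma_j}{l-1}$ for every $j$; that claim fails for struct-sort arguments, because $T_{\sigma_j}(\consig{\Sigma}{\Sigma_{n+1}},B^{l-1})$ is the full tree closure over $B^{l-1}$, so $t_j$ can have the same index as $x$, and your bound $t_j\in\subdomain{\sigma_j}{l}$ is exactly what the non-strict inequality needs.
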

\begin{proof}
	We can denote as \(k = \minIx{x}\) and \(k_j = \minIx{t_j}\) for every \(j \in [1,m]\).

	Since $\sigma \in \structsorts_{\Sigma_{n+1}}$, we know that $\subdomain{\sigma}{0}= \emptyset$ and hence \(k \geq 1\).
	By minimality of \(k\), we know that \(x \in \subdomain{\sigma}{k}\setminus \subdomain{\sigma}{k - 1}\).
	Since \(\subdomain{\sigma}{k}=\subdomain{\sigma}{k-1}\cup T_{\sigma}(\consig{\Sigma},B^{k-1})\), we have that \(x \in T_{\sigma}(\consig{\Sigma},B^{k-1})\). Hence, by the definition of \(T_{\sigma}(\consig{\Sigma},B^{k-1})\), we have that for every \(j \in [1,m]\), \(t_j \in \subdomain{\sigma_j}{k-1}\). Therefore, by the definition of \(\minIx{t_j}\), we have that \(k_j \leq k - 1 < k\) for every \(j \in [1,m]\), and hence \(
	\max\{\minIx{t_1},\ldots,\minIx{t_m}\} = \max\{k_1,\ldots,k_m\} \leq k
	\).
\end{proof}

\begin{lemma}
	\label{lem:minIx-select}
	Let \(x \in \arrType_i^\model\) for some \(i \in [1,n]\) and \(j \in I_i^\model\). Then,
	\(
	\minIx{\select{\arrType_i}^\model(x,j)} < \minIx{x}
	\).
\end{lemma}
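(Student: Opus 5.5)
The plan is a direct argument from the layered construction of the domains in \Cref{tab:comp-domains}. First I fix notation. Let $k=\minIx{x}$, so $x\in\subdomain{\arrType_i}{k}$ and $x\notin\subdomain{\arrType_i}{k-1}$. Since $\subdomain{\arrType_i}{0}=\emptyset$, we get $k\geq 1$. By definition, $\subdomain{\arrType_i}{k}=\almostC{\subdomain{I_i}{k-1}}{\subdomain{E_i}{k-1}}$, so $x$ is an almost constant function whose values all lie in $\subdomain{E_i}{k-1}$.

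Next I identify $\select{\arrType_i}^\model(x,j)$ with $x(j)$. This holds because array operations in $\model$ are interpreted as read and update on functions (\Cref{tab:funpredComp}). I also need $j$ to lie in the domain of $x$. Since $I_i$ is not in $\arrays_\Sigma\cup\structsorts_{\Sigma_{n+1}}$, \Cref{tab:comp-domains} gives $\subdomain{I_i}{l}=\trsig{I_i}^\newM=I_i^\model$ for every $l$. So $j\in\subdomain{I_i}{k-1}$, which is exactly the domain of $x$.

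Then $x(j)\in\subdomain{E_i}{k-1}$, and by the definition of $\minIx{\cdot}$ as a minimum we get $\minIx{x(j)}\leq k-1<k=\minIx{x}$. This concludes the argument. \Cref{lem:subset-by-index} is not needed here, since membership in level $k-1$ already bounds the minimum.

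The only delicate point is the case where $E_i$ is not a struct or array sort. Then $\subdomain{E_i}{l}$ is constant in $l$ and $\minIx{x(j)}=0$. The strict inequality still holds, because $k\geq 1$ is forced by $\subdomain{\arrType_i}{0}=\emptyset$. I do not expect a genuine obstacle: the main care is in justifying that $x$ is literally a function on $I_i^\model$, so that select coincides with application.
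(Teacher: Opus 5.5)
Your proof is correct and follows essentially the same route as the paper: set $k=\minIx{x}\geq 1$ (because $\subdomain{\arrType_i}{0}=\emptyset$), use $\subdomain{\arrType_i}{k}=\almostC{\subdomain{I_i}{k-1}}{\subdomain{E_i}{k-1}}$ to get $\select{\arrType_i}^\model(x,j)\in\subdomain{E_i}{k-1}$, and conclude $\minIx{\select{\arrType_i}^\model(x,j)}\leq k-1<k$. Your additional check that $j\in\subdomain{I_i}{k-1}$ (since $I_i\notin\arrays_\Sigma\cup\structsorts_{\Sigma_{n+1}}$, so $\subdomain{I_i}{l}=I_i^\model$ for all $l$) spells out a step the paper leaves implicit.
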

\begin{proof}
	We denote \(k = \minIx{x}\).
	Since $\subdomain{\arrType_i}{0} = \emptyset$, we have that \(k \geq 1\).
	By the definition of the subdomains, we have $\subdomain{\arrType_i}{k}=\almostC{\subdomain{I_i}{k-1}}{\subdomain{E_i}{k-1}}$. Thus \(\select{\arrType_i}^\model(x,j) \in \subdomain{E_i}{k-1}\), and hence by the definition of \(\minIx{\cdot}\), we have that
	\(\minIx{\select{\arrType_i}^\model(x,j)} \leq k - 1 < k\).
\end{proof}

\begin{lemma}
	\label{lem:well-founded-rel}
	The relation \(\rel{\model}\) is well-founded.
\end{lemma}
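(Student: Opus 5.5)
The plan is to use $\minIx{\cdot}$ (\Cref{not:height}) as a rank function into $\mathbb{N}$. Concretely, I will show that whenever $(\alpha,\beta)\in\rel{\model}$ we have $\minIx{\alpha}<\minIx{\beta}$. Suppose there were an infinite sequence $x_0,x_1,\ldots$ with $(x_{k+1},x_k)\in\rel{\model}$ for all $k$. Then $\minIx{x_0}>\minIx{x_1}>\cdots$ would be an infinite strictly decreasing sequence of natural numbers, which is impossible. So $\rel{\model}$ is well-founded in the sense of \Cref{prelim:wfr}.

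The strict decrease splits into the two clauses of \Cref{sec:theory:nested-relation}.

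\emph{Clause $(ii)$.} Here $\alpha=\select{\arrType_i}^{\model}(a,j)$ and $\beta=a$ with $a\in\arrType_i^\model$ and $j\in I_i^\model$. This is exactly \Cref{lem:minIx-select}, which gives $\minIx{\alpha}<\minIx{\beta}$.

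\emph{Clause $(i)$.} Here $\beta=c(t_1,\ldots,t_m)$ and $\alpha=t_j$. \Cref{lem:minIx-constructor-application} as stated only gives $\geq$, so it is not enough by itself. Its proof, however, actually establishes the strict inequality: with $k=\minIx{\beta}\geq 1$, we have $\beta\in T_{\sigma}(\consig{\Sigma},B^{k-1})$. I will reuse that argument and add one point. If $\beta\in T_{\sigma,l}(\consig{\Sigma},B^{k-1})$, then each argument $t_j$ lies in $T_{\sigma_j}(\consig{\Sigma},B^{k-1})$. When $\sigma_j\in\structsorts_{\Sigma_{n+1}}$, this set is contained in $\subdomain{\sigma_j}{k}$. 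When $\sigma_j$ is an element sort, $t_j\in B^{k-1}_{\sigma_j}=\subdomain{\sigma_j}{k-1}$.

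\textbf{Main obstacle.} The delicate case is a struct-sort argument. Membership in $T_{\sigma_j}(\consig{\Sigma},B^{k-1})$ only places $t_j$ in $\subdomain{\sigma_j}{k}$, not in $\subdomain{\sigma_j}{k-1}$, so the index need not drop. I plan two ways to handle this.

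First, I will check that the inequality in \Cref{lem:minIx-constructor-application} is only needed non-strictly, with strictness coming from the array steps. Along an infinite descending chain, rank never increases. Since $\mathbb{N}$ is well-ordered, the chain eventually has constant rank $k$. From that point on it cannot use clause $(ii)$, so every remaining step is a constructor step.

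Second, I will rule out an infinite chain consisting only of constructor steps inside a fixed tree set $T_\sigma(\consig{\Sigma},B^{k-1})$. For this I use the tree depth $\dtIx{\cdot}$ relative to $B^{k-1}$. By the argument of \Cref{lem:dtIx-depth-children}, $\dtIx{t_j}<\dtIx{c(t_1,\ldots,t_m)}$, so an infinite chain of pure constructor steps is impossible.

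For the formal argument, I will use the lexicographic rank $(\minIx{x},\dtIx{x})$, where the second component is computed w.r.t.\ $B^{\minIx{x}-1}$. Under this rank, every pair in $\rel{\model}$ strictly decreases: array steps lower the first component, and constructor steps either lower the first component or keep it and lower the second. Lexicographic order on $\mathbb{N}\times\mathbb{N}$ is well-founded, which completes the argument. The remaining work is routine: checking that $t_j$'s tree depth w.r.t.\ the same $B^{k-1}$ is defined and smaller whenever $\minIx{t_j}=k$.
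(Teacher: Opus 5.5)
Your final argument, the lexicographic rank, is correct and is essentially the paper's proof. In both, $\minIx{\cdot}$ never increases along $\rel{\model}$, array steps lower it strictly by \Cref{lem:minIx-select}, and an infinite tail of pure constructor steps is excluded by tree depth. The paper measures that depth w.r.t.\ the full $B$ of \Cref{lem:domain-struct-intuition} and splits on finitely versus infinitely many array steps, where you use $B^{k-1}$ and a lexicographic order. Drop your opening claim that $\minIx{\cdot}$ strictly decreases on every pair of $\rel{\model}$, and the claim that the proof of \Cref{lem:minIx-constructor-application} yields strictness: as you yourself observe, both fail for struct-sort arguments.
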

\begin{proof}
	Suppose, toward contradiction, there is an infinite descending sequence
	\(
	x_0, x_1, x_2, \ldots
	\)
	such that $\forall i \in \mathbb{N}. x_{i+1} \rel{\model} x_i$.
	Let $\sigma_1,\sigma_2,\ldots$ be a sequence of sorts such that for each $i$, we have $x_i \in \sigma_i^\model$. We denote by $k_i = \minIx{x_i}$.
	According to \Cref{sec:theory:nested-relation}, there are two cases for each $i$:
	\begin{itemize}
		\item \(x_i=c(t_1,\dots,t_m)\) for some
		      \(t_1\in\subdomain{\sigma_1}{k_i-1},\ldots,
		      t_n\in\subdomain{\sigma_m}{k_i-1}\) and $x_{i+1}=t_j$ for some $j\in[1,m]$.
		\item \(x_{i+1}=\select{\sigma_i}^\model(x_i,j)\) for some \(j\in I_i^\model\).
	\end{itemize}

	\textbf{1. If $x_{i+1} \rel{\model} x_i$ by the first case then $k_{i+1} \leq k_i$:}
	If $x_{i+1} \rel{\model} x_i$ by the first case, then by \Cref{lem:minIx-constructor-application}, we have that \(k_{i} \geq \max\{\minIx{t_1},\ldots,\minIx{t_m}\}\). Since \(x_{i+1} = t_j\) for some \(j \in [1,m]\), we have that \(k_{i+1} = \minIx{t_j} \leq k_i\).

	\textbf{2. If $x_{i+1} \rel{\model} x_i$ by the second case then $k_{i+1} < k_i$:}
	We can assume that \(\sigma_i=\arrType_m\) for some \(m \in [1,n]\).
	By \Cref{lem:minIx-select}, we have that
	\(k_{i+1} = \minIx{\select{\arrType_m}^\model(x_i,j)} < \minIx{x_i} = k_i\).

	\textbf{3. There are infinitely many indices \(i\) for which the second case holds:}
	Suppose there are finitely many indices \(i\) for which the second case holds, then there is some index \(i_0\) such that for every \(i \ge i_0\), the first case holds.
	WLOG, we can assume that for every \(i \ge 0\), $x_{i+1} \rel{\model} x_i$ by the first case (by renaming the indices).
	Thus, for every \(i > 0\), the first case holds and $\sigma_i \in \structsorts_{\Sigma_{n+1}}$. According to \Cref{lem:domain-struct-intuition},
	\(\sigma_{0}^\model = T_{\sigma_{0}}(\consig{\Sigma},B) = \bigcup_{j \in \mathbb{N}} T_{\sigma_{0}, j}(\consig{\Sigma},B)\).

	Therefore, there is some \(j_0\) such that \(x_{0} \in T_{\sigma_{0}, j_0}(\consig{\Sigma},B)\).
	We prove by induction on \(i\) that for every \(i \ge 0\), \(x_{i} \in T_{\sigma_{i}, j_0 - i}(\consig{\Sigma},B)\).

	\emph{Base case \(i = 0\).} By the choice of \(j_0\), \(x_{0} \in T_{\sigma_{0}, j_0}(\consig{\Sigma},B)\).

	\emph{Inductive step.}
	Assume \(x_{i} \in T_{\sigma_{i_0+i}, j_0 - i}(\consig{\Sigma},B)\).
	We denote by $l = \min\{k \in \mathbb{N} \mid x_{i_0 + i} \in T_{\sigma_{i_0 + i}, k}(\consig{\Sigma},B)\}$. It is clear that \(l \leq j_0 - i\).

	Since $x_{i_0 + i} \in T_{\sigma_{i_0 + i}, l}(\consig{\Sigma},B) \setminus T_{\sigma_{i_0 + i}, l-1}(\consig{\Sigma},B)$, we get that $x_{i_0 + i} = c(t_1,\dots,t_m)$ for some constructor \(c:\sigma_1 \times \ldots \times \sigma_m \to \sigma_{i_0 + i}\) and \(t_1 \in T_{\sigma_1, l-1}(\consig{\Sigma},B),\ldots,t_m \in T_{\sigma_m, l-1}(\consig{\Sigma},B)\).
	There is some \(j \in [1,m]\) such that \(x_{i_0 + i + 1} = t_j\), and hence \(x_{i_0 + i + 1} \in T_{\sigma_j, l-1}(\consig{\Sigma},B)\).

	Note that $l-1 < l \leq j_0 - i$. Hence, \(x_{i_0 + i + 1} \in T_{\sigma_j, j_0 - (i+1)}(\consig{\Sigma},B)\).

	Thus, for $i = j_0$, since $\sigma_{j_0} \in \structsorts$ we have \(x_{j_0 } \in T_{\sigma_{i}, 0}(\consig{\Sigma},B) = \emptyset\), which is a contradiction.

	\textbf{4. Contradiction:}
	Therefore, there are infinitely many indices \(i\) for which the second case holds, which contradicts the well foundedness of the natural numbers with the standard less-than relation.
\end{proof}

We can now prove that $\model$ is indeed a $\tndt{\Sigma}$-interpretation.
\begin{lemma}
	\label{lem:model-Tndt-interpretation}
	$\model$ is a $\tndt{\Sigma}$-interpretation.
\end{lemma}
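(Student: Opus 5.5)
We need to show that $\model$ is a $\tndt{\Sigma}$-interpretation. By \Cref{sec:theory:ndt-sigma-structure}, this means two things: $\model$ is a $\tpndt{\Sigma}$-interpretation, i.e.\ each reduct $\model^{\Sigma_k}$ lies in $T_k$; and $\rel{\model}$ is well-founded. The second part is exactly \Cref{lem:well-founded-rel}, so the work is in the first part, which I would check reduct by reduct.

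\emph{Array reducts.} Fix $k\in[1,n]$. By \Cref{lem:domain-arr-intuition}, $\arrType_k^\model=\almostC{I_k^\model}{E_k^\model}$, and $\select{\arrType_k}$ and $\store{\arrType_k}$ are interpreted as read and update on functions. I would verify three facts.
\begin{enumerate*}[label=(\roman*)]
\item The domain is closed under $\store{}$: updating an almost constant function at one point yields an almost constant function whose values still lie in $E_k^\model$.
\item The read-over-write axioms hold by direct computation.
\item Extensionality holds because distinct functions differ at some point.
\end{enumerate*}
The domains are nonempty by \Cref{lem:non-empty}, and the sorts of different $\Sigma_k$ have disjoint domains, since they are built as distinct sorted sets.

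\emph{Datatype reduct.} This is the main step. I need $\model^{\Sigma_{n+1}}$ to be a datatypes structure generated by $D$, where $D_\tau=\tau^\model$ for $\tau\in\elemsorts_{\Sigma_{n+1}}$. Note that $D$ includes the array sorts, which are element sorts of $\Sigma_{n+1}$. Condition~$(i)$ of \Cref{datatypes-axioms} is then exactly \Cref{lem:domain-struct-intuition} for the struct sorts. For element sorts it holds trivially, because $T_\tau(\cdot,D)=D_\tau$ when no constructor targets $\tau$.

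The remaining conditions follow from the definitions. Condition~$(ii)$ holds because constructors are interpreted as term formation, and \Cref{lem:domain-struct-intuition} guarantees that the resulting terms stay inside the domains. Condition~$(iii)$ is the first case of the selector definition in \Cref{tab:funpredComp}. Condition~$(iv)$ holds because testers are fixed by the theory.

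The one point needing care is that the selector clause is well defined off its constructor. If $t$ is not a $c$-term, the second clause picks $x\in\intTerms{}$ with $\trComp(x)=t$. This $x$ is unique by \Cref{lem:comp-translation-injective}. I must also check that $\trComp(\sel{\trsig{c}}{i}^\newM(x))$ is defined and lands in $\sigma_i^\model$. For this I would argue that $\sel{\trsig{c}}{i}^\newM(x)$ need not lie in $\intTerms{}$. If it does not, the clause should fall back to the default element $\defelem{\sigma_i}$. Otherwise \Cref{lem:comp-translation-total} places the value in the right domain. Since SMT-LIB leaves selectors unconstrained off their constructor, either value is acceptable for the theory, so this affects only well-definedness, not membership in $T_{n+1}$.

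Finally, I would combine the reducts. Each $\model^{\Sigma_k}$ lies in $T_k$, so $\model$ lies in $\bigoplus_k T_k=\tpndt{\Sigma}$. Together with \Cref{lem:well-founded-rel}, this gives that $\model$ is an NDT $\Sigma$-structure, extended with the variable assignment of \Cref{def:vars-interpretation}, and hence a $\tndt{\Sigma}$-interpretation.
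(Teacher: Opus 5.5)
Your proposal is correct and follows the paper's route: well-foundedness of $\rel{\model}$ from \Cref{lem:well-founded-rel}, the array reducts from \Cref{lem:domain-arr-intuition}, and the datatype reduct from \Cref{lem:domain-struct-intuition} together with the fixed interpretation of constructors, selectors and testers. Your extra remark on the second selector clause of \Cref{tab:funpredComp} identifies a genuine subtlety the paper glosses over: $\sel{\trsig{c}}{i}^\newM(x)$ need not lie in $\intTerms{}$, which is the domain of $\trComp$. Falling back to the default element repairs this without affecting \Cref{lem:satisfactionbyA}, since in its selector case the relevant value is $\trsig{x}^\newM\in\intTerms{}$.
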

\begin{proof}
	We have already seen that $\rel{\model}$ is well founded in~\Cref{lem:well-founded-rel}.

	Hence, we only need to show that
	\begin{itemize}
		\item $\model^{\Sigma_j}$ is an arrays structure for $j \in [1,n]$.
		\item $\model^{\Sigma_{n+1}}$ is a datatype structure.
	\end{itemize}
	For each \(j\in [1,n]\), \Cref{lem:domain-arr-intuition} shows that
	$\arrType_j^\model$ is the standard functional interpretation of arrays, and we
	interpret
	\(\select{\arrType_j}\) and \(\store{\arrType_j}\) in the standard way for an arrays structure.
	As for \(\Sigma_{n+1}\), \Cref{lem:domain-struct-intuition} shows
	that $\forall \sigma \in \structsorts$, $\sigma^\model$ is
	interpreted in the standard way. Likewise,
	\(\model\) uses the usual tree‐based interpretation of
	constructors, selectors and testers (see Definition~\ref{datatypes-axioms}). By
	construction, each selector and tester in \(\Sigma_{n+1}^\formulaProof\) obeys the requirements of datatypes structures.
\end{proof}

\subsection{Satisfaction of the original formula}
\label{app:sec:comp-satisfaction-literal}
\begin{lemma}
	\label{lem:satisfactionbyA}
	$\model\models\formulaProof$.
\end{lemma}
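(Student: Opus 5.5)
The plan is to go through the possible shapes of a flat literal $\ell$ of $\formulaProof$ and show $\model\models\ell$ in each case. Each argument uses four facts: $x^\model=\trComp(\trsig{x}^\newM)$ (\Cref{def:vars-interpretation}); $\newM\models\trF(\formulaProof)$, hence $\newM\models\tra(\ell)$ and $\newM$ satisfies all the lemmas; $\trComp$ is injective (\Cref{lem:comp-translation-injective}); and $\trComp$ is the identity on elements of non-struct sorts. In particular, for index variables we have $i^\model=\trsig{i}^\newM$.

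\textbf{Non-array cases.}
\begin{itemize}
\item $x=y$ follows immediately from $\trsig{x}^\newM=\trsig{y}^\newM$.
\item $x\neq y$ follows from injectivity of $\trComp$.
\item $x=c(t_1,\ldots,t_m)$ follows from case 2 of \Cref{def:comp-translation-vars}, since $\trsig{x}^\newM=\trsig{c}(\trsig{t_1}^\newM,\ldots)$.
\item Testers hold because $\trComp$ maps $\trsig{c}$-terms to $c$-terms and preserves the top constructor.
\item Selectors $x=\sel{c}{i}(y)$ split into two subcases. If $\trsig{y}^\newM=\trsig{c}(\ldots)$, then $y^\model=c(\ldots)$ and the $i$-th components match. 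Otherwise the second clause of \Cref{tab:funpredComp} applies with witness $\trsig{y}^\newM\in\intTerms{}$. That witness is unique by injectivity, so $\sel{c}{i}^\model(y^\model)=\trComp(\sel{\trsig{c}}{i}^\newM(\trsig{y}^\newM))=\trComp(\trsig{x}^\newM)$.
\end{itemize}

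\textbf{Select literals $x=\select{\arrType_j}(a,i)$.} Here $\newM$ satisfies $\trsig{x}=\select{}(\trarr{a},\trsig{i})$ together with $\Lzero{a}{i}{\formulaProof}$. So $i\in\associatedIndices{\trsig{a}^\newM}{j}$, and the field $\indOrder{}{}(i)+1$ of $\trsig{a}^\newM$ equals $\trarr{a}^\newM(\trsig{i}^\newM)=\trsig{x}^\newM$.

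By case 3 of \Cref{def:comp-translation-vars}, $a^\model(i^\model)$ is $\trComp$ of the field indexed by the minimal $k\in\associatedIndices{\trsig{a}^\newM}{j}$ with $\trsig{k}^\newM=\trsig{i}^\newM$. That $k$ is also associated, so the lemma literals give that its field equals $\trarr{a}^\newM(\trsig{k}^\newM)=\trarr{a}^\newM(\trsig{i}^\newM)=\trsig{x}^\newM$. Hence $a^\model(i^\model)=x^\model$.

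A small point: the associated-index definition refers to a variable $a$ with $\trsig{a}^\newM=s$. When two array variables share the same $\trsig{\cdot}$ value, $\Lthree{\formulaProof}$ forces their $\trarr{\cdot}$ values to coincide through $f_{\arrType_j}$. So the argument goes through whichever variable witnesses $k$.

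\textbf{Store literals $b=\store{\arrType_j}(a,i,v)$ (main obstacle).} We must show $b^\model(k)=\store{}(a^\model,i^\model,v^\model)(k)$ for every $k\in I_j^\model$. By $\Ltwo{\formulaProof}$, every index of $\indexSet{\arrType_j}{\formulaProof}$ is associated with both $\trsig{a}^\newM$ and $\trsig{b}^\newM$. By $\Lthree{\formulaProof}$, $f_{\arrType_j}^\newM(\trsig{a}^\newM)=\trarr{a}^\newM$ and $f_{\arrType_j}^\newM(\trsig{b}^\newM)=\trarr{b}^\newM$. Moreover $\newM$ satisfies $\trarr{b}=\store{}(\trarr{a},\trsig{i},\trsig{v})$, with standard array semantics by \Cref{lem:infinitely-more-elements}. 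I then split on $k$:
\begin{itemize}
\item For $k=i^\model$: the same reasoning as in the select case gives $b^\model(k)=\trComp(\trarr{b}^\newM(\trsig{i}^\newM))=\trComp(\trsig{v}^\newM)=v^\model$.
\item For $k=\trsig{l}^\newM\neq i^\model$ with $l\in\indexSet{\arrType_j}{\formulaProof}$: both sides reduce to $\trComp(\trarr{a}^\newM(k))=\trComp(\trarr{b}^\newM(k))$.
\item For $k$ outside these index values: both values come from $\assignTail{j}$. Since $\trsig{i}^\newM$ is among the removed indices, $\tail(\trarr{a}^\newM)=\tail(\trarr{b}^\newM)$. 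Both $\trsig{a}^\newM$ and $\trsig{b}^\newM$ lie in $\storeElem{\formulaProof}{\newM}$, so \Cref{def:comp-assign-tail} yields $\assignTail{j}(\trsig{a}^\newM)=\assignTail{j}(\trsig{b}^\newM)$.
\end{itemize}
Extensionality then gives the literal. The delicate part is the bookkeeping of the $\min\indOrder{}{}$ choice when distinct index variables share a value; I would handle it exactly as in the select case.
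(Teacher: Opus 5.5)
Your proposal is correct and follows essentially the same route as the paper: a case analysis on literal shapes, with the select case settled through the minimal associated index in case 3 of the definition of $\trComp$, and the store case split into $k=i^\model$, the other values of $\indexSet{\arrType_j}{\formulaProof}$, and the remaining indices handled via $\assignTail{j}$. Your remark about an associated index whose witness is a different array variable with the same $\trsig{\cdot}$-value is a point the paper's step ``since $\newM\models\Lone{\formulaProof}$'' passes over silently: $\Lthree{\formulaProof}$ is what forces the two $\trarr{\cdot}$-values to coincide, so your version is slightly more careful there.
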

\begin{proof}
	Recall that we are under the assumption that
	$\formulaProof$ is a flat $\Sigma$-cube.
	Let $\ell$ be a literal of $\formulaProof$.
	We prove that $\model\models\ell$ by considering
	every possible shape $\ell$ may take, and its corresponding translation
	(given in \Cref{tab:literal-transformations}).
	Note that since $\ell$ is a literal of $\formulaProof$,
	$\tra(\ell)$ is a literal of $\tra(\formulaProof)$ and is therefore
	satisfied by $\newM$.
	Further, the lemmas in
	$\Lone{\formulaProof}$, $\Ltwo{\formulaProof}$ and
	$\Lthree{\formulaProof}$ are also satisfied by
	$\newM$
	(see \Cref{fig:notation-summary}).
	\begin{enumerate}
		\item For $\ell$ of the form
		      \(
		      x = y
		      \),
		      we know that $\newM \models \trsig{x} = \trsig{y}$. Hence,
		      \(
		      x^\model = \trComp(\trsig{x}^\newM) = \trComp(\trsig{y}^\newM) = y^\model
		      \), and so $\model\models\ell$.
		\item For $\ell$ of the form
		      \(
		      x \neq y,
		      \)
		      we know that $\newM \models \trsig{x} \neq \trsig{y}$.
		      Hence, by the injectivity of $\trComp$ (\Cref{lem:comp-translation-injective}), we have:
		      \(
		      x^\model = \trComp(\trsig{x}^\newM) \neq \trComp(\trsig{y}^\newM) = y^\model
		      \), and so $\model\models\ell$.

		\item For $\ell$ of the form
		      \(
		      x = \select{\arrType_i}(a,j)
		      \),
		      we know that $\newM \models \trsig{x} =
			      \select{\trarr{\arrType_i}}(\trarr{a},\trsig{j})$.
		      According to \Cref{def:associated-indices}, $j\in\associatedIndices{\trsig{a}^\newM}{i}$.
		      Pick some
		      $k\in\associatedIndices{\trsig{a}^\newM}{i}$ with
		      $\trsig{k}^\newM=\trsig{j}^\newM$
		      which is minimal w.r.t.
		      $\indOrder{\arrType_i}{\formulaProof}$, that is,
		      for
		      every $l\in\associatedIndices{\trsig{a}^\newM}{i}$ with
		      $\trsig{l}^\newM=\trsig{j}^\newM$ we have
		      $\indOrder{\arrType_i}{\formulaProof}(k)\le\indOrder{\arrType_i}{\formulaProof}(l)$.

		      Notice that we must have $\trsig{a}^{\newM}=c_i(t_1,\ldots,t_m)$ for some $t_1,\ldots,t_m$,
		      as $c_i$ is the only constructor of output sort $\trsig{\arrType_i}$.
		      Now:
		      \begin{alignat*}{2}
			      \select{\arrType_i}^\model(a^\model,j^\model)                                                         \\
			       & \stackrel{(1)}{=} & \; a^\model(j^\model)                                                          \\
			       & \stackrel{(2)}{=} & \; \trComp(\trsig{a}^\newM)(\trComp(\trsig{j}^\newM))                          \\
			       & \stackrel{(3)}{=} & \; \trComp(\trsig{a}^\newM)(\trsig{j}^\newM)                                   \\
			       & \stackrel{(4)}{=} & \; \trComp(\trsig{t_{\indOrder{}{}(k)+1}}^{\newM})                             \\
			       & \stackrel{(5)}{=} & \; \trComp(\sel{\arrCons{i}}{\indOrder{}{}(k)+1}^\newM(\trsig{a}^\newM))       \\
			       & \stackrel{(6)}{=} & \; \trComp(\select{\trarr{\arrType_i}}^\newM(\trarr{a}^\newM,\trsig{k}^\newM)) \\
			       & \stackrel{(7)}{=} & \; \trComp(\select{\trarr{\arrType_i}}^\newM(\trarr{a}^\newM,\trsig{j}^\newM)) \\
			       & \stackrel{(8)}{=} & \; \trComp(\trsig{x}^\newM)                                                    \\
			       & \stackrel{(9)}{=} & \; x^\model
		      \end{alignat*}

		      (1) According to the interpretation of the select operation in $\model$ (see \Cref{tab:funpredComp}).

		      (2) According to the interpretation of $a$ and $j$ in $\model$ (\Cref{def:vars-interpretation}).

		      (3) By the first case in the definition of $\trComp$ (see
		      \Cref{def:comp-translation-vars}), as for every $i \in [1,n]$, according to \Cref{sec:theory:ndt-sig} we have that
		      $I_i \notin \arrays_\Sigma \cup \structsorts_{\Sigma_{n+1}}$
		      and therefore
		      $\trsig{I_i} \notin \structsorts_{\Sigma_{n+1}^\formulaProof}$ (see~\Cref{fig:dt-sig-trans}).

		      (4) Due to the first subcase in the third case in the definition of $\trComp$ (see \Cref{def:comp-translation-vars})

		      (5) Since $\trsig{a}^\newM = c_i(t_1,\ldots,t_m)$ and by the interpretation of selectors in $\newM$ (see \Cref{tab:funpredComp}).

		      (6) Since $\newM\models\Lone{\formulaProof}$.

		      (7) Since $\trsig{j}^\newM = \trsig{k}^\newM$.

		      (8) It holds that $\newM \models \trsig{x} =
			      \select{\trarr{\arrType_i}}(\trarr{a},\trsig{j})$.

		      (9) The last equality holds by the interpretation of $a$ in $\model$ (see \Cref{def:vars-interpretation}).

		\item For $\ell$ of the form
		      \(
		      b = \store{\arrType_i}(a,k,v)
		      \),
		      we know that $\newM \models \trarr{b} = \store{\trarr{\arrType_i}}(\trarr{a},\trsig{k},\trsig{v})$.
		      Since $\newM^{\trsig{\Sigma_i}}$
		      is an arrays structure, the followings hold:
		      \begin{align}
			      \newM \models \select{\trarr{\arrType_i}}(\trarr{b},\trsig{k}) = \trsig{v} \tag{$\ast$}\label{eq:array-read} \\
			      \forall j \in (\trsig{I_i}^\newM \setminus \{\trsig{k}^\newM\}). \select{\trarr{\arrType_i}}^\newM(\trarr{b}^\newM,j) = \select{\trarr{\arrType_i}}^\newM(\trarr{a}^\newM,j) \tag{$\ast\ast$}\label{eq:array-other}
		      \end{align}
		      In addition, since $\newM$ satisfies $\Ltwo{\formulaProof}$, we have:
		      \begin{align}
			      \forall j \in \indexSet{\arrType_i}{\formulaProof}. \select{\trarr{\arrType_i}}^\newM(\trarr{a}^\newM,\trsig{j}^\newM)
			      = \sel{\arrCons{i}}{\indOrder{}{}(j)+1}^\newM(\trsig{a}^\newM) \tag{$\ast\ast\ast$}\label{eq:array-a-index} \\ \notag
			      \forall j \in \indexSet{\arrType_i}{\formulaProof}. \select{\trarr{\arrType_i}}^\newM(\trarr{b}^\newM,\trsig{j}^\newM)
			      = \sel{\arrCons{i}}{\indOrder{}{}(j)+1}^\newM(\trsig{b}^\newM) \tag{$\ast\ast\ast\ast$}\label{eq:array-b-index}
		      \end{align}
		      Since $I_i^\model = \{k^\model\} \cup (\indexSet{\arrType_i}{\formulaProof}^\model \setminus \{k^\model\}) \cup (I_i^\model \setminus \indexSet{\arrType_i}{\formulaProof}^\model)$, in order to show that $\model \models b = \store{\arrType_i}(a,k,v)$, we need to show that:
		      \begin{enumerate}
			      \item $\store{\arrType_i}^\model(a^\model,k^\model,v^\model)(k^\model) = b^\model(k^\model)$.
			      \item For every $j \in \indexSet{\arrType_i}{\formulaProof}$ such that $j^\model \neq k^\model$. we have that $\store{\arrType_i}^\model(a^\model,k^\model,v^\model)(j^\model) = b^\model(j^\model)$.
			      \item $\forall r \in (I_i^\model \setminus \indexSet{\arrType_i}{\formulaProof}^\model). \store{\arrType_i}^\model(a^\model,k^\model,v^\model)(r) = b^\model(r)$.
		      \end{enumerate}

		      Note that since there is a single constructor $c_i$ whose target sort is $\trsig{\arrType_i}$, we have some
		      $a_1,\ldots,a_m,b_1,\ldots,b_m$ with:
		      \begin{equation}
			      \label{denote:a-and-b-star}
			      \trsig{a}^\newM = c_i(a_1,\ldots,a_m)
			      \land
			      \trsig{b}^\newM = c_i(b_1,\ldots,b_m)
		      \end{equation}

		      Finally, before showing the three cases, we observe that
		      according to the definition of $\Ltwo{\formulaProof}$, for every $l \in \indexSet{\arrType_i}{\formulaProof}$ we have that $\tempvar{a}{\indOrder{}{}(l)+1} = \select{\trarr{\arrType_i}}(\trarr{a}, \trsig{l}) \in \trF(\formulaProof)$ and $\tempvar{b}{\indOrder{}{}(l)+1} = \select{\trarr{\arrType_i}}(\trarr{b}, \trsig{l}) \in \trF(\formulaProof)$. Hence,
		      according to \Cref{def:associated-indices}, $l\in\associatedIndices{\trsig{a}^\newM}{i}$ and $l\in\associatedIndices{\trsig{b}^\newM}{i}$.

		      Hence, $\indexSet{\arrType_i}{\formulaProof} \subseteq
			      \associatedIndices{\trsig{a}^\newM}{i}$ and
		      $\indexSet{\arrType_i}{\formulaProof} \subseteq
			      \associatedIndices{\trsig{b}^\newM}{i}$, and thus we get
		      \begin{equation}
			      \label{equalitybetweensets}
			      \indexSet{\arrType_i}{\formulaProof} =
			      \associatedIndices{\trsig{a}^\newM}{i} =
			      \associatedIndices{\trsig{b}^\newM}{i}
		      \end{equation}

		      We now prove the aforementioned cases one by one:
		      \begin{enumerate}
			      \item Pick $t\in\indexSet{\arrType_i}{\formulaProof}$ with $\trsig{t}^\newM=\trsig{k}^\newM$ and minimal w.r.t. $\indOrder{\arrType_i}{\formulaProof}$, i.e., for every $l\in\indexSet{\arrType_i}{\formulaProof}$ with $\trsig{l}^\newM=\trsig{k}^\newM$ we have $\indOrder{\arrType_i}{\formulaProof}(t)\le\indOrder{\arrType_i}{\formulaProof}(l)$.

			            \begin{alignat*}{2}
				            b^\model(k^\model)                                                                                    \\
				             & \stackrel{(1)}{=} & \; \trComp(\trsig{b}^\newM)(\trComp(\trsig{k}^\newM))                          \\
				             & \stackrel{(2)}{=} & \; \trComp(\trsig{b}^\newM)(\trsig{k}^\newM)                                   \\
				             & \stackrel{(3)}{=} & \; \trComp(b_{\indOrder{}{}(t)+1})                                             \\
				             & \stackrel{(4)}{=} & \; \trComp(\sel{\arrCons{i}}{\indOrder{}{}(t)+1}^\newM(\trsig{b}^\newM))       \\
				             & \stackrel{(5)}{=} & \; \trComp(\select{\trarr{\arrType_i}}^\newM(\trarr{b}^\newM,\trsig{t}^\newM)) \\
				             & \stackrel{(6)}{=} & \; \trComp(\select{\trarr{\arrType_i}}^\newM(\trarr{b}^\newM,\trsig{k}^\newM)) \\
				             & \stackrel{(7)}{=} & \; \trComp(\trsig{v}^\newM)                                                    \\
				             & \stackrel{(8)}{=} & \; v^\model                                                                    \\
				             & \stackrel{(9)}{=} & \; \store{\arrType_i}^\model(a^\model,k^\model,v^\model)(k^\model)
			            \end{alignat*}
			            (1) According to the interpretation of $b$ and $k$ in $\model$ (\Cref{def:vars-interpretation}).

			            (2) By the first case in the definition of $\trComp$ (see
			            \Cref{def:comp-translation-vars}), as for every $i \in [1,n]$ we
			            have that $I_i \notin \arrays_\Sigma \cup
				            \structsorts_{\Sigma_{n+1}}$
			            (see \Cref{sec:theory:ndt-sig})
			            and therefore $\trsig{I_i} \notin
				            \structsorts_{\Sigma_{n+1}^\formulaProof}$
			            (see \Cref{fig:dt-sig-trans}).

			            (3) Due to \eqref{denote:a-and-b-star} and
			            \eqref{equalitybetweensets}
			            and the first subcase in the
			            third case in the definition of $\trComp$ (see
			            \Cref{def:comp-translation-vars}).

			            (4) According to \ref{denote:a-and-b-star} $\trsig{b}^\newM = c_i(b_1,\ldots,b_m)$ and by the interpretation of selectors in $\newM$ (see \Cref{tab:funpredComp}).

			            (5) By \Cref{eq:array-b-index}.

			            (6) Since $\trsig{t}^\newM = \trsig{k}^\newM$.

			            (7) By \Cref{eq:array-read}.

			            (8) This equality holds by the interpretation of $a$ in $\model$ (see
			            \Cref{def:vars-interpretation}).

			            (9) The last equality holds by the definition of the store operation in $\model$ (see \Cref{tab:funpredComp}).

			      \item Pick
			            $t\in\indexSet{\arrType_i}{\formulaProof}$ with
			            $\trsig{t}^\newM=\trsig{j}^\newM$ and minimal w.r.t. $\indOrder{\arrType_i}{\formulaProof}$, i.e., for every
			            $r\in\indexSet{\arrType_i}{\formulaProof}$ with
			            $\trsig{r}^\newM=\trsig{j}^\newM$ we have
			            $\indOrder{\arrType_i}{\formulaProof}(t)\le\indOrder{\arrType_i}{\formulaProof}(r)$.
			            By \eqref{equalitybetweensets}, $t,j \in
				            \associatedIndices{\trsig{a}^\newM}{i}$ and $t,j
				            \in \associatedIndices{\trsig{b}^\newM}{i}$.

			            \begin{alignat*}{2}
				            \store{\arrType_i}^\model(a^\model,k^\model,v^\model)(j^\model)                                        \\
				             & \stackrel{(1)}{=}  & \; a^\model(j^\model)                                                          \\
				             & \stackrel{(2)}{=}  & \; \trComp(\trsig{a}^\newM)(\trComp(\trsig{j}^\newM))                          \\
				             & \stackrel{(3)}{=}  & \; \trComp(\trsig{a}^\newM)(\trsig{j}^\newM)                                   \\
				             & \stackrel{(4)}{=}  & \; \trComp(a_{\indOrder{}{}(t)+1})                                             \\
				             & \stackrel{(5)}{=}  & \; \trComp(\sel{\arrCons{i}}{\indOrder{}{}(t)+1}^\newM(\trsig{a}^\newM))       \\
				             & \stackrel{(6)}{=}  & \; \trComp(\select{\trarr{\arrType_i}}^\newM(\trarr{a}^\newM,\trsig{t}^\newM)) \\
				             & \stackrel{(7)}{=}  & \; \trComp(\select{\trarr{\arrType_i}}^\newM(\trarr{b}^\newM,\trsig{t}^\newM)) \\
				             & \stackrel{(8)}{=}  & \; \trComp(\sel{\arrCons{i}}{\indOrder{}{}(t)+1}^\newM(\trsig{b}^\newM))       \\
				             & \stackrel{(9)}{=}  & \; \trComp(b_{\indOrder{}{}(t)+1})                                             \\
				             & \stackrel{(10)}{=} & \; \trComp(\trsig{b}^\newM)(\trsig{j}^\newM)                                   \\
				             & \stackrel{(11)}{=} & \; \trComp(\trsig{b}^\newM)(\trComp(\trsig{j}^\newM))                          \\
				             & \stackrel{(12)}{=} & \; b^\model(j^\model)
			            \end{alignat*}

			            (1) According to the interpretation of the store operation in $\model$ and since $j^\model \neq k^\model$ (see \Cref{tab:funpredComp}).

			            (2) According to the interpretation of $a$ and $k$ in $\model$ (\Cref{def:vars-interpretation}).

			            (3) By the first case in the definition of $\trComp$ (see
			            \Cref{def:comp-translation-vars}), as for every $i \in [1,n]$ we
			            have that $I_i \notin \arrays_\Sigma \cup \structsorts_{\Sigma_{n+1}}$
			            and therefore $\trsig{I_i} \notin
				            \structsorts_{\Sigma_{n+1}^\formulaProof}$.

			            (4) Due to \eqref{denote:a-and-b-star} $\trsig{a}^\newM = \arrCons{i}(a_1, \ldots a_m)$ and the first subcase in the third case in the definition of $\trComp$ (see \Cref{def:comp-translation-vars}).

			            (5) Since $\trsig{a}^\newM = c_i(a_1,\ldots,a_m)$.

			            (6) By \Cref{eq:array-a-index}.

			            (7) By \Cref{eq:array-other}.

			            (8) -- (12) are symmetric to steps (2)-(6), e.g., (8)
			            is explained as (6), (9) as (5), up to (12) as
			            (2). In (8), we rely on \Cref{eq:array-b-index}
			            rather than on \Cref{eq:array-a-index}.
			      \item Let there be $r \in (I_i^\model \setminus \indexSet{\arrType_i}{\formulaProof}^\model)$.
			            Then,
			            \begin{alignat*}{2}
				            \store{\arrType_i}^\model(a^\model,k^\model,v^\model)(r)  \\
				             & \stackrel{(1)}{=} & \;a^\model(r)                      \\
				             & \stackrel{(2)}{=} & \; \trComp(\trsig{a}^\newM)(r)     \\
				             & \stackrel{(3)}{=} & \; \assignTail{i}(\trsig{a}^\newM) \\
				             & \stackrel{(4)}{=} & \; \assignTail{i}(\trsig{b}^\newM) \\
				             & \stackrel{(5)}{=} & \; \trComp(\trsig{b}^\newM)(r)     \\
				             & \stackrel{(6)}{=} & \; b^\model(r)
			            \end{alignat*}

			            (1) According to the interpretation of the store operation in $\model$ and since $r \neq k^\model$ (see \Cref{tab:funpredComp}).

			            (2) According to the interpretation of $a$ in $\model$ (\Cref{def:vars-interpretation}).

			            (3) We first show the following three claims:
			            \begin{itemize}
				            \item $r \in \trsig{I_i}^\newM$: Since $I_i \notin
					                  \arrays_\Sigma \cup \structsorts_{\Sigma_{n+1}}$
				                  (\Cref{sec:theory:ndt-sig}), we have $\trsig{I_i} \notin
					                  \structsorts_{\Sigma_{n+1}^\formulaProof}$ by
				                  \Cref{fig:dt-sig-trans}.
				                  By \Cref{tab:comp-domains}, $\forall m \in \mathbb{N}. \subdomain{I_i}{m} = \trsig{I_i}^{\newM}$, so \(I_i^{\model} = \bigcup_{m \in \mathbb{N}} \subdomain{I_i}{m} = \trsig{I_i}^{\newM}\).
				                  Since $r \in I_i^\model$, we have $r \in \trsig{I_i}^\newM$.

				            \item $\{\trsig{l}^\newM \mid l \in \indexSet{\arrType_i}{\formulaProof}\} = \indexSet{\arrType_i}{\formulaProof}^\model$:
				                  For any $l \in \indexSet{\arrType_i}{\formulaProof}$,
				                  $l$ is of sort $I_i$.
				                  Since $\trsig{I_i} \notin
					                  \structsorts_{\Sigma_{n+1}^\formulaProof}$ and $\trsig{l}^\newM \in \trsig{I_i}^\newM = I_i^\model \subseteq \cup\{\sigma^\model \mid \sigma \in \sorts{\Sigma}\} \cup \big(\bigcup_{i \in [1,n]} \idx{\arrType_i}^\newM \big)$,
				                  the first
				                  case in the definition of $\trComp$
				                  (\Cref{def:comp-translation-vars})
				                  gives
				                  $\trComp(\trsig{l}^\newM) = \trsig{l}^\newM$.
				                  By \Cref{def:vars-interpretation}, $\trsig{l}^\newM = l^\model$.
				                  Hence, $\{\trsig{l}^\newM \mid l \in \indexSet{\arrType_i}{\formulaProof}\} =
					                  \{\trComp(\trsig{l}^\newM) \mid l \in \indexSet{\arrType_i}{\formulaProof}\} =
					                  \indexSet{\arrType_i}{\formulaProof}^\model$.

				            \item There exists no $l \in \associatedIndices{\trsig{a}^\newM}{i}$ such that $\trsig{l}^\newM = r$:
				                  Since $\associatedIndices{\trsig{a}^\newM}{i} \subseteq \indexSet{\arrType_i}{\formulaProof}$ (\Cref{def:associated-indices}), if such an $l$ existed, we would have $\trsig{l}^\newM \in \{\trsig{l}^\newM \mid l \in \indexSet{\arrType_i}{\formulaProof}\}$.
				                  However, by the previous bullet, $\{\trsig{l}^\newM \mid l \in \indexSet{\arrType_i}{\formulaProof}\} = \indexSet{\arrType_i}{\formulaProof}^\model$, and we know $r \notin \indexSet{\arrType_i}{\formulaProof}^\model$.
				                  Therefore, no such $l$ exists.
			            \end{itemize}

			            Now we can deduce that equality (3) stems from the second subcase in the third case of the definition of $\trComp$ (see \Cref{def:comp-translation-vars}).

			            (4) Note that $\trsig{a}^\newM, \trsig{b}^\newM \in
				            \storeElem{\formulaProof}{\newM}$
			            since $b = \store{\arrType_i}(a,k,v) \in
				            \formulaProof$
			            (see
			            \Cref{def:direct-translation}).

			            According to \Cref{eq:array-other} and the definition of $\tail$ (see \Cref{def:tail}), we have that
			            $\tail(\trarr{a}^\newM) = \tail(\trarr{b}^\newM)$.
			            Since $\newM$ satisfies $\Lthree{\formulaProof}$, we have that $\trarr{a}^\newM = f_{\arrType_i}^\newM(\trsig{a}^\newM)$ and $\trarr{b}^\newM = f_{\arrType_i}^\newM(\trsig{b}^\newM)$.

			            Thus, $\tail(f_{\arrType_i}^\newM(\trsig{a}^\newM)) =
				            \tail(\trarr{a}^\newM) = \tail(\trarr{b}^\newM) =
				            \tail(f_{\arrType_i}^\newM(\trsig{b}^\newM))$.

			            Hence,

			            \begin{itemize}
				            \item $\trsig{a}^\newM, \trsig{b}^\newM \in
					                  \storeElem{\formulaProof}{\newM}$ .
				            \item $\tail(f_{\arrType_i}^\newM(\trsig{a}^\newM)) =
					                  \tail(f_{\arrType_i}^\newM(\trsig{b}^\newM))$.
			            \end{itemize}

			            and according to the definition of $\assignTail{i}$, we have that $\assignTail{i}(\trsig{a}^\newM) = \assignTail{i}(\trsig{b}^\newM)$ (see \Cref{def:comp-assign-tail}).

			            (5) We have shown in step (3) that $r \in \trsig{I_i}^\newM$ and there exists no $l \in \associatedIndices{\trsig{b}^\newM}{i}$ such that $\trsig{l}^\newM = r$. Hence, by the second subcase in the third case of the definition of $\trComp$ (see \Cref{def:comp-translation-vars}), we get this equality.

			            (6) According to the interpretation of $b$ in $\model$ (\Cref{def:vars-interpretation}).
		      \end{enumerate}

		\item For $\ell$ of the form
		      \(
		      x = \sel{c}{i}(y),
		      \)
		      we know that $\newM \models \trsig{x} = \sel{\trsig{\container}}{i}(\trsig{y})$. We distinguish two cases:

		      \begin{enumerate}
			      \item \textbf{Case 1:} If \( \trsig{y}^\newM \) is an
			            application of \( \trsig{c} \), we can denote
			            $\trsig{y}^\newM = \trsig{c}(t_1, \ldots, t_m)$ for some $t_1,\ldots,t_m$.

			            \begin{alignat*}{2}
				            \sel{c}{i}^{\model}(y^{\model})                                                       \\
				             & \stackrel{(1)}{=} & \;\sel{c}{i}^{\model}(\trComp(\trsig{y}^\newM))                \\
				             & \stackrel{(2)}{=} & \;\sel{c}{i}^{\model}(\trComp(\trsig{c}(t_1, \ldots, t_m)))    \\
				             & \stackrel{(3)}{=} & \;\sel{c}{i}^{\model}(c(\trComp(t_1), \ldots, \trComp(t_m)))   \\
				             & \stackrel{(4)}{=} & \;\trComp(t_i)                                                 \\
				             & \stackrel{(5)}{=} & \;\trComp\bigl(\sel{\trsig{c}}{i}^\newM(\trsig{y}^\newM)\bigr) \\
				             & \stackrel{(6)}{=} & \; \trComp(\trsig{x}^{\newM})                                  \\
				             & \stackrel{(7)}{=} & \; x^{\model}
			            \end{alignat*}

			            (1) According to the interpretation of $y$ in $\model$ (see \Cref{def:vars-interpretation}).

			            (2) Substitute $\trsig{y}^\newM = \trsig{c}(t_1, \ldots, t_m)$.

			            (3) By the second case in the definition of $\trComp$ (see \Cref{def:comp-translation-vars}).

			            (4) By the interpretation of selectors in $\model$ (see \Cref{tab:funpredComp}).

			            (5) Since $\trsig{y}^\newM = \trsig{c}(t_1, \ldots, t_m)$ and $\newM^{\Sigma_{n+1}^\formulaProof}$ is a datatype structure, $\sel{\trsig{c}}{i}^\newM(\trsig{y}^\newM) = t_i$ (see \Cref{datatypes-axioms}).

			            (6)
			            Since $\newM \models \trsig{x} = \sel{\trsig{\container}}{i}(\trsig{y})$.

			            (7) The last equality holds by the interpretation of $x$ in $\model$ (see \Cref{def:vars-interpretation}).

			      \item \textbf{Case 2:} \( \trsig{y}^\newM \) is not
			            an application of \( \trsig{c} \), we can denote $\trsig{y} = \widetilde{c}(t_1, \ldots, t_m)$ for some constructor $\widetilde{c} \neq \trsig{c}$.

			            According to the definition of $\Sigma_{n+1}^\formulaProof$ (\Cref{fig:dt-sig-trans}), we can denote the target domain of those two constructors by $\trsig{\sigma}$ where either $\sigma \in \structsorts_{\Sigma_{n+1}}$ or $\sigma = \arrType_i$ for some $i \in [1,n]$. Since there are 2 constructors with target sort $\trsig{\sigma}$ in $\structsorts_{\Sigma_{n+1}^\formulaProof}$, we have that $\sigma \in \structsorts_{\Sigma_{n+1}}$ (as there is a single constructor whose target sort is $\trsig{\arrType_i}$). Hence, we can denote $\widetilde{c} = \trsig{\hat{c}}$ for some constructor $\hat{c} \neq c$ in $\constructors_{\Sigma_{n+1}}$ and $\trsig{y}^\newM = \trsig{\hat{c}}(t_1, \ldots, t_m)$. By the interpretation of $y$ in $\model$ (see \Cref{def:vars-interpretation}), we have that $y^\model = \trComp(\trsig{y}^\newM) = \hat{c}(\trComp(t_1), \ldots, \trComp(t_m))$.
			            \begin{alignat*}{2}
				            \sel{c}{i}^{\model}(y^{\model})                                                       \\
				             & \stackrel{(1)}{=} & \;\trComp\bigl(\sel{\trsig{c}}{i}^\newM(\trsig{y}^\newM)\bigr) \\
				             & \stackrel{(2)}{=} & \; \trComp(\trsig{x}^{\newM})                                  \\
				             & \stackrel{(3)}{=} & \; x^{\model}
			            \end{alignat*}

			            (1) Note that since $y^\model$ is an application
			            of $\hat{c}$ (and not $c$), and $y^\model =
				            \trComp(\trsig{y}^\newM)$, this equality stems
			            from the second case in the interpretation of
			            selectors in \Cref{tab:funpredComp}.

			            (2) Since $\newM \models \trsig{x} = \sel{\trsig{c}}{i}(\trsig{y})$.

			            (3) The last equality holds by the interpretation of $x$ in $\model$ (see \Cref{def:vars-interpretation}).
		      \end{enumerate}

		\item For $\ell$ of the form
		      \(
		      x = c(t_1, \ldots,t_m),
		      \)
		      We know that $\newM \models \trsig{x} = \trsig{c}(\trsig{t_1}, \ldots, \trsig{t_m})$ and thus
		      \[
			      x^{\model}
			      = \trComp(\trsig{x}^\newM)
			      = \trComp(\trsig{c}(\trsig{t_1}^{\newM}, \ldots, \trsig{t_m}^{\newM}))
			      = c(\trComp(\trsig{t_1}^{\newM}), \ldots, \trComp(\trsig{t_m}^{\newM}))
			      = c(t_1^\model, \ldots, t_m^\model)
		      \]

		\item For $\ell$ of the form $\is{c}(x)$, we know that $\newM \models \is{\trsig{c}}(\trsig{x})$. Thus, since $\newM^{\Sigma_{n+1}^\formulaProof}$ is a datatype structure, we have that $\trsig{x}^\newM = \trsig{c}(t_1, \ldots, t_m)$ for some $t_1,\ldots,t_m$. Hence,
		      \[
			      x^\model = \trComp(\trsig{x}^\newM) = \trComp(\trsig{c}(t_1, \ldots, t_m)) = c(\trComp(t_1), \ldots, \trComp(t_m))
		      \]
		      and so $\model \models \ell$.

		\item For $\ell$ of the form $\neg \is{c}(x)$, we know that $\newM \models \neg \is{\trsig{c}}(\trsig{x})$. Thus, since $\newM^{\Sigma_{n+1}^\formulaProof}$ is a datatype structure, we have that $\trsig{x}^\newM = \trsig{\hat{c}}(t_1, \ldots, t_m)$ for some $t_1,\ldots,t_m$ and $\hat{c} \neq c$. Hence,
		      \[
			      x^\model = \trComp(\trsig{x}^\newM) = \trComp(\trsig{\hat{c}}(t_1, \ldots, t_m)) = \hat{c}(\trComp(t_1), \ldots, \trComp(t_m))
		      \]
		      and so $\model \models \ell$.
	\end{enumerate}
\end{proof}

\section{Decidability}
\label{app:sec:decproof}
In this section, we prove
\Cref{thm:decidability-of-new-theory}.
Let $\Sigma$ be an NDT-signature and
$\formulaProof$ a flat $\Sigma$-cube, and
recall the graph $\sigG{\Sigma}$ from
\Cref{def:cyclic-sorts}.
We prove a lemma similar to \Cref{lem:cardinality-partially-ordered},
but for $\newT{\formulaProof}$-interpretations.

\begin{lemma}
	\label{lem:cardinality-partial-order-trsig}
	If there is a path from $\sigma$ to $\tau$ in the graph $\sigG{\Sigma}$ (see \Cref{def:cyclic-sorts}),
	then for every $\newT{\formulaProof}$-interpretation $\newM$, $|\trsig{\tau}^{\newM}| \leq |\trsig{\sigma}^{\newM}|$.
\end{lemma}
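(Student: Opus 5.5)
Following the proof of \Cref{lem:cardinality-partially-ordered}, the plan is to reduce to a single edge of $\sigG{\Sigma}$ and build an injection for each edge; composing injections along the path gives the claim. Write the path as $\sigma=\rho_0\to\rho_1\to\cdots\to\rho_l=\tau$. It suffices to show, for every edge $(\rho,\rho')$ of $\sigG{\Sigma}$, an injection $\trsig{\rho'}^{\newM}\to\trsig{\rho}^{\newM}$. For $l=0$ the claim is trivial.

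\emph{Constructor edges.} Suppose the edge comes from a constructor $c:\sigma_1\times\cdots\times\sigma_k\to\rho\in\constructors_{\Sigma_{n+1}}$ with $\rho'=\sigma_j$. Then $\trsig{c}:\trsig{\sigma_1}\times\cdots\times\trsig{\sigma_k}\to\trsig{\rho}$ is a constructor of $\Sigma^{\formulaProof}_{n+1}$. Since $\newM^{\Sigma^{\formulaProof}_{n+1}}$ is a datatypes structure, all domains are nonempty and $\trsig{c}^{\newM}$ is the free term constructor. Fix $u_s\in\trsig{\sigma_s}^{\newM}$ for $s\neq j$ and map $x\mapsto\trsig{c}(u_1,\dots,x,\dots,u_k)$. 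This map is injective by freeness; equivalently, by axiom $\axsel$, the selector $\sel{\trsig{c}}{j}^{\newM}$ is a left inverse.

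\emph{Array edges.} Suppose the edge is $(\arrType_i,E_i)$. Here $\trsig{\arrType_i}$ is the datatype sort whose unique constructor is $\arrCons{i}:\idx{\arrType_i}\times\trsig{E_i}\times\cdots\times\trsig{E_i}\to\trsig{\arrType_i}$. By the convention that $\indexCar{\arrType_i}{\formulaProof}\ge 1$, at least one $\trsig{E_i}$ argument exists. Fix $\gamma\in\idx{\arrType_i}^{\newM}$ and map $x\mapsto\arrCons{i}(\gamma,x,\dots,x)$, or alternatively put $x$ in the first $\trsig{E_i}$ slot and fixed values in the others. Injectivity again follows from freeness of datatype constructors.

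The point requiring most care is that $\trsig{\arrType_i}$ is \emph{not} interpreted as an array domain in $\newM$, so the constant-array argument of \Cref{lem:cardinality-partially-ordered} is unavailable. The proof must instead rely on the datatype constructor $\arrCons{i}$, and on $\indexCar{\arrType_i}{\formulaProof}\ge1$ to ensure it actually takes an $\trsig{E_i}$ argument. One should also check that every sort appearing along the path is translated consistently. Sorts $\rho$ that are targets of constructors or array sorts lie in $\structsorts_{\Sigma^{\formulaProof}_{n+1}}$, and their argument sorts are the $\trsig{\cdot}$ sorts of \Cref{fig:dt-sig-trans}. The maps above are therefore well-typed regardless of whether $\trsig{\rho'}$ is an element sort or a struct sort. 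Finally, composing the $l$ injections yields $|\trsig{\tau}^{\newM}|\le|\trsig{\sigma}^{\newM}|$.
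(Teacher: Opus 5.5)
Your proposal is correct and is essentially the paper's argument: reduce to a single edge, use $x\mapsto\trsig{c}(u_1,\dots,x,\dots,u_k)$ for constructor edges and $x\mapsto\arrCons{m}(y,x,\dots,x)$ for array edges, with $\indexCar{\arrType_m}{\formulaProof}\ge 1$ guaranteed by the same w.l.o.g.\ addition of a select literal. The one difference is that you state the injectivity justification (freeness of constructors, or the selector as a left inverse) explicitly, whereas the paper leaves it implicit.
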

\begin{proof}
	In this proof we assume w.l.o.g. that for every $i$ in $[1,n]$, there exists a literal of the form $x = \select{\arrType_i}(a, j)$ in $\formulaProof$
	(we can always add such a literal with fresh variables without changing the $\tndt{\Sigma}$-satisfiability of $\formulaProof$).

	Write the (finite) path in \(\sigG{\Sigma}\) from \(\sigma\) to \(\tau\) as
	\[
		\sigma=\rho_{0}\;\longrightarrow\;
		\rho_{1}\;\longrightarrow\;\dots\;\longrightarrow\;
		\rho_{l}=\tau
		\qquad(l\ge 0).
	\]

	It is sufficient to show that for every $i \in [0,l-1].|\trsig{\rho_i}^\newM| \geq |\trsig{\rho_{i+1}}^\newM|$.

	By the definition of the graph, the edge \((\rho_i,\rho_{i+1})\) exists due to one of the following two cases:
	\begin{enumerate}[label=(\arabic*), leftmargin=1.5em]
		\item There is a constructor \(c : \tau_1\,\dots\,\tau_k \to \rho_i\) such that \(\rho_{i+1} = \tau_j\) for some \(j\).
		\item \(\rho_i = \arrType_m\) and \(\rho_{i+1} = E_m\) for some \(m \in [1,n]\).
	\end{enumerate}

	In both cases, there exists an injective function
	\(f_i : \trsig{\rho_{i+1}}^\newM \to \trsig{\rho_i}^\newM\):
	\begin{enumerate}
		\item In case (1), we fix for every $l \in [1,k]\setminus \{j\}$ some $u_l
			      \in \trsig{\tau_l}^\newM$
		      and define:
		      \[
			      f_{i}(x) = \trsig{c}(u_1,\dots,u_{j-1},x,u_{j+1},\dots,u_k).
		      \]
		\item In case (2), we fix $y \in {\idx{\arrType_m}}^\newM$ and define:
		      \[
			      f_{i}(x) = \arrCons{m}(y, x, \ldots, x).
		      \]
		      (Since by our assumption at the beginning of the proof, there is at least one literal of the form $x = \select{\arrType_i}(a, j)$ in $\formulaProof$, $\indexCar{\arrType_i}{\formulaProof} > 0$ and so the constructor $\arrCons{m}$ has at least one argument of the element sort $\trsig{E_m}$.%
		      )
	\end{enumerate}
\end{proof}

Next, we prove a lemma that is similar to \Cref{lem:infinite-depth-eventually-cyclic}.

\begin{lemma}
	\label{lem:trsig:infinite-depth-eventually-cyclic}
	Let $\sigma\in\sorts{\Sigma}$ be \emph{eventually cyclic} (see \Cref{def:cyclic-sorts}).
	For every $\newT{\formulaProof}$-interpretation $\newM$,
	$|\trsig{\sigma}^{\newM}| = \infty$.
\end{lemma}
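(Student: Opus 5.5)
We mirror the proof of \Cref{lem:infinite-depth-eventually-cyclic}, now working inside $\newM$. Since $\sigma$ is eventually cyclic, there is a path in $\sigG{\Sigma}$ from $\sigma$ to a cyclic sort $\sigma_0$. By \Cref{lem:cardinality-partial-order-trsig}, $|\trsig{\sigma_0}^{\newM}|\le|\trsig{\sigma}^{\newM}|$, so it suffices to show $|\trsig{\sigma_0}^{\newM}|=\infty$. As in \Cref{lem:cardinality-partial-order-trsig}, we assume w.l.o.g.\ that each $\indexCar{\arrType_i}{\formulaProof}\ge 1$.

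Let $\sigma_0\to\sigma_1\to\dots\to\sigma_\ell=\sigma_0$ be a cycle with $\ell\ge 1$. Every sort on it has an outgoing edge, so it is either in $\structsorts_{\Sigma_{n+1}}$ or is some $\arrType_m$. In both cases $\trsig{\sigma_k}\in\structsorts_{\Sigma^{\formulaProof}_{n+1}}$ (see \Cref{fig:dt-sig-trans}). Since $\newM^{\Sigma^\formulaProof_{n+1}}$ is a datatypes structure, $\trsig{\sigma_k}^{\newM}=T_{\trsig{\sigma_k}}(\consig{\Sigma^\formulaProof_{n+1}}{\Sigma^\formulaProof_{n+1}},B)$, so $\dtIx{\cdot}$ (\Cref{def:dtIx}) is available. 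We then apply the injections $f_k$ from \Cref{lem:cardinality-partial-order-trsig} backwards along the cycle. Pick any $x_0\in\trsig{\sigma_0}^{\newM}$ (domains are nonempty). Going around the cycle $m$ times, we produce elements $x_{i+1}=f(x_i)$ in the corresponding domains. For a constructor edge, $f(x)=\trsig{c}(u_1,\ldots,x,\ldots,u_k)$. For an array edge $\arrType_m\to E_m$, $f(x)=\arrCons{m}(y,x,\ldots,x)$.

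The key point is that now \emph{both} kinds of edge are realized by a constructor of $\Sigma^\formulaProof_{n+1}$. So \Cref{lem:dtIx-depth-children} gives $\dtIx{x_{i+1}}\ge 1+\dtIx{x_i}$ at every step, and there is no need for the case split on array edges or for $\minIx{\cdot}$ used in \Cref{lem:infinite-depth-eventually-cyclic}. Hence $x_0,x_\ell,x_{2\ell},\dots,x_{m\ell}$ all lie in $\trsig{\sigma_0}^{\newM}$ and have strictly increasing $\dtIx{\cdot}$ values, so they are pairwise distinct. Since $m$ is arbitrary, $\trsig{\sigma_0}^{\newM}$ is infinite.

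The only delicate step is the direction and sorting bookkeeping. Edges in $\sigG{\Sigma}$ point from a constructor's output sort to its inputs (and from $\arrType_m$ to $E_m$). So building $x_{i+1}\in\trsig{\sigma_{i}'}^{\newM}$ from $x_i$ means walking the cycle in reverse, and we must check that the argument positions have the right sorts: $\trsig{\tau_j}$ for $\trsig{c}$, and $\trsig{E_m}$ for the nonfirst arguments of $\arrCons{m}$. With that checked, the rest is routine.
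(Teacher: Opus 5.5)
Your proof is correct and follows essentially the paper's argument. Both rely on the fact that in $\Sigma^{\formulaProof}_{n+1}$ every edge of $\sigG{\Sigma}$, including the array edges $\arrType_m\to E_m$ (realized by $\arrCons{m}$), comes from a constructor, so $\dtIx{\cdot}$ strictly increases at each step by \Cref{lem:dtIx-depth-children}, under the same w.l.o.g.\ assumption $\indexCar{\arrType_i}{\formulaProof}\ge 1$. The difference is only cosmetic: you first reduce to the cyclic sort via \Cref{lem:cardinality-partial-order-trsig} and show the iterates around the cycle are distinct, whereas the paper walks one long path from $\sigma$ (prefix plus repeated cycle) to get elements of $\trsig{\sigma}^{\newM}$ with arbitrarily large $\dtIx{\cdot}$.
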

\begin{proof}

	As before, we may assume without loss of generality that $\formulaProof$ contains a literal of the form $x = \select{\arrType_i}(a, j)$ for each $i$ in $[1,n]$.

	Since $\sigma$ is eventually cyclic, there exists in particular a sort $\hat{\tau}$ such that the edge $(\sigma,\hat{\tau})$ is in $\sigG{\Sigma}$.
	By \Cref{def:cyclic-sorts}, this implies $\sigma \in \structsorts_{\Sigma_{n+1}} \cup \arrays_\Sigma$, and according to \Cref{fig:dt-sig-trans}, we have $\trsig{\sigma} \in \structsorts_{\Sigma_{n+1}^\formulaProof} \subseteq \sorts{\Sigma_{n+1}^\formulaProof}$.

	Because $\trsig{\sigma} \in \sorts{\Sigma_{n+1}^\formulaProof}$ and $\newM^{\Sigma_{n+1}^{\formulaProof}}$ is a datatypes structure, we can compute $\dtIx{\cdot}$ (see \Cref{def:dtIx}) for elements in $\trsig{\sigma}^{\newM}$ under the interpretation $\newM^{\Sigma_{n+1}^{\formulaProof}}$.

	To prove the lemma, it suffices to show that $\forall N\in\mathbb{N}. \exists x \in \trsig{\sigma}^{\newM}.\, \dtIx{x} \ge N$.
	\textbf{1. Obtaining a long path.}
	Let $N\in\mathbb{N}$.
	We define a path of length at least $N$.
	Because $\sigma$ is eventually cyclic, there is a path
	\[
		\sigma=\tau_0\;\to\;\tau_1\;\to\;\cdots\;\to\;\tau_\ell
	\]
	ending at a node $\tau_\ell$ that belongs to a directed cycle
	\[
		\tau_\ell\;\to\;\tau_{\ell+1}\;\to\;\cdots\;\to\;\tau_{\ell+L}
		\quad(\tau_{\ell+L}=\tau_\ell).
	\]
	Choose $t\in\mathbb{N}$ so that
	\(\ell+tL\ge N\).
	Traversing the cycle $t$ additional times yields a (possibly repeating)
	path of length $T:=\ell+tL\ge N$:
	\[
		\sigma = \tau_0\;\to\;\tau_1\;\to\;\cdots\;\to\;\tau_T.
	\]

	For convenience, we represent this path with reverse indices as
	\[
		\sigma=\sigma_T\;\to\;\sigma_{T-1}\;\to\;\cdots\;\to\;\sigma_0 = \tau_T,
	\]
	where $\sigma_i = \tau_{{T - i}}$ for each $i \in [0,T]$.

	We now show that for every $i \in [0,T-1]$, $\trsig{\sigma_i} \in \sorts{\Sigma_{n+1}^\formulaProof}$, which ensures that $\dtIx{\cdot}$ is defined on $\trsig{\sigma_i}^{\newM}$. (We already know that $\trsig{\sigma_T} = \trsig{\sigma} \in \sorts{\Sigma_{n+1}^\formulaProof}$.)

	For every $i \in [0,T-1]$, the edge $(\sigma_{i+1},\sigma_{i})$ occurs in the graph $\sigG{\Sigma}$.
	By \Cref{def:cyclic-sorts}, this edge arises from one of two cases:
	\begin{itemize}
		\item There exists a constructor
		      \(c:\rho_1\times\!\cdots\!\times\rho_k\to\sigma_{i+1}\)
		      with $\rho_r=\sigma_{i}$ for some $1\leq r\leq k$. In this case, $\sigma_i \in \sorts{\Sigma_{n+1}}$, and according to \Cref{fig:dt-sig-trans}, $\trsig{\sigma_i} \in \sorts{\Sigma_{n+1}^\formulaProof}$.
		\item There exists some $1\leq m \leq n$ such that $\sigma_{i+1}=\arrType_m$ and $\sigma_{i}=E_m$. By our assumption at the beginning of the proof, $\formulaProof$ contains a literal $x = \select{\arrType_m}(a, j)$. Therefore, by \Cref{fig:dt-sig-trans}, the constructor $\arrCons{m}: \idx{\arrType_m} \times \trsig{E_m} \times \cdots \times \trsig{E_m} \to \trsig{\arrType_m}$ belongs to $\constructors_{\Sigma_{n+1}^\formulaProof}$. Consequently, $\trsig{\sigma_i} = \trsig{E_m} \in \sorts{\Sigma_{n+1}^\formulaProof}$.

	\end{itemize}

	\medskip
	\textbf{2. From a path to an element of large $\dtIx{\cdot}$.}

	We use induction over $i$ to show that $\forall i \in [0,T]. \exists x_{i} \in
		\trsig{\sigma_{i}}^{\newM}. \dtIx{x_{i}} \ge i$.
	When $i=T$, we get an element $x_T \in \trsig{\sigma_T}^{\newM} = \trsig{\sigma}^{\newM}$ with $\dtIx{x_T} \ge T \ge N$, which concludes the proof.

	\textbf{Base Case:} Since all domains are not empty, there exists $x_0 \in \trsig{\sigma_0}^{\newM}$ and by definition $\dtIx{x_0} \ge 0$.

	\textbf{Inductive Step:} Assume that for some $i\in[0,T-1]$
	there exists $x_{i} \in \trsig{\sigma_{i}}^{\newM}$ such that $\dtIx{x_{i}} \ge i$. We show that there exists $x_{i+1} \in \trsig{\sigma_{i+1}}^{\newM}$
	such that $\dtIx{x_{i+1}} \ge i+1$.

	The edge $(\sigma_{i+1},\sigma_{i})$ occurs in the graph $\sigG{\Sigma}$ because of one of two reasons:
	\begin{itemize}
		\item \emph{Constructor edge:}
		      there exists a constructor
		      \(c:\rho_1\times\!\cdots\!\times\rho_k\to\sigma_{i+1}\)
		      with $\rho_r=\sigma_{i}$ for some $1\leq r\leq k$.
		      As $\trsig{\rho_1}^\newM, \ldots, \trsig{\rho_k}^\newM$ are all non-empty, we can choose arbitrary $u_s \in \trsig{\rho_s}^\newM$ for $s \in [1,k] \setminus \{r\}$.
		      Thus,
		      \[
			      x_{i+1}:=\trsig{c}(u_1,\dots,u_{r-1},x_{i},u_{r+1},\dots,u_k) \in \trsig{\sigma_{i+1}}^{\newM}
		      \]

		      \begin{alignat*}{2}
			      \dtIx{x_{i+1}}                                                                                                                \\
			       & \stackrel{(1)}{=}   & \;1 + \max\{\dtIx{u_1}, \ldots, \dtIx{u_{r-1}}, \dtIx{x_{i}}, \dtIx{u_{r+1}}, \ldots, \dtIx{u_{k}}\} \\
			       & \stackrel{(2)}{\ge} & \; 1 + \dtIx{x_{i}}                                                                                  \\
			       & \stackrel{(3)}{\ge} & \; i + 1                                                                                             \\
		      \end{alignat*}

		      (1) According to \Cref{lem:dtIx-depth-children}.

		      (2) From the definition of $\max$.

		      (3) According to the IH $\dtIx{x_{i}} \ge i$.

		\item \emph{Array edge:}
		      the edge $(\sigma_{i+1},\sigma_{i})$ occurs
		      in the graph because
		      there exists some $1\leq m \leq n$ such that $\sigma_{i+1}=\arrType_m$, $\sigma_{i}=E_m$.
		      Fix some $r \in {\idx{\arrType_m}}^\newM$ and set
		      \[
			      x_{i+1}:=\arrCons{m}(r, x_{i},\dots,x_{i}) \in \trsig{\sigma_{i+1}}^{\newM}
		      \]
		      \begin{alignat*}{2}
			      \dtIx{x_{i+1}}                                                  \\
			       & \stackrel{(1)}{=}   & \;1 + \max\{\dtIx{r}, \dtIx{x_{i}}, \} \\
			       & \stackrel{(2)}{\ge} & \; 1 + \dtIx{x_{i}}                    \\
			       & \stackrel{(3)}{\ge} & \; i + 1                               \\
		      \end{alignat*}

		      (1) According to \Cref{lem:dtIx-depth-children}.

		      (2) From the definition of $\max$.

		      (3) According to the IH $\dtIx{x_{i}} \ge i$.
	\end{itemize}
\end{proof}

One more ingredient that is needed for the proof of
\Cref{thm:decidability-of-new-theory}
is a generalization of \Cref{thm:nelson-oppen} to multiple theories, which we prove here
for the sake of completeness.
Our proof uses Theorem 2.5 of \cite{JovanovicBarrett2010TR} (which is an extended version of \cite{JBLPAR}) and Theorem 9 of \cite{TinelliZarba2004} (the order-sorted Löwenheim-Skolem Theorem):

\begin{definition}
	\label{def:arrangement}
	Let $V$ be a set of variables and $\delta_V$ a conjunction
	of equality literals and negations of such literals (i.e., equalities and disequalities) between variables in $V$.
	We say that
	$\delta_V$ is an \emph{arrangement} of $V$ if there exists an equivalence relation $E$
	on $V$ such that $\delta_V = \wedge \{ x = y \mid (x,y) \in E \} \bigwedge \wedge \{ x \neq y \mid (x,y) \notin E\}$.
\end{definition}

\begin{theorem}[Many-sorted Löwenheim-Skolem Theorem]
	\label{thm:lowenheim-skolem}
	Given a many-sorted signature $\tilde{\Sigma}$, a $\tilde{\Sigma}$-theory $\tilde{T}$, a satisfiable $\tilde{T}$-formula $\Phi$, and a $\tilde{T}$-interpretation $\model$ satisfying $\Phi$, there exists a $\tilde{\Sigma}$-interpretation $\newM$ satisfying $\Phi$ such that $|\sigma^\model| \geq \aleph_0$ implies $|\sigma^\newM| = \aleph_0$, for each sort $\sigma \in \sorts{\tilde{\Sigma}}$.
\end{theorem}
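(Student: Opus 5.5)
We would derive the statement from the order-sorted Löwenheim--Skolem theorem (Theorem~9 of \cite{TinelliZarba2004}). If the details of that reduction are awkward, we would instead run the classical downward Löwenheim--Skolem argument directly, sort by sort. In both routes the proof rests on an elementary-substructure construction. The intended reduction is that a many-sorted signature is an order-sorted signature whose sort order is the identity relation. Under that reading, $\tilde{\Sigma}$-interpretations, satisfaction, and the cardinalities $|\sigma^\model|$ are the same in both settings, so the cited theorem applies verbatim.

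For the self-contained version we fix $\model\models\Phi$ and use that $\tilde{\Sigma}$ is countable; in our application it is even finite. We build a sorted subset $X\subseteq\model$ as follows. First, seed $X$ with the values in $\model$ of the finitely many free variables of $\Phi$. Second, for every sort $\sigma$ with $|\sigma^\model|\ge\aleph_0$, add a countably infinite subset of $\sigma^\model$. Third, for every sort, add at least one element, so that every sort stays nonempty. We then close $X$ under the interpretations of all function symbols and under Skolem witnesses: for each formula $\exists y{:}\tau.\,\psi(\bar z,y)$ and each tuple $\bar c$ from $X$ with $\model\models\exists y.\psi(\bar c,y)$, we add one chosen witness in $\tau^\model$. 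Iterating $\omega$ times gives a countable union of countable sets, since there are only countably many formulas and tuples. The resulting closure $X^\ast$ is therefore countable in every sort.

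Let $\newM$ be the substructure of $\model$ with domains $X^\ast_\sigma$, restricted symbol interpretations, and the same values for the free variables of $\Phi$. The many-sorted Tarski--Vaught test then gives $\newM\preceq\model$, and in particular $\newM\models\Phi$. The test is proved by induction on formulas, and the existential case is handled by the added witnesses. For the cardinality condition, if $|\sigma^\model|\ge\aleph_0$ then $X^\ast_\sigma$ contains the countably infinite seed and is itself countable, so $|\sigma^\newM|=\aleph_0$. Sorts that are finite in $\model$ are unconstrained by the statement, and they remain finite and nonempty in $\newM$.

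We expect the main obstacle to be the bookkeeping of the many-sorted Tarski--Vaught step. Witnesses must be taken in the correct sort, and substructure domains must be kept sorted, nonempty, and closed under functions whose arguments range over different sorts. A further subtle point is that the statement promises only a $\tilde{\Sigma}$-interpretation, not a $\tilde T$-interpretation, so we need no closure of $\tilde T$ under elementary substructures. If such closure were required later, it would hold whenever $\tilde T$ is axiomatized by first-order sentences, by elementarity. For theories defined semantically, such as arrays or datatypes structures, it would instead need a separate check in the place where the theorem is applied.
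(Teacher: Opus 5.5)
Your proposal is correct and takes the same route as the paper, which gives no argument of its own for this statement and simply invokes the order-sorted L\"owenheim--Skolem theorem (Theorem~9 of \cite{TinelliZarba2004}); your primary reduction via the discrete sort order is exactly that. Your Skolem-hull fallback is also sound under the countability you assume (the paper only applies the result to finite signatures, and for an uncountable $\tilde\Sigma$ you would build the hull over the finitely many symbols and sorts occurring in $\Phi$ and interpret the rest arbitrarily), and you are right that the statement delivers only a $\tilde\Sigma$-interpretation while its later use in the $n$-theory Nelson--Oppen proof needs a $T_i$-interpretation.
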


\begin{lemma}[Thm 2.5 of \cite{JovanovicBarrett2010TR}]
	\label{lem:no-helper}
	Let $T_i$ be a $\Sigma_i$-theory, for $i = 1,2$, such that
	$\func{\Sigma_1} \cap \func{\Sigma_2} = \emptyset$ and
	$\pred{\Sigma_1} \cap \pred{\Sigma_2} = \emptyset$.
	Let $\Sigma = \Sigma_1 \cup \Sigma_2$,
	$T = T_1 \oplus T_2$, and
	$S = \sorts{\Sigma_1} \cap \sorts{\Sigma_2}$.
	For $i = 1,2$, let $\formulaProof_i$ be a flat $\Sigma_i$-cube, and let
	$V = \fv{}{\formulaProof_1} \cap \fv{}{\formulaProof_2}$.
	Suppose there exist a $T_1$-interpretation $\mathcal{A}$, a $T_2$-interpretation
	$\mathcal{B}$, and an arrangement
	$\delta_V$ of $V$ such that
	\begin{itemize}
		\item $\mathcal{A} \models \formulaProof_1 \wedge \delta_V$,
		\item $\mathcal{B} \models \formulaProof_2 \wedge \delta_V$, and
		\item $\lvert A_\sigma \rvert = \lvert B_\sigma \rvert$ for every $\sigma \in S$.
	\end{itemize}
	Then there exists a $T$-interpretation $\nextM$ such that
	\begin{itemize}
		\item $\nextM \models \formulaProof_1 \wedge \formulaProof_2 \wedge \delta_V$,
		\item $C_\sigma = A_\sigma$ for every $\sigma \in \sorts{\Sigma_1}$, and
		\item $C_\sigma = B_\sigma$ for every $\sigma \in \sorts{\Sigma_2} \setminus S$.
	\end{itemize}
\end{lemma}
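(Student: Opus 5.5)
We prove the lemma by the standard Nelson--Oppen amalgamation argument: glue $\mathcal{A}$ and $\mathcal{B}$ along bijections of the shared domains that respect the arrangement. Since $\func{\Sigma_1}\cap\func{\Sigma_2}=\emptyset$ and $\pred{\Sigma_1}\cap\pred{\Sigma_2}=\emptyset$, the two reducts can be glued without interference.

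First, for each shared sort $\sigma\in S$ we build a bijection $h_\sigma\colon B_\sigma\to A_\sigma$ with $h_\sigma(x^{\mathcal{B}})=x^{\mathcal{A}}$ for every $x\in V$ of sort $\sigma$. The partial map $x^{\mathcal{B}}\mapsto x^{\mathcal{A}}$ on $V^{\mathcal{B}}\cap B_\sigma$ is well defined and injective. This holds because both structures satisfy the same arrangement $\delta_V$: we have $x^{\mathcal{B}}=y^{\mathcal{B}}$ iff $(x,y)\in E$ iff $x^{\mathcal{A}}=y^{\mathcal{A}}$. Its domain and image are finite and of equal size. Because $|A_\sigma|=|B_\sigma|$, the complements $B_\sigma\setminus V^{\mathcal{B}}$ and $A_\sigma\setminus V^{\mathcal{A}}$ also have equal cardinality. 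For finite sets this is subtraction; for infinite cardinals, removing a finite set preserves cardinality. We therefore extend the partial map by any bijection between the complements.

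Next, we transport $\mathcal{B}$ along $h$ (identity on sorts outside $S$) to obtain an isomorphic copy $\mathcal{B}'$. In $\mathcal{B}'$ we have $B'_\sigma=A_\sigma$ for $\sigma\in S$, $B'_\sigma=B_\sigma$ otherwise, and $x^{\mathcal{B}'}=x^{\mathcal{A}}$ for $x\in V$. Since theories are classes of structures and are assumed closed under isomorphism (as with arrays, datatypes, and EUF, defined up to the usual isomorphism), $\mathcal{B}'$ is a $T_2$-interpretation. By the isomorphism invariance of satisfaction, $\mathcal{B}'\models\formulaProof_2\wedge\delta_V$. If the sorts outside $S$ are not already disjoint from those of $\mathcal{A}$, we rename them as well.

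We then define $\nextM$. Its domains are $C_\sigma=A_\sigma$ for $\sigma\in\sorts{\Sigma_1}$ and $C_\sigma=B_\sigma$ for $\sigma\in\sorts{\Sigma_2}\setminus S$. Symbols of $\Sigma_1$ are interpreted as in $\mathcal{A}$ and symbols of $\Sigma_2$ as in $\mathcal{B}'$. Variables of $\formulaProof_1$ take their values from $\mathcal{A}$ and variables of $\formulaProof_2$ from $\mathcal{B}'$; on $V$ these agree by construction. Then $\nextM^{\Sigma_1}=\mathcal{A}\in T_1$ and $\nextM^{\Sigma_2}=\mathcal{B}'\in T_2$, so $\nextM$ is a $T$-interpretation, and it satisfies $\formulaProof_1\wedge\formulaProof_2\wedge\delta_V$ because each conjunct is evaluated in the appropriate reduct.

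The main obstacle is the bijection step. It needs two facts: that the arrangement makes $x^{\mathcal{B}}\mapsto x^{\mathcal{A}}$ a well-defined injection, and that the complements have equal cardinality, which relies on $V$ being finite. A secondary point is the tacit requirement that $T_2$ be closed under isomorphism, which we will state explicitly. Since the statement is cited as Theorem 2.5 of \cite{JovanovicBarrett2010TR}, the sketch above only reconstructs that standard argument.
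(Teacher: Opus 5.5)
Your proof is correct and is the standard Nelson--Oppen amalgamation argument behind the cited result. The paper gives no proof of its own and simply defers to \cite{JovanovicBarrett2010TR}. You are right to state explicitly that $T_2$ must be closed under isomorphism: the paper defines a theory as a bare class of structures, which does not provide this, and the lemma fails without it. Likewise, if your renaming of non-shared sorts is ever needed, the third bullet holds only up to that renaming. That is unavoidable under the paper's convention that domains of distinct sorts are disjoint, and it is harmless for how the lemma is used, where only cardinalities matter.
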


\begin{theorem}[Nelson--Oppen combination for $n$ theories]
	\label{thm:nelson-oppen-n}
	Let $n \geq 2$ and, for each $i \in \{1,\dots,n\}$, let
	$\Sigma_i$ be a signature and
	$T_i$
	a
	$\Sigma_i$-theory. Assume:
	\begin{enumerate}[label=(\roman*)]
		\item For all $i \neq j$,
		      $\func{\Sigma_i} \cap \func{\Sigma_j} = \emptyset$ and
		      $\pred{\Sigma_i} \cap \pred{\Sigma_j} = \emptyset$ (only sorts may be shared).
		\item For every $i \in [1,n]$ we have that $T_i$ is stably infinite w.r.t. $\sorts{\Sigma_i}$.
		\item The quantifier-free satisfiability problem for each $T_i$ is decidable.
	\end{enumerate}
	Let $\Sigma = \bigcup_{i=1}^n \Sigma_i$ and let
	$T = T_1 \oplus \dots \oplus T_n$.
	Then the quantifier-free $\Sigma$-satisfiability problem for $T$ is decidable and $T$ is stably infinite w.r.t. $\sorts{\Sigma}$.
\end{theorem}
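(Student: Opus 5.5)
The plan is induction on $n$, using \Cref{lem:no-helper} as the two-theory engine and \Cref{thm:lowenheim-skolem} to align cardinalities of shared sorts. For $n=2$ there is nothing new beyond \Cref{thm:nelson-oppen}, except for the additional claim that $T_1\oplus T_2$ is stably infinite w.r.t.\ all its sorts. So the induction hypothesis will be the full statement for $n-1$ theories: $T' := T_1\oplus\dots\oplus T_{n-1}$ is decidable for quantifier-free formulas and stably infinite w.r.t.\ $\sorts{\Sigma'}$, where $\Sigma'=\bigcup_{i<n}\Sigma_i$. We then combine $T'$ with $T_n$. Assumption (i) guarantees that $\Sigma'$ and $\Sigma_n$ share no function or predicate symbols.

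\textbf{Decidability.} By the usual purification, any quantifier-free formula can be converted into a disjunction of flat cubes. Each cube is split as $\formulaProof_1\wedge\formulaProof_2$, where $\formulaProof_1$ is a $\Sigma'$-cube and $\formulaProof_2$ is a $\Sigma_n$-cube. Let $V$ be the set of shared variables. The procedure guesses an arrangement $\delta_V$; there are finitely many, and only well-sorted ones need to be considered. For each guess it checks $T'$-satisfiability of $\formulaProof_1\wedge\delta_V$ using the induction hypothesis, and $T_n$-satisfiability of $\formulaProof_2\wedge\delta_V$ using (iii).

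\textbf{Soundness.} If a model of the cube exists, the arrangement it induces passes both checks.

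\textbf{Completeness.} Suppose both checks succeed. By stable infiniteness (the induction hypothesis for $T'$, and (ii) for $T_n$), there are models $\mathcal{A}$ and $\mathcal{B}$ in which every sort is infinite. Applying \Cref{thm:lowenheim-skolem} to each, we may take all these sorts to have cardinality exactly $\aleph_0$. Here a small point needs care: the Löwenheim--Skolem model must remain a $T'$- resp.\ $T_n$-interpretation. I would handle this by using the version of the theorem applied to the satisfiable set $\Phi$ consisting of the theory's axioms together with the cube, or alternatively by appealing to the statement as given and assuming the theories are closed under it, as is standard in the cited setting. Then $|A_\sigma|=|B_\sigma|=\aleph_0$ for every shared sort $\sigma$, so \Cref{lem:no-helper} yields a $T$-interpretation of $\formulaProof_1\wedge\formulaProof_2\wedge\delta_V$.

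\textbf{Stable infiniteness of $T$.} The same construction proves it. Given any $T$-satisfiable quantifier-free formula, take a satisfying disjunct and its induced arrangement, and build $\mathcal{A}$ and $\mathcal{B}$ as above with all sorts countably infinite. The model $\nextM$ produced by \Cref{lem:no-helper} has $C_\sigma=A_\sigma$ on $\sorts{\Sigma'}$ and $C_\sigma=B_\sigma$ on the remaining sorts of $\Sigma_n$, so every sort of $\Sigma$ is infinite.

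The main obstacle is the cardinality matching step. It requires that stable infiniteness and Löwenheim--Skolem are applied to the same formula $\formulaProof_i\wedge\delta_V$, and that the resulting downward model still lies in the theory. That is exactly where the theory-membership subtlety above must be resolved, and where the choice of the induction hypothesis, which carries stable infiniteness for the combined theory $T'$, is essential.
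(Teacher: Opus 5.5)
Your proposal is correct and follows essentially the paper's proof. Both argue by induction on $n$, carrying the full statement (decidability plus stable infiniteness) as the hypothesis and combining $T' = T_1\oplus\dots\oplus T_{n-1}$ with $T_n$; both get stable infiniteness by inducing an arrangement, taking all-sorts-infinite models of each side, shrinking them to $\aleph_0$ via \Cref{thm:lowenheim-skolem}, and gluing them with \Cref{lem:no-helper}. The differences are cosmetic: the paper cites \Cref{thm:nelson-oppen} for decidability instead of re-deriving the arrangement-guessing procedure. The theory-membership issue you flag in the L\"owenheim--Skolem step is real, but the paper glosses over it too, since its \Cref{thm:lowenheim-skolem} only yields a $\tilde{\Sigma}$-interpretation.
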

\begin{proof}
	We prove the claim by induction on $n$:

	\textbf{Base case} ($n=2$):
	According to \Cref{thm:nelson-oppen}, $T$ is decidable.
	We want to also show that $T$ is stably infinite w.r.t. $\sorts{\Sigma_1} \cup \sorts{\Sigma_2}$.
	Let $\formulaProof$ be a $\Sigma$-formula that is $T$-satisfiable.

	We need to show that there exists a $T$-interpretation $\nextM$ such that $\nextM \models \formulaProof$ and $\lvert \sigma^{\nextM} \rvert $ is infinite for every $\sigma \in \sorts{\Sigma_1} \cup \sorts{\Sigma_2}$.

	Since $\formulaProof$ is $T$-satisfiable, there exists a $T$-interpretation $\hat{\nextM}$ such that $\hat{\nextM} \models \formulaProof$.

	W.l.o.g, $\formulaProof=\formulaProof_1\wedge\formulaProof_2$, where
	$\formulaProof_1$ and $\formulaProof_2$ are conjunctions of $\Sigma_1$-literals and $\Sigma_2$-literals in $\formulaProof$, respectively.
	Let $V = \fv{}{\formulaProof_1} \cap \fv{}{\formulaProof_2}$ be the set of shared variables between $\formulaProof_1$ and $\formulaProof_2$.

	We can define the following arrangement $\delta_V$ of $V$:
	\[
		\delta_V = \wedge\{ x = y \mid \hat{\nextM} \models x = y \} \bigwedge \wedge\{ x \neq y \mid \hat{\nextM} \models x \neq y \}
	\]

	Note that $\hat{\nextM}^{\Sigma_1} \models \formulaProof_1 \wedge \delta_V$ and $\hat{\nextM}^{\Sigma_2} \models \formulaProof_2 \wedge \delta_V$.
	Since both $T_1$ and $T_2$ are stably infinite w.r.t.\ all their sorts, there exist:
	\begin{itemize}
		\item a $T_1$-interpretation $\newM$ such that $\newM \models \formulaProof_1 \wedge \delta_V$ and for every $\sigma \in \sorts{\Sigma_1}$, $\lvert \sigma^{\newM} \rvert$ is infinite, and
		\item a $T_2$-interpretation $\model$ such that $\model \models \formulaProof_2 \wedge \delta_V$ and for every $\sigma \in \sorts{\Sigma_2}$, $\lvert \sigma^{\model} \rvert$ is infinite.
	\end{itemize}
	By \Cref{thm:lowenheim-skolem}, we can further assume that $|\sigma^{\newM}| = \aleph_0$ for every $\sigma \in \sorts{\Sigma_1}$, and $|\sigma^{\model}| = \aleph_0$ for every $\sigma \in \sorts{\Sigma_2}$.
	In particular, for every $\sigma \in \sorts{\Sigma_1} \cap \sorts{\Sigma_2}$, we have $|\sigma^{\newM}| = \aleph_0 = |\sigma^{\model}|$.

	Thus, the conditions of \Cref{lem:no-helper} are satisfied.

	By \Cref{lem:no-helper}, there exists a $T$-interpretation $\nextM$ such that $\nextM \models \formulaProof_1 \wedge \formulaProof_2 \wedge \delta_V$, and for every $\sigma \in \sorts{\Sigma_1} \cup \sorts{\Sigma_2}$, $|\sigma^{\nextM}|$ is infinite. In particular, we also have $\nextM\models\formulaProof$.

	\textbf{Inductive step} ($n > 2$):
	Assume the claim holds for $k<n$ theories, and let $T_1, \dots, T_{n}$ be $\Sigma_1, \dots, \Sigma_{n}$-theories, respectively,
	such that conditions (i), (ii), and (iii) of the theorem statement hold.
	Let $\Sigma' = \bigcup_{i=1}^{n-1} \Sigma_i$ and
	$T' = T_1 \oplus \dots \oplus T_{n-1}$.
	By the induction hypothesis, $T'$ is decidable and
	stably infinite w.r.t.\
	$\bigcup_{i=1}^{n-1} \sorts{\Sigma_i}$.
	Applying the IH once more to $T'$ and $T_{n}$,
	we conclude that $T' \oplus T_{n} = T$ is decidable and
	stably infinite w.r.t.\
	$\bigcup_{i=1}^{n} \sorts{\Sigma_i}$.
\end{proof}

\medskip

We now prove two lemmas that will be useful in the proof of
\Cref{thm:decidability-of-new-theory}.
The first restricts the sorts with finite interpretations in $\newT{\formulaProof}$, while the second does so for sorts with infinite interpretations.

\begin{lemma}
	\label{dec:finite-implies-struct}
	Let $\nextM$ be a $\newT{\formulaProof}$-interpretation that satisfies
	\(
	\forall \sigma \in \sorts{\newSig{\Sigma}{\formulaProof}} \setminus (\structsorts_{\Sigma_{n+1}^\formulaProof} \cup \arrays_{\newSig{\Sigma}{\formulaProof}}). |\sigma^{\nextM}| = \infty
	\).
	For every $\sigma \in \sorts{\Sigma}$ if $\trsig{\sigma}^{\nextM}$ is finite then $\sigma \in \structsorts_{\Sigma_{n+1}}$.
\end{lemma}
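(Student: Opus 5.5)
We argue by contraposition: we fix $\sigma\in\sorts{\Sigma}$ with $\sigma\notin\structsorts_{\Sigma_{n+1}}$ and show that $\trsig{\sigma}^{\nextM}$ is infinite. By \Cref{sec:theory:ndt-sig}, $\sorts{\Sigma}$ is the disjoint union of $\structsorts_{\Sigma_{n+1}}$, $\arrays_\Sigma$, and the remaining sorts $\sorts{\Sigma}\setminus(\arrays_\Sigma\cup\structsorts_{\Sigma_{n+1}})$. This gives two cases.

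\emph{Case 1: $\sigma\notin\arrays_\Sigma\cup\structsorts_{\Sigma_{n+1}}$.} By \Cref{fig:dt-sig-trans}, $\trsig{\sigma}$ is not in $\structsorts_{\Sigma_{n+1}^\formulaProof}$. It also differs from every $\trarr{\arrType_i}$, because the sorts $\trsig{\cdot}$ and $\trarr{\cdot}$ are distinct fresh sorts (\Cref{tab:new-sorts}). Hence $\trsig{\sigma}\in\sorts{\newSig{\Sigma}{\formulaProof}}\setminus(\structsorts_{\Sigma_{n+1}^\formulaProof}\cup\arrays_{\newSig{\Sigma}{\formulaProof}})$, and the hypothesis on $\nextM$ immediately gives $|\trsig{\sigma}^{\nextM}|=\infty$.

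\emph{Case 2: $\sigma=\arrType_i$ for some $i\in[1,n]$.} We exhibit an injection of an infinite set into $\trsig{\arrType_i}^{\nextM}$. The sort $\idx{\arrType_i}$ is an element sort of $\Sigma_{n+1}^\formulaProof$. It is neither a struct sort nor of the form $\trarr{\arrType_j}$, so by hypothesis $|\idx{\arrType_i}^{\nextM}|=\infty$. Since $\indexCar{\arrType_i}{\formulaProof}\geq 1$ and the domain $\trsig{E_i}^{\nextM}$ is nonempty, we can fix some $e\in\trsig{E_i}^{\nextM}$. We then use the map $r\mapsto\arrCons{i}(r,e,\ldots,e)$ from $\idx{\arrType_i}^{\nextM}$ into $\trsig{\arrType_i}^{\nextM}$. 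It is well defined because $\nextM^{\Sigma_{n+1}^\formulaProof}$ is a datatypes structure, so $\trsig{\arrType_i}^{\nextM}$ is a set of trees closed under constructors. It is injective because constructor terms in a datatypes structure are free: the first selector recovers $r$. Hence $\trsig{\arrType_i}^{\nextM}$ is infinite.

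The only delicate point is checking the sort bookkeeping: that $\idx{\arrType_i}$ and the sorts $\trsig{\tau}$ of Case 1 really fall outside $\structsorts_{\Sigma_{n+1}^\formulaProof}\cup\arrays_{\newSig{\Sigma}{\formulaProof}}$, and that $\arrCons{i}$ has at least one $\trsig{E_i}$ argument. Both follow from the table defining $\Sigma_{n+1}^\formulaProof$ and from the w.l.o.g.\ assumption $\indexCar{\arrType_i}{\formulaProof}\geq 1$. With both cases covered, every $\sigma\notin\structsorts_{\Sigma_{n+1}}$ has $\trsig{\sigma}^{\nextM}$ infinite, which is the contrapositive of the statement.
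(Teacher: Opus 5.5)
Your proof is correct and follows the paper's argument essentially verbatim. It uses the same partition of $\sorts{\Sigma}$ into struct, array, and remaining sorts, and in Case 1 it applies the infiniteness hypothesis to sorts outside $\structsorts_{\Sigma_{n+1}^\formulaProof}\cup\arrays_{\newSig{\Sigma}{\formulaProof}}$. In Case 2 it uses the same injection $r\mapsto\arrCons{i}(r,e,\ldots,e)$ from the infinite domain of $\idx{\arrType_i}$. The differences are cosmetic: you argue by contraposition rather than contradiction, and your appeal to $\indexCar{\arrType_i}{\formulaProof}\geq 1$ is not actually needed, since, as the paper notes, the map works even if $\arrCons{i}$ had only its identifier argument.
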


\begin{proof}
	Let $\sigma \in \sorts{\Sigma}$ be such that $\trsig{\sigma}^{\nextM}$ is finite. We show that $\sigma \in \structsorts_{\Sigma_{n+1}}$ by ruling out all other possibilities.

	By \Cref{sec:theory:ndt-sig}, $\arrays_\Sigma \cap \structsorts_{\Sigma_{n+1}} = \emptyset$, so we can partition $\sorts{\Sigma}$ into three disjoint sets:
	\[
		\sorts{\Sigma} = (\sorts{\Sigma} \setminus (\structsorts_{\Sigma_{n+1}} \cup \arrays_{\Sigma})) \uplus \arrays_{\Sigma}\uplus \structsorts_{\Sigma_{n+1}}
	\]
	We show that $\sigma$ cannot belong to the first two sets.

	\textbf{Case 1:} Suppose $\sigma \in \sorts{\Sigma} \setminus (\structsorts_{\Sigma_{n+1}} \cup \arrays_{\Sigma})$.
	By \Cref{fig:dt-sig-trans}, we have $\trsig{\sigma} \notin \structsorts_{\Sigma_{n+1}^\formulaProof}$.
	Additionally, by \Cref{not:ndt-sig} and \Cref{tab:new-signature-1}, $\arrays_{\newSig{\Sigma}{\formulaProof}} = \{\trarr{\arrType_i} \mid i \in [1,n]\}$, which implies $\trsig{\sigma} \notin \arrays_{\newSig{\Sigma}{\formulaProof}}$.
	Thus, $\trsig{\sigma} \in \sorts{\newSig{\Sigma}{\formulaProof}} \setminus (\structsorts_{\Sigma_{n+1}^\formulaProof} \cup \arrays_{\newSig{\Sigma}{\formulaProof}})$.
	Since $\nextM$ satisfies the condition of the lemma, we have $|\trsig{\sigma}^{\nextM}| = \infty$, contradicting our assumption that $\trsig{\sigma}^{\nextM}$ is finite.

	\textbf{Case 2:} If $\sigma = \arrType_i$ for some $i \in [1,n]$, then since $\idx{\arrType_i} \in \sorts{\newSig{\Sigma}{\formulaProof}} \setminus (\structsorts_{\Sigma_{n+1}^\formulaProof} \cup \arrays_{\newSig{\Sigma}{\formulaProof}})$
	(see \Cref{fig:dt-sig-trans}),
	we have $|\idx{\arrType_i}^{\nextM}| = \infty$.
	Fix an element $t \in \trsig{E_i}^{\nextM}$. Then for every $j \in \idx{\arrType_i}^{\nextM}$, we have $\arrCons{i}(j,t,\ldots,t) \in \trsig{\arrType_i}^{\nextM}$ (if $\arrCons{i}$ only has one argument, we omit the $t$s).
	Therefore, $|\trsig{\arrType_i}^{\nextM}| = \infty$, which is a contradiction.
\end{proof}

\begin{lemma}
	\label{dec:S-hat-is-infinite}
	Let $\nextM$ be a $\newT{\formulaProof}$-interpretation in which
	for each
	$\sigma \in \sorts{\newSig{\Sigma}{\formulaProof}} \setminus (\structsorts_{\Sigma_{n+1}^\formulaProof} \cup \arrays_{\newSig{\Sigma}{\formulaProof}})$, $|\sigma^{\nextM}| = \infty$.
	Define $\widehat{S} = \sorts{\Sigma^\formulaProof_{n+1}} \cap \big(\sorts{\trsig{\Sigma_1}} \cup \ldots \cup \sorts{\trsig{\Sigma_n}} \cup \sorts{\Sigma_{n+2}}\big)$, then:
	$|\tau^{\nextM}| = \infty$
	for every $\tau \in \widehat{S}$.
\end{lemma}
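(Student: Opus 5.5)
We first pin down $\widehat{S}$ concretely. The sorts of $\Sigma_{n+2}$ are $\trsig{\arrType_i}$ and $\trarr{\arrType_i}$. The sorts of $\trsig{\Sigma_i}$ are $\trsig{I_i},\trsig{E_i},\trarr{\arrType_i}$. The array sorts $\trarr{\arrType_i}$ never lie in $\sorts{\Sigma^\formulaProof_{n+1}}$ (\Cref{fig:dt-sig-trans}). Hence every $\tau\in\widehat{S}$ is of one of three forms:
\begin{enumerate*}[label=(\alph*)]
\item $\tau=\trsig{\arrType_i}$;
\item $\tau=\trsig{E_i}$;
\item $\tau=\trsig{I_i}$, which can lie in $\sorts{\Sigma^\formulaProof_{n+1}}$ only if $I_i\in\elemsorts_{\Sigma_{n+1}}$.
\end{enumerate*}
We then argue by contradiction. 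Suppose $\tau^{\nextM}$ is finite for some $\tau\in\widehat{S}$. We will rule out each form in turn, mostly using \Cref{dec:finite-implies-struct}, which applies because its hypothesis is exactly ours.

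Form (a) is excluded by \Cref{dec:finite-implies-struct} directly. Indeed, finiteness of $\trsig{\arrType_i}^{\nextM}$ would force $\arrType_i\in\structsorts_{\Sigma_{n+1}}$, but $\arrays_\Sigma\cap\structsorts_{\Sigma_{n+1}}=\emptyset$ by \Cref{sec:theory:ndt-sig}. (Alternatively, repeat the argument of Case 2 there: $\idx{\arrType_i}$ is infinite and $\arrCons{i}$ injects it into $\trsig{\arrType_i}^{\nextM}$.) Form (c) is excluded the same way. \Cref{sec:theory:ndt-sig} gives $I_i\notin\structsorts_{\Sigma_{n+1}}\cup\arrays_\Sigma$, so $\trsig{I_i}$ is not a struct or array sort of the new signature, and the hypothesis already makes it infinite.

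Form (b) is the only substantive case. By \Cref{dec:finite-implies-struct}, finiteness of $\trsig{E_i}^{\nextM}$ forces $E_i\in\structsorts_{\Sigma_{n+1}}$. The idea is then to exploit stable infiniteness of $T_{n+1}$ w.r.t. the shared sorts, to which $E_i$ belongs, as in Step~6 of \Cref{lem:infinite-Ei}. Concretely, we take the subgraph of $\sigG{\Sigma}$ reachable from $E_i$.
\begin{itemize}
\item If $E_i$ is eventually cyclic, then \Cref{lem:trsig:infinite-depth-eventually-cyclic} gives $|\trsig{E_i}^{\nextM}|=\infty$ immediately.
\item Otherwise the reachable subgraph is a DAG. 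By \Cref{lem:cardinality-partial-order-trsig}, every sort $\rho$ in it has $\trsig{\rho}^{\nextM}$ finite. Such a $\rho$ cannot be an array sort, by the argument for (a). It cannot be a non-struct sort either, since then $\trsig{\rho}$ would be infinite by hypothesis (or by (c)). Hence all reachable sorts lie in $\structsorts_{\Sigma_{n+1}}$, and none of them ever reaches an element sort. The argument of Step~5 of \Cref{lem:infinite-Ei} then shows that $E_i$ has the same, finite, interpretation in every $T_{n+1}$-interpretation, namely the closed constructor terms. This contradicts stable infiniteness of $T_{n+1}$ w.r.t. $E_i\in\sorts{\Sigma_i}\cap\sorts{\Sigma_{n+1}}$.
\end{itemize}

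The main obstacle is this last case. Stable infiniteness is a hypothesis of \Cref{thm:decidability-of-new-theory} but is not stated in the lemma, so we will either make it explicit or note that it is in force in this section. We must also check carefully that no reachable sort can reach an element sort. Otherwise that element sort would be infinite, and via the injections of \Cref{lem:cardinality-partial-order-trsig} $\trsig{E_i}$ would be infinite as well.
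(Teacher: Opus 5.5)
Your proposal is correct and is essentially the paper's argument, organized differently. The paper argues contrapositively: a finite $\tau\in\sorts{\Sigma^\formulaProof_{n+1}}$ must be $\trsig{\sigma}$ for a struct sort $\sigma$ whose reachable subgraph is a DAG of struct sorts, so $\sigma$ is finite in every $T_{n+1}$-interpretation, hence not shared by stable infiniteness, hence $\tau\notin\widehat{S}$; you instead first enumerate the three possible shapes $\trsig{\arrType_i},\trsig{E_i},\trsig{I_i}$ of elements of $\widehat{S}$. The one step you leave implicit is why the common interpretation of $E_i$ is finite, which the paper gets directly by induction on the length of the longest path in that DAG rather than by reusing Step~5 of \Cref{lem:infinite-Ei}, and you are right that the stable infiniteness hypothesis of \Cref{thm:decidability-of-new-theory} is used without being stated in the lemma.
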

\begin{proof}
	First note that $\widehat{S} \subseteq \sorts{\Sigma_{n+1}^\formulaProof}$.
	Thus, it suffices to consider $\tau \in \sorts{\Sigma_{n+1}^\formulaProof}$ with $|\tau^{\nextM}| < \infty$ and show that $\tau \notin \widehat{S}$.

	\paragraph{\textbf{Step 1. $\tau = \trsig{\sigma}$ for some $\sigma \in \structsorts_{\Sigma_{n+1}}$.}}

	By \Cref{fig:dt-sig-trans}, $\sorts{\Sigma_{n+1}^\formulaProof} = \{\trtype{\sigma}\mid \sigma\in\elemsorts_{\Sigma_{n+1}}\setminus \arrays_{\Sigma}\}\cup
		\{\idx{\arrType_i}\mid i\in [1,n]\} \cup \{\trsig{E_i} \mid i \in [1,n], E_i \notin (\arrays_\Sigma \cup \structsorts_{\Sigma_{n+1}})\} \cup \{\trtype{\sigma}\mid
		\sigma\in\structsorts_{\Sigma_{n+1}}\cup\arrays_{\Sigma}\}$.

	If $\tau = \idx{\arrType_i}$ for some $i \in [1,n]$, then since $\idx{\arrType_i} \in \sorts{\newSig{\Sigma}{\formulaProof}} \setminus (\structsorts_{\Sigma_{n+1}^\formulaProof} \cup \arrays_{\newSig{\Sigma}{\formulaProof}})$, by the assumption over $\nextM$ we have $|\idx{\arrType_i}^{\nextM}| = \infty$, which is a contradiction.

	Hence, $\tau = \trsig{\sigma}$ for some $\sigma \in \sorts{\Sigma_{n+1}}$. By \Cref{dec:finite-implies-struct}, we have $\sigma \in \structsorts_{\Sigma_{n+1}}$.

	\paragraph{\textbf{Step 2. Constructing a matching subgraph for $\sigma$.}}

	Consider the graph $\sigG{\Sigma}$ (see \Cref{def:cyclic-sorts}).
	Since $|\trsig{\sigma}^{\nextM}| < \infty$, \Cref{lem:trsig:infinite-depth-eventually-cyclic} shows that $\sigma$ is not eventually cyclic.
	Consider the subgraph of $\sigG{\Sigma}$ reachable from $\sigma$, denoted $\sigG{\Sigma}' = (V',E')$.
	Since $\sigma$ is not eventually cyclic, $\sigG{\Sigma}'$ is a DAG.

	Let $\rho \in V'$. Since $\rho$ is reachable from $\sigma$ in $\sigG{\Sigma}$, \Cref{lem:cardinality-partial-order-trsig} implies that $|\trsig{\rho}^{\nextM}| \leq |\trsig{\sigma}^{\nextM}| < \infty$. By \Cref{dec:finite-implies-struct}, $\rho \in \structsorts_{\Sigma_{n+1}}$.

	Hence, $V' \subseteq \structsorts_{\Sigma_{n+1}}$.

	\paragraph{\textbf{Step 3. Showing that $|\sigma^{\model'}| < \infty$ for every $T_{n+1}$-interpretation $\model'$.}}

	Let $\model'$ be any $T_{n+1}$-interpretation. We use induction on the length of the longest path starting from a node in the graph $\sigG{\Sigma}'$ to show that $\forall \rho \in V'. |\rho^{\model'}| < \infty$.
	\begin{itemize}
		\item \textbf{Base Case (longest path length = 0):} Let $\rho \in V'$ be a node such that the longest path starting from $\rho$ has length 0. This means $\rho$ has no outgoing edges in $\sigG{\Sigma}'$. Since $\rho \in \structsorts_{\Sigma_{n+1}}$, any constructor in $\constructors_{\Sigma_{n+1}}$ with target sort $\rho$ must take no arguments (otherwise there would be an outgoing edge from $\rho$).
		      Therefore, $\rho^{\model'}$ is the finite set of all constant constructors in $\constructors_{\Sigma_{n+1}}$ of sort $\rho$, which is finite since $\Sigma_{n+1}$ is a datatype signature (see \Cref{def:IDT_sig}).
		\item \textbf{Inductive Step:}
		      Assume the claim holds for every node $\rho' \in V'$ where the longest path starting from $\rho'$ has length at most $k$.
		      Let $\rho \in V'$ be a node where the longest path starting from $\rho$ has length $k+1$.
		      Since $\rho \in \structsorts_{\Sigma_{n+1}}$, for every constructor $c:\tau_1 \times \ldots \times \tau_m \to \rho$, and for each argument sort $\tau_i$ (where $i \in \{1, \ldots, m\}$), there is an edge from $\rho$ to $\tau_i$ in $\sigG{\Sigma}$.
		      Since the longest path from $\rho$ has length $k+1$, the longest path from any of its successors $\tau_i$ has length at most $k$.
		      By the induction hypothesis, $|\tau_i^{\model'}| < \infty$ for each $i$.
		      Therefore, due to \Cref{datatypes-axioms} we get that $\rho^{\model'}$ is finite, since it is the finite union of finite sets.
	\end{itemize}
	Thus, for all $\rho \in V'$,
	including $\sigma$,
	we have $|\rho^{\model'}| < \infty$.

	\paragraph{\textbf{Step 4. $\sigma \notin \sorts{\Sigma_{n+1}} \cap \big( \sorts\{\Sigma_1\} \cup \ldots \cup \sorts\{\Sigma_n\}\big)$.}}

	By the assumption of \Cref{thm:decidability-of-new-theory}, the theory $T_{n+1}$ is stably infinite with respect to $\sorts{\Sigma_{n+1}} \cap \big( \sorts\{\Sigma_1\} \cup \ldots \cup \sorts\{\Sigma_n\}\big)$.

	This means that there is a $T_{n+1}$ interpretation in which every sort in $\sorts{\Sigma_{n+1}} \cap \big( \sorts\{\Sigma_1\} \cup \ldots \cup \sorts\{\Sigma_n\}\big)$ has an infinite domain.

	However, by Step~3, the interpretation of $\sigma$ is finite for every $T_{n+1}$-interpretation $\model'$.
	Therefore,
	\begin{equation}
		\label{decidability-not-shared}
		\sigma \notin \sorts{\Sigma_{n+1}} \cap \big( \sorts\{\Sigma_1\} \cup \ldots \cup \sorts\{\Sigma_n\}\big).
	\end{equation}

	\paragraph{\textbf{Step 5. Concluding that $\tau \notin \widehat{S}$}}

	According to Step 1, we have $\tau = \trsig{\sigma}$ for some $\sigma \in \structsorts_{\Sigma_{n+1}}$ and by our assumption over $\tau$ we know that $\tau \in \sorts{\Sigma_{n+1}^\formulaProof}$.

	We now show that $\tau$ does not appear in any of the other signatures (i.e. $\trsig{\Sigma_1} \ldots \trsig{\Sigma_n}$ or $\Sigma_{n+2}$) and thus $\tau \notin \widehat{S}$.
	First, because of Step 3, we know that $\sigma \in \structsorts_{\Sigma_{n+1}}$ and by \Cref{decidability-not-shared} it holds that $\sigma \notin \sorts{\Sigma_i}$ for every $i \in [1,n]$.
	According to \Cref{tab:new-signature-1} we have $\sorts{\trsig{\Sigma_i}} = \{\trsig{I_i}, \trsig{E_i}, \trarr{\arrType_i}\}$,
	it follows that $\tau = \trsig{\sigma} \notin \sorts{\trsig{\Sigma_i}}$ for every $i \in [1,n]$.

	Second, we have established that $\sigma \notin \sorts{\Sigma_i}$ for every $i \in [1,n]$ and thus $\sigma \notin \arrays_{\Sigma}$.
	By \Cref{fig:one-to-one-sig}, we have $\sorts{\Sigma_{n+2}} = \{\trtype{\arrType_i} \mid 1 \leq i \leq n\} \cup \{\trarr{\arrType_i} \mid 1 \leq i \leq n\}$.
	Therefore, $\tau = \trsig{\sigma} \notin \sorts{\Sigma_{n+2}}$.

	Thus, $\tau = \trsig{\sigma}\notin \sorts{\trsig{\Sigma_1}} \cup \ldots \cup \sorts{\trsig{\Sigma_n}} \cup \sorts{\Sigma_{n+2}}$, which implies $\tau \notin \widehat{S}$.
\end{proof}

Now we can restate and prove \Cref{thm:decidability-of-new-theory}.

\Decidability*

\begin{table}[t]
	\newcolumntype{M}[1]{>{\centering\arraybackslash}m{#1}}
	\centering
	\begin{tabular}{|M{2cm}|M{11cm}|}\hline
		\shortstack{
			Domains
		}          &
		\shortstack{
			$\forall \sigma \in \sorts{\newSig{\Sigma}{\formulaProof}}. \sigma^{\newM_1} = \begin{cases}
					\sigma^{\newM_0}                                & \text{if } \sigma \in \sorts{\Sigma_{n+1}^\formulaProof},             \\[4pt]
					\trsig{I_i}^{\newM_1} \ra \trsig{E_i}^{\newM_1} & \text{if } \sigma = \trarr{\arrType_i} \text{ for some } i \in [1,n], \\[4pt]
					\{c_\sigma\}                                    & \text{else }
				\end{cases}$
		}
		\\\hline
		Functions  &
		\shortstack{
			$\select{}$ and $\store{}$ are defined as \\
			read and update
			operations on functions.                  \\\\
			All functions in $\func{\Sigma_{n+1}^\formulaProof}$ are interpreted as in $\newM_0$.
			\\\\
			For every $i \in [1,n]$ we fix $f_{\arrType_i}^{\newM_1}$ and
			$g_{\arrType_i}^{\newM_1}$ to some arbitrary functions.
		}
		\\\hline
		Predicates &
		\shortstack{
			All predicates are interpreted as in $\newM_0$.
		}
		\\\hline
		Variables  &
		\shortstack{
			All variables are interpreted as in $\newM_0$.
		}
		\\\hline
	\end{tabular}
	\caption{A $\newT{\formulaProof}$-interpretation $\newM_1$. $c_\sigma$ is an arbitrary but fixed element for each sort $\sigma \in \sorts{\newSig{\Sigma}{\formulaProof}} \setminus \big(\sorts{\Sigma_{n+1}^\formulaProof} \cup \{\trarr{\arrType_i} \mid i \in [1,n]\}\big)$.
	}
	\label{tab:decidability-new-interpretation}
\end{table}

\begin{proof}
	\paragraph{\textbf{Step 1. Reducing to stable infiniteness of $T_{n+1}^\formulaProof$.}}
	Using
	\Cref{ex:stably-infinite}
	observe that the theories $\trsig{T_1}, \ldots, \trsig{T_n}$ are stably infinite with respect to all their sorts, since they are array theories.
	Similarly, $T_{n+2}$ is stably infinite with respect to all its sorts, since it is the theory of uninterpreted functions.

	Therefore, by \Cref{thm:nelson-oppen-n}, the combination $\trsig{T_1} \oplus \ldots \oplus \trsig{T_n} \oplus T_{n+2}$ is decidable and stably infinite with respect to all its sorts.

	To apply~\Cref{thm:nelson-oppen} to $\trsig{T_1} \oplus \ldots \oplus \trsig{T_n} \oplus T_{n+2}$ and $T_{n+1}^\formulaProof$, it remains to prove that $T_{n+1}^\formulaProof$ is stably infinite with respect to $\widehat{S} = \sorts{\Sigma^\formulaProof_{n+1}} \cap \big(\sorts{\trsig{\Sigma_1}} \cup \ldots \cup \sorts{\trsig{\Sigma_n}} \cup \sorts{\Sigma_{n+2}}\big)$.

	Let $\theta$ be a $T_{n+1}^{\formulaProof}$-satisfiable $\Sigma_{n+1}^\formulaProof$-formula. Then there exists a $T_{n+1}^\formulaProof$-interpretation $\newM_0$ such that $\newM_0 \models \theta$.

	\paragraph{\textbf{Step 2. Constructing a $\newT{\formulaProof}$-interpretation of $\theta$.}}
	Since $\newSig{\Sigma}{\formulaProof} = \trsig{\Sigma_1} \cup \ldots \cup \trsig{\Sigma_n} \cup \Sigma_{n+1}^\formulaProof \cup \Sigma_{n+2}$,
	we can view $\theta$ as a formula in the signature $\newSig{\Sigma}{\formulaProof}$.
	We define an interpretation $\newM_1$ over the signature $\newSig{\Sigma}{\formulaProof}$ using $\newM_0$ (see \Cref{tab:decidability-new-interpretation}).


	We now show that $\newM_1$ is a $\newT{\formulaProof}$-interpretation:

	\begin{itemize}
		\item For every $i \in [1,n]$, $\newM_1^{\trsig{\Sigma_i}}$ is an array interpretation: the array sort is interpreted as the set of functions $\trarr{\arrType_i}^{\newM_1} = \trsig{I_i}^{\newM_1} \ra \trsig{E_i}^{\newM_1}$, and the symbols $\select{\trarr{\arrType_i}}$ and $\store{\trarr{\arrType_i}}$ are interpreted as the usual read and update operations on arrays.
		\item $\newM_1^{\Sigma_{n+1}^\formulaProof}$ is a $T_{n+1}^\formulaProof$-interpretation, since all domains, functions, and predicates in $\Sigma_{n+1}^\formulaProof$ are interpreted as in $\newM_0$.
		\item $\newM_1^{\Sigma_{n+2}}$ is a $T_{n+2}$-interpretation, since $T_{n+2}$ is the theory of uninterpreted functions and thus every interpretation over the signature $\Sigma_{n+2}$ is a $T_{n+2}$-interpretation.
	\end{itemize}

	Note that $\newM_1$ satisfies $\theta$, since $\newM_1^{\Sigma_{n+1}^\formulaProof}$ is interpreted as $\newM_0$, which satisfies $\theta$.

	By \Cref{lem:infinitely-more-elements}, there exists a $\newT{\formulaProof}$-interpretation $\nextM$ such that $\nextM \models \theta$ and:
	\begin{equation}
		\label{decidability-infinite}
		\forall \sigma \in \sorts{\newSig{\Sigma}{\formulaProof}} \setminus (\structsorts_{\Sigma_{n+1}^\formulaProof} \cup \arrays_{\newSig{\Sigma}{\formulaProof}}). |\sigma^{\nextM}| = \infty
	\end{equation}

	\paragraph{\textbf{Step 3. $T_{n+1}^\formulaProof$ is stably infinite w.r.t. $\widehat{S}$.}}

	Define the $T_{n+1}^\formulaProof$-interpretation $\newM = \nextM^{\Sigma_{n+1}^\formulaProof}$.
	By construction, $\newM \models \theta$, and due to \Cref{decidability-infinite} and \Cref{dec:S-hat-is-infinite}, $|\tau^{\newM}| = |\tau^{\nextM}| = \infty$ for every $\tau \in \widehat{S}$.

	Since $\theta$ was an arbitrary satisfiable
	$T_{n+1}^\formulaProof$-formula, we
	conclude that $T_{n+1}^\formulaProof$ is stably infinite with respect to
	$\widehat{S}$.

\end{proof}



\section{Additional Material}
\label{app:moreproofs}

\subsection{A concrete interpretation for \Cref{ex:NO}}
\label{app:ex:co-datatype-model}
In \Cref{ex:NO}, satisfiability of $\formulaExUnsat$ from \Cref{tab:sigs} was proven indirectly,
using an argument that relates to the Nelson-Oppen method.
In this section, we provide a concrete
$\tpndt{\exsigwf}$-interpretation
that satisfies the formula $\formulaExUnsat$.
Define an interpretation $\model$ by setting
$I^\model = \{-1\}$,
$E^\model = \{\emp\} \cup \{\container(-1, n) \mid n \in \mathbb{N}\cup \{\infty\}\}$,
and
$\arrType^\model = \mathbb{N} \cup \{\infty\}$.
The interpretation of the variables is given by
$a^\model = \infty$,
$i^\model = -1$, and
$x^\model = \container(-1, \infty)$.
To interpret the symbols, we define
a bijection $m: \mathbb{N} \cup \{\infty\} \to E^\model$
by setting
$m(0) = \emp$,
$m(n) = \container(-1, n-1)$ for every $n \in \mathbb{N} \setminus \{0\}$,
and
$m(\infty) = \container(-1, \infty)$.
Finally,
${\select{}}^{\model}(\alpha, \beta) = m(\alpha)$
${\store{}}^{\model}(\alpha, \beta, \gamma) = m^{-1}(\gamma)$, and
the remaining function symbols
of $\Sigma_2$ are defined according to \Cref{datatypes-axioms}.

It can be shown that $\model$ is indeed a $\tpndt{\Sigma}$-interpretation
that satisfies $\formulaExUnsat$.

First, let's show that $\model^{\Sigma_1}$ is an arrays $\Sigma_1$-structure
by showing it satisfies the needed axioms from \Cref{fig:arrax}.

\begin{figure}[t]
	\centering
	\begin{tabular}{l}
		$\row_1$ : $\forall a,i,e. \select{\arrType} (\store{\arrType} (a,i,e), i) = e$                                                 \\
		$\row_2$ : $\forall a,i,e,j. \neg (j = i) \implies \select{\arrType} (\store{\arrType} (a, i, e), j) = \select{\arrType}(a, j)$ \\
		$\ext$ : $\forall a_1, a_2. \neg (a_1 = a_2) \implies \exists j: \neg ( \select{\arrType}(a_1, j) = \select{\arrType}(a_2, j))$
	\end{tabular}
	\caption{Axioms for arrays.}
	\label{fig:arrax}
\end{figure}

\begin{enumerate}
	\item $\row_1$ :
	      \[
		      \begin{aligned}
			       & \ \forall \alpha \in \arrType^\model.
			      \forall \beta \in I^\model.
			      \forall \gamma \in E^\model.                             \\
			       & \select{}^\model
			      \big(\store{}^\model (\alpha, \beta, \gamma), \beta\big) \\
			       & = \select{}^\model
			      \big(m^{-1}(\gamma), \beta\big)                          \\
			       & = m\big(m^{-1}(\gamma)\big) = \gamma.
		      \end{aligned}
	      \]

	\item $\row_2$: We need to show that:
	      \[
		      \begin{aligned}
			       & \ \forall \alpha \in \arrType^\model. \beta \in I^\model. \gamma \in E^\model. \delta \in I^\model. \\
			       & \neg (\delta = \beta) \implies
			      \select{}^\model
			      \big(\store{}^\model (\alpha, \beta, \gamma), \delta\big)                                              \\
			       & = \select{}^\model(\alpha, \delta).
		      \end{aligned}
	      \]
	      And since $|I^\model| = 1$, the condition $\neg (\delta = \beta)$ is never satisfied and the axiom is trivially true.

	\item $\ext$ : Let there be $\alpha_1, \alpha_2 \in \arrType$ such that $\alpha_1 \neq \alpha_2$. Then, we need to show that there exists $\beta \in I$ such that $\select{}^\model(\alpha_1, \beta) \neq \select{}^\model(\alpha_2, \beta)$.
	      Then, since $m$ is a bijection, we have:
	      \[
		      \begin{aligned}
			      \select{}^\model(\alpha_1, -1) & = m(\alpha_1) \neq m(\alpha_2) = \select{}^\model(\alpha_2, -1).
		      \end{aligned}
	      \]
\end{enumerate}

Now, we show that $\model^{\Sigma_2}$ is a datatype $\Sigma_2$-structure generated by $\elemsorts$.
\begin{enumerate}
	\item $E^\model$ is indeed the $T_{E}\bigl(\consig{\Sigma}, \{I^\model, \arrType^\model\}\bigr)$,
	\item $c^{\model}(t_1 \til t_n) = c(t_1 \til t_n)$ for every $c\in\constructors_{\Sigma_2}$ of
	      arity $(\sigma_{1}\times\ldots\times\sigma_{n})\ra\sigma$ and $t_1 \in \sigma_{1}^{\model} \til t_n \in \sigma_{n}^{\model}$,
	\item $\sel{c}{i}^{\model}( c(t_1 \til t_n) )=t_i$
	      for every $c\in\constructors_{\Sigma_2}$ of
	      arity $(\sigma_{1}\times\ldots\times\sigma_{n})\ra\sigma$, $t_1 \in \sigma_{1}^{\model} \til t_n \in\sigma_{n}^{\model}$ and $1\leq i\leq n$,
	\item $\is{c}^{\model}=\set{ c(t_1 \til t_n) \mid t_1 \in \sigma_{1}^{\model} \til t_n \in \sigma_{n}^{\model} }$ for every $c\in\constructors_{\Sigma_2}$ of arity
	      $(\sigma_{1}\times\ldots\times\sigma_{n})\ra\sigma$.
\end{enumerate}

Now let's show that $\model \models \formulaExUnsat$:
\begin{enumerate}
	\item $\is{\container}^\model(x^\model)$ holds since $x^\model = \container(-1, \infty)$ is an application of the constructor $\container$.
	\item $\children^\model(x^\model) = \children^\model(\container(-1, \infty)) = \infty = a^\model$.
	\item \(x^\model = \container(-1, \infty) = m(\infty) = \select{}^\model(\infty, -1)
	      = \select{}^\model(a^\model, i^\model)\)
\end{enumerate}

Finally, we verify that indeed \Cref{sec:theory:nested-relation} excludes
the interpretation $\model$ from being an NDT-$\exsigwf$-structure.

\begin{align*}
	\rel{\model} = & \{(\container(-1, n-1), n) \mid n \in \mathbb{N} \setminus \{0\}\}     \\ & \cup \{(\emp, 0), (\container(-1, \infty), \infty)\} \\
	               & \cup \{(-1, \container(-1, n)) \mid n \in \mathbb{N} \cup \{\infty\}\} \\ &
	\cup \{(n, \container(-1, n)) \mid n \in \mathbb{N} \cup \{\infty\}\}
\end{align*}

\noindent
The first two sets come from the select function,
and the last two from constructor applications.
Since $\rel{\model}$ contains the cycle $(a^\model, x^\model, a^\model)$,
we have that $\model$ is not an NDT $\exsigwf$-interpretation.

\subsection{More Details on \Cref{ex:varphi-unsat}}
The formula $\formulaExUnsat$ from
\Cref{tab:sigs}
is $\tndt{\exsigwf}$-unsatisfiable.
Assume otherwise, and
let there be a $\tndt{\exsigwf}$-interpretation $\model$ that satisfies $\formulaExUnsat$. $\model \models \psi_1'$ and therefore $x^\model = \select{}(a,i)$ and so:
\begin{equation}
	\label{eq:half-cycle-1}
	(x^\model, a^\model) \in \rel{\model}.
\end{equation}
Similarly, $\model \models \psi_2$ and thus
$\is{\container}^\model(x^\model)$ and $a^\model = \children(x^\model)$.
\Cref{datatypes-axioms} then suggests that $x^\model =
	\container(a^\model, c)$
for some $b$ and $c$,
and thus:
\begin{equation}
	\label{eq:half-cycle-2}
	(a^\model, x^\model) \in \rel{\model}
\end{equation}
From \Cref{eq:half-cycle-1,eq:half-cycle-2}, we get
a cycle in $\rel{\model}$ which entails an infinite descending sequence,
thus violating well-foundedness.

\subsection{Proof of \Cref{lem:well-foundedness-n+1}}
\label{app:prflemdtsig}
\sigisdtsigtr*

\begin{proof}
	$\Sigma^{\formulaProof}_{n+1}$ is already presented
	using element and struct sorts, as well as constructors,
	selectors and testers.
	In particular,
	$\elemsorts_{\Sigma^{\formulaProof}_{n+1}}$ and
	$\structsorts_{\Sigma^{\formulaProof}_{n+1}}$
	are specified in \Cref{fig:dt-sig-trans}.
	It is left to show that each sort in
	$\structsorts_{\Sigma^{\formulaProof}_{n+1}}$
	is well-founded w.r.t. $\elemsorts_{\Sigma^{\formulaProof}_{n+1}}$
	in $\consig{\Sigma^{\formulaProof}_{n+1}}{\Sigma^{\formulaProof}_{n+1}}$,
	according to \Cref{def:well-sorted-dt}.
	Let $\tau\in\structsorts_{\Sigma^{\formulaProof}_{n+1}}$.
	Then there exists $\sigma\in\structsorts_{\Sigma_{n+1}}\cup\arrays_\Sigma$
	such that $\tau=\trsig{\sigma}$.
	Since \(\Sigma\) is well-founded as an NDT signature, we have that there exists some \(m\in\mathbb{N}\) with
	$
		\sigma\in F_m(\Sigma,\sorts{\Sigma}\,\setminus\,(\arrays_{\Sigma}\,\cup\,
		\structsorts_{\Sigma_{n+1}}))
	$.
	We prove that for every $m\geq 0$, if
	$
		\sigma\in F_m(\Sigma,\sorts{\Sigma}\,\setminus\,(\arrays_{\Sigma}\,\cup\,
		\structsorts_{\Sigma_{n+1}}))
	$
	then
	$\trsig{\sigma} \in G_m(\consig{\Sigma^\formulaProof_{n+1}}{\Sigma^\formulaProof_{n+1}},\elemsorts_{\Sigma^{\formulaProof}_{n+1}})$.
	We do so
	by induction on \(m\).
	As a result, we then get that
	$\tau=\trsig{\sigma}$
	is well-founded w.r.t. $\elemsorts_{\Sigma^{\formulaProof}_{n+1}}$
	in $\consig{\Sigma^{\formulaProof}_{n+1}}{\Sigma^{\formulaProof}_{n+1}}$.

	For the sake of this proof, we abbreviate $F_k(\Sigma,\sorts{\Sigma}\,\setminus\,(\arrays_{\Sigma}\,\cup\, \structsorts_{\Sigma_{n+1}}))$ as $F_k$ and $G_k(\consig{\Sigma^\formulaProof_{n+1}}{\Sigma^\formulaProof_{n+1}},\elemsorts_{\Sigma^{\formulaProof}_{n+1}})$ as $G_k$ for every \(k\in\mathbb{N}\).

	\begin{itemize}
		\item \textbf{Base case (\(m=0\)):} No \(\sigma\in\structsorts_{\Sigma_{n+1}}\cup\arrays_\Sigma\) belongs to
		      \(F_0\), so the claim holds vacuously.
		\item \textbf{Inductive step:}
		      Assume the claim holds for $m$, and let
		      \(
		      \sigma\;\in\;
		      F_{m+1}
		      \).
		      If already
		      $\sigma\in F_m$, then by definition $G_m\subseteq G_{m+1}$ and
		      the inductive hypothesis we are done. Otherwise there are two cases:
		      \begin{enumerate}
			      \item \textbf{Array sort.}
			            Suppose $\sigma=\arrType_i$ for some $i\in[1,n]$.
			            Since $\sigma\in F_{m+1}$, we have that
			            $I_i$
			            and $E_i$ are in
			            $F_m$.
			            \begin{itemize}
				            \item If $E_i\notin\structsorts_{\Sigma_{n+1}}\cup\arrays_\Sigma$, then
				                  $\trsig{E_i}$ is in $\elemsorts_{\Sigma^{\formulaProof}_{n+1}} = G_0$ and clearly so is $\idx{A_i}$, so by the constructor $\arrCons{i}$ we get
				                  $\trsig{A_i}\in G_1\subseteq G_{m+1}$.
				            \item Otherwise $E_i \in \structsorts_{\Sigma_{n+1}}\cup\arrays_\Sigma$ and since $E_i \in F_m$, we have by IH that $\trsig{E_i}\in G_m$, and clearly $\idx{A_i}\in G_0\subseteq G_m$.
				                  Again $\arrCons{i}$ yields
				                  $\trsig{A_i}\in G_{m+1}$.
			            \end{itemize}

			      \item \textbf{Struct sort.}
			            There is a constructor
			            $c : \sigma_1\times\cdots\times\sigma_k \to \sigma$
			            with each $\sigma_j\in F_m$. For every $j\in[1,k]$, if $\sigma_j \in \structsorts_{\Sigma_{n+1}} \cup \arrays_\Sigma$, by IH $\trsig{\sigma_j}\in G_m$, otherwise,
			            $\trsig{\sigma_j} \in G_0\subseteq G_m$. Hence constructor
			            $\trsig{c}$ places $\trsig{\sigma}$ into $G_{m+1}$.
		      \end{enumerate}
	\end{itemize}
\end{proof}

\subsection{More Details on \Cref{ex:preprocessing-tr}}
\label{app:sec:exsattranslationnolemmas}
We hereby define a $\newT{\formulaExUnsat}$-interpretation, $\newM$ that
satisfies $\tra(\formulaExUnsat)$.

For the domains, we set
$\trsig{I}^\newM = \mathbb{N}$, ${\idx{\arrType}}^\newM = \{-1\}$,
and define an $\elemsorts_{\newSig{\exsigwf}{\formulaExUnsat}}$-sorted set $B$ such that $B_\sigma = \sigma^\newM$ for each $\sigma \in \elemsorts_{\newSig{\exsigwf}{\formulaExUnsat}}$.
Then we can set
$\trsig{E}^\newM = T_{\trsig{E}}(\consig{\trsig{\Sigma_2}}{\trsig{\Sigma_2}}, B)$,
$\trsig{\arrType}^\newM = T_{\trsig{\arrType}}(\consig{\trsig{\Sigma_2}}{\trsig{\Sigma_2}}, B)$, and
$\trarr{\arrType}^\newM = \trsig{I}^{\newM} \rightarrow \trsig{E}^\newM$.

The function symbols $\select{}$ and $\store{}$,
as well as the constructors, selectors and testers are interpreted as expected
(for this example, it does not matter how wrongly applied selectors are interpreted).

As for the variables, we set
$\trsig{i}^\newM = 1$,
$\trsig{a}^\newM = \arrCons{1}(-1, \trsig{\emp})$,
$\trsig{x}^\newM = \trsig{\container}(3, \arrCons{1}(-1, \trsig{\emp}))$,
and
$\trarr{a}^\newM = \{(n, \trsig{x}^\newM) \mid n \in \mathbb{N}\}$,
that is,
$\trarr{a}^\newM$ is the constant function that always returns $\trsig{x}^\newM$.
$\newM$ is therefore a $\newT{\formulaExUnsat}$-interpretation that clearly
satisfies every literal of $\tra(\formulaExUnsat)$.


\subsection{Necessity of Stable Infiniteness}
\label{app:sec:stable-infiniteness-necessity}

\label{rem:comp-necessity-of-stable-infiniteness}

\begin{table}[t]
	\newcolumntype{M}[1]{>{\centering\arraybackslash}m{#1}}
	\centering
	{
		\scriptsize
		\begin{tabular}{|M{.8cm}|M{3.6cm}|M{4.3cm}|M{3.7cm}|}\hline
			Name               &
			Sorts              &
			Symbols            &
			Formula                                                     \\\hline\hline
			                   &   &  &                                 \\

			\shortstack{
				$\tilde{\Sigma_1}$
			}                  &
			\shortstack{
				$\sorts{\tilde{\Sigma_1}} = \{\arrType, I, E\}$
			}                  &
			\shortstack{
				$\func{\tilde{\Sigma_1}} = \{\select{\arrType}, \store{\arrType}\}$
			}                  &
			\shortstack{
				$.$
			}
			\\&&&\\\hline
			                   &   &  &                                 \\

			$\tilde{\Sigma_2}$ &
			\shortstack{
				$\sorts{{\tilde{\Sigma_2}}} = \elemsorts_{\tilde{\Sigma_2}} \uplus
					\structsorts_{\tilde{\Sigma_2}}$                            \\
				$\elemsorts_{\tilde{\Sigma_2}} = \{ \arrType \}$            \\
				$\structsorts_{\tilde{\Sigma_2}} = \{ E, S \}$
			}
			                   &
			\shortstack{
				$\func{\tilde{\Sigma_2}} = \constructors_{\tilde{\Sigma_2}} \cup
					\selectors_{\tilde{\Sigma_2}}$                              \\
				$\constructors_{\tilde{\Sigma_2}} = \{ \container, \emp \}$ \\
				$\selectors_{\tilde{\Sigma_2}} = \{\sel{\container}{1} \}$  \\
				$\pred{{\tilde{\Sigma_2}}} = \{ \is{\container}, \is{\emp} \}$
			}
			                   &
			\shortstack{
				$\theta_2:= x \neq y $
			}
			\\&&&\\\hline
		\end{tabular}
		\normalsize
	}
	\caption{%
		Signature $\tilde{\Sigma}=\tilde{\Sigma_{1}}\uplus\tilde{\Sigma_{2}}$ used in the stable–infiniteness counterexample.
		$\tilde{\Sigma_{1}}$ is an arrays signature with sorts $\{\arrType,I,E\}$ and operations
		$\select{\arrType}:\arrType\times I\to E$ and $\store{\arrType}:\arrType\times I\times E\to \arrType$.
		$\tilde{\Sigma_{2}}$ is a datatypes signature whose element sorts are $\{\arrType\}$ and struct sorts are $\{E,S\}$; it provides constructors $\container:\arrType\to S$ and $\emp:E$, selector $\sel{\container}{1}:S\to \arrType$, and testers $\is{\container}:S$ and $\is{\emp}:E$.
		The formula is $\theta := x \neq y$ with $x,y$ of sort $S$.}
	\label{tab:completeness-counter-example}
\end{table}

\begin{table}[t]
	\newcolumntype{M}[1]{>{\centering\arraybackslash}m{#1}}
	\centering
	{
		\scriptsize
		\begin{tabular}{|M{.8cm}|M{3.6cm}|M{4.3cm}|M{3.7cm}|}\hline
			Name              &
			Sorts             &
			Symbols           &
			Formula                                                                                         \\\hline\hline
			                  &   &  &                                                                      \\

			\shortstack{
				$\trsig{\Sigma_1}$
			}                 &
			\shortstack{
				$\sorts{\Sigma_1} = \{\trarr{\arrType}, \trsig{I}, \trsig{E}\}$
			}                 &
			\shortstack{
				$\func{\Sigma_1} = \{\select{\trarr{\arrType}}, \store{\trarr{\arrType}}\}$
			}                 &
			\shortstack{
				$.$
			}
			\\&&&\\\hline
			                  &   &  &                                                                      \\

			$\Sigma_2^\theta$ &
			\shortstack{
				$\sorts{{\Sigma_2^\theta}} = \elemsorts_{\newSig{\Sigma_2}{\theta}} \uplus
					\structsorts_{\newSig{\Sigma_2}{\theta}}$                                                       \\
				$\elemsorts_{\Sigma_2^\theta} = \{ \idx{\arrType} \}$                                           \\
				$\structsorts_{\Sigma_2^\theta} = \{ \trsig{E}, \trsig{S}, \trsig{\arrType} \}$
			}
			                  &
			\shortstack{
				$\func{\Sigma_2^\theta} = \constructors_{\newSig{\Sigma}{\theta}} \cup
					\selectors_{\newSig{\Sigma}{\theta}}$                                                           \\
				$\constructors_{\newSig{\Sigma}{\theta}} = \{ \trsig{\container}, \trsig{\emp}, \arrCons{1} \}$ \\
				$\selectors_{\newSig{\Sigma}{\theta}} = \{\sel{\trsig{\container}}{1}, \sel{\arrCons{1}}{1} \}$ \\
				$\pred{{\Sigma_2^\theta}} = \{ \is{\trsig{\container}}, \is{\trsig{\emp}}, \is{\arrCons{1}} \}$
			}
			                  &
			\shortstack{
				$\trF(\theta):= \trsig{x} \neq \trsig{y} $
			}
			\\&&&\\\hline

			                  &   &  &                                                                      \\

			$\Sigma_3$        &
			\shortstack{
				$\sorts{{\Sigma_3}} = \{\trsig{\arrType}, \trarr{\arrType}\}$
			}
			                  &
			\shortstack{
				$\func{\Sigma_2} = \{f_\arrType, g_{\arrType}\}$
			}
			                  &
			\shortstack{
				$.$
			}
			\\&&&\\\hline
		\end{tabular}
		\normalsize
	}
	\caption{%
		The three‐part translated signature for the counterexample of \Cref{tab:completeness-counter-example}.
		\(\trsig{\Sigma_1}\) is an arrays signature with sorts
		\(\{\trarr{\arrType},\,\trsig{I},\,\trsig{E}\}\) and operations
		\(
		\select{\trarr{\arrType}}:\trarr{\arrType}\times\trsig{I}\to\trsig{E},\quad
		\store{\trarr{\arrType}}:\trarr{\arrType}\times\trsig{I}\times\trsig{E}\to\trarr{\arrType}.
		\)
		\(\Sigma_{2}^{\theta}\) is a datatype signature with element sort
		\(\{\idx{\arrType}\}\), struct sorts
		\(\{\trsig{E},\,\trsig{S},\,\trsig{\arrType}\}\), constructors
		\(
		\trsig{\container}:\trsig{\arrType}\to\trsig{S},\quad
		\trsig{\emp}:\trsig{E},\quad
		\arrCons{1}:\idx{\arrType}\to\trsig{\arrType},
		\) and the matching selectors and testers.
		\(\Sigma_{3}\) adds two UF symbols
		\(f_{\arrType}:\trsig{\arrType}\to\trarr{\arrType}\)
		and
		\(g_{\arrType}:\trarr{\arrType}\to\trsig{\arrType}\).
	}
	\label{tab:completeness-counter-example-translated}
\end{table}

The stable infiniteness condition is necessary
in the right-to-left direction of \Cref{thm:correctness-of-translation}.
Consider the NDT signature
\(\tilde{\Sigma}=\tilde{\Sigma_{1}}\cup\tilde{\Sigma_{2}}\)
described in \Cref{tab:completeness-counter-example},
and let
$\tilde{T_{1}}$ be the $\tilde{\Sigma_{1}}$-theory of arrays and
$\tilde{T_{2}}$ the $\tilde{\Sigma_{2}}$-theory of datatypes.

Consider the $\tilde{\Sigma}$‐formula
$\theta_2 := x\neq y$, such that
\(x,y\) both have sort \(S\).
Let $\nextM$ be a $\tilde{T_2}$-interpretation.
Let $B$ be the $\elemsorts_{\tilde{\Sigma_2}}$-sorted set with
$B_{\rho}=\rho^\nextM$ for each
$\rho\in\elemsorts_{\tilde{\Sigma_2}}$.
Thus:
\(
E^{\nextM} \;=\; T_{E}(\consig{\tilde{\Sigma_2}}{\tilde{\Sigma_2}}, B) \;=\; \{\emp\}
\).

Note that $E$ is a shared sort between $\tilde{\Sigma_1}$ and
$\tilde{\Sigma_2}$ and since $\tilde{T_2}$ enforces that the
interpretation of $E$ is $\{\emp\}$, we know that $\tilde{T_2}$ is
not stably infinite over $\sorts{\tilde{\Sigma_1}} \cap \sorts{\tilde{\Sigma_2}}$.

Assume towards a contradiction that \(\theta_2\) is satisfiable by a
$\tndt{\tilde{\Sigma}}$-interpretation $\model_1$.
Since ${\model_1}^{\tilde{\Sigma_2}}$ is a
$\tilde{T_2}$-interpretation, we can denote as $\tilde{B}$ the
$\elemsorts_{\tilde{\Sigma_2}}$-sorted set with
$\tilde{B}_{\rho}=\rho^{\model_1}$ for each
$\rho\in\elemsorts_{\tilde{\Sigma_2}}$.
Thus:
\[
	S^{\model_1} \;=\; T_{S}(\consig{\tilde{\Sigma_2}}{\tilde{\Sigma_2}}, \tilde{B}) \;=\; \{\container(z)\mid z\in\arrType^{\model_1}\}
	\quad
	E^{\model_1} \;=\; T_{E}(\consig{\tilde{\Sigma_2}}{\tilde{\Sigma_2}}, \tilde{B}) \;=\; \{\emp\}
\]

Hence, $x^{\model_1} = \container(a)$ and
$y^{\model_1} = \container(b)$ for some
$a,b\in\arrType^{\model_1}$.
Since $x^{\model_1} \neq
	y^{\model_1}$, we have $a \neq b$. According to the Extensionality
axiom, we have that there exists $k \in I^{\model_1}$ such that
$\select{\arrType}^{\model_1}(a, k) \neq
	\select{\arrType}^{\model_1}(b, k)$. However, since
$\select{\arrType}^{\model_1}(a, k),
	\select{\arrType}^{\model_1}(b, k) \in E^{\model_1}$, we have that
$\select{\arrType}^{\model_1}(a, k) = \emp =
	\select{\arrType}^{\model_1}(b, k)$, which is a contradiction.

Yet,
its translation \(\trF(\theta_2)\) and its matching signature are given in~\Cref{tab:completeness-counter-example-translated},
and \(\trF(\theta_2)\) is satisfied
by the $\tndt{\newSig{\tilde{\Sigma}}{\theta_2}}$-interpretation
$\newM$ defined in~\Cref{tab:exCompleteness-2}.

\subsection{Discussion on the translation}
We describe various decisions made in our translation-based
approach.

\subsubsection{The sorts $\idx{\arrType}$}
Although $\trF(\formulaProof)$ does not introduce
terms of the sorts
$\idx{\arrType_i}$ for $ i\in[1,n]$,
those sorts are still required in $\newSig{\Sigma}{\formulaProof}$.
If we drop them, the translation may turn a satisfiable
formula into an unsatisfiable one,
thus violating \Cref{thm:correctness-of-translation}.
The following example makes this precise.

Let $\theta$ be defined as follows:
\[
	\theta \;:=\;
	a \neq b
	\;\land\;
	x = \select{\arrType}(a,i)
	\;\land\;
	x = \select{\arrType}(b,i),
\]
Where $a$ and $b$ are array variables of sort $\arrType$, $x$ is a variables of sort $E$ and $i$ is a variable of sort $I$.
$\theta$ is $\tndt{\exsigwf}$-satisfiable.

Assume we construct $\newSig{\exsigwf}{\theta}$ exactly as in~\Cref{fig:dt-sig-trans} but \emph{omit} the index sort $\idx{\arrType}$.
The three component signatures that form
$\newSig{\exsigwf}{\theta}$ are summarised in
\Cref{tab:theta-sigs}.
Because $\idx{\arrType}$ is absent,
the constructor $\arrCons{1}$ now has arity
$\trtype{E}\!\to\!\trtype{\arrType}$ and hence only one selector.

\begin{table}[t]
	\centering
	\renewcommand\arraystretch{1.15}
	\begin{tabular}{|c|p{3.9cm}|p{5.3cm}|p{3.2cm}|}
		\hline
		Signature                                                                        & Sorts & Functions & Predicates \\\hline\hline
		$\trsig{\Sigma_1}$                                                               &
		$\{\trtype{I},\trtype{E},\trarr{\arrType}\}$                                     &
		$\select{\trarr{\arrType}} :
			\trarr{\arrType}\!\times\!\trtype{I}\!\to\!\trtype{E}$\newline
		$\store{\trarr{\arrType}} :
		\trarr{\arrType}\!\times\!\trtype{I}\!\times\!\trtype{E}\!\to\!\trarr{\arrType}$ &
		---                                                                                                               \\\hline
		$\Sigma_2^{\theta}$                                                              &
		$\{\trtype{I},\trtype{E},\trtype{\arrType}\}$                                    &
		$\arrCons{1} : \trtype{E} \to \trtype{\arrType}$\newline
		$\trsig{\container} : \trtype{I}\!\times\!\trtype{\arrType}\!\to\!\trtype{E}$\newline
		$\sel{\arrCons{1}}{1}: \trtype{\arrType}\!\to\!\trtype{E}$\newline
		$\sel{\trsig{\container}}{1}: \trtype{E}\!\to\!\trtype{I}$\newline
		$\sel{\trsig{\container}}{2}: \trtype{E}\!\to\!\trtype{\arrType}$                &
		$\is{\arrCons{1}} : \trtype{\arrType}$\newline
		$\is{\trsig{\container}} : \trtype{E}$                                                                            \\\hline
		$\Sigma_3$                                                                       &
		$\{\trtype{\arrType},\trarr{\arrType}\}$                                         &
		$f_{\arrType} : \trtype{\arrType}\!\to\!\trarr{\arrType}$\newline
		$g_{\arrType} : \trarr{\arrType}\!\to\!\trtype{\arrType}$                        &
		---                                                                                                               \\\hline
	\end{tabular}
	\caption{Component signatures of $\newSig{\exsigwf}{\theta}$ when the
		index sort $\idx{\arrType}$ is \emph{omitted}.}
	\label{tab:theta-sigs}
\end{table}

Then, the translation of~$\theta$ would be
\begin{align*}
	\trF(\theta)=\; &
	\bigl(
	\trsig{a}\neq\trsig{b}
	\land
	\trsig{x}= \select{\trarr{\arrType}}(\trarr{a},\trsig{i})
	\land
	\trsig{x}= \select{\trarr{\arrType}}(\trarr{b},\trsig{i})
	\bigr)                                                         \\
	                & \land
	\bigl(
	\tempvar{a}{1}= \select{\trarr{\arrType}}(\trarr{a},\trsig{i})
	\land
	\tempvar{b}{1}= \select{\trarr{\arrType}}(\trarr{b},\trsig{i}) \\
	                & \land
	\tempvar{a}{1}= \sel{\arrCons{1}}{1}(\trsig{a})
	\land
	\tempvar{b}{1}= \sel{\arrCons{1}}{1}(\trsig{b})                \\
	                & \qquad\;\land
	\trarr{a}=f_{\arrType}(\trsig{a})
	\land
	\trarr{b}=f_{\arrType}(\trsig{b})
	\land
	\trsig{a}=g_{\arrType}(\trarr{a})
	\land
	\trsig{b}=g_{\arrType}(\trarr{b})
	\bigr).
\end{align*}

Since $\arrCons{1}$ is the only constructor whose target sort is $\trsig{\arrType}$,
and it has only one argument (which is of sort $\trtype{E}$),
the four middle conjuncts imply
\(
\sel{\arrCons{1}}{1}(\trsig{a})=
\sel{\arrCons{1}}{1}(\trsig{b}),
\)
hence $\trsig{a}=\trsig{b}$.
This contradicts $\trsig{a}\neq\trsig{b}$,
the first conjunct,
so $\trF(\theta)$ is $\newT{\theta}$-\emph{unsatisfiable}.

On the other hand,
when the index sort $\idx{\arrType}$ is retained,
$\arrCons{1}$ has sort
$\idx{\arrType}\times\trtype{E}\to\trtype{\arrType}$.
Hence
$\sel{\arrCons{1}}{2}(\trsig{a})$ and
$\sel{\arrCons{1}}{2}(\trsig{b})$ may be equal without forcing
$\trsig{a}=\trsig{b}$.

\subsubsection{On The Third Lemma}

The motivation for $\Lthree{\formulaProof}$ is to ensure that
for every array sort $A$,
there is a one to one mapping between the interpretations of free variables of sort $\trarr{A}$ and those of sort $\trtype{A}$.
$f_{\arrType}$ and $g_{\arrType}$ are asserted to be inverses of one another
when restricted to the relevant domain, thus ensuring the 1-1 correspondence.

\begin{example}
	\label{ex:l3}
	Consider the $\exsigwf$-formula $\formulaExUnsat$ from \Cref{tab:sigs}. Then,
	\[
		\Lthree{\formulaExUnsat} = \Bigl\{
		\trarr{a} = f_{\arrType}(\trsig{a}),
		\quad
		\trsig{a} = g_{\arrType}(\trarr{a})
		\Bigr\}.
	\]
	Here, \(\trsig{a}\) and \(\trarr{a}\) are new variables of sort
	\(\trarr{\arrType}\),
	\(f_{\arrType}\) is a function symbol of arity
	\(\trtype{\arrType} \to \trarr{\arrType}\),
	and \(g_{\arrType}\) is a function symbol of arity
	\(\trarr{\arrType} \to \trtype{\arrType}\).
\end{example}

\begin{table}[t]
	\newcolumntype{M}[1]{>{\centering\arraybackslash}m{#1}}
	\centering
	\begin{tabular}{|M{3cm}|M{10cm}|}\hline
		$\widehat{\formulaExUnsat}$                                                                                       &
		$
			a \neq b \;\land\;
			x = \select{}(a,i)\;\land\;
			b = \store{}(a, i, x)
		$
		\\\hline
		$\tra(\widehat{\formulaExUnsat}) \wedge \Lone{\widehat{\formulaExUnsat}} \wedge \Ltwo{\widehat{\formulaExUnsat}}$ &
		\shortstack{
			$\trsig{a} \neq \trsig{b} \land \trsig{x} = \select{}(\trarr{a}, \trsig{i}) \land \trarr{b} = \store{}(\trarr{a}, \trsig{i}, \trsig{x}) \land$ \\
			$\tempvar{a}{1} = \select{}(\trarr{a}, \trsig{i}) \land \tempvar{a}{1} = \sel{\arrCons{1}}{2}(\trsig{a}) \land$                                \\
			$\tempvar{b}{1} = \select{}(\trarr{b}, \trsig{i}) \land \tempvar{b}{1} = \sel{\arrCons{1}}{2}(\trsig{b})$
		}
		\\\hline
	\end{tabular}
	\caption{A $\tndt{\exsigwf}$-formula and its translation.
		Note that $\newSig{\exsigwf}{\formulaExSat} = \newSig{\exsigwf}{\formulaExUnsat}$,
		given in \Cref{tab:sigs-tran}.
	}
	\label{tab:necessity-l3}
\end{table}

\begin{example}
	\label{ex:necessity-l3}
	In order to demonstrate the necessity of $\Lthree{\formulaProof}$, we can consider the following $\exsigwf$-formula $\widehat{\formulaExUnsat}$ in \Cref{tab:necessity-l3}.
	Notice that $\newSig{\exsigwf}{\widehat{\formulaExUnsat}}$ is the same
	as $\newSig{\exsigwf}{\formulaExUnsat}$ (given in \Cref{tab:sigs-tran}), and
	$\tra(\widehat{\formulaExUnsat}) \wedge \Lone{\widehat{\formulaExUnsat}} \wedge \Ltwo{\widehat{\formulaExUnsat}}$ is also given in \Cref{tab:necessity-l3}.

	Now, $\widehat{\formulaExUnsat}$ is not $\tndt{\exsigwf}$-satisfiable,
	which can be seen by pure array reasoning.
	However, its translation with the first two sets of lemmas is
	$\newT{\widehat{\formulaExUnsat}}$-satisfiable, since there is nothing
	preventing $\trsig{a} \neq \trsig{b}$ while $\trarr{a} = \trarr{b}$, despite
	the fact that both $\trsig{a}$ and $\trarr{a}$ represent $a$ and $\trsig{b}$
	and $\trarr{b}$ represent $b$.
	The issue is that $a$ and $b$ may differ on indices that do not occur in $\widehat{\formulaExUnsat}$,
	and this is lost in translation, unless we use $\Lthree{\formulaProof}$.
\end{example}

\subsubsection{Optimized Lemmas}
The lemmas from \Cref{sec:lemmas} are defined in a way that
$\Lone{\formulaProof}$ and
$\Ltwo{\formulaProof}$ collect
the lemmas of
$\Lzero{a}{i}{\formulaProof}$ for specific terms $a$ and $i$.
A simpler alternative is to discard the delicate
construction of \( \Lone{\formulaProof} \) and \( \Ltwo{\formulaProof} \) and instead add \emph{every} lemma
\( \Lzero{a}{i}{\formulaProof} \) for each array variable \( a\in\fv{\arrType_j}{\formulaProof} \)
and every index \( i\in\indexSet{\arrType_j}{\formulaProof} \).
Formally, we could have replaced $\reduced{\formulaProof}$ (see \Cref{ex:L0,ex:l3}) by:
\[
	\reduced{\formulaProof}'
	\;=\;
	\bigl(
	\bigcup_{1\le j\le n}
	\,
	\bigcup_{a\in\fv{\arrType_j}{\formulaProof}}
	\,
	\bigcup_{i\in\indexSet{\arrType_j}{\formulaProof}}
	\Lzero{a}{i}{\formulaProof}
	\bigr)
	\;\cup\;
	\Lthree{\formulaProof},
\]
and clearly \( \reduced{\formulaProof}' \supseteq \Lone{\formulaProof}\cup\Ltwo{\formulaProof} \).


The reason to bother with $\reduced{\formulaProof}$ rather than using the simpler
$\reduced{\formulaProof}'$ is not correctness, but efficiency.
Let \( k \) be the number of distinct variables of sort \( \arrType_j \) in $\formulaProof$,
\( l \) the number of \texttt{select} literals \( x=\select{\arrType_j}(a,i) \) in $\formulaProof$,
\( m \) the number of \texttt{store} literals
\( b=\store{\arrType_j}(a,i,v) \) in $\formulaProof$,
and \( t = \indexCar{\arrType_j}{\formulaProof}=|\indexSet{\arrType_j}{\formulaProof}| \).
Because \( \Lzero{a}{i}{\formulaProof} \) contains a constant (two) literals,
\(\reduced{\formulaProof}\) contributes \( O(l) \)
literals from the \texttt{select}s
and \( O(m\,t) \) from the \texttt{store}s, while never exceeding
\( O(k\,t) \).
Hence
\(
|\reduced{\formulaProof}| = O\bigl(\min\{\,l + m\,t,\; k\,t\,\}\bigr).
\)
By contrast the naive set \( \reduced{\formulaProof}' \) always contains
\( k\,t \) such blocks, so
\( |\reduced{\formulaProof}'| = O(k\,t) \).

Consequently, whenever \( l \ll k\,t \land m \ll k\), the original
definition saves a factor of
\( O\!\bigl(k\,t/(l + m\,t)\bigr) \) literals, whereas in the worst
case the two approaches are equivalent.

For instance, consider the case where $m$ is constant and $l,t,k \in O\!\bigl(n\bigr)$ for some $n$, then $\reduced{\formulaProof} \in O\!\bigl(n\bigr)$ while $\reduced{\formulaProof}' \in O\!\bigl(n^2\bigr)$.

\begin{example}
	Let there be $t \in \mathbb{N}$, consider the following formula in the signature $\exsigwf$:

	\[
		\theta = \bigwedge_{j=1}^t \bigl(x_j = \select{\arrType}(a_j, i_j)\bigr).
	\]

	Where $x_j$ are variables of sort $E$, $a_j$ are variables of sort $\arrType$, and $i_j$ are variables of sort $I$.

	Then, $\reduced{\theta} = \bigcup_{j=1}^t \Lzero{a_j}{i_j}{\theta} \cup \Lthree{\theta}$, which contains $4t$ literals, while $\reduced{\theta}' = \bigcup_{j=1}^t \bigcup_{k=1}^t \Lzero{a_j}{i_k}{\theta} \cup \Lthree{\theta}$, which contains $2t^2 + 2t$ literals.
\end{example}

\end{document}

%% file: macros.tex
\usepackage{xparse}

\spnewtheorem{fact}{Fact}{\bfseries}{\itshape}
\spnewtheorem{notation}{Notation}{\bfseries}{\itshape}

\newcommand{\elemsorts}{{\bf Elem}}

\newcommand{\arrays}{{\bf {Array}}}

\newcommand{\select}[1]{%
	\textbf{select}_{#1}%
}
\newcommand{\greensat}{{\color{Green} sat}}

\newcommand{\redsat}{{\color{Red} sat}}
\newcommand{\greenunsat}{{\color{Green} unsat}}
\newcommand{\children}{{\mathbf{chld}}}
\newcommand{\id}{{\mathbf{id}}}
\newcommand{\head}{{\mathbf{hd}}}
\newcommand{\tail}{{\mathbf{tl}}}
\newcommand{\assignTail}[1]{\mathbf{tail}_{#1}}
\newcommand{\trComp}{\Psi}

\newcommand{\sigG}[1]{G_{#1}}
\newcommand{\minIx}[1]{\text{idx}(#1)}
\newcommand{\dtIx}[1]{\text{dtIdx}(#1)}

\newcommand{\cons}{{\mathbf{cons}}}
\newcommand{\container}{{\mathbf{c}}}

\newcommand{\exsigdt}{\Sigma_{0}}
\newcommand{\exsignwf}{\Sigma_{\mathbf{ndt}}^{-}}
\newcommand{\exsigwf}{\Sigma_{\mathbf{ndt}}}

\newcommand{\store}[1]{%
	\textbf{store}_{#1}%
}
\makeatletter
\newcommand{\indOrder}[2]{%
	o%
	\if\relax\detokenize{#1#2}\relax
	\else
		_{(#1, #2)}%
	\fi
}
\makeatother

\newcommand{\tdt}[1]{{\Ta}_{{#1}}^{{\mathit DT}}}
\newcommand{\tpndt}[1]{{\Ta}_{{#1}}^{{\mathit PNDT}}}
\newcommand{\tndt}[1]{{\Ta}_{{#1}}^{{\mathit NDT}}}

\newcommand{\nil}{{\mathbf{nil}}}
\newcommand{\emp}{{\mathbf{emp}}}
\newcommand{\arrCons}[1]{c_{#1}}

\newcommand{\subdomain}[2]{#1^{#2}}
\newcommand{\applyStar}{\Gamma_\text{DT}}
\newcommand{\applyPartStar}[1]{\Gamma_\text{DT}^{#1}}
\newcommand{\applyTag}{\Gamma_\text{Arr}}

\newcommand{\tempvar}[2]{u_{(#1, #2)}}

\newcommand{\indexSet}[2]{{\mathbb I}_{(#1,#2)}}

\newcommand{\indexCar}[2]{n_{(#1,#2)}}
\newcommand{\newSig}[2]{#1^{#2}}
\newcommand{\newT}[1]{T_{\text{new}}^{#1}}
\newcommand{\dtSig}{\Sigma_{\text{DT}}}

\newcommand{\structsorts}{{\bf Struct}}

\newcommand{\smtlib}{SMT-LIB~2\xspace}
\newcommand{\cvcfive}{\mathbf{cvc5}\xspace}
\newcommand{\zzz}{\mathbf{Z3}\xspace}

\newcommand{\constArr}[1]{\textit{const}({#1})}
\newcommand{\constDomArr}[2]{\textit{const}_{{#1}}({#2})}

\newcommand{\height}{{\it h}}

\newcommand{\intTerms}[1]{S_{#1}}
\newcommand{\storeElem}[2]{\text{SE}_{#1}({#2})}
\newcommand{\nonStoreElem}[2]{\text{NSE}_{#1}({#2})}

\newcommand{\defelem}[1]{\mathbf{d}_{#1}}
\newcommand{\almostC}[2]{\text{AC}({#1}, {#2})}
\newcommand{\constructors}{{\Ca}{\Oa}}
\newcommand{\consig}[2]{{#1}_{\mid \constructors_{#2}}}
\newcommand{\selectors}{{\Sa}{\Ea}}

\newcommand{\arrType}{A}
\newcommand{\idx}[1]{{{#1}^\circ}}

\newcommand{\row}{{\cal R}{\Oa}{\cal W}}
\newcommand{\ext}{{\cal E}{\cal X}{\cal T}}
\newcommand{\rel}[1]{{\cal R}_{#1}}
\newcommand{\model}{\Aa}
\newcommand{\modelS}{\Aa_0}

\newcommand{\newM}{\Ba}
\newcommand{\newMS}{\Ba_0}
\newcommand{\newMC}{\Ba_1}
\newcommand{\nextM}{\Ca}

\newcommand{\trtype}[1]{{#1^*}}
\newcommand{\trarr}[1]{#1'}
\newcommand{\trsig}[1]{{#1^*}}
\newcommand{\tra}{\bf {tr}}
\newcommand{\literal}{l}
\newcommand{\trF}{\bf {trFormula}}
\newcommand{\predecessors}{{\Pa}}

\newcommand{\sel}[2]{s_{(#1, #2)}}
\newcommand{\is}[1]{\text{is}_{#1}}

\newcommand{\powerset}[1]{{{\cal P}({#1})}}

\newcommand{\fv}[2]{{\it vars}_{#1}({#2})}

\newcommand{\sorts}[1]{{\cal S}_{#1}}
\newcommand{\func}[1]{{\cal F}_{#1}}

\newcommand{\pred}[1]{{\cal P}_{#1}}

\newcommand{\Ta}{{\cal T}}

\newcommand{\Aa}{{\cal A}}
\newcommand{\Ba}{{\cal B}}
\newcommand{\Ca}{{\cal C}}
\newcommand{\Sa}{{\cal S}}

\newcommand{\Oa}{{\cal O}}
\newcommand{\Ea}{{\cal E}}
\newcommand{\Pa}{{\cal P}}
\newcommand{\set}[1]{\left\{#1\right\}}

\def\S{{\cal S}}
\newcommand{\T}{{\cal T}}

\newcommand{\ra}{\rightarrow}
\newcommand{\bi}{\begin{itemize}}
		\newcommand{\ei}{\end{itemize}}
\newcommand{\be}{\begin{enumerate}}
		\newcommand{\ee}{\end{enumerate}}
\newcommand{\bd}{\begin{description}}
		\newcommand{\ed}{\end{description}}

\newcommand{\g}{\Gamma}

\newcommand{\suq}{\subseteq}

\newcommand{\til}{,\dots,}

\newcommand{\allArr}[2]{{\bf L^{#2}_0}({#1})}
\newcommand{\LzeroF}{{\mathbf{L_0}}}
\newcommand{\Lzero}[3]{\mathbf{L_0^{#3}}({#1, #2})}
\newcommand{\Lone}[1]{\mathbf{L_1}({#1})}
\newcommand{\Ltwo}[1]{\mathbf{L_2}({#1})}
\newcommand{\Lthree}[1]{\mathbf{L_3}({#1})}

\newcommand{\reduced}[1]{{\bf L}({#1})}

\newcommand{\zeroForm}{{\varphi_0}}
\newcommand{\formulaExUnsat}{{{\varphi}}}
\newcommand{\formulaExSat}{{{\psi}}}
\newcommand{\formulaProof}{{{\phi}}}
\newcommand{\formulaEvalGen}{{\psi}}
\newcommand{\formulaEval}[1]{\formulaEvalGen({#1})}

\newcommand{\formulaSatEval}[1]{\formulaEvalGen'({#1})}

\newcommand{\associatedIndices}[2]{\mathit{AssocInd}(#1, #2)}

\newcommand{\boxedcomment}[3]{\begin{mdframed}{{\textcolor{#1}{{#2}: {#3}}}}\end{mdframed}}

\newcommand{\axname}[1]{{\mathit ({#1})}}
\newcommand{\axtestpos}{{\axname{test^+}}}
\newcommand{\axtestneg}{{\axname{test^-}}}
\newcommand{\axsel}{{\axname{sel}}}
\newcommand{\axcyc}{{\axname{cyc}}}

\newcommand{\yoni}[1]{\boxedcomment{red}{Yoni}{#1}}
\newcommand{\tomer}[1]{\boxedcomment{blue}{Tomer}{#1}}

\newcommand{\yoniFixed}[1]{\boxedcomment{orange}{Yoni}{#1}}
\newcommand{\tomerFixed}[1]{\boxedcomment{blue}{Tomer}{#1}}